\newcommand{\stkout}[1]{\ifmmode\text{\sout{\ensuremath{#1}}}\else\sout{#1}\fi}
\newif\ifverbose
\newtheorem{theorem}{Theorem}[section]
\newtheorem{proposition}[theorem]{Proposition}
\newtheorem{corollary}[theorem]{Corollary}
\newtheorem{lemma}[theorem]{Lemma}
\newtheorem{definition}[theorem]{Definition}
\theoremstyle{definition}
\newtheorem{example}[theorem]{Example}
\newtheorem{remark}[theorem]{Remark}
\numberwithin{equation}{section}
\begin{document}

\title{Learning Quantum Processes and Hamiltonians via the Pauli Transfer Matrix}

\author[1,2]{Matthias C.~Caro\thanks{\href{mailto:mcaro@caltech.edu}{mcaro@caltech.edu}}}
\affil[1]{Institute for Quantum Information and Matter, Caltech, Pasadena, CA, USA}
\affil[2]{Dahlem Center for Complex Quantum Systems, Freie Universit\"at Berlin, Berlin, Germany}
  
\date{}
\setcounter{Maxaffil}{0}
\renewcommand\Affilfont{\itshape\small}

\maketitle

\begin{abstract}
    Learning about physical systems from quantum-enhanced experiments, relying on a quantum memory and quantum processing, can outperform learning from experiments in which only classical memory and processing are available. Whereas quantum advantages have been established for a variety of state learning tasks, quantum process learning allows for comparable advantages only with a careful problem formulation and is less understood. We establish an exponential quantum advantage for learning an unknown $n$-qubit quantum process $\mathcal{N}$.
    
    We show that a quantum memory allows to efficiently solve the following tasks: (a) learning the Pauli transfer matrix, i.e., the matrix representation w.r.t.~the normalized Pauli basis, of an arbitrary $\mathcal{N}$, (b) predicting expectation values of bounded Pauli-sparse observables measured on the output of an arbitrary $\mathcal{N}$ upon input of a Pauli-sparse state, and (c) predicting expectation values of arbitrary bounded observables measured on the output of an unknown $\mathcal{N}$ with sparse Pauli transfer matrix upon input of an arbitrary state. With quantum memory, these tasks can be solved using linearly-in-$n$ many copies of the Choi state of $\mathcal{N}$, and even time-efficiently in the case of (b). In contrast, any learner without quantum memory requires exponentially-in-$n$ many queries, even when querying $\mathcal{N}$ on subsystems of adaptively chosen states and performing adaptively chosen measurements. This separation extends existing shadow tomography upper and lower bounds from states to channels via the Choi-Jamiolkowski isomorphism. Moreover, we combine Pauli transfer matrix learning with polynomial interpolation techniques to develop a procedure for learning arbitrary Hamiltonians, which may have non-local all-to-all interactions, from short-time dynamics. Our results highlight the power of quantum-enhanced experiments for learning highly complex quantum dynamics.
\end{abstract}

\newpage
\section{Introduction}\label{section:introduction}

A complete characterization of an unknown quantum system or quantum process requires a number of copies of the unknown object that scales (at least) linearly in the dimension \cite{haah2016sample, odonnell2016efficient, odonnell2017efficient}, and thus exponentially in the number of qubits for a multi-qubit system. Hence, there is a need for more resource-efficient alternatives that still suffice to make meaningful (while necessarily incomplete) predictions about the unknown quantum object. 
In the case of quantum states, the frameworks of classical shadows and shadow tomography, motivated by learning-theoretic perspectives on the task of predicting properties of an unknown quantum state, have led to results on sample-efficiently predicting a plethora of physically relevant properties of a quantum system \cite{aaronson02007learnability, aaronson2018shadow, paini2019approximate, huang2020predicting, badescu2021improved, huang2021information-theoretic, hu2021classical, rouze2021learning, huang2022provably, akhtar2022scalable, bertoni2022shallow, becker2022classical, arienzo2022closedform}.
Classical shadows have the desirable feature that they do not rely on complicated quantum processing, interesting predictions can already be made with non-adaptive single-copy measurements, which are amenable to implementation on noisy intermediate-scale (NISQ) \cite{preskill2018quantum} devices.
However, there are limitations to what can be learned efficiently with this kind of access to an unknown quantum state or process. As shown in \cite{chen2022exponential}, state shadow tomography can in general only be solved sample-efficiently in quantum-enhanced experiments, with access to a quantum memory and to quantum processing. \cite{chen2022exponential} identified several other quantum learning problems with similarly strong query complexity separations, leading to a quantum advantage that becomes realizable on current quantum devices \cite{huang2022quantum}.

Whereas state shadow tomography succeeds without any assumptions on the unknown state to be learned and on the observables of interest, we cannot hope to query-efficiently and accurately predict, without prior assumptions, expectation values of arbitrary measurements performed on the output of an arbitrary quantum channel upon input of an arbitrary state. This can be seen by embedding the problem of learning an arbitrary Boolean function 
into the task of learning an arbitrary quantum channel (as discussed in \cite{huang2022simons-colloquium, chen2022epfl-seminar}).
Nevertheless, results about learning quantum processes are beginning to emerge \cite{chung2021sample, caro2021binary, flammia2021pauli, levy2021classical-shadows, caro2022generalization, chen2022quantum, huang2022foundations, huang2022learning, fanizza2022learning, kunjummen2023shadow-process}, with different restrictions on the channel to be learned or on the input states and output observables for which expectation values are to be predicted.

Despite the insights of these prior works, the landscape of quantum channel learning is still largely unexplored, especially for general channels with potentially exponential complexity. In particular, while we now know an exponential quantum advantage of quantum-enhanced over conventional experiments in shadow tomography of static quantum systems, analogous results for the case of quantum evolutions are lacking. A central challenge in understanding the potential for a quantum advantage here is that conventional experiments for learning an unknown quantum channel can involve intricate adaptive strategies for choosing both the input states and the measurements to be performed. Thus, tightly bounding the capabilities of such strategies for channel learning has so far been out of reach.
This naturally leads us to study the importance of quantum-enhanced experiments for tasks of channel learning, with a focus on their potential to outperform their adaptive classical counterparts. 
Namely, we investigate the following two questions:
\begin{center}
    \textit{\textbf{(1)}} \textit{What can we learn about a quantum channel without prior assumptions?}\\
    \textit{\textbf{(2)}} \textit{Are quantum-enhanced experiments necessary for quantum channel learning?}
\end{center}

\subsection{Overview of the Results} 

We identify learning all entries of the Pauli transfer matrix of an unknown quantum channel as a task that can be solved efficiently by a learner with quantum memory, but that requires exponentially many queries to the unknown channel if no quantum memory is used. 
Here, we view a quantum channel as a linear map between matrices with complex entries, and the Pauli transfer matrix is its matrix representation with respect to the normalized Pauli basis, an orthonormal basis for the space of matrices of importance in quantum computing. 
Moreover, we show that this quantum advantage carries over to two restricted variants of a channel shadow tomography task. Finally, we demonstrate that Pauli transfer matrix learning can be used as a subroutine to learn an arbitrary quantum Hamiltonian from short-time dynamics.
Thus, we have found one answer to Question (1): Pauli transfer matrix and Hamiltonian learning are meaningful channel learning problems that can be solved efficiently without any prior assumptions on the object to be learned. 
Additionally, query-efficient Pauli transfer matrix learning requires quantum-enhanced experiments, thus emphasizing their relevance to channel learning in answer to Question (2).

Our first main result is an exponential query complexity separation between learners with quantum memory and learners without quantum memory for the task of learning the Pauli transfer matrix of an unknown quantum channel:

\begin{theorem}[Pauli transfer matrix learning with and without quantum memory]\label{theorem:main-result-ptm-learning-separation}
    There is a learning algorithm with quantum memory that uses $\mathcal{O}(\nicefrac{n}{\varepsilon^4})$ copies of the Choi state of an unknown $n$-qubit quantum channel $\mathcal{N}$ to output simultaneously $\varepsilon$-accurate estimates for all the $16^n$ entries $\tfrac{1}{2^n}\tr[\sigma_A \mathcal{N}(\sigma_B)]$, $A,B\in\{0,1,2,3\}^n$, of its Pauli transfer matrix, with high success probability.
    In contrast, any learning algorithm without quantum memory, even when allowed to access the unknown channel on subsystems of adaptively chosen input states and perform adaptively chosen output measurements, has to query the unknown channel $\Omega (\nicefrac{4^n}{\varepsilon^2})$ times to achieve the same guarantee.
\end{theorem}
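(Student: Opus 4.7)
The plan is to split the argument into an achievability direction (learner with quantum memory) and a matching hardness direction (learner without quantum memory), reducing both to Pauli shadow tomography on the Choi state $J(\mathcal{N}) = (\mathrm{id}\otimes \mathcal{N})(\ket{\Phi^+}\bra{\Phi^+})$ via the Choi--Jamiolkowski isomorphism, where $\ket{\Phi^+}$ is the maximally entangled state on $n+n$ qubits. A direct computation shows that each Pauli transfer matrix entry satisfies
\begin{equation*}
\frac{1}{2^n}\tr\!\bigl[\sigma_A\,\mathcal{N}(\sigma_B)\bigr] \;=\; \tr\!\bigl[(\sigma_B^T\otimes \sigma_A)\,J(\mathcal{N})\bigr],
\end{equation*}
so learning all $16^n$ PTM entries to additive accuracy $\varepsilon$ is exactly Pauli shadow tomography of a $2n$-qubit state to the same accuracy.

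For the upper bound, the learner prepares each copy of $J(\mathcal{N})$ by feeding one half of $\ket{\Phi^+}$ through $\mathcal{N}$ while coherently retaining the other half in quantum memory. I would then invoke the known quantum-memory protocol that simultaneously estimates all $4^k$ Pauli expectations of a $k$-qubit state to accuracy $\varepsilon$ from $\mathcal{O}(k/\varepsilon^4)$ copies, based on joint Bell measurements on pairs of copies together with a median-of-means aggregation and a union bound over the $4^k$ Paulis. Taking $k=2n$ yields $\mathcal{O}(n/\varepsilon^4)$ copies of $J(\mathcal{N})$, giving the first claim.

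For the lower bound, I would construct a hard ensemble by perturbing the fully depolarizing channel $\mathcal{N}_0(\rho) = \tr(\rho)\,I/2^n$ in a single PTM direction: for non-identity Paulis $A,B \in \{0,1,2,3\}^n$ and a sign $s\in\{\pm 1\}$, set
\begin{equation*}
\mathcal{N}_{A,B,s}(\rho) \;=\; \mathcal{N}_0(\rho) \,+\, \frac{s\varepsilon}{2^n}\,\tr(\sigma_B \rho)\,\sigma_A.
\end{equation*}
This is CPTP for $|\varepsilon|\leq 1$ because its Choi operator equals $\tfrac{1}{4^n}(I + s\varepsilon\,\sigma_B^T\otimes \sigma_A)\succeq 0$, and its PTM differs from that of $\mathcal{N}_0$ only in the $(A,B)$-entry, by exactly $s\varepsilon$. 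Any $\varepsilon/2$-accurate PTM learner therefore recovers $(A,B,s)$, so by the standard two-hypothesis reduction it suffices to show that distinguishing a uniformly random $\mathcal{N}_{A,B,s}$ from $\mathcal{N}_0$ requires $\Omega(4^n/\varepsilon^2)$ queries. I would establish this by adapting the adaptive-tree framework of \cite{chen2022exponential} from states to channels: every memory-less adaptive learner is modeled as a rooted tree in which each internal node specifies an input state $\rho^{\mathrm{in}}$ on system plus ancilla (possibly depending on the past transcript) and a POVM on the joint output of $\mathcal{N}\otimes\mathrm{id}$, and each edge is labeled by a classical outcome; the leaf distributions must then be nearly identical under $\mathcal{N}_0$ and under the uniform mixture over $\mathcal{N}_{A,B,s}$ unless enough queries have been made.

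The main obstacle, and the place where the channel setting genuinely goes beyond state shadow tomography, is controlling the adaptive freedom in the \emph{input states}: the learner may entangle the system with ancillas of arbitrary dimension and adapt the preparation to the entire past transcript. The crux is to prove, by exploiting Pauli orthogonality together with the rank-one structure of the perturbation $\tr(\sigma_B\rho)\sigma_A$, that the per-edge chi-squared divergence between the POVM outcome distribution under $\mathcal{N}_0$ and under the uniform mixture over $(A,B,s)$ is bounded by $\mathcal{O}(\varepsilon^2/4^n)$ uniformly over the chosen input state, ancilla dimension, and POVM. Chaining these per-edge contributions along any root-to-leaf path via subadditivity, and comparing with the $\Omega(1)$ total-variation gap required by the two-hypothesis reduction, then forces the claimed $\Omega(4^n/\varepsilon^2)$ query lower bound.
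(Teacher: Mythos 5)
Your plan coincides with the paper's proof in every structural respect. The upper bound is obtained exactly as you describe: rewrite each PTM entry as a Pauli expectation of the $2n$-qubit Choi state (\Cref{eq:ptm-as-pauli-choi-expectation-value}) and invoke the Bell-measurement Pauli shadow tomography of \cite{huang2021information-theoretic} with $k=2n$. The lower bound uses the same hard family (the paper's channels have Choi states $\tfrac{1}{2^{2n}}(\mathbbm{1}+3\varepsilon P)$ for signed $2n$-qubit Paulis $P$ traceless on the output register, which is your $\mathcal{N}_{A,B,s}$ up to the size of the perturbation), the same many-versus-one reduction, and the same learning-tree formalism. One technical remark: the paper does not chain per-edge chi-squared divergences by subadditivity --- which is delicate under adaptivity --- but instead lower-bounds the product likelihood ratio $\mathbb{E}_i[p^{\mathcal{N}_i}(\ell)]/p^{\mathcal{N}_0}(\ell)$ along each root-to-leaf path via Jensen's inequality and $\log(1-x)\ge -2x$, then applies the one-sided Le Cam method; your per-edge second-moment quantity is exactly the paper's $\Delta(O_1,\dots,O_M)$.

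The gap is at the step you yourself flag as the crux. ``Pauli orthogonality plus the rank-one structure of the perturbation'' gets you only partway: averaging the squared per-edge deviation over the signed-Pauli ensemble via the swap identity $\sum_B\sigma_B\otimes\sigma_B=2^n\operatorname{SWAP}$ reduces the second moment (up to the $4^{-n}$ prefactor and lower-order terms) to
\begin{equation*}
\frac{2^{2n}\,\tr\!\left[\left(\tr_{\mathrm{aux}}\!\left[(\mathbbm{1}_{\mathrm{in}}\otimes\ketbra{\phi}{\phi})(\rho\otimes\mathbbm{1}_{\mathrm{out}})\right]\right)^{2}\right]}{\left(\tr\!\left[(\mathbbm{1}_{\mathrm{in}}\otimes\ketbra{\phi}{\phi})(\rho\otimes\mathbbm{1}_{\mathrm{out}})\right]\right)^{2}}\;-\;2^n ,
\end{equation*}
and the whole bound hinges on showing the first term is at most $2^{2n}$, i.e.\ that $\tr[Y^2]\le(\tr Y)^2$ for $Y=\tr_{\mathrm{aux}}[(\mathbbm{1}_{\mathrm{in}}\otimes\ketbra{\phi}{\phi})(\rho\otimes\mathbbm{1}_{\mathrm{out}})]$. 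This $Y$ is the partial trace of a \emph{product} of two positive semidefinite operators and is in general neither positive semidefinite nor even Hermitian, so the inequality is not automatic. The paper proves it by Schmidt-decomposing $\ket{\phi}$ across the aux/out cut, expanding $\rho$ in the induced product basis, and using positivity of the $2\times 2$ principal minors of $\rho$ to get $\rho_{\ell k i j}\rho_{k\ell j i}\le\rho_{\ell\ell i i}\rho_{k k j j}$ termwise (\Cref{lemma:Delta-choi-pauli-shadow-tomography-channel-access}). This matrix-analytic step is precisely what makes the bound uniform over adaptively chosen, arbitrarily entangled inputs and is the paper's main technical contribution; without it your claimed per-edge bound of $\mathcal{O}(\varepsilon^2/4^n)$ remains an unproven assertion, and the named ingredients do not by themselves supply it.
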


Here, the lower bound for learners without quantum memory holds even if the unknown channel is promised in advance to be doubly-stochastic, entanglement-breaking, and to have a sparse Pauli transfer matrix.
We demonstrate that this query complexity separation carries over to the tasks (a) of predicting arbitrary expectation values of the form $\tr[O \mathcal{N}(\rho)]$, with $O$ a bounded Pauli-sparse observable, $\rho$ a Pauli-sparse quantum state, and $\mathcal{N}$ an arbitrary quantum channel, and (b) of predicting arbitrary expectation values of the form $\tr[O \mathcal{N}(\rho)]$, with $O$ an arbitrary bounded observable, $\rho$ an arbitrary quantum state, and $\mathcal{N}$ a quantum channel with sparse Pauli transfer matrix. 
In fact, for task (a) we give a learning algorithm with quantum memory that query-efficiently builds a classical memory-efficient representation $\hat{\mathcal{N}}$ of $\mathcal{N}$ from which any $M$ such expectation values can be estimated query- and time-efficiently:

\begin{corollary}[Predicting Pauli-sparse expectation values for arbitrary channels]\label{corollary:main-result-pauli-sparse-expectation-prediction}
    There is a learning algorithm with quantum memory that uses $\mathcal{O}(\nicefrac{n}{\varepsilon^4})$ copies of the Choi state of an unknown $n$-qubit quantum channel $\mathcal{N}$ to build a classical representation $\hat{\mathcal{N}}$ of $\mathcal{N}$ from which any $M$ expectation values of the form $\tr[O \mathcal{N}(\rho)]$, with $O$ a bounded Pauli-sparse observable and $\rho$ a Pauli-sparse quantum state, can be predicted to accuracy $\varepsilon$, with high success probability.
    The classical representation consists of $\mathcal{O}(\nicefrac{n^2}{\varepsilon^4})$ real numbers stored in classical memory, and the expectation values are predicted using $\mathcal{O}(\nicefrac{\log(M)}{\varepsilon^2})$ additional Choi state copies and using classical computation time $\mathcal{O}(\nicefrac{M n^2}{\varepsilon^4})$.
\end{corollary}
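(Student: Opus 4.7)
My plan is to reduce Pauli-sparse expectation-value prediction to the Pauli transfer matrix learning guarantee of Theorem~\ref{theorem:main-result-ptm-learning-separation} via linearity of the trace, using the sparsity assumptions to keep the number of relevant PTM entries small, and then to use the extra $\mathcal{O}(\nicefrac{\log(M)}{\varepsilon^2})$ Choi copies to boost confidence uniformly over all $M$ queries. The starting point is the Pauli expansion: writing $O = \sum_A \alpha_A \sigma_A$ with $\alpha_A = \tfrac{1}{2^n}\tr[\sigma_A O]$ and $\rho = \tfrac{1}{2^n}\sum_B \beta_B \sigma_B$ with $\beta_B = \tr[\sigma_B \rho]$, linearity of $\mathcal{N}$ gives
\begin{equation*}
    \tr[O\,\mathcal{N}(\rho)] \;=\; \sum_{A,B} \alpha_A\,\beta_B \cdot \tfrac{1}{2^n}\tr[\sigma_A \mathcal{N}(\sigma_B)],
\end{equation*}
i.e., the expectation value is exactly a weighted sum of PTM entries, and when both $O$ and $\rho$ are Pauli-sparse only polynomially many summands survive.

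Building on this, I would first apply Theorem~\ref{theorem:main-result-ptm-learning-separation} at a target per-entry accuracy $\varepsilon'$ on $\mathcal{O}(\nicefrac{n}{\varepsilon'^4})$ Choi-state copies, and take the classical representation $\hat{\mathcal{N}}$ to be the raw Bell-basis outcomes generated by the subroutine; since each outcome on a pair of Choi copies encodes two $n$-qubit Pauli labels, the storage is $\mathcal{O}(\nicefrac{n^2}{\varepsilon'^4})$ classical real numbers, matching the claimed memory footprint. On any query $(O,\rho)$, the predictor is $\hat{f}(O,\rho) = \sum_{A,B} \alpha_A \beta_B \widetilde{\mathrm{PTM}}_{A,B}$, computed by touching only those $(A,B)$ in the Pauli supports of $O$ and $\rho$; over $M$ queries this amounts to classical runtime $\mathcal{O}(\nicefrac{Mn^2}{\varepsilon^4})$. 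For accuracy, the assumption $\|O\|_\infty \leq 1$ bounds $\|\alpha\|_2 \leq 1$, while trace-normalization and positivity of $\rho$ bound $|\beta_B| \leq 1$; combining these with the Pauli-sparsity and Cauchy--Schwarz, the propagated error is a constant multiple of $\varepsilon'$, so taking $\varepsilon' \asymp \varepsilon$ suffices and the overall Choi-state cost stays at $\mathcal{O}(\nicefrac{n}{\varepsilon^4})$.

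The remaining $\mathcal{O}(\nicefrac{\log(M)}{\varepsilon^2})$ additional Choi copies are spent on a median-of-means refinement: for each query I construct a direct Bell-measurement estimator of $\tr[O\mathcal{N}(\rho)]$ whose variance is controlled via the Pauli-sparsity of $O$ and $\rho$ rather than via $1/\varepsilon^2$, take the median across $\mathcal{O}(\log M)$ groups, and union-bound over $M$ predictions to obtain simultaneous $\varepsilon$-accuracy with high probability. The main obstacle I anticipate is the error-propagation step: naively summing per-entry errors yields a worst-case bound that scales multiplicatively with the Pauli sparsities, so keeping the sample complexity at $\mathcal{O}(\nicefrac{n}{\varepsilon^4})$ without an extra polynomial-in-sparsity factor requires leaning on the variance-based identities $\sum_A \alpha_A^2 \leq 1$ and $\sum_B \beta_B^2 \leq 2^n \tr[\rho^2]$ together with the structural properties of the Bell-measurement shadow, rather than a loose union bound over individual entries.
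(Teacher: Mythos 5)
Your core reduction is the same as the paper's: expand $O$ and $\rho$ in the Pauli basis so that $\tr[O\mathcal{N}(\rho)]$ becomes a weighted sum of at most $s_O s_\rho$ PTM entries per query, identify those entries with Pauli expectation values of the Choi state, and invoke Bell-measurement Pauli shadow tomography (this is exactly \Cref{corollary:pauli-sparse-expectations-from-choi-shadow-tomography}, built on \Cref{theorem:ptm-entries-from-choi-shadow-tomography} and \Cref{corollary:tm-estimates-give-sparse-expectation-estimates}). However, the ``main obstacle'' you flag at the end is a false alarm, and your proposed workaround would not work anyway. With your normalization, H\"older gives $\lvert \hat f(O,\rho) - \tr[O\mathcal{N}(\rho)]\rvert \le \varepsilon' \lVert\alpha\rVert_1\lVert\beta\rVert_1 \le \varepsilon'\, s_O s_\rho$, which is already a constant multiple of $\varepsilon'$ because the corollary implicitly assumes $\mathcal{O}(1)$-sparse expansions; the paper's quantitative version simply carries the polynomial factors $B^4 s_\rho^4 s_O^2$ in the copy complexity and makes no attempt to remove them. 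The variance identities $\lVert\alpha\rVert_2\le 1$ and $\sum_B \beta_B^2 = 2^n\tr[\rho^2]$ cannot eliminate the sparsity dependence from $\lVert\alpha\rVert_1\lVert\beta\rVert_1$: Cauchy--Schwarz only converts a $1$-norm into $\sqrt{s}$ times a $2$-norm, reintroducing exactly the factor you are trying to avoid.

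The genuine gap is in your account of the $\mathcal{O}(\nicefrac{\log(M)}{\varepsilon^2})$ additional copies. You describe a per-query ``direct Bell-measurement estimator of $\tr[O\mathcal{N}(\rho)]$'' followed by a median over $\mathcal{O}(\log M)$ groups and a union bound. But the Pauli observables relevant to different queries need not commute, so fresh Choi copies must be spent on each query; as described, your scheme costs $\Omega(\nicefrac{M\log(M)}{\varepsilon^2})$ additional copies, not $\mathcal{O}(\nicefrac{\log(M)}{\varepsilon^2})$. If you instead reuse one batch of measurement data across all $M$ queries, you need a variance bound that is uniform over all queried Paulis, which is precisely what the two-copy Bell-sampling phase already provides and what a median-of-means over single-copy estimates does not. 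The paper obtains the stated bound by inheriting the two-phase structure of \cite[Theorem 2]{huang2021information-theoretic}: the first phase's Bell measurements determine the magnitudes of all $16^n$ Choi--Pauli expectation values at once, and the additional $m_2 = \mathcal{O}\left(\tfrac{\log(\min\{Ms_\rho s_O,16^n\})}{\tilde\varepsilon^2}\right)$ copies are used only to resolve the signs of the at most $\min\{Ms_\rho s_O, 16^n\}$ Paulis actually queried, which is why a single $\log(M)$ suffices. Your proof needs to either invoke that second phase as a black box (as the paper does) or supply an argument for estimating many non-commuting Pauli expectation values from logarithmically many shared copies; the construction as written does neither.
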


Notably, these efficiency guarantees do not require any assumptions on the unknown quantum channel $\mathcal{N}$, so this channel may be a quantum process of arbitrary complexity.
In \Cref{corollary:main-result-pauli-sparse-expectation-prediction}, we implicitly assumed the observables and states to have $\mathcal{O}(1)$-sparse Pauli expansions. However, as our guarantees scale polynomially with the sparsity parameters, we still get an efficient prediction protocol if we allow for polynomially-in-$n$ Pauli-sparse observables and states. This can then be viewed as a restricted generalization of shadow tomography from states to channels.
While we formulate \Cref{theorem:main-result-ptm-learning-separation} and \Cref{corollary:main-result-pauli-sparse-expectation-prediction} for the Pauli transfer matrix, we also give similar protocols with polynomial-in-$n$ copy complexity for transfer matrices with respect to general (appropriately normalized) unitary orthonormal bases and for predicting expectation values with states and observables having a sparse expansion in such a basis, again for an arbitrary unknown channel.

As an application of our protocol for learning the Pauli transfer matrix with quantum memory, in our second main result we show how this can be combined with polynomial interpolation for derivative estimation to efficiently learn the Pauli coefficients of an unknown Hamiltonian $H$. Here, we assume query access to the unitary that implements time evolution along $H$ for (different) short times, all on the order of $\mathcal{O}(\nicefrac{1}{\norm{H}})$:

\begin{theorem}[Learning arbitrary Hamiltonians with quantum memory]\label{theorem:main-result-hamiltonian-learning}
    There is a learning algorithm with quantum memory that uses $\tilde{\mathcal{O}}\left( \nicefrac{n \norm{H}^4}{\varepsilon^4} \right)$ parallel queries to $\mathcal{O}(\nicefrac{1}{\norm{H}})$-time evolutions under an unknown $n$-qubit Hamiltonian $H$ with Pauli expansion $H = \sum_{A\in\{0,1,2,3\}^n} \alpha(A) \sigma_A$ with a total evolution time of $\tilde{\mathcal{O}}\left( \nicefrac{n \norm{H}^3}{\varepsilon^4} \right)$ to estimate all the $4^n -1$ non-trivial Pauli coefficients $\alpha(A)$, $A\in\{0,1,2,3\}^n\setminus\{0^n\}$, of $H$ to accuracy $\varepsilon$, with high success probability.
\end{theorem}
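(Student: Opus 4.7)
The plan is to reduce Hamiltonian learning to PTM learning of the short-time evolution channel $\mathcal{N}_t(\rho):=e^{-itH}\rho e^{itH}$, and then to recover each Pauli coefficient from a first time derivative at $t=0$ via polynomial interpolation. Its PTM entries
\[ f_{A,B}(t):=\tfrac{1}{2^n}\tr\bigl[\sigma_A e^{-itH}\sigma_B e^{itH}\bigr] \]
are entire in $t$, with $k$-th derivative at $0$ equal to $(-i)^k 2^{-n}\tr[\sigma_A\,\mathrm{ad}_H^k\sigma_B]$, where $\mathrm{ad}_H X:=[H,X]$; the operator-norm bound $\|\mathrm{ad}_H^k\sigma_B\|_\infty\le(2\|H\|)^k$ then forces each $f_{A,B}$ to be analytic with Taylor coefficients controlled by powers of $\|H\|$.

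The first step is to express every target Pauli coefficient as a single PTM-entry derivative. Writing $H=\sum_C\alpha(C)\sigma_C$ and using that $[\sigma_C,\sigma_B]$ vanishes when $\sigma_C,\sigma_B$ commute and equals $2\omega(C,B)\sigma_{C\oplus B}$ with a known phase $\omega(C,B)\in\{\pm i\}$ otherwise, a direct computation yields
\[ f'_{A,B}(0) \;=\; -2i\,\omega(A\oplus B,B)\,\alpha(A\oplus B)\,\mathbbm{1}\bigl[\{\sigma_A,\sigma_B\}=0\bigr]. \]
Hence, for each non-identity label $A$ I fix a single-qubit Pauli $B(A)$ acting at a site where $\sigma_A$ is non-trivial and chosen to anticommute with $\sigma_A$; the derivative of the single PTM entry indexed by $(A\oplus B(A),\,B(A))$ at $t=0$ then pins down $\alpha(A)$ up to a known unit-modulus prefactor. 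Crucially, \Cref{theorem:main-result-ptm-learning-separation} returns the \emph{entire} PTM of $\mathcal{N}_{t_k}$ in a single invocation, so one run at each sampled time feeds all $4^n-1$ coefficients simultaneously.

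To turn noisy PTM entries at finitely many short times into a derivative estimate, I would sample at $K$ Chebyshev-type nodes $t_1,\ldots,t_K\in[0,T]$ with $T=\Theta(1/\|H\|)$ and apply the corresponding fixed Lagrange-interpolation derivative functional to the resulting estimates. A standard decomposition splits the error into (i) a truncation error that, by analyticity and the $(2\|H\|)^k$ bound on Taylor coefficients, decays like $(T\|H\|)^K\lesssim 2^{-K}$, and (ii) a noise-amplification factor of at most $\mathrm{poly}(K)/T$ times the per-sample PTM accuracy $\delta$. Choosing $K=\Theta(\log(1/\varepsilon))$ pushes (i) below $\varepsilon/2$, and requiring $\delta=\tilde\Theta(\varepsilon/\|H\|)$ pushes (ii) below $\varepsilon/2$. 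Invoking \Cref{theorem:main-result-ptm-learning-separation} with accuracy $\delta$ and failure probability $O(1/K)$ at each of the $K$ times costs $\tilde O(n/\delta^4)=\tilde O(n\|H\|^4/\varepsilon^4)$ parallel Choi-state copies per time point, each using a single call to $e^{-it_k H}$ with $t_k\le T=\Theta(1/\|H\|)$. Summing over the $\tilde O(1)$ sampled times reproduces the advertised parallel-query count and total evolution time $\tilde O(n\|H\|^3/\varepsilon^4)$, and a union bound over the $K$ PTM-learning invocations certifies the overall success probability.

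The main obstacle I anticipate is the quantitative derivative-estimation lemma: I need the noise-amplification constant to grow only polynomially (not exponentially) in $K$, which is what distinguishes a usable interpolation scheme, such as Chebyshev nodes with their $O(\log K)$ Lebesgue constant, from a naive equispaced one. The other delicate point is uniformity over the $16^n$ PTM entries; this is automatic on the truncation side, where the bound $(2\|H\|)^k$ is independent of $A,B$, but on the noise side it relies on the simultaneous-accuracy guarantee of \Cref{theorem:main-result-ptm-learning-separation}, after which an outer union bound over the $K$ sampled times wraps up the argument.
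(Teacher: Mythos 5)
Your proposal is correct and follows essentially the same route as the paper: isolate each $\alpha(A)$ as the first time derivative at $t=0$ of a Pauli-transfer-matrix quantity of $\mathcal{U}_t$ (the paper's \Cref{lemma:isolating-single-pauli-coefficient} phrases this with the $2$-sparse state $(\mathbbm{1}+i[\sigma_A,\sigma_B]/2)/2^n$ and observable $\sigma_B$, which is the same linear combination of PTM entries you use), bound the $k$-th derivatives by $(2\norm{H})^k$ via iterated commutators, and recover the derivative by Chebyshev interpolation at logarithmically many nodes in $[0,\Theta(1/\norm{H})]$ with per-node accuracy $\tilde{\Theta}(\varepsilon/\norm{H})$, exactly matching the stated query and evolution-time counts. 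The only adjustments worth noting are that the paper invokes the Pauli-sparse expectation-value version (\Cref{corollary:pauli-sparse-expectations-from-choi-shadow-tomography}) rather than the full-PTM theorem, an immaterial difference since both reduce to the same Choi-state Pauli shadow tomography, and that the interpolation degree should be $\Theta(\log(\norm{H}/\varepsilon))$ rather than $\Theta(\log(1/\varepsilon))$ so as to absorb the $\norm{H}$ prefactor in the truncation bound.
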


If the goal is to estimate only $M$ Pauli coefficients of $H$, then the estimation procedure uses classical computation time scaling linearly in $M$ and polynomially in $n$ and $\nicefrac{\norm{H}}{\varepsilon}$. Moreover, the required classical memory also scales polynomially in $n$ and $\nicefrac{\norm{H}}{\varepsilon}$.
Similarly to how \Cref{theorem:main-result-ptm-learning-separation} does not require any assumptions on the unknown quantum channel, \Cref{theorem:main-result-hamiltonian-learning} works query-efficiently for arbitrary Hamiltonians, assuming only a polynomial operator norm bound to be guaranteed in advance. In particular, the unknown Hamiltonian can contain non-local all-to-all interaction terms.

\subsection{Techniques and Proof Overview}

\paragraph{Upper bounds for learning with quantum memory}
To achieve the complexity upper bounds in \Cref{theorem:main-result-ptm-learning-separation} and \Cref{corollary:main-result-pauli-sparse-expectation-prediction}, we rephrase the task of Pauli transfer matrix learning as a shadow tomography task for the Choi state of the unknown quantum channel. Here, the (normalized) Choi state $\tfrac{1}{2^n}\Gamma^{\mathcal{N}}$ of an $n$-qubit channel $\mathcal{N}$ is obtained by applying $\mathcal{N}$ to one half of a maximally entangled state: $\tfrac{1}{2^n}\Gamma^{\mathcal{N}} = (\operatorname{id}\otimes\mathcal{N})(\ketbra{\Omega}{\Omega})$, with $\ket{\Omega}$ maximally entangled. 
Concretely, we observe that Pauli transfer matrix entries can be rewritten in terms of expectation values of Pauli observables on the Choi state as $\frac{1}{2^n}\tr [\sigma_A \mathcal{N}(\sigma_B)]= \tr[(\sigma_B^{\top} \otimes \sigma_A) \frac{1}{2^n}\Gamma^\mathcal{N}]$. 
This now allows us to appeal to an existing procedure for state shadow tomography with Pauli observables from \cite{huang2021information-theoretic}, which is based on $2$-copy Bell measurements. 

We point out that channel learning tasks cannot generically be reduced to state learning tasks in a query-efficient manner. One important reason for this is the dimension factor of $2^n$ incurred when translating from Choi state expectation values to input-output expectation values of the associated channel. 
However, with our chosen learning task, the normalization of the Pauli transfer matrix entries exactly cancels the dimension factor. This allows us to rely on Choi expectation value estimates without any loss in accuracy, so that we inherit the good complexity scaling of state tomography. 

We can go from estimates for Pauli transfer matrix entries to expectation value estimates under Pauli-sparsity assumptions on either the state and observable or on the channel by expanding the respective objects in the Pauli orthonormal basis. 
Here, the sparsity assumption guarantees that the transfer matrix learning guarantee, which can be viewed as achieving good $\infty$-norm error, also leads to a good error for other $p$-norms.
To obtain the analogous results (albeit without bounds on the classical memory and computation time) for more general orthonormal bases and associated transfer matrices as well as for suitably sparse objects, we follow the same line of reasoning, but rely on the general state shadow tomography results of \cite{aaronson2018shadow, badescu2021improved}. Consequently, any improvements to state shadow tomography will also directly improve our results.

\paragraph{Lower bounds for learning without quantum memory}
For our exponential query complexity lower bounds, we extend a technique introduced in \cite{chen2022exponential} to prove query complexity lower bounds for state shadow tomography without quantum memory to a channel learning task of shadow tomography type. 
Our proof consists of three steps. First, following the framework proposed in \cite{chen2022exponential}, we model an adaptive learning procedure that uses only a single copy of an unknown quantum channel at a time in terms of a so-called learning tree. Here, branchings in the tree correspond to different adaptive choices of the learner, depending on observed measurement outcomes.
Second, we consider the distributions over the leaves of the learning tree that arise when the learner acts on different (from their perspective unknown) channels, and we upper bound how distinguishable those distributions are. Here, distinguishability is to be understood in terms of uniform one-sided likelihood ratio lower bounds for the leaf probability distributions. 
Third, via Le Cam's two-point method, such likelihood bounds gives rise to query complexity lower bounds for a corresponding distinguishing task. We then argue that any learner would in particular be able to solve that distinguishing task, thus inheriting the lower bound.

Even with the general learning tree framework in place, analyzing the distinguishability of the induced distributions over learning trees remains a formidable technical challenge, which has so far not been resolved for general quantum channels. 
For state shadow tomography without quantum memory, \cite{chen2022exponential} was able to overcome this technical hurdle and to establish exponential query complexity lower bounds.
Our first observation in trying to instantiate the learning tree framework for lower bounds in Pauli transfer matrix learning is that a small modification to the reasoning of \cite{chen2022exponential} already suffices to obtain analogous lower bounds for predicting Pauli transfer matrix entries of an unknown quantum channel if the learner only has access to copies of its Choi state. Concretely, we obtain the lower bound via the intermediate task of distinguishing random Choi states of the form $\tfrac{1}{2^n}(\mathbbm{1}\pm 3\varepsilon P)$, where $\pm P$ is a signed $(2n)$-qubit Pauli, from the maximally mixed state.
This modification involves determining the second moment for a certain sub-ensemble of all $n$-qubit Paulis.
However, the $\Omega \left(\nicefrac{4^n}{\varepsilon^2}\right)$ query complexity lower bound obtained in this way does not yet apply to general adaptive quantum channel learning procedures without quantum memory, which can also adaptively choose inputs to the unknown channel.

Our main technical contribution is to prove the one-sided likelihood bounds required for LeCam's two-point method even in this fully adaptive setting.
To achieve this, we identify a quantity that, given a set of Pauli transfer matrix entries, characterizes the hardness of simultaneously predicting those entries without using a quantum memory. 
We analyze this quantity for a suitable set of transfer matrix entries and for a suitable ensemble of unknown quantum channels, whose Choi states are either $\tfrac{1}{2^n}(\mathbbm{1}\pm 3\varepsilon P)$, with a random signed Pauli string $\pm P$, or maximally mixed.
Our analysis, which combines a second moment calculation for Pauli sub-ensembles with matrix-analytic tools, provides a new way of utilizing the framework of \cite{chen2022exponential} and establishes the $\Omega \left(\nicefrac{4^n}{\varepsilon^2}\right)$ query complexity lower bound for general channel learning procedures without quantum memory. This analysis may serve as a template for proving query complexity lower bounds for further channel learning tasks without access to quantum-enhanced experiments.

\paragraph{Hamiltonian learning from Pauli transfer matrix learning}
We prove that our Pauli transfer matrix learning protocol gives rise to the Hamiltonian learning algorithm of \Cref{theorem:main-result-hamiltonian-learning} in three steps. 
First, reinterpreting observations made in \cite{haah2021optimal, stilck-franca2022efficient, gu2022practical}, we note that any single Pauli expansion coefficient $\alpha(A)$ of the unknown Hamiltonian $H = \sum_{A\in\{0,1,2,3\}^n} \alpha(A) \sigma_A$ can be related to the first-order time derivative of a specific time-dependent Pauli transfer matrix entry of the associated unitary time evolution, evaluated at time $0$. Importantly, this works for Pauli coefficients of arbitrarily high weight. 
Second, relying on tools from polynomial approximation \cite{howell1991derivative}, we can estimate the first-order derivative of a function via that of its Chebyshev interpolating polynomial. This requires us to prove bounds on higher-order derivatives of the function of interest. 
Thus, the third step in our proof is to control higher order derivatives of functions of the form $t\mapsto \tfrac{1}{2}\tr[\sigma_{B} e^{-itH}\rho e^{itH}]$ for an arbitrary Pauli $\sigma_B$ and an arbitrary state $\rho$. We achieve this via a rewriting in terms of iterated commutators with $H$. The concrete implementation of this last step deviates from prior work, who used structural assumptions on the Hamiltonian to obtain derivative bounds. In our case, since we do not make any prior assumptions on $H$, we work with derivative bounds that depend on the spectral norm $\norm{H}$.
In summary, we use our Pauli transfer matrix learning procedure applied to the evolution at different times to produce data points for polynomial interpolation, and then estimate the Pauli coefficients of the Hamiltonian via first-order time derivatives from the interpolating polynomial.

\subsection{Related Work}

\paragraph{Shadow tomography for quantum states}
Since \cite{aaronson02007learnability} considered a ``pretty good'' version of state tomography motivated by the framework of probably approximately correct learning, learning quantum states in different models has received significant attention.
For example, \cite{aaronson2019online, chen2022adaptive} investigated problems of online learning quantum states.
Most closely related to our work is shadow tomography for quantum states, originally proposed in \cite{aaronson2018shadow} and improved upon in \cite{badescu2021improved, huang2021information-theoretic}.
In shadow tomography, the task is to predict many expectation values of an unknown quantum from few copies.
Our work can be viewed as partially lifting shadow tomography from states to channels. Here, the ``partially'' is meant to express that, whereas shadow tomography for states allows to predict expectation values of arbitrary observables, in our channel shadow tomography we have to make some additional assumptions on the input states and output observables.

\paragraph{Classical shadows for quantum states and quantum processes}

While shadow tomography methods are efficient in terms of sample complexity, they rely on measurements on multiple copies of the unknown state and can be inefficient in terms of classical memory and computation time.
Classical shadows of quantum states, going back to \cite{paini2019approximate, huang2020predicting}, constitute a more practical variant of shadow tomography, since they use single-copy measurements and come with guarantees on classical memory and computation time. 
However, whereas shadow tomography can predict expectation values for arbitrary observables, classical shadows only work reliably for a suitably restricted set of observables, 
which depends on the random unitaries used in the shadow protocol.
\cite{levy2021classical-shadows, kunjummen2023shadow-process} lifted the classical shadow formalism from states to channels by applying the classical shadow protocol of \cite{huang2020predicting} based on randomized local Paulis to the Choi state of an unknown channel (see also \cite[Appendix H]{stilck-franca2022efficient}), which is similar to how we apply shadow tomography to the Choi state to learn transfer matrix entries of the channel from Choi access. As the local classical shadow protocol of \cite{huang2020predicting} comes with prediction guarantees only for local observables, the protocol of \cite{levy2021classical-shadows, kunjummen2023shadow-process} can only learn local reduced density matrices of the Choi state. This is not sufficient to learn the whole transfer matrix of the unknown channel, which our quantum-enhanced protocol achieves. 

\paragraph{Learning quantum channels}
In the wake of the many insights into learning quantum states, also questions of learning quantum channels have begun to attract attention.
For instance, there is a growing literature in variational quantum machine learning exploring how different kinds of complexity bounds for the quantum model, which plays the role of an unknown quantum channel to be learned, lead to sample complexity bounds \cite{caro2020pseudo, abbas2021power, bu2021effects, bu2021rademacher, gyurik2021structural, caro2021encodingdependent, chen2021expressibility, popescu2021learning, bu2022onthestatistical, cai2022sample, du2022efficient, caro2022generalization, caro2022outofdistribution}. Viewed from the perspective of channel learning, these results make assumptions on the complexity of the unknown channel and use this to bound the information-theoretic complexity of the learning task. These approaches can therefore typically not be applied to arbitrary quantum channels of high complexity.

Also in the broader quantum information theory community, channel learning tasks are gaining traction.
\cite{chung2021sample, caro2021binary, fanizza2022learning} prove sample complexity guarantees for learning channels with classical input and quantum output in a probably approximately correct learning setting.
When considering channels with quantum input and quantum output, the case of Pauli channels, relevant for modeling quantum noise, is already well studied: \cite{flammia2020efficient, harper2020efficient, harper2021fast, flammia2021pauli} give efficient procedures for learning the Pauli error rates of an unknown Pauli channel. 
In work closely related to the present paper, \cite{chen2022quantum} proved query complexity separations for learning the Pauli eigenvalues of an unknown Pauli channel, which in our language are its diagonal Pauli transfer matrix elements. They distinguish between learners with quantum memory and learners restricted either by not having access to a quantum memory or only to auxiliary systems of limited size.
Our query complexity lower bounds for learning without quantum memory show that the lower bounds of \cite[Theorem 2, (i) and (ii)]{chen2022quantum} can be strengthened significantly, achieving an exponential lower bound independently of the size of the allowed auxiliary system, if we go beyond Pauli channels with diagonal Pauli transfer matrices to general quantum channels with general Pauli transfer matrices.

Taking a broader perspective on channel learning, \cite{huang2022foundations} established fundamental results about the conditions under which quantum processes can or cannot be learned from experiments with noise.
And in very recent work, \cite{huang2022learning} extended the classical shadow formalism from states to channels, giving an efficient procedure without quantum memory for learning to predict an arbitrary unknown quantum channel w.r.t.~bounded degree $k$-local observables when averaged over any locally flat distribution of input states. In contrast, our learning algorithm relies on a quantum memory, but can predict w.r.t.~general Pauli-sparse observables and worst-case w.r.t.~Pauli-sparse input states. 

\paragraph{Hamiltonian learning}
The task of learning a Hamiltonian underlying the evolution of a quantum physical evolution has seen significant progress in recent years. 
Works such as \cite{garrison2018does, qi2019determining, bairey2019learning, li2020hamiltonian} showed that, at least in principle, generic geometrically local Hamiltonians can be learned from ``little information'', such as from a single eigenstate or from 
few generic pairs of input and output states.
\cite{anshu2021sample, rouze2021learning} gave efficient procedures for learning a geometrically local Hamiltonian from copies of its high-temperature Gibbs state, this was extended to more general low-intersection Hamiltonians while at the same time improving the sample and time complexity in \cite{haah2021optimal}.
The recent works \cite{stilck-franca2022efficient, gu2022practical, wilde2022scalably,yu2023robust} have highlighted the practical importance of learning an unknown Hamiltonian from access to its dynamics, deriving provable guarantees for geometrically local or low-intersection Hamiltonians and demonstrating the applicability of their Hamiltonian learning procedures for relatively large quantum systems.
Recently, \cite{huang2022heisenberg-scaling} proved that even Heisenberg-limited scaling for the total evolution time is achievable in the task of learning low-intersection Hamiltonians from dynamics.
Our addition to the Hamiltonian learning literature is the to our knowledge first efficient protocol for learning an arbitrary Hamiltonian with an a priori unknown structure from parallel access to its short-time dynamics.

\paragraph{Learning from quantum experiments}

\cite{huang2021information-theoretic, aharonov2022quantum} established exponential query complexity separations between certain tasks of learning from quantum experiments with and without quantum memory.
Additionally, \cite{huang2021information-theoretic} demonstrated that a class of average-case channel learning tasks does not admit a large quantum query complexity advantage, thereby emphasizing the importance of worst-case considerations.
Building on these works, \cite{chen2022exponential} introduced the learning tree framework as a powerful tool for proving query complexity lower bounds for learning without quantum memory, or with limited quantum memory \cite{chen2021hierarchy}.
This mathematical framework has been used in \cite{huang2022quantum} to prove an experimentally demonstrable quantum advantage in learning from quantum physics experiments when using a quantum memory.
Moreover, with \cite{chen2022complexity}, the learning tree formalism has recently found an application in a complexity-theoretic study of the capabilities of NISQ devices.
We use the framework of \cite{chen2022exponential} and develop new technical tools to extend their proof strategy and to derive query complexity lower bounds for Pauli transfer matrix learning without quantum memory, thereby extending their lower bounds for state shadow tomography to (Pauli-sparse) channel shadow tomography. 
Where \cite{aharonov2022quantum, chen2022exponential} considered ``channel distinguishing'' tasks, we focus more on ``channel learning'' tasks. (Note, though, that our query complexity lower bounds are proved via a channel distinguishing task.) And where \cite[Theorem 3]{huang2022quantum} established a quantum-versus-classical query complexity separation for learning a polynomial-time quantum channel on average over a distribution of states, we consider learning arbitrarily complex quantum channels in a worst-case framework over (certain) input states.

\subsection{Summary and Directions for Future Work}

In this work, we have given a provably sample-efficient procedure for learning transfer matrices with quantum memory from copies of the Choi state of an arbitrary unknown quantum channel. In the case of the Pauli transfer matrix, we obtained efficiency guarantees for classical memory and classical computation time in addition to the copy complexity bounds. This allows to efficiently learn an arbitrary unknown Hamiltonian from dynamics.
Moreover, we demonstrated that (Pauli) transfer matrix learning can be used as a subroutine for efficiently learning to predict expectation values for (Pauli-) sparse input states and output observables, a task of channel shadow tomography type.
In particular, solving these learning problems remains information-theoretically feasible even for arbitrarily complex quantum processes, which is in contrast to the idea of Occam's razor embodied in the established classical statistical learning theory literature based on complexity measures \cite{vapnik1971uniform, valiant1984theory, pollard1984convergence, kearns1994efficient, dudley1999uniform, bartlett2002rademacher}

While the tasks of transfer matrix learning and sparse expectation value prediction can be solved information-theoretically efficiently with quantum memory, we proved that no learner without quantum memory can succeed at these tasks with a subexponential number of queries to the unknown quantum channel.
Thus, these channel learning tasks show an exponential query complexity separation between learners with and without quantum memory.
Our $\Omega(\nicefrac{4^n}{\varepsilon^2})$ query complexity lower bound for learners without quantum memory is significantly stronger than the $\Omega(\nicefrac{2^n}{\varepsilon^2})$ obtained by directly applying the existing state shadow tomography lower bound to state preparation channels, showing that the quantum advantage over conventional experiments is even more distinctive in the channel learning case than in the state learning case.

We conclude by outlining some open problems raised by our work. 
Here, we begin by recalling the two questions underlying this work and aspects thereof that remain open.
\begin{itemize}
    \item Transfer matrix learning and Hamiltonian learning are efficiently solvable channel learning tasks without structural assumptions, thus serving as positive examples in answer to Question (1). Can we embed them into a broader class of feasible learning problems with arbitrary unknown quantum processes?
    \item Our exponential query complexity separation between learners with and without quantum memory answers Question (2) for transfer matrix learning. However, it breaks down for certain variants of the task (compare \cref{remark:no-exponential-separation-with-known-sparsity-structure,remark:no-exponential-separation-with-local-states-and-observables,remark:no-exponential-separation-pauli-channels,remark:no-exponential-separation-average-case}). Can we develop a more general understanding of the conditions conducive to a quantum query complexity advantage in channel learning?
\end{itemize}
Finally, we mention three questions regarding potential improvements and extensions of our results.
\begin{itemize}
    \item As our transfer matrix learning protocols with quantum memory require only Choi access to the unknown channel, it is natural to ask whether the query complexity can be further improved by allowing for sequential channel access. However, the linear-in-$n$ dependence cannot be improved, even for Pauli channels \cite[Theorem 2, (iv)]{chen2022quantum}. 
    Does sequential access allow to improve upon the $\tfrac{1}{\varepsilon^4}$-scaling?
    \item Our Hamiltonian learning complexity bounds in \Cref{theorem:hamiltonian-learning} exhibit a quartic scaling with $\nicefrac{1}{\varepsilon}$ and a linear or quadratic scaling with $n$. Can these dependencies be improved to come closer to the guarantees found for restricted Hamiltonians in prior works? Are quantum-enhanced experiments even necessary for learning arbitrary Hamiltonians from dynamics?
    \item Using \cite[Appendix A]{stilck-franca2022efficient}, we can employ our Pauli transfer matrix learning procedure to learn Lindblad generators with arbitrary Hamiltonian part and arbitrary single-site dissipation terms. Does this extend to more general Lindblad generators with arbitrary dissipative part?
\end{itemize}
Answering these questions will help understand the role of quantum memory in learning processes in quantum physics from a query complexity perspective motivated by theoretical computer science. Thereby, we can gain a better grasp of quantum learning problems solvable on near-term quantum hardware and of the experimental capabilities that need to be developed for addressing the most challenging quantum learning tasks.

\subsection{Structure of the Paper}

The remainder of the paper is structured as follows:
\Cref{section:preliminaries} introduces basic notions from quantum information as well as different models of learning from an unknown quantum channel. (See \Cref{appendix:parallel-access} for two additional learning models.)
In \Cref{section:tm-estimates-give-expectation-estimates}, we explain how approximations on the level of the transfer matrix of a channel translate to expectation value estimates under sparsity assumptions.
\Cref{section:ptm-learning} contains our efficient procedures with quantum memory for learning quantum channels via their Pauli transfer matrix, with a corresponding query complexity lower bound presented in \Cref{appendix:query-lower-bound-ptm-learning-with-quantum-memory-choi-access}. (More general transfer matrices are considered in \Cref{section:tm-learning}.)
This is then contrasted with exponential query complexity lower bounds for learning without quantum memory in \Cref{section:query-complexity-lower-bounds}, the proofs for which can be found in \Cref{appendix:proofs} and are based on the learning tree formalism \cite{chen2022exponential} shortly reviewed in \Cref{appendix:learning-tree-formalism}.
To showcase an application of transfer matrix learning, we demonstrate its applicability to Hamiltonian learning in \Cref{section:hamiltonian-learning}, with the main proof again deferred to \Cref{appendix:proofs}.

\section{Preliminaries}\label{section:preliminaries}

\subsection{Quantum Channels, the Choi-Jamiolkowski Isomorphism, and the Transfer Matrix Representation}

Here, we review basic notions from quantum information and computation that will appear throughout the rest of the paper. The presentation here is brief, we recommend textbooks such as \cite{nielsen2000quantum, heinosaari2011mathematical, watrous2018theory} or lecture notes such as \cite{wolf2012quantumchannels, preskill2020quantumcomputation}.

Throughout the paper, we work with finite-dimensional Hilbert spaces $\mathbb{C}^{d}$, with dimension $d\in\mathbb{N}_{\geq 1}= \{1,2,\ldots\}$. We employ Dirac bra-ket notation to denote elements of $\mathbb{C}^d$ by kets and elements of the dual space $(\mathbb{C}^d)^\ast$ by bras.
We use $\mathcal{B}(\mathbb{C}^{d})$ to denote the set of bounded linear operators on $\mathbb{C}^d$. 
Two norms on $\mathcal{B}(\mathbb{C}^{d})$ will be relevant for our purposes: We denote the operator norm (w.r.t.~the Euclidean norm on $\mathbb{C}^d$), also called the spectral norm or Schatten $\infty$-norm, of $X\in\mathcal{B}(\mathbb{C}^{d})$ by $\norm{X} = \sup_{\ket{\psi}\in\mathbb{C}^d: \braket{\psi}=1} \bra{\psi}X\ket{\psi}$. We denote the norm induced by the Hilbert-Schmidt inner product, also called the Frobenius norm or Schatten $2$-norm, of $X\in\mathcal{B}(\mathbb{C}^{d})$ by $\norm{X}_2 = \sqrt{\tr[X^\dagger X]}$.

The set of $d$-dimensional density matrices, the mathematical description for quantum states, is denoted by $\mathcal{S}(\mathbb{C}^d) = \left\{\rho\in\mathcal{B}(\mathbb{C}^d)~|~\rho\geq 0~\wedge~\tr[\rho]=1\right\}$. 
The rank-$1$ projections in $\mathcal{S}(\mathbb{C}^d)$ are called pure states. We typically identify a pure state $\ketbra{\psi}{\psi}\in \mathcal{S}(\mathbb{C}^d)$ with the corresponding normalized vector $\ket{\psi}\in\mathbb{C}^d$. If $\rho \in\mathcal{S}(\mathbb{C}^d\otimes \mathbb{C}^{d'})$ is a state on a composite system, we say that $\rho$ is separable if it can be written as a convex combination of tensor products of states on the two tensor factors, otherwise we call $\rho$ entangled.

We describe measurements in terms of positive operator-valued measures (POVMs). Here, a $d$-dimensional $M$-outcome POVM is a set $\{E_i\}_{i=1}^M$ of bounded linear operators $E_i\in \mathcal{B}(\mathbb{C}^{d})$ satisfying $0\leq E_i\leq \mathbbm{1}_d$ for all $1\leq i\leq M$ as well as $\sum_{i=1}^M E_i = \mathbbm{1}_d$.
According to Born's rule, the probability of observing outcome $i$ when measuring the POVM $\{E_i\}_{i=1}^M$ on the state $\rho$ is given by $\tr[E_i \rho]$.

We now turn our attention to evolutions of quantum systems. These are described mathematically by quantum channels:

\begin{definition}[Quantum channels]
    A linear superoperator $\mathcal{N}:\mathcal{B}(\mathbb{C}^{d_{\mathrm{in}}})\to \mathcal{B}(\mathbb{C}^{d_{\mathrm{out}}})$ is a quantum channel (in the Schr\"odinger picture) if $\mathcal{N}$ is completely positive (i.e., $\operatorname{id}_{\mathrm{aux}}\otimes \mathcal{N}$ is positivity-preserving for any auxiliary system) and trace-preserving.
    If $\mathcal{N}$ is also unital (i.e., satisfies $\mathcal{N}(\mathbbm{1}_{d_{\mathrm{in}}}) = \mathbbm{1}_{d_{\mathrm{out}}}$), then we call $\mathcal{N}$ a doubly-stochastic quantum channel (see \cite{mendl2009unital}).
\end{definition}

We call a quantum channel $\mathcal{N}$ entanglement-breaking if $(\operatorname{id}_{\mathrm{aux}}\otimes \mathcal{N})(\rho)$ is separable for any state $\rho\in\mathcal{S}(\mathbb{C}^{d_{\mathrm{aux}}}\otimes \mathbb{C}^{d_{\mathrm{in}}})$, see \cite{horodecki2003entanglement}.
Our main focus will be on $n$-qubit channels, for which $d_{\mathrm{in}} = d_{\mathrm{out}} =2^n$.

Linear superoperators are isomorphically related to bounded linear operators on a tensor product space. A well known isomorphism particularly useful for quantum information purposes is given in the following:

\begin{proposition}[Choi-Jamiolkowski isomorphism \cite{jamiolkowski1972linear, choi1975completely}]
    Given an orthonormal basis $\mathcal{B}=\{\ket{i}\}_{i=1}^{d_{\mathrm{in}}}\subseteq\mathbb{C}^{d_{\mathrm{in}}}$, the linear map that takes as input a linear superoperator $\mathcal{N}:\mathcal{B}(\mathbb{C}^{d_{\mathrm{in}}})\to \mathcal{B}(\mathbb{C}^{d_{\mathrm{out}}})$ and maps it to the bounded operator
    \begin{equation}\label{eq:definition-choi-state}
        \frac{1}{d_{\mathrm{in}}}\Gamma^\mathcal{N}
        \coloneqq (\operatorname{id}_{\mathrm{in}}\otimes\mathcal{N})(\Omega) \in \mathcal{B}(\mathbb{C}^{d_{\mathrm{in}}} \otimes \mathbb{C}^{d_{\mathrm{out}}}) ,
    \end{equation}
    where $\Omega = \ketbra{\Omega}{\Omega}$ with $\ket{\Omega} = \tfrac{1}{\sqrt{d_{\mathrm{in}}}}\sum_{i=1}^{d_{\mathrm{in}}} \ket{i i}$ is the canonical maximally entangled state w.r.t.~$\mathcal{B}$, is a linear isomorphism and is called the Choi-Jamiolkowski isomorphism (w.r.t.~the basis $\mathcal{B}$).
    
    Given the Choi state $\tfrac{1}{d_{\mathrm{in}}}\Gamma^\mathcal{N}$ of a quantum channel, we can describe the action of $\mathcal{N}$ on an input operator $X\in\mathcal{B}(\mathbb{C}^{d_{\mathrm{in}}})$ via $\mathcal{N}(X) = d_{\mathrm{in}} \tr_{\mathrm{in}}[ (X_{\mathrm{in}}^{\top}\otimes\mathbbm{1}_{\mathrm{out}}) \tfrac{1}{d_{\mathrm{in}}}\Gamma^\mathcal{N}]$. 
    More generally, the action of $\operatorname{id}_{\mathrm{aux}}\otimes\mathcal{N}$ on a bipartite input operator $X\in\mathcal{B}(\mathbb{C}^{d_{\mathrm{aux}}}\otimes\mathbb{C}^{d_{\mathrm{in}}})$ is given by $(\operatorname{id}_{\mathrm{aux}}\otimes\mathcal{N})(X) = d_{\mathrm{in}} \tr_{\mathrm{in}}[(X^{\top_{\mathrm{in}}}\otimes\mathbbm{1}_{\mathrm{out}}) (\mathbbm{1}_{\mathrm{aux}}\otimes \tfrac{1}{d_{\mathrm{in}}}\Gamma^\mathcal{N})]$.
\end{proposition}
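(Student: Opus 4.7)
The plan is to derive everything from a single explicit verification of the action-recovery formula $\mathcal{N}(X) = d_{\mathrm{in}}\,\tr_{\mathrm{in}}\!\bigl[(X^{\top}\otimes \mathbbm{1}_{\mathrm{out}})\,\tfrac{1}{d_{\mathrm{in}}}\Gamma^{\mathcal{N}}\bigr]$, since this simultaneously (i) provides an explicit linear left inverse of the Choi map, (ii) forces injectivity, and (iii) combined with a dimension count yields the full isomorphism. Linearity of $\mathcal{N}\mapsto\Gamma^{\mathcal{N}}$ itself is immediate from linearity of $\operatorname{id}_{\mathrm{in}}\otimes(\,\cdot\,)$ evaluated at the fixed operator $\Omega$, so no separate argument is needed.

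The main computation is straightforward but index-heavy. First, I would expand $\Omega = \tfrac{1}{d_{\mathrm{in}}}\sum_{i,j}\ket{ii}\!\bra{jj}$ in the chosen basis $\mathcal{B}$, giving
\begin{equation*}
    \Gamma^{\mathcal{N}} \;=\; \sum_{i,j=1}^{d_{\mathrm{in}}} \ket{i}\!\bra{j}\otimes \mathcal{N}\bigl(\ket{i}\!\bra{j}\bigr).
\end{equation*}
Next, writing $X = \sum_{i,j} X_{ij}\ket{i}\!\bra{j}$ so that $X^{\top} = \sum_{i,j} X_{ij}\ket{j}\!\bra{i}$, I would substitute into the right-hand side of the recovery formula and use the two identities $\ket{j}\!\bra{i}\cdot\ket{k}\!\bra{\ell} = \delta_{ik}\ket{j}\!\bra{\ell}$ and $\tr[\ket{j}\!\bra{\ell}] = \delta_{j\ell}$, together with linearity of $\mathcal{N}$, to collapse the quadruple sum to $\sum_{i,k} X_{ki}\mathcal{N}(\ket{k}\!\bra{i}) = \mathcal{N}(X)$. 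The only source of friction is keeping the transpose in $X^{\top}$ correctly aligned with the bra-index of $\ket{i}\!\bra{j}$ coming from $\Omega$; once the bookkeeping is done the identity drops out.

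With the recovery formula in hand, injectivity is automatic: $\Gamma^{\mathcal{N}} = \Gamma^{\mathcal{N}'}$ forces $\mathcal{N}(X)=\mathcal{N}'(X)$ for every $X\in\mathcal{B}(\mathbb{C}^{d_{\mathrm{in}}})$. Since the domain (linear superoperators $\mathcal{B}(\mathbb{C}^{d_{\mathrm{in}}})\to\mathcal{B}(\mathbb{C}^{d_{\mathrm{out}}})$) and the codomain ($\mathcal{B}(\mathbb{C}^{d_{\mathrm{in}}}\otimes\mathbb{C}^{d_{\mathrm{out}}})$) both have complex dimension $d_{\mathrm{in}}^{2}d_{\mathrm{out}}^{2}$, injectivity of a linear map between them is equivalent to bijectivity, establishing the isomorphism. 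One can alternatively bypass the dimension count entirely by observing that the recovery formula is already an explicit linear left inverse, hence an inverse by the same dimension argument.

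For the bipartite extension, I would simply reapply the argument to the enlarged channel $\operatorname{id}_{\mathrm{aux}}\otimes\mathcal{N}$, but only on the \emph{in}-register. Writing $X\in\mathcal{B}(\mathbb{C}^{d_{\mathrm{aux}}}\otimes\mathbb{C}^{d_{\mathrm{in}}})$ as $X = \sum_{i,j} X^{(ij)}\otimes\ket{i}\!\bra{j}$ with operator-valued coefficients $X^{(ij)}\in\mathcal{B}(\mathbb{C}^{d_{\mathrm{aux}}})$, the partial transpose on the in-register produces $X^{\top_{\mathrm{in}}} = \sum_{i,j}X^{(ij)}\otimes\ket{j}\!\bra{i}$, and pairing this with $\mathbbm{1}_{\mathrm{aux}}\otimes\tfrac{1}{d_{\mathrm{in}}}\Gamma^{\mathcal{N}}$ inside $\tr_{\mathrm{in}}$ reduces to the previous computation on the in-register while the aux-register is untouched. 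The only obstacle is, again, purely notational: keeping track of four registers (aux, in, out, and the ``mirror'' copy of in living inside $\Omega$). No further structural property of quantum channels—complete positivity, trace preservation, etc.—is invoked, so the statement applies to arbitrary linear superoperators.
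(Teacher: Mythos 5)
The paper states this proposition as a recalled standard fact, cited to the original references and to textbook treatments, and gives no proof of its own; there is therefore no in-paper argument to compare against. Your proof is correct and is the standard one. The central computation checks out: from $\Gamma^{\mathcal{N}}=\sum_{i,j}\ket{i}\!\bra{j}\otimes\mathcal{N}(\ket{i}\!\bra{j})$ and $X^{\top}=\sum_{k,\ell}X_{k\ell}\ket{\ell}\!\bra{k}$ one gets $(X^{\top}\otimes\mathbbm{1})\Gamma^{\mathcal{N}}=\sum_{k,\ell,j}X_{k\ell}\ket{\ell}\!\bra{j}\otimes\mathcal{N}(\ket{k}\!\bra{j})$, and the partial trace over the in-register picks out $\ell=j$, collapsing to $\mathcal{N}(X)$; injectivity plus the dimension count $d_{\mathrm{in}}^{2}d_{\mathrm{out}}^{2}=\dim\mathcal{B}(\mathbb{C}^{d_{\mathrm{in}}}\otimes\mathbb{C}^{d_{\mathrm{out}}})$ then gives bijectivity, and the bipartite formula follows by the same manipulation acting only on the in-register. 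One small remark: your closing aside that one can ``bypass the dimension count entirely'' via the explicit left inverse is not quite right as phrased, since a left inverse alone only yields injectivity and you still invoke the dimension argument for surjectivity; if you genuinely want to avoid counting dimensions, you would instead verify that the recovery map is a two-sided inverse, i.e., that for an arbitrary $Y\in\mathcal{B}(\mathbb{C}^{d_{\mathrm{in}}}\otimes\mathbb{C}^{d_{\mathrm{out}}})$ the superoperator $X\mapsto d_{\mathrm{in}}\tr_{\mathrm{in}}[(X^{\top}\otimes\mathbbm{1}_{\mathrm{out}})\tfrac{1}{d_{\mathrm{in}}}Y]$ has Choi operator exactly $Y$, which is another short index computation. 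You are also right that no property of quantum channels beyond linearity is used anywhere in the argument.
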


From now on, we always imagine that a basis $\mathcal{B}$ of $\mathbb{C}^{d_{\mathrm{in}}}$ has been fixed, so we will not mention the choice of basis for the canonical maximally entangled state, for the Choi-Jamiolkowski isomorphism, or for matrix transpositions explicitly anymore.
We collect useful properties of the Choi-Jamiolkowski isomorphism in the next proposition:

\begin{proposition}[Properties of the Choi-Jamiolkowski isomorphism (see, e.g., {\cite[Proposition 2.1]{wolf2012quantumchannels}})]
    Let $\mathcal{N}:\mathcal{B}(\mathbb{C}^{d_{\mathrm{in}}})\to \mathcal{B}(\mathbb{C}^{d_{\mathrm{out}}})$ be a linear superoperator and let $\tfrac{1}{2^n}\Gamma^\mathcal{N}$ be its normalized Choi-Jamiolkowski operator as in \Cref{eq:definition-choi-state}.
    Then:
    \begin{enumerate}
        \item $\mathcal{N}$ is Hermiticity-preserving if and only if $\tfrac{1}{2^n}\Gamma^\mathcal{N}$ is Hermitian.
        \item $\mathcal{N}$ is completely positive if and only if $\tfrac{1}{2^n}\Gamma^\mathcal{N}$ is positive semidefinite.
        \item $\mathcal{N}$ is trace-preserving if and only if $\tr_{\mathrm{out}}[\tfrac{1}{2^n}\Gamma^\mathcal{N}] = \tfrac{1}{2^{d_{\mathrm{in}}}}\mathbbm{1}_{\mathrm{in}}$.
        \item $\mathcal{N}$ is unital if and only if $\tr_{\mathrm{in}}[\tfrac{1}{2^n}\Gamma^\mathcal{N}] = \tfrac{1}{2^{d_{\mathrm{out}}}}\mathbbm{1}_{\mathrm{out}}$.
    \end{enumerate}
\end{proposition}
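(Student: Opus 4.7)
The plan is to reduce each of the four equivalences to a direct block-matrix computation using the explicit form of the Choi state. Expanding the maximally entangled state in the fixed basis $\mathcal{B} = \{\ket{i}\}$ gives
\[
\Gamma^{\mathcal{N}} = \sum_{i,j=1}^{d_{\mathrm{in}}} \ketbra{i}{j} \otimes \mathcal{N}(\ketbra{i}{j}),
\]
so the input-register blocks of $\Gamma^{\mathcal{N}}$ are exactly the images $\mathcal{N}(\ketbra{i}{j})$. All four equivalences then follow by comparing features of this block structure with properties of $\mathcal{N}$, using that $\{\ketbra{i}{j}\}$ spans $\mathcal{B}(\mathbb{C}^{d_{\mathrm{in}}})$ and extending by linearity.

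For (1), if $\Gamma^{\mathcal{N}}$ is Hermitian, reading off the $(i,j)$ and $(j,i)$ blocks yields $\mathcal{N}(\ketbra{j}{i}) = \mathcal{N}(\ketbra{i}{j})^{\dagger}$, which extends by linearity to $\mathcal{N}(X^\dagger) = \mathcal{N}(X)^\dagger$ for every $X$; the converse is immediate from the same block identity. For (3) and (4), I would compute the partial traces directly:
\[
\tr_{\mathrm{out}}[\Gamma^{\mathcal{N}}] = \sum_{i,j} \tr[\mathcal{N}(\ketbra{i}{j})]\,\ketbra{i}{j}, \qquad \tr_{\mathrm{in}}[\Gamma^{\mathcal{N}}] = \sum_{i} \mathcal{N}(\ketbra{i}{i}) = \mathcal{N}(\mathbbm{1}_{\mathrm{in}}),
\]
so $\tr_{\mathrm{out}}[\Gamma^{\mathcal{N}}] = \mathbbm{1}_{\mathrm{in}}$ is equivalent, by linearity, to $\tr[\mathcal{N}(X)] = \tr[X]$ for all $X$, and $\tr_{\mathrm{in}}[\Gamma^{\mathcal{N}}] = \mathbbm{1}_{\mathrm{out}}$ is equivalent to $\mathcal{N}$ being unital. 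Dividing by $d_{\mathrm{in}}$ gives the normalized identities claimed.

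The only genuinely substantive step is the CP equivalence (2). The easy direction is that complete positivity of $\mathcal{N}$, applied to the positive semidefinite operator $\Omega$, immediately gives $\Gamma^{\mathcal{N}} \geq 0$. For the converse, which is the classical content of Choi's theorem, I would spectrally decompose $\Gamma^{\mathcal{N}} = \sum_k \ketbra{v_k}{v_k}$ with unnormalized vectors $\ket{v_k} \in \mathbb{C}^{d_{\mathrm{in}}} \otimes \mathbb{C}^{d_{\mathrm{out}}}$, and use the vectorization correspondence $\ket{v_k} = \sum_i \ket{i} \otimes K_k \ket{i}$ to read off operators $K_k \in \mathcal{B}(\mathbb{C}^{d_{\mathrm{in}}}, \mathbb{C}^{d_{\mathrm{out}}})$. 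A short bilinear calculation then verifies $\mathcal{N}(X) = \sum_k K_k X K_k^\dagger$ for all $X$, so $\mathcal{N}$ has a Kraus representation and is in particular completely positive.

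The main ``obstacle'' is really just bookkeeping: one must ensure that the vectorization convention used in step (2), the basis used for $\ket{\Omega}$, and the basis used for transposition in the formula $\mathcal{N}(X) = d_{\mathrm{in}}\tr_{\mathrm{in}}[(X^\top \otimes \mathbbm{1}_{\mathrm{out}})\tfrac{1}{d_{\mathrm{in}}}\Gamma^{\mathcal{N}}]$ all agree. Since all four statements are completely standard and appear verbatim in the cited reference \cite{wolf2012quantumchannels}, I would write the actual proof as a brief paragraph recording the block-decomposition formula above, dispatching (1), (3), (4) in one line each, and pointing to Choi's theorem for (2).
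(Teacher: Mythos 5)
Your proof is correct, and the paper does not supply its own argument here — it simply cites the standard reference \cite[Proposition 2.1]{wolf2012quantumchannels}, whose proof is exactly the block-decomposition-plus-Choi's-theorem argument you outline. The only cosmetic point worth flagging is that the normalized partial-trace identities should read $\tr_{\mathrm{out}}[\tfrac{1}{d_{\mathrm{in}}}\Gamma^{\mathcal{N}}]=\tfrac{1}{d_{\mathrm{in}}}\mathbbm{1}_{\mathrm{in}}$ and $\tr_{\mathrm{in}}[\tfrac{1}{d_{\mathrm{in}}}\Gamma^{\mathcal{N}}]=\tfrac{1}{d_{\mathrm{in}}}\mathbbm{1}_{\mathrm{out}}$ (as your computation in fact shows), so the exponents appearing in the statement as printed are typos rather than anything your argument needs to accommodate.
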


In particular, we see that $\mathcal{N}$ is a quantum channel if and only if $\tfrac{1}{2^n}\Gamma^\mathcal{N}$ is a quantum state with maximally mixed first marginal. Moreover, recalling that $\mathcal{N}$ is entanglement breaking if and only if $\tfrac{1}{2^n}\Gamma^\mathcal{N}$ is separable \cite{horodecki2003entanglement}, we see that $\mathcal{N}$ is a doubly-stochastic and entanglement-breaking quantum channel if and only if $\tfrac{1}{2^n}\Gamma^\mathcal{N}$ is a separable quantum state whose first and second marginal both are maximally mixed.

The Choi-Jamiolkowski isomorphism already provides us with a useful representation for a quantum channel.
The next definition introduces an alternative representation obtained by choosing an orthonormal basis for the space of bounded operators and then considering the matrix representation of a quantum channel w.r.t.~that basis. For convenience, we focus on the case of multi-qubit channels:

\begin{definition}[General transfer matrices]\label{definition:transfer-matrices}
    Let $\mathcal{Q}=\{Q_i\}_{i=1}^{4^n}$ be an orthonormal basis (ONB) for $\mathcal{B}((\mathbb{C}^2)^{\otimes n})$ w.r.t.~the Hilbert-Schmidt inner product, consisting of Hermitian bounded linear operators $Q_i=Q_i^\dagger\in \mathcal{B}((\mathbb{C}^2)^{\otimes n})$.
    Let $\mathcal{N}:\mathcal{B}(\mathbb{C}^{2^n})\to \mathcal{B}(\mathbb{C}^{2^n})$ be an $n$-qubit quantum channel.
    The \emph{transfer matrix (TM)} of $\mathcal{N}$ w.r.t.~$\mathcal{Q}$ is the matrix $R_\mathcal{N}^\mathcal{Q}\in\mathbb{C}^{4^n\times 4^n}$ with entries
    \begin{equation}
        \left( R_\mathcal{N}^\mathcal{Q}\right)_{i,j}
        \coloneqq \tr [Q_i \mathcal{N}(Q_j)],\quad\forall 1\leq i,j\leq 4^n.
    \end{equation}
\end{definition}

As we are interested in channels acting on systems of multiple qubits, we can consider a relevant special case of \Cref{definition:local-transfer-matrices} by choosing the overall ONB to consist of tensor products of elements of a single-qubit ONB:

\begin{definition}[Local transfer matrices]\label{definition:local-transfer-matrices}
    Let $\mathcal{Q}=\{Q_i\}_{i=1}^{4}$ be a local orthonormal basis (ONB) for $\mathcal{B}(\mathbb{C}^2)$ w.r.t.~the Hilbert-Schmidt inner product, consisting of Hermitian bounded linear operators $Q_i=Q_i^\dagger\in \mathcal{B}(\mathbb{C}^2)$, and assume $Q_0=\tfrac{1}{\sqrt{2}}\mathbbm{1}_2$.
    Let $\mathcal{N}:\mathcal{B}(\mathbb{C}^{2^n})\to \mathcal{B}(\mathbb{C}^{2^n})$ be an $n$-qubit quantum channel.
    The \emph{local transfer matrix} of $\mathcal{N}$ w.r.t.~$\mathcal{Q}$ is the matrix $R_\mathcal{N}^\mathcal{Q}\in\mathbb{C}^{\{0,1,2,3\}^n \times \{0,1,2,3\}^n}$ with entries
    \begin{equation}
        \left( R_\mathcal{N}^\mathcal{Q}\right)_{A,B}
        \coloneqq \tr [Q_{A} \mathcal{N}(Q_{B})],\quad\forall A,B\in \{0,1,2,3\}^n,
    \end{equation}
    where we used the notation
    \begin{equation}
        Q_{A}
        = \bigotimes_{k=1}^n Q_{A_k},\quad\forall A=(A_1,\ldots,A_n)\in \{0,1,2,3\}^n.
    \end{equation}
\end{definition}

Note that we use $A,B,C,\ldots$ to denote elements of $\{0,1,2,3\}^n$ (or sometimes of $\{0,1,2,3\}^{2n}$), which leads to operators indexed by $A,B,C,\ldots$ to express tensor products. 
This is not to be confused with the indexing of operators by the subsystem they act on, for which sometimes a similar notation is used.
When an indexing by subsystems becomes necessary, we will do so by giving the subsystems names such as $\mathrm{in}$ (for ``input''), $\mathrm{out}$ (for ``output''), or $\mathrm{aux}$ (for ``auxiliary'').

We obtain the prime example for a (local) TM when considering the ONB of normalized Pauli strings:

\begin{example}[Pauli transfer matrix]\label{example:pauli-transfer-matrix}
    Consider the local ONB of normalized single-qubit Paulis, $\mathcal{P}=\{\tfrac{1}{\sqrt{2}}\sigma_a\}_{a\in\{0,1,2,3\}}$.
    For an $n$-qubit quantum channel $\mathcal{N}:\mathcal{B}(\mathbb{C}^{2^n})\to \mathcal{B}(\mathbb{C}^{2^n})$, this gives rise to the Pauli transfer matrix (PTM) $R_\mathcal{N}^\mathcal{P}\in\mathbb{C}^{\{0,1,2,3\}^n\times \{0,1,2,3\}^n}$, a local TM, with entries
    \begin{equation}
        \left( R_\mathcal{N}^\mathcal{P}\right)_{A,B}
        \coloneqq \frac{1}{2^n}\tr [\sigma_A \mathcal{N}(\sigma_B)],\quad\forall A,B \in \{0,1,2,3\}^n.
    \end{equation}
    Note that for a quantum channel $\mathcal{N}$, all PTM entries are real and lie in the interval $[-1,1]$, compare, e.g., \cite[Section 2.1.3]{greenbaum2015introduction}.
\end{example}

\subsection{Learning quantum channels with and without quantum memory}\label{section:different-learning-models}

In this section, we describe different models of learning quantum channels. These models differ in the type of access to the unknown quantum channel, for example depending on whether the learner has a quantum memory at their disposal or whether the learner can actively choose the input states on which the unknown quantum channel is queried. 
We first recall two definitions from \cite{chen2021exponential-arxiv}. The first considers general adaptive procedures for learning an unknown quantum channel if the learner does not have access to a quantum memory:

\begin{figure}
    \centering
    \includegraphics[width = 0.8\textwidth]{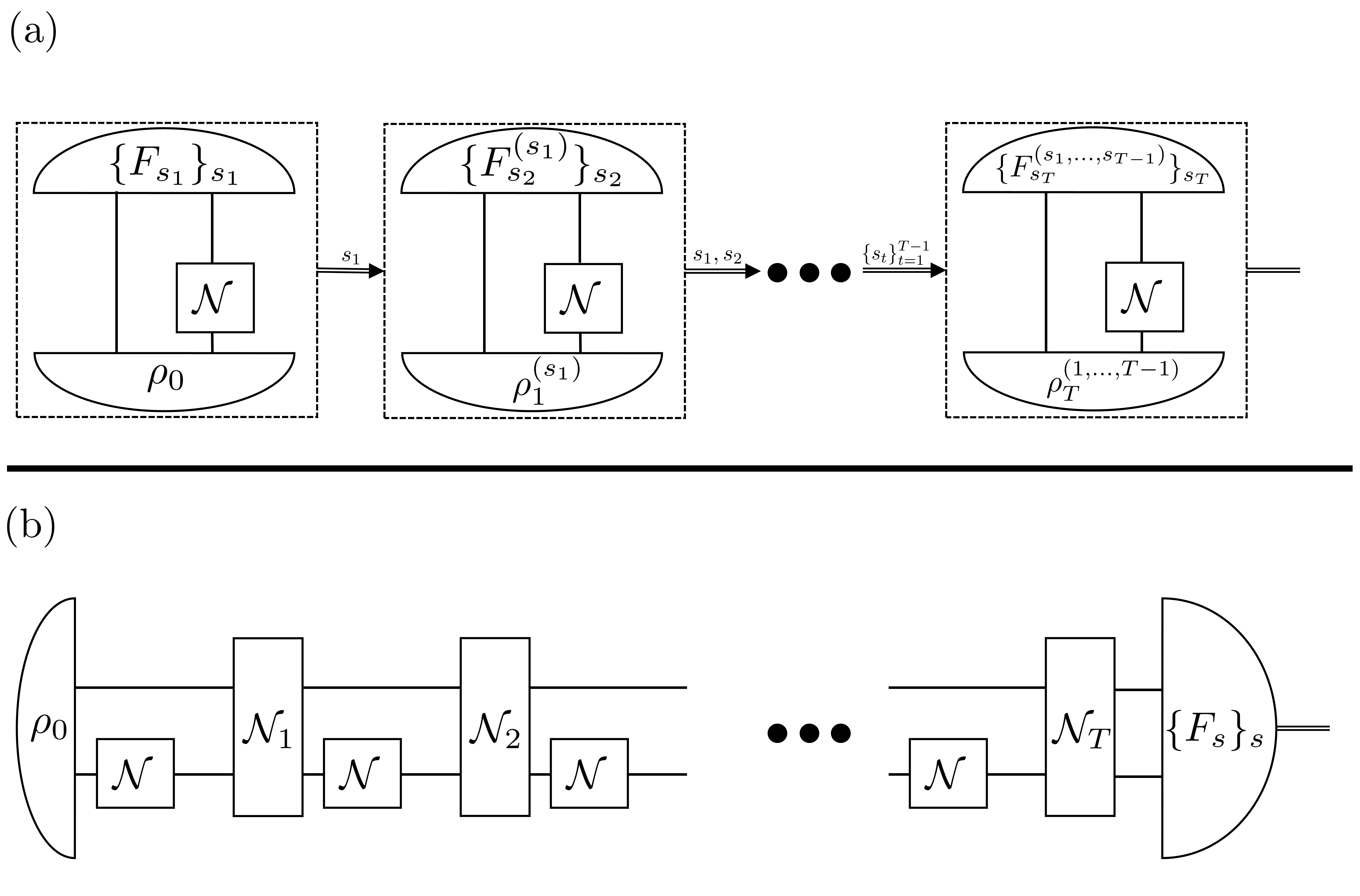}
    \caption{Illustration of learning quantum channels from general channel access, each panel to be read from left to right. Panel (a) depicts a learner without quantum memory, panel (b) depicts a learner with quantum memory.}
    \label{fig:general-access}
\end{figure}

\begin{definition}[Learning quantum channels without quantum memory~{\cite[Definition 4.17]{chen2021exponential-arxiv}}]\label{definition:qchannel-learning-without-qmemory}
    An algorithm for learning an unknown $n$-qubit quantum channel $\mathcal{N}$ without quantum memory can obtain classical data from the channel oracle by preparing an arbitrary input state $\rho\in\mathcal{B}((\mathbb{C}^2)^{\otimes n_{\mathrm{aux}}}\otimes (\mathbb{C}^2)^{\otimes n})$, evolving under $\mathcal{N}$ to yield the output state $(\operatorname{id}_{\mathrm{aux}}\otimes\mathcal{N})(\rho)$, and performing an arbitrary POVM measurement $\{F_s\}_s \subseteq\mathcal{B}((\mathbb{C}^2)^{\otimes n_{\mathrm{aux}}}\otimes (\mathbb{C}^2)^{\otimes n})$ on that output state to obtain outcome $s$ with probability $\Tr[F_s (\operatorname{id}_{\mathrm{aux}}\otimes\mathcal{N})(\rho) ]$.
    Here, the selection of the input state and the POVM can be adaptive, i.e., depend on previously chosen input states, previously chosen POVMs, and previously observed measurement outcomes.
    After $T$ oracle accesses, the algorithm uses the entirety of the obtained measurement outcomes to predict properties of $\mathcal{N}$.
\end{definition}

General learning procedures in which the learner does have a quantum memory at their disposal are encapsulated in the following definition:

\begin{definition}[Learning quantum channels with quantum memory~{\cite[Definition 4.18]{chen2021exponential-arxiv}}]\label{definition:qchannel-learning-with-qmemory}
    An algorithm for learning an unknown $n$-qubit quantum channel $\mathcal{N}$ with quantum memory can access the channel oracle as a quantum channel during a mixed-state quantum computation.
    The resulting state of the quantum memory after $T$ oracle accesses is
    \begin{equation}
        \rho^{\mathcal{N}}_T = \mathcal{N}_T (\operatorname{id}_{\mathrm{aux},T-1}\otimes\mathcal{N}) \ldots \mathcal{N}_2 (\operatorname{id}_{\mathrm{aux},1}\otimes\mathcal{N}) \mathcal{N}_1 (\operatorname{id}_{\mathrm{aux},0}\otimes\mathcal{N})(\rho_0),
    \end{equation}
    with some input state $\rho_0\in\mathcal{S}((\mathbb{C}^2)^{\otimes n_{\mathrm{aux},0}}\otimes (\mathbb{C}^2)^{\otimes n})$ and some $n_{t-1}$-to-$n_t$ qubit quantum channels $\mathcal{N}_t :\mathcal{B}((\mathbb{C}^2)^{\otimes n_{t-1}})\to\mathcal{B}((\mathbb{C}^2)^{\otimes n_{t}})$, for $t = 1, \ldots, T$, where $n_0=n$.
    After $T$ oracle accesses, the algorithm performs a POVM $\{F_s\}_s \subseteq\mathcal{B}((\mathbb{C}^2)^{\otimes n_{T}})$ on the quantum memory state $\rho^{\mathcal{N}}_T$ to predict properties of $\mathcal{N}$.
\end{definition}

\Cref{definition:qchannel-learning-without-qmemory,definition:qchannel-learning-with-qmemory} are illustrated in \Cref{fig:general-access}.
We will also consider variants of the previous two definitions. To obtain these variants, we suppose that the learner has access to the unknown quantum channel $\mathcal{N}$ only via its Choi state $\tfrac{1}{2^n}\Gamma^\mathcal{N}$. In analogy to \cite[Definitions 4.15 and 4.16]{chen2021exponential-arxiv}, this leads us to the following two definitions:

\begin{figure}
    \centering
    \includegraphics[width = 0.6\textwidth]{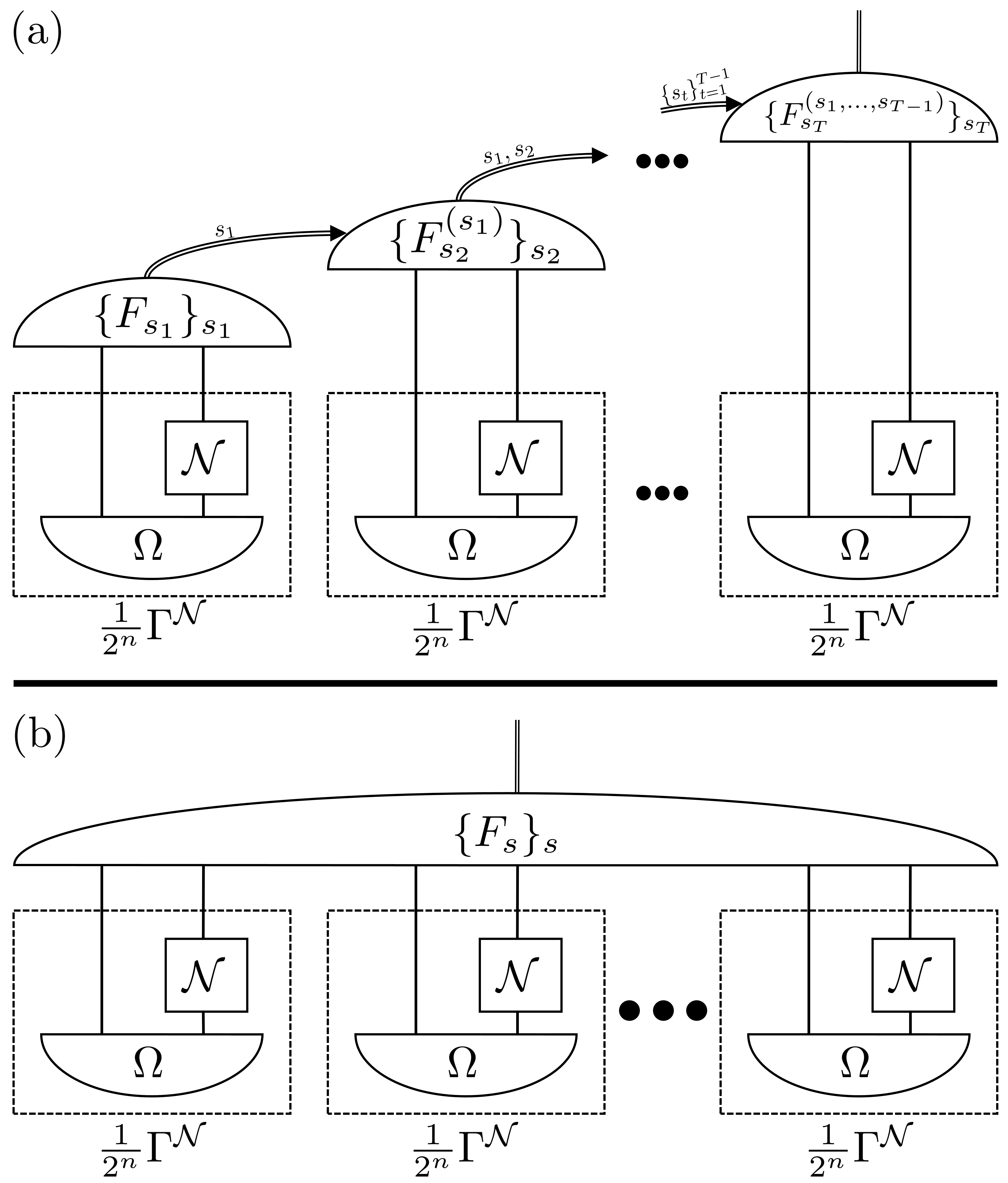}
    \caption{Illustration of learning quantum channels from Choi access, each panel to be read from bottom to top. Panel (a) depicts a learner without quantum memory, panel (b) depicts a learner with quantum memory.}
    \label{fig:choi-access}
\end{figure}

\begin{definition}[Learning quantum channels without quantum memory from Choi access]\label{definition:qchannel-learning-without-qmemory-choi}
    An algorithm for learning an unknown $n$-qubit quantum channel $\mathcal{N}$ without quantum memory from Choi access can obtain classical data from the Choi state oracle by performing arbitrary adaptively chosen POVM measurements on single copies of the Choi state $\tfrac{1}{2^n}\Gamma^\mathcal{N}$.
    That is, for each oracle access, the algorithm can select a POVM $\{F_s\}_s\subseteq \mathcal{B}((\mathbb{C}^2)^{\otimes n}\otimes (\mathbb{C}^2)^{\otimes n})$ and obtain the classical outcome $s$ with probability $\tr[F_s \tfrac{1}{2^n}\Gamma^\mathcal{N}]$. Here, the selection of the POVM can be adaptive, i.e., depend on previously chosen POVMs and previously observed measurement outcomes.
    After $T$ oracle accesses, the algorithm uses the entirety of the obtained measurement outcomes to predict properties of $\mathcal{N}$.
\end{definition}

\begin{definition}[Learning quantum channels with quantum memory from Choi access]\label{definition:qchannel-learning-with-qmemory-choi}
    An algorithm for learning an unknown $n$-qubit quantum channel $\mathcal{N}$ with quantum memory from Choi access can obtain copies of the Choi state $\tfrac{1}{2^n}\Gamma^\mathcal{N}$ from the Choi state oracle and store those copies in the quantum memory.
    After $T$ oracle accesses, the algorithm performs a joint POVM $\{F_s\}_s\subseteq\mathcal{B}(((\mathbb{C}^2)^{\otimes n}\otimes (\mathbb{C}^2)^{\otimes n})^{\otimes T})$ on $(\tfrac{1}{2^n}\Gamma^\mathcal{N})^{\otimes T}$ to predict properties of $\mathcal{N}$. 
\end{definition}

We illustrate learning channels from Choi access in \Cref{fig:choi-access}. 
Any procedure for learning a quantum channel from Choi access in particular is an instance of a general procedure for learning a quantum channel. To see this, simply use independent copies of a maximally entangled state as input in \Cref{definition:qchannel-learning-without-qmemory,definition:qchannel-learning-with-qmemory}.
However, as the implementation of an $n$-qubit quantum channel via teleportation of its Choi state in general has a success probability $4^{-n}$, not every general procedure for learning a quantum channel may be realizable via learning from Choi access. However, under additional assumptions on shared local unitary invariances of the Choi states of the unknown channels, the success probability of implementation via teleportation can be increased  (compare, e.g., \cite[Section 2.1]{wolf2012quantumchannels}), so reducing general channel learning to learning from Choi access may become feasible. 
Finally, in \Cref{appendix:parallel-access} we describe a model of channel learning from parallel access, which is intermediate between Choi access and general channel access.

\section{Predicting Expectation Values From Transfer Matrix Entries Under Sparsity Assumptions}\label{section:tm-estimates-give-expectation-estimates}

Before establishing our results on learning transfer matrices, we first discuss how approximations to entries of the transfer matrix of a channel $\mathcal{N}$ can be used to obtain estimates for expectation values of the form $\tr[O \mathcal{N}(\rho)]$. 
Namely, we first observe: The ability to simultaneously approximate entries $(R_\mathcal{N}^\mathcal{Q})_{i,j}$ of the transfer matrix of a channel $\mathcal{N}$ up to accuracy $\varepsilon$ allows to approximate the expectation value $\tr[O \mathcal{N}(\rho)]$ up to accuracy $\tilde{\varepsilon}$, where $\nicefrac{\tilde{\varepsilon}}{\varepsilon}$ can be controlled in terms of the $1$-norms of the $\mathcal{Q}$-basis coefficient vectors of $O$ and $\rho$. This is in particular useful if both $O$ and $\rho$ have sparse $\mathcal{Q}$-basis expansions.
Our second observation is similar: If the transfer matrix of $\mathcal{N}$ is sparse, then $\varepsilon$-accurate estimates of those entries translate to $\tilde{\varepsilon}$-accurate estimates of $\tr[O \mathcal{N}(\rho)]$ for any $O$ and $\rho$, where $\nicefrac{\tilde{\varepsilon}}{\varepsilon}$ depends on the sparsity of the transfer matrix. 
In summary: Under sparsity assumptions, accurate estimates of transfer matrix entries give rise to accurate estimates of expectation values.

For this section, we always take $\mathcal{Q}=\{Q_i\}_{i=1}^{4^n}$ to be a Hermitian ONB of $\mathcal{B}((\mathbb{C}^2)^{\otimes n})$. Accordingly, the transfer matrices in this subsection will be as in \Cref{definition:transfer-matrices}.

\subsection{Sparse Input States and Sparse Output Observables}

We begin by showing how estimates for the transfer matrix translate to expectation value estimates under $1$-norm assumptions on the coefficient vectors of state and observable:

\begin{lemma}\label{lemma:tm-entry-estimates-give-expectation-estimates-1-norm}
    Let $\rho\in\mathcal{S}(\mathbb{C}^{2^n})$ be an $n$-qubit quantum state whose coefficient vector in the $\mathcal{Q}$-basis expansion has $1$-norm $B_\rho>0$. 
    Let $O\in\mathcal{B}(\mathbb{C}^{2^n})$ be a self-adjoint $n$-qubit observable whose coefficient vector in the $\mathcal{Q}$-basis expansion has $1$-norm $B_O>0$. 
    Assume that we have $\varepsilon$-accurate estimates $\hat{r}_{i,j}^\mathcal{Q}$ for the entries of the $\mathcal{Q}$-TM of an $n$-qubit quantum channel $\mathcal{N}:\mathcal{B}(\mathbb{C}^{2^n})\to \mathcal{B}(\mathbb{C}^{2^n})$. Then, from these estimates we can obtain an $(\varepsilon B_\rho B_O)$-accurate estimate $\hat{\mu}$ of $\tr [O\mathcal{N}(\rho)]$.
\end{lemma}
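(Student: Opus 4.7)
The plan is to expand both the state $\rho$ and the observable $O$ in the $\mathcal{Q}$-basis and then invoke linearity of the trace and of the channel to reduce the problem to a weighted sum of transfer matrix entries. Concretely, I would write $\rho = \sum_{j} r_j Q_j$ and $O = \sum_{i} o_i Q_i$ with real coefficients (since $\mathcal{Q}$ is a Hermitian ONB and $\rho$, $O$ are self-adjoint), where by hypothesis $\sum_j |r_j| = B_\rho$ and $\sum_i |o_i| = B_O$. These coefficients can be extracted via the Hilbert--Schmidt inner product as $r_j = \tr[Q_j \rho]$ and $o_i = \tr[Q_i O]$, so they are known quantities depending only on $\rho$ and $O$ (not on $\mathcal{N}$).

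Next I would use bilinearity to rewrite the target expectation value as
\begin{equation}
    \tr[O\mathcal{N}(\rho)]
    = \sum_{i,j} o_i r_j \, \tr[Q_i \mathcal{N}(Q_j)]
    = \sum_{i,j} o_i r_j \, (R_\mathcal{N}^\mathcal{Q})_{i,j},
\end{equation}
which expresses the quantity of interest as an explicit linear combination of transfer matrix entries with coefficients $o_i r_j$. Given the estimates $\hat{r}_{i,j}^\mathcal{Q}$, the natural proposal is the plug-in estimator
\begin{equation}
    \hat{\mu} \coloneqq \sum_{i,j} o_i r_j \, \hat{r}_{i,j}^\mathcal{Q}.
\end{equation}

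Then the error analysis reduces to a one-line triangle inequality computation: by the assumed simultaneous $\varepsilon$-accuracy of the transfer matrix estimates,
\begin{equation}
    \left| \hat{\mu} - \tr[O\mathcal{N}(\rho)] \right|
    \leq \sum_{i,j} |o_i|\,|r_j|\, \left| \hat{r}_{i,j}^\mathcal{Q} - (R_\mathcal{N}^\mathcal{Q})_{i,j} \right|
    \leq \varepsilon \left( \sum_i |o_i| \right)\left( \sum_j |r_j| \right)
    = \varepsilon B_O B_\rho,
\end{equation}
which is precisely the claimed accuracy. There is no substantive obstacle here beyond bookkeeping; the proof is essentially a Hölder-type bound matching the $\ell^\infty$-error on the transfer matrix entries against the $\ell^1$-norms of the basis coefficient vectors of $\rho$ and $O$. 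The only point worth stating explicitly in the write-up is that the coefficients $r_j$ and $o_i$ are real (since $\mathcal{Q}$ consists of Hermitian operators and $\rho$, $O$ are self-adjoint), so the $1$-norm hypotheses translate directly into absolute-value sums without any complex-coefficient subtleties.
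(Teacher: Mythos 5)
Your proposal is correct and follows essentially the same route as the paper's proof: expand $\rho$ and $O$ in the Hermitian ONB $\mathcal{Q}$, use bilinearity to write $\tr[O\mathcal{N}(\rho)]$ as the sum $\sum_{i,j} o_i r_j (R_\mathcal{N}^\mathcal{Q})_{i,j}$, define the plug-in estimator, and bound the error by H\"older's inequality ($\ell^\infty$ on the entry errors against $\ell^1$ on the coefficient vectors). The only differences are cosmetic index conventions and your added (correct) remark that the coefficients are real.
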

\begin{proof}
    Consider the $\mathcal{Q}$-basis expansions
    \begin{align}
        \rho
        &=\sum_{i=1}^{4^n}\alpha_iQ_i,~\alpha_i=\Tr [\rho Q_i],\\
        O
        &=\sum_{i=1}^{4^n}\beta_iQ_i,~\beta_i=\Tr [OQ_i],
    \end{align}
    of $\rho$ and of $O$. 
    By expanding in the $\mathcal{Q}$-basis, we can rewrite our quantity of interest, $\Tr [O\mathcal{N}(\rho)]$, as
    \begin{equation}\label{eq:expectation-pauli-expansion}
        \Tr [O\mathcal{N}(\rho)]
        =\sum_{i,j=1}^{4^n}\alpha_i\beta_j(R_\mathcal{N}^\mathcal{Q})_{j,i}\, .
    \end{equation}
    Accordingly, we define the estimate $\hat{\mu}$ as
    \begin{equation}\label{eq:expectation-pauli-expansion-estimate}
        \hat{\mu}
        \coloneqq \sum_{i,j=1}^{4^n}\alpha_i\beta_j\hat{r}_{j,i}^\mathcal{Q}\, ,
    \end{equation}
    with the the estimates $\hat{r}_{i,j}^\mathcal{Q}$ for the entries of the $\mathcal{Q}$-TM satisfying $\lvert \hat{r}_{i,j}^\mathcal{Q} - (R_\mathcal{N}^\mathcal{Q})_{j,i}\rvert\leq\varepsilon$ for all $1\leq i,j\leq 4^n$. 
    Then, we can argue for the accuracy of $\hat{\mu}$ as follows: 
    \begin{align}
        \lvert \hat{\mu} - \Tr [O\mathcal{N}(\rho)]\rvert 
        &= \left\lvert \sum_{i,j=1}^{4^n}\alpha_i\beta_j (\hat{r}_{j,i}^\mathcal{Q} - (R_\mathcal{N}^\mathcal{Q})_{j,i}) \right\rvert\\
        &\leq \varepsilon\cdot \sum_{i,j=1}^{4^n}\lvert\alpha_i\rvert\cdot\lvert\beta_j\rvert\\
        &= \varepsilon \lVert\vec{\alpha}\rVert_1\cdot \lVert\vec{\beta}\rVert_1\\
        &= \varepsilon B_\rho B_O\, .
    \end{align}
    Here, the inequality is an application of H\"older's inequality (with $p=1$ and $q=\infty$).
\end{proof}

As sparsity assumptions lead to $1$-norm bounds, this has the following consequence:

\begin{corollary}\label{corollary:tm-estimates-give-sparse-expectation-estimates}
    Let $\rho\in\mathcal{S}(\mathbb{C}^{2^n})$ be an $n$-qubit quantum state whose coefficient vector in the $\mathcal{Q}$-basis expansion is $s_\rho$-sparse.
    Let $O\in\mathcal{B}(\mathbb{C}^{2^n})$ be a self-adjoint $n$-qubit observable whose coefficient vector in the $\mathcal{Q}$-basis expansion is $s_O$-sparse.
    Assume that we have $\varepsilon$-accurate estimates $\hat{r}_{i,j}^\mathcal{Q}$ for the entries of the $\mathcal{Q}$-TM of an $n$-qubit quantum channel $\mathcal{N}:\mathcal{B}(\mathbb{C}^{2^n})\to \mathcal{B}(\mathbb{C}^{2^n})$. Then, from these estimates we can obtain an $\left(\varepsilon s_\rho\sqrt{s_O} \norm{O}_2 \max_{1\leq i\leq 4^n} \lVert Q_i\rVert_\infty\right)$-accurate estimate of $\tr [O\mathcal{N}(\rho)]$.
\end{corollary}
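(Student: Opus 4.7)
The plan is to derive this statement as a direct consequence of \Cref{lemma:tm-entry-estimates-give-expectation-estimates-1-norm} by instantiating the quantities $B_\rho$ and $B_O$ (the $1$-norms of the $\mathcal{Q}$-coefficient vectors of $\rho$ and $O$, respectively) in terms of the sparsity parameters. Concretely, I would define the estimate $\hat{\mu}=\sum_{i,j=1}^{4^n}\alpha_i\beta_j\hat{r}_{j,i}^{\mathcal{Q}}$ exactly as in the proof of \Cref{lemma:tm-entry-estimates-give-expectation-estimates-1-norm}, so that the task reduces to upper bounding $\varepsilon B_\rho B_O$ under the sparsity hypotheses.

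For the observable side, the coefficient vector $\vec{\beta}=(\tr[OQ_j])_j$ is $s_O$-sparse by assumption, so Cauchy--Schwarz restricted to its support gives $\lVert\vec{\beta}\rVert_1\leq\sqrt{s_O}\,\lVert\vec{\beta}\rVert_2$, and Parseval's identity for the orthonormal basis $\mathcal{Q}$ yields $\lVert\vec{\beta}\rVert_2=\norm{O}_2$. Hence $B_O\leq\sqrt{s_O}\,\norm{O}_2$. For the state side, I take a different route: since $\vec{\alpha}=(\tr[\rho Q_i])_i$ has at most $s_\rho$ nonzero entries, $\lVert\vec{\alpha}\rVert_1\leq s_\rho\,\max_{1\leq i\leq 4^n}\lvert\alpha_i\rvert$, and the matrix H\"older inequality gives $\lvert\alpha_i\rvert=\lvert\tr[\rho Q_i]\rvert\leq\norm{\rho}_1\lVert Q_i\rVert_\infty=\lVert Q_i\rVert_\infty$, using that $\rho$ is a density operator (so $\norm{\rho}_1=1$). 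Combining these yields $B_\rho\leq s_\rho\,\max_{1\leq i\leq 4^n}\lVert Q_i\rVert_\infty$.

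Plugging these two bounds into the conclusion of \Cref{lemma:tm-entry-estimates-give-expectation-estimates-1-norm} gives exactly the claimed accuracy $\varepsilon s_\rho\sqrt{s_O}\,\norm{O}_2\max_{1\leq i\leq 4^n}\lVert Q_i\rVert_\infty$. There is no real technical obstacle here: the corollary is a clean packaging of \Cref{lemma:tm-entry-estimates-give-expectation-estimates-1-norm} via two standard sparsity-to-$\ell_1$ bounds. The mild asymmetry between using H\"older for $\rho$ and Cauchy--Schwarz together with Parseval for $O$ is deliberate, as it exploits $\norm{\rho}_1=1$ on the state side and produces the $\norm{O}_2$ factor on the observable side that appears in the final expression.
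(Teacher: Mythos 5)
Your proposal is correct and follows essentially the same route as the paper's proof: both bound $\lVert\vec{\beta}\rVert_1\leq\sqrt{s_O}\lVert\vec{\beta}\rVert_2=\sqrt{s_O}\norm{O}_2$ via Parseval and sparsity, bound $\lVert\vec{\alpha}\rVert_1\leq s_\rho\max_i\lVert Q_i\rVert_\infty$ via the trace-norm/operator-norm H\"older inequality, and plug these into \Cref{lemma:tm-entry-estimates-give-expectation-estimates-1-norm}. The only cosmetic difference is that you name the $\ell_1$-versus-$\ell_2$ step Cauchy--Schwarz where the paper calls it H\"older; these coincide here.
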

\begin{proof}
    We use the notation from the previous proof for $\mathcal{Q}$-basis coefficients.
    First observe that, by orthonormality of the $Q_i$,  Parseval's equality yields
    \begin{equation}
        \lVert\rho\rVert_2^2
        =\sum_{i=1}^{4^n}\lvert\alpha_i\rvert^2 = \lVert\vec{\alpha}\rVert_2^2\, ,
        \quad 
        \lVert O\rVert_2^2=\sum_{i=1}^{4^n}\lvert\beta_i\rvert^2 = \lVert\vec{\beta}\rVert_2^2 \, .
    \end{equation}
    Now, if $\vec{\alpha}$ is $s_\rho$-sparse and $\vec{\beta}$ is $s_O$-sparse, Hölder's inequality tells us that
    \begin{equation}\label{eq-sparse_Holder}
        \lVert\vec{\alpha}\rVert_1
        \leq s_\rho\lVert\vec{\alpha}\rVert_\infty 
        = s_\rho \max_{1\leq i\leq 4^n}\lvert\Tr[\rho Q_i]\rvert
        \leq s_\rho \max_{1\leq i\leq 4^n}\lVert\rho\rVert_1\cdot \lVert Q_i\rVert_\infty
        =  s_\rho \max_{1\leq i\leq 4^n} \lVert Q_i\rVert_\infty\, ,
    \end{equation}
    as well as 
    \begin{equation}\label{eq-sparse_Frobenius}
        \lVert\vec{\beta}\rVert_1\leq \sqrt{s_O}\lVert\vec{\beta}\rVert_2 = \sqrt{s_O}\lVert O\rVert_2\, .
    \end{equation}
    Plugging these bounds into the statement of the previous lemma finishes the proof.
\end{proof}

\begin{remark}\label{remark:alternative-normalization}
    In the above proofs, we have chosen to expand both $\rho$ and $O$ w.r.t.~$\mathcal{Q}$. This leads to the norm bounds $\norm{\vec{\alpha}}_\infty\leq \max_{1\leq i\leq 4^n} \lVert Q_i\rVert_\infty$ and $\lVert \vec{\beta}\rVert_\infty\leq \norm{O}_2$. 
    From a quantum information perspective, one may instead consider the alternative expansions $\rho = \sum_{i=1}^{4^n} \tilde{\alpha}_i \max_{1\leq j\leq 4^n} \lVert Q_j\rVert_\infty Q_j$ with $\tilde{\alpha}_i=\tfrac{1}{\max_{1\leq j\leq 4^n} \lVert Q_i\rVert_\infty} \Tr [\rho Q_i]$, as well as $O= \sum_{i=1}^{4^n}\tilde{\beta}_i \tfrac{1}{\max_{1\leq j\leq 4^n} \lVert Q_j\rVert_\infty} Q_i$ with $\tilde{\beta}_i = \max_{1\leq j\leq 4^n} \lVert Q_j\rVert_\infty \Tr [OQ_i]$.
    These lead to the norm bounds $\lVert \vec{\tilde{\alpha}} \rVert_\infty \leq 1$ and $\lVert \vec{\tilde{\beta}}\rVert_\infty\leq \max_{1\leq j\leq 4^n} \lVert Q_i\rVert_\infty\norm{O}_2$.
    In particular, if $\mathcal{Q}=\mathcal{P}$ is the Pauli ONB, we have the operator norm bound $\max_{1\leq j\leq 4^n} \lVert Q_i\rVert_\infty = \tfrac{1}{\sqrt{2^n}}$, and if $O$ has bounded operator norm $\norm{O}\leq B$, we get $\lVert \vec{\tilde{\alpha}} \rVert_\infty \leq 1$ and $\lVert \vec{\tilde{\beta}}\rVert_\infty\leq B$.
    Of course, this change in normalization does not change the sparsity of the respective expansions and also leaves the product of the $1$-norms of the coefficient vectors invariant. Thus, the results of \Cref{lemma:tm-entry-estimates-give-expectation-estimates-1-norm} and \Cref{corollary:tm-estimates-give-sparse-expectation-estimates} remain untouched by such a simultaneous change in normalization.
\end{remark}

In the remainder of this paper, we will sometimes for brevity  call $O$ (or $\rho$) $s_O$-sparse (or $s_\rho$-sparse) w.r.t.~$\mathcal{Q}$ if the respective coefficient vector $(\tr[O Q_i])_{i=1}^{4^n}$ (or $(\tr[\rho Q_i])_{i=1}^{4^n}$) in the $\mathcal{Q}$-basis is $s_O$-sparse (or $s_\rho$-sparse). If $\mathcal{Q}=\mathcal{P}$ is the Pauli ONB, we will speak of Pauli-sparse observables and states.
To illustrate potential applications, we next discuss some examples of observables and states with a sparse ONB expansion. We do so for the relevant special case of the Pauli ONB.

\begin{example}[Pauli-sparse observables]\label{example:pauli-sparse-observables}
    The following constitute examples for observables with a sparse Pauli-ONB expansion:
    \begin{itemize}
        \item All $n$-qubit Pauli observables of arbitrarily high weight have Pauli-sparsity $s_O = 1$, independently of $n$.
        \item All geometrically $k$-local $n$-qubit observables have Pauli-sparsity $s_O = \mathcal{O}(4^k n^D)$, assuming the underlying lattice is a subset of $\mathbb{Z}^D$.
        \item All $k$-local $n$-qubit observables have Pauli-sparsity $s_O \leq \binom{n}{k}\cdot 4^k \leq \mathcal{O}((4n)^k)$.
        \item All $k$-local degree-$d$ $n$-qubit observables, that is, $k$-local $n$-qubit observables made up of summands such that each qubit is acted on by at most $d$ of them (see \cite{huang2022learning} for a formal definition), have Pauli-sparsity $s_O \leq \mathcal{O}(n d 4^k)$.
    \end{itemize}
\end{example}

\begin{remark}\label{remark:improved-1-norm-bound-from-huang-et-al}
    For these last two classes of observables, one can alternatively use \cite[Corollaries 3 and 4]{huang2022learning} to get better bounds on the $1$-norm of the vector of Pauli coefficients of $O$.
    Note, however, that \cite{huang2022learning} considers the unnormalized Pauli expansion of an observable. 
    If we account for this (compare also the discussion in \Cref{remark:alternative-normalization}), then in the notation of \Cref{lemma:tm-entry-estimates-give-expectation-estimates-1-norm}, \cite[Corollary 3]{huang2022learning} combined with Hölder's inequality implies that for $k$-local $O$, we have:
    \begin{equation}
        \lVert \vec{\beta}\rVert_{1}
        \leq \lVert\vec{\beta}\rVert_{\nicefrac{2k}{(k+1)}} s_O^{\nicefrac{(k-1)}{2k}}
        \leq \frac{3\sqrt{2^n}}{C(k)}\norm{O}\cdot \mathcal{O}\left( (4n)^{\tfrac{k-1}{2}}\right),
    \end{equation}
    where $C(k)=\left(\exp\left(\Theta (k \log k) \right)\right)^{-1}$.
    Even more directly, \cite[Corollary 4]{huang2022learning} implies for a $k$-local degree-$d$ observable $O$:
    \begin{equation}
        \lVert \vec{\beta}\rVert_{1}
        \leq \frac{3\sqrt{2^n}}{C(k,d)}\norm{O}, 
    \end{equation}
    with $C(k,d)=\left(\sqrt{d}\exp\left(\Theta (k \log k) \right)\right)^{-1}$.
    Both of these $1$-norm bounds improve in terms of $n$-dependence upon the more naive $\lVert \vec{\beta}\rvert_1\leq \sqrt{s_O}\norm{O}_2\leq \sqrt{s_O 2^n}\norm{O}$ that we would obtain by combining the respective sparsity observed in \Cref{example:pauli-sparse-observables} with the reasoning that led from \Cref{lemma:tm-entry-estimates-give-expectation-estimates-1-norm} to \Cref{corollary:tm-estimates-give-sparse-expectation-estimates}.
    This demonstrates that the bound of \Cref{corollary:tm-estimates-give-sparse-expectation-estimates}, based on exact sparsity, can be looser than the $1$-norm-based bound of \Cref{lemma:tm-entry-estimates-give-expectation-estimates-1-norm}.
\end{remark}

\begin{example}[Pauli-sparse states]
    The following constitute examples for states with a sparse Pauli-ONB expansion:
    \begin{itemize}
        \item For any non-identity $n$-qubit Pauli $\sigma_A$, indexed by a string $A\in\{0,1,2,3\}^n$ with $A\neq 0^n$, the mixed state $\rho_\pm= \tfrac{1}{2^n} (\mathbbm{1}_{2^n} \pm \sigma_A)$ proportional to the projector on the $\pm 1$ eigenspace of $\sigma_A$ has Pauli-sparsity $s_{\rho_\pm} = 2$, independently of $n$.
        \item Let $\sigma_{A_1},\ldots,\sigma_{A_k}$ be pairwise distinct $n$-qubit Paulis. Then, the mixed state $\rho_\pm = \tfrac{1}{2^{n}}\prod_{\ell =1}^k (\mathbbm{1}_{2^n} \pm \sigma_{A_\ell}) = \tfrac{1}{2^{n}}\sum_{b\in\{0,1\}^k} (-1)^{\lvert b\rvert} \sigma_{A_1}^{b_1}\ldots\sigma_{A_k}^{b_k}$ proportional to the projector on the joint $\pm 1$ eigenspace of $\sigma_{A_1},\ldots,\sigma_{A_k}$ has Pauli-sparsity $s_{\rho_\pm} = \lvert \langle \sigma_{A_1},\ldots,\sigma_{A_k}\rangle \rvert = 2^k$.
    \end{itemize}
\end{example}

Having given these examples of observables and states that we can cover, we remark on an important limitation on the class of sparse states:

\begin{remark}\label{remark:sparse-states-limited-purity}
    The class of quantum states with an $s_\rho$-sparse $\mathcal{Q}$-basis expansion only contains states with a limited purity. With the notation used in the proof of \Cref{lemma:tm-entry-estimates-give-sparse-channel-expectation-estimates}, this can be seen as follows: If $\rho$ is $s_\rho$-sparse w.r.t.~$\mathcal{Q}$, then its purity can be bounded as
    \begin{align}
        \tr[\rho^2]
        = \norm{\rho}_2^2
        = \norm{\vec{\alpha}}_2^2
        \leq s_\rho \norm{\vec{\alpha}}_\infty^2
        \leq s_\rho \max_{1\leq i\leq 4^n} \lVert Q_i\rVert_\infty^2 .
    \end{align}
    As $1 = \tr[\rho] \leq \operatorname{rank}(\rho)\tr[\rho^2]$, this also implies that $\operatorname{rank}(\rho)\geq (s_\rho \max_{1\leq i\leq 4^n} \lVert Q_i\rVert_\infty^2)^{-1}$.
    For our results in later sections, we will assume $\max_{1\leq i\leq 4^n} \lVert Q_i\rVert_\infty = \tfrac{1}{\sqrt{2^n}}$ and $s_\rho\leq\mathcal{O}(\poly (n))$, with $n$ the number of qubits. 
    Under these assumptions, we see that any $s_\rho$-sparse state $\rho$ has purity at most $\mathcal{O}(\tfrac{\poly (n)}{2^n})$ and rank at least $\Omega (\tfrac{2^n}{\poly (n)})$. 
    Thus, such states are highly mixed and high-rank.
    Note, however, that sparse states can be far from maximally mixed in trace distance. 
    For instance, if $A\in\{0,1,2,3\}^n$ with $A\neq 0^n$, then $\rho_\pm = \tfrac{\mathbbm{1}_{2^n} \pm \sigma_A}{2^n}$ has Pauli-sparsity $2$ and $\norm{\rho_\pm - \tfrac{1}{2^n}\mathbbm{1}_{2^n}}_1 = 1$.
\end{remark}

We also note that the results of this subsection as well as the results of later sections all straightforwardly extend to observables $O$ that are close to sparse in operator norm and to states that are close to sparse in $1$-norm, since the true expectation value is then well approximated by the one of the sparse approximations. This, however, does not add conceptual insight beyond the case of exact sparsity, so we do not explicitly mention the approximate sparsity variant of the result each time.

\subsection{Sparse Quantum Channel}

In the previous subsection, we made assumptions on the states and observables but no assumptions on the unknown quantum channel. Here, we consider the converse, by making a sparsity assumption on the channel but allowing for arbitrary states and observables.

\begin{lemma}\label{lemma:tm-entry-estimates-give-sparse-channel-expectation-estimates}
    Let $\rho\in\mathcal{S}(\mathbb{C}^{2^n})$ be an $n$-qubit quantum state and let $O\in\mathcal{B}(\mathbb{C}^{2^n})$ be a self-adjoint $n$-qubit observable.
    Let $\mathcal{N}:\mathcal{B}(\mathbb{C}^{2^n})\to \mathcal{B}(\mathbb{C}^{2^n})$ be an $n$-qubit quantum channel with an $s_{\mathcal{N}}$-sparse transfer matrix w.r.t.~$\mathcal{Q}$. 
    Assume that we have $\varepsilon$-accurate estimates $\hat{r}_{i,j}^\mathcal{Q}$ for the non-zero entries of the $\mathcal{Q}$-TM of $\mathcal{N}$.
    Then, from these estimates we can obtain an $\left(\varepsilon s_{\mathcal{N}}\norm{O}_2 \max_{1\leq i\leq 4^n} \lVert Q_i\rVert_\infty\right)$-accurate estimate of $\tr [O\mathcal{N}(\rho)]$.
\end{lemma}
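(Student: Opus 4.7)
The plan is to mimic the proof of \Cref{lemma:tm-entry-estimates-give-expectation-estimates-1-norm} but with the sparsity assumption now moved from the $\mathcal{Q}$-basis expansions of $\rho$ and $O$ onto the $\mathcal{Q}$-transfer matrix of $\mathcal{N}$. Concretely, I would expand $\rho = \sum_i \alpha_i Q_i$ and $O = \sum_j \beta_j Q_j$ with $\alpha_i = \tr[\rho Q_i]$ and $\beta_j = \tr[O Q_j]$, and recall from \Cref{eq:expectation-pauli-expansion} that
\begin{equation}
    \tr[O\mathcal{N}(\rho)]
    = \sum_{i,j=1}^{4^n} \alpha_i \beta_j (R_\mathcal{N}^\mathcal{Q})_{j,i}.
\end{equation}
Let $S = \{(i,j) : (R_\mathcal{N}^\mathcal{Q})_{j,i} \neq 0\}$, so that $\lvert S\rvert \leq s_\mathcal{N}$ by assumption. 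For the indices outside of $S$ we know the corresponding entry is exactly zero, so we do not need any estimate there. Thus I define
\begin{equation}
    \hat\mu \coloneqq \sum_{(i,j)\in S} \alpha_i \beta_j \hat{r}_{j,i}^\mathcal{Q},
\end{equation}
where $\hat{r}_{j,i}^\mathcal{Q}$ are the given $\varepsilon$-accurate estimates of the non-zero entries.

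Subtracting and using $\lvert \hat{r}_{j,i}^\mathcal{Q} - (R_\mathcal{N}^\mathcal{Q})_{j,i}\rvert \leq \varepsilon$, the error takes the form
\begin{equation}
    \lvert \hat\mu - \tr[O\mathcal{N}(\rho)]\rvert
    \leq \varepsilon \sum_{(i,j)\in S} \lvert\alpha_i\rvert\lvert\beta_j\rvert
    \leq \varepsilon\, \lvert S\rvert\, \lVert\vec\alpha\rVert_\infty \lVert\vec\beta\rVert_\infty
    \leq \varepsilon s_\mathcal{N} \lVert\vec\alpha\rVert_\infty \lVert\vec\beta\rVert_\infty.
\end{equation}
It then remains to bound the two $\infty$-norms. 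For $\vec\alpha$, exactly as in \Cref{eq-sparse_Holder}, Hölder's inequality gives $\lvert\alpha_i\rvert = \lvert\tr[\rho Q_i]\rvert \leq \lVert\rho\rVert_1 \lVert Q_i\rVert_\infty = \lVert Q_i\rVert_\infty$, so $\lVert\vec\alpha\rVert_\infty \leq \max_{1\leq i\leq 4^n}\lVert Q_i\rVert_\infty$. For $\vec\beta$, Parseval's identity yields $\lVert\vec\beta\rVert_\infty \leq \lVert\vec\beta\rVert_2 = \lVert O\rVert_2$. Substituting these into the error bound produces the claimed $\varepsilon s_\mathcal{N} \lVert O\rVert_2 \max_{1\leq i\leq 4^n}\lVert Q_i\rVert_\infty$ accuracy.

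There is no serious obstacle here: the only conceptual point is to use the sparsity of the transfer matrix to restrict the summation in the expansion of $\tr[O\mathcal{N}(\rho)]$ to at most $s_\mathcal{N}$ nonzero terms, and then to control the coefficient vectors by the trivial $\infty$-norm bounds (one via $\lVert\rho\rVert_1 = 1$, the other via Parseval). The small subtlety to flag is that the statement only promises estimates for the \emph{non-zero} TM entries; this is why it is essential that the restricted sum over $S$ is an exact rewriting, so no error is incurred from the (known-to-be-zero) entries outside $S$.
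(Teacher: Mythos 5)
Your proposal is correct and follows essentially the same route as the paper's own proof: restrict the double sum to the at most $s_\mathcal{N}$ non-zero transfer matrix entries, apply H\"older to get the factor $s_\mathcal{N}\lVert\vec\alpha\rVert_\infty\lVert\vec\beta\rVert_\infty$, and then use $\lVert\vec\alpha\rVert_\infty\leq\max_i\lVert Q_i\rVert_\infty$ and $\lVert\vec\beta\rVert_\infty\leq\lVert\vec\beta\rVert_2=\lVert O\rVert_2$ exactly as in \Cref{corollary:tm-estimates-give-sparse-expectation-estimates}. Your closing remark about needing to know that the unestimated entries are exactly zero matches the caveat the paper states immediately after the lemma.
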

\begin{proof} 
    Let us denote the pairs of indices corresponding to non-zero entries of the $\mathcal{Q}$-TM of $\mathcal{N}$ by $(i_1,j_1),\ldots, (i_{s_{\mathcal{N}}},j_{s_{\mathcal{N}}})\in \{1,\ldots,4^n\}^2$. 
    Expanding in the $\mathcal{Q}$-basis, we again start from \Cref{eq:expectation-pauli-expansion,eq:expectation-pauli-expansion-estimate}.
    We can make use of our sparsity assumption to reduce the number of summands and then again apply H\"older to obtain:
    \begin{align}
        \lvert \hat{\mu} - \Tr [O\mathcal{N}(\rho)]\rvert 
        &= \left\lvert \sum_{i,j=1}^{4^n}\alpha_i\beta_j (\hat{r}_{j,i}^\mathcal{Q} - (R_\mathcal{N}^\mathcal{Q})_{j,i}) \right\rvert\\
        &= \left\lvert \sum_{s=1}^{s_{\mathcal{N}}}\alpha_{j_s}\beta_{i_s} \left(\hat{r}^\mathcal{Q}_{i_s,j_s} - (R_\mathcal{N}^\mathcal{Q})_{i_s,j_s}\right) \right\rvert\\
    	&\leq \varepsilon\cdot \sum_{s=1}^{s_{\mathcal{N}}}\lvert\alpha_{j_s}\rvert \cdot\lvert\beta_{i_s}\rvert \, ,
    \end{align}
    where we used our assumption that $\lvert \hat{r}^\mathcal{Q}_{i_s,j_s} - (R_\mathcal{N}^\mathcal{Q})_{i_s,j_s}\rvert\leq\varepsilon$ for all $1\leq s\leq s_{\mathcal{N}}$.
    Yet another application of H\"older gives us
    \begin{align}
         \sum_{s=1}^{s_{\mathcal{N}}}\lvert\alpha_{j_s}\rvert \cdot\lvert\beta_{i_s}\rvert
         &\leq \left(\sum_{s=1}^{s_{\mathcal{N}}}\lvert\alpha_{j_s}\rvert\right)\cdot \max_{1\leq s\leq s_{\mathcal{N}}}\lvert\beta_{i_s}\rvert\\
         &\leq s_{\mathcal{N}}\lVert\vec{\alpha}\rVert_\infty \cdot \lVert\vec{\beta}\rVert_\infty \\
         &\leq s_{\mathcal{N}}\lVert\vec{\alpha}\rVert_\infty \cdot \lVert\vec{\beta}\rVert_2 \\
         &\leq s_{\mathcal{N}}\cdot \left(\max_{1\leq i\leq 4^n} \lVert Q_i\rVert_\infty\right)\cdot \lVert O\rVert_2\, ,
    \end{align}
    where the final step uses the inequality $\lVert\vec{\alpha}\rVert_\infty\leq \max_{1\leq i\leq 4^n} \lVert Q_i\rVert_\infty$ and the equality $\lVert\vec{\beta}\rVert_2 = \lVert O\rVert_2$, both of which were established in the proof of \Cref{corollary:tm-estimates-give-sparse-expectation-estimates}.
    Combining the two chains of inequalities gives the claimed approximation guarantee.
\end{proof}

We emphasize that applying \Cref{lemma:tm-entry-estimates-give-sparse-channel-expectation-estimates} requires estimates for the non-zero TM entries as well as the knowledge that all other TM entries are equal to zero.
Also, similarly to how we could allow for approximate sparsity in the states and observables in the previous subsection, it is straightforward to extend \Cref{lemma:tm-entry-estimates-give-sparse-channel-expectation-estimates} to channels that are close to a channel with sparse TM in $1$-to-$1$ norm. We conclude this subsection with some examples of quantum channels with a sparse (Pauli) transfer matrix.

\begin{example}[Pauli-sparse channels]\label{example:sparse-channels}
    The following constitute examples for channels with a sparse Pauli transfer matrix:
    \begin{itemize}
        \item Let $\mathcal{N}(\cdot) = \sum_{C\in\{0,1,2,3\}^n} \gamma_C \sigma_C (\cdot )\sigma_C$ be an $n$-qubit Pauli channel with Pauli error rates satisfying $\gamma_C\geq 0$ and $\sum_{C\in\{0,1,2,3\}^n} \gamma_C = 1$. 
        Then, for any $A,B\in\{0,1,2,3\}^n$, we have 
        \begin{align}
            \frac{1}{2^n}\tr[\sigma_A \mathcal{N}(\sigma_B)]
            &= \frac{1}{2^n}\sum_{C\in\{0,1,2,3\}^n} \gamma_C \tr[\sigma_A \sigma_C \sigma_B \sigma_C]\\
            &= \left( \sum_{\substack{C\in\{0,1,2,3\}^n\\ [\sigma_B,\sigma_C]=0}} \gamma_C - \sum_{\substack{D\in\{0,1,2,3\}^n\\ \{\sigma_B,\sigma_D\}=0}} \gamma_D\right) \delta_{A, B} .
        \end{align}
        That is, for any $B\in\{0,1,2,3\}^n$, the normalized Pauli $\tfrac{\sigma_B}{\sqrt{2^n}}$ is an eigenoperator of $\mathcal{N}$ with associated Pauli eigenvalue given by 
        \begin{equation}
            \lambda_B = \sum_{\substack{C\in\{0,1,2,3\}^n\\ [\sigma_B,\sigma_C]=0}} \gamma_C - \sum_{\substack{D\in\{0,1,2,3\}^n\\ \{\sigma_B,\sigma_D\}=0}} \gamma_D .
        \end{equation}
        In other words, the PTM of a Pauli channel is a diagonal matrix with its eigenvalues on the diagonal. Therefore, for Pauli channels, PTM learning is equivalent to the task of Pauli eigenvalue learning considered in \cite{chen2022quantum}.
        Accordingly, a Pauli channel has an $s$-sparse PTM if and only if its spectrum contains (at most) $s$ non-zero eigenvalues.
        \item An $n$-qubit channel $\mathcal{N}$ has an $s$-sparse PTM iff its Choi state $\tfrac{1}{2^n}\Gamma^{\mathcal{N}}$ has an $s$-sparse Pauli ONB expansion. (This can be seen as a direct consequence of \Cref{eq:ptm-as-pauli-choi-expectation-value} below.)
        Thus, any example of a Pauli-sparse $2n$-qubit state $\rho$ with $\tr_{n+1,\ldots,2n}[\rho]=\tfrac{1}{2^n}\mathbbm{1}_{2^n}$ gives rise to quantum channel with $s$-sparse PTM. 
        Concretely, if the Pauli expansion of the Choi state is $\tfrac{1}{2^n}\Gamma^{\mathcal{N}} = \sum_{C\in\{0,1,2,3\}^{2n}} \gamma(C)\tfrac{\sigma_C}{\sqrt{2^{2n}}}$, with $\gamma(B 0^n)=\tfrac{1}{\sqrt{2^n}}\delta_{B, 0^n}$ to ensure that $\mathcal{N}$ is trace-preserving, then the corresponding channel $\mathcal{N}$ acts on an input state $\rho = \sum_{A\in\{0,1,2,3\}^n} \alpha(A) \tfrac{\sigma_A}{\sqrt{2^n}}$ via
        \begin{equation}
            \mathcal{N}(\rho)
            = 2^n \sum_{A,B\in\{0,1,2,3\}^n} (-1)^{\lvert\{1\leq i\leq n~|~A_i=2\}\rvert} \alpha(A) \gamma(AB)  \frac{\sigma_B}{\sqrt{2^n}}\, .
        \end{equation}
        \item To make the abstract construction of the previous bullet point more concrete, consider the Choi state $\tfrac{1}{2^n}\Gamma^{\mathcal{N}} = \tfrac{\mathbbm{1}_{2^{2n}} + \sigma_A}{2^{2n}}$, for some $A\in\{0,1,2,3\}^{2n}$ with $A_{n+1}\ldots A_{2n}\neq 0^n$. 
        The action of the associated $n$-qubit quantum channel is given by
        \begin{equation}
            \mathcal{N}(\cdot)
            = 2^n \tr_{\mathrm{in}}\left[((\cdot)^{\top}_{\mathrm{in}}\otimes\mathbbm{1}_{\mathrm{out}})\frac{1}{2^n} \Gamma^{\mathcal{N}}\right]
            = \frac{\tr[(\cdot)] \mathbbm{1}_{2^n} + s_A \tr[(\cdot)\sigma_{A_{1}\ldots A_n}] \sigma_{A_{n+1}\ldots A_{2n}}}{2^n} ,
        \end{equation}
        where the sign $s_A$ is given by $s_A= (-1)^{\lvert\{1\leq i\leq n~|~A_i=2\}\rvert}$.
        As $\tfrac{1}{2^n}\Gamma^{\mathcal{N}}$ has a $2$-sparse Pauli expansion, this channel $\mathcal{N}$ has a $2$-sparse PTM.
    \end{itemize}
\end{example}

We conclude this section with a remark about limitations of the class of sparse channels:
    
\begin{remark}\label{remark:sparse-channel-limited-rank}
    If $\mathcal{N}$ has an $s_\mathcal{N}$-sparse $\mathcal{Q}$-TM, then that directly implies that this TM has rank at most $s_\mathcal{N}$. As the $\mathcal{Q}$-TM of $\mathcal{N}$ is simply its matrix representation w.r.t.~the ONB $\mathcal{Q}$, that tells us: An $n$-qubit quantum channel with $s_\mathcal{N}$-sparse TM has rank at most $s_\mathcal{N}$ when viewed as a linear map from $\mathcal{B}((\mathbb{C}^2)^{\otimes n})$ to itself.
    Thus, our notion of sparse channels can only capture relatively low-rank channels, i.e., channels with a relatively low-dimensional image space.
    
    We can also consider the Choi/Kraus rank of $\mathcal{N}$, instead of its rank as a linear map. For simplicity, let us focus on the case of the Pauli ONB.
    As mentioned above, by \Cref{eq:ptm-as-pauli-choi-expectation-value},  $\mathcal{N}$ having an $s$-sparse PTM is equivalent to its Choi state $\tfrac{1}{2^n}\Gamma^{\mathcal{N}}$ having an $s$-sparse Pauli expansion. Following the reasoning in \Cref{remark:sparse-states-limited-purity}, we thus see that an $s$-sparse channel $\mathcal{N}$ has a Choi/Kraus rank of at least $\Omega (\tfrac{4^n}{s})$.
    For general ONBs and TMs, we can obtain a Choi/Kraus rank lower bound with a similar reasoning, replacing \Cref{eq:ptm-as-pauli-choi-expectation-value} by \Cref{eq:general-tm-via-choi-expectation}.
\end{remark}

\section{Efficient Pauli Transfer Matrix Learning With Quantum Memory}\label{section:ptm-learning}

In this section, we show that a learner with access to a quantum memory can efficiently estimate all the PTM entries of an unknown channels from Choi access. Combining this with the results of \Cref{section:tm-estimates-give-expectation-estimates}, this gives a procedure for estimating Pauli-sparse expectation values of an unknown channel. Alternatively, we can use it to estimate arbitrary expectation values for an unknown channel promised to have a sparse PTM.

First, we show that the Pauli-specific shadow tomography result of \cite{huang2021information-theoretic} allows us to learn the PTM of an arbitrary unknown channel from a small number of Choi state copies, assuming access to a quantum memory:

\begin{theorem}\label{theorem:ptm-entries-from-choi-shadow-tomography}
    There is a learning procedure with quantum memory from Choi access that, using $m=\mathcal{O}\left(\tfrac{n + \log(1/\delta)}{\varepsilon^4}\right)$ copies of the Choi state $\tfrac{1}{2^n}\Gamma^{\mathcal{N}}$ of an arbitrary unknown $n$-qubit channel $\mathcal{N}$, outputs, with success probability $\geq 1-\delta$, numbers $\hat{r}_{A,B}^\mathcal{P}$ for $A,B\in\{0,1,2,3\}^n$ such that $\lvert  \hat{r}_{A,B}^\mathcal{P} - (R_\mathcal{N}^\mathcal{P})_{A,B}\rvert\leq\varepsilon$ holds simultaneously for all $A,B\in\{0,1,2,3\}^n$.
\end{theorem}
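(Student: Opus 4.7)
The plan is to recast the problem of estimating all PTM entries as a Pauli shadow tomography problem on the $2n$-qubit Choi state, and then directly invoke an existing shadow tomography bound for Pauli observables. The key identity driving everything is the following rewriting via the Choi-Jamiolkowski isomorphism: using $\mathcal{N}(X) = 2^n \tr_{\mathrm{in}}[(X^\top \otimes \mathbbm{1})\cdot \tfrac{1}{2^n}\Gamma^\mathcal{N}]$ with $X=\sigma_B$, we get
\begin{equation}\label{eq:ptm-as-pauli-choi-expectation-value-sketch}
    (R_\mathcal{N}^\mathcal{P})_{A,B}
    = \tfrac{1}{2^n}\tr[\sigma_A \mathcal{N}(\sigma_B)]
    = \tr\bigl[(\sigma_B^\top \otimes \sigma_A)\cdot \tfrac{1}{2^n}\Gamma^\mathcal{N}\bigr].
\end{equation}
Since transposition only introduces a sign on Pauli $Y$ factors, each $\sigma_B^\top \otimes \sigma_A$ is, up to a global sign, a $2n$-qubit Pauli. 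Hence simultaneously estimating all $16^n$ PTM entries is, up to trivially tracked signs, equivalent to simultaneously estimating all $4^{2n}$ Pauli expectation values of the $2n$-qubit state $\tfrac{1}{2^n}\Gamma^{\mathcal{N}}$. Crucially, the $\tfrac{1}{2^n}$ normalization inherent in the PTM definition exactly cancels the dimension factor $2^n$ from the Choi representation formula, so additive $\varepsilon$-error at the Choi expectation-value level transfers without any loss to additive $\varepsilon$-error at the PTM-entry level.

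Next, I would invoke the Pauli-specific shadow tomography procedure from Huang-Chen-Preskill (\cite{huang2021information-theoretic}): for any $m$-qubit state $\rho$, performing joint $2$-copy Bell-basis measurements on $\mathcal{O}((m+\log(1/\delta))/\varepsilon^4)$ pairs of copies yields, with probability at least $1-\delta$, estimates $\hat{p}_P$ for $\tr[P\rho]$ that are simultaneously $\varepsilon$-accurate across all $4^m$ Paulis $P$. Applying this to $\rho = \tfrac{1}{2^n}\Gamma^{\mathcal{N}}$ with $m=2n$ gives a sample complexity of $\mathcal{O}((2n+\log(1/\delta))/\varepsilon^4) = \mathcal{O}((n+\log(1/\delta))/\varepsilon^4)$ Choi state copies. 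Setting $\hat{r}^\mathcal{P}_{A,B} \coloneqq \pm \hat{p}_{\sigma_B^\top \otimes \sigma_A}$ with the sign absorbed from the transposition then delivers the required uniform $\varepsilon$-accurate estimates of all PTM entries.

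There is no real obstacle; the proof is essentially a composition of an algebraic identity with a black-box sample complexity bound. The only things to check carefully are (i) that the dimension factors cancel so no $2^n$ amplification of the error occurs, and (ii) that the statement of \cite{huang2021information-theoretic} is applied in the form that gives all $4^{2n}$ Pauli expectation values simultaneously, rather than a bounded collection chosen in advance. Both are immediate. Writing up the proof therefore consists of stating \eqref{eq:ptm-as-pauli-choi-expectation-value-sketch}, observing that $\tfrac{1}{2^n}\Gamma^{\mathcal{N}}$ is a valid $2n$-qubit state, and citing the Pauli shadow tomography bound.
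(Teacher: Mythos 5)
Your proposal is correct and follows essentially the same route as the paper's proof: rewrite each PTM entry as a Pauli expectation value of the $2n$-qubit Choi state via $(R_\mathcal{N}^\mathcal{P})_{A,B} = \tr[(\sigma_B^{\top}\otimes\sigma_A)\tfrac{1}{2^n}\Gamma^\mathcal{N}]$ (tracking the sign from transposing $\sigma_2$), note that the PTM normalization cancels the dimension factor, and invoke the Bell-measurement Pauli shadow tomography bound of \cite{huang2021information-theoretic} for all $16^n$ Paulis to get $\mathcal{O}((n+\log(1/\delta))/\varepsilon^4)$ copies. No gaps.
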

\begin{proof}
    For every $A,B\in\{0,1,2,3\}^n$, we can write the corresponding PTM entry as
    \small
    \begin{equation}\label{eq:ptm-as-pauli-choi-expectation-value}
        \left( R_\mathcal{N}^\mathcal{P}\right)_{A,B}
        = \frac{1}{2^n}\tr [\sigma_A \mathcal{N}(\sigma_B)]
        = \tr[(\sigma_B^{\top} \otimes \sigma_A) \frac{1}{2^n}\Gamma^\mathcal{N}]
        = (-1)^{\lvert\{1\leq i\leq n| B_i=2\}\rvert} \tr[(\sigma_B \otimes \sigma_A) \frac{1}{2^n}\Gamma^\mathcal{N}] \, .
    \end{equation}
    \normalsize
    Here, the last equality used that $\sigma_0^{\top}=\sigma_0$, $\sigma_1^{\top}=\sigma_1$, $\sigma_2^{\top}=- \sigma_2$, and $\sigma_3^{\top}=\sigma_3$.
    Thus, an $\varepsilon$-accurate estimate of the Pauli expectation value $\tr[(\sigma_B \otimes \sigma_A) \tfrac{1}{2^n}\Gamma^\mathcal{N}]$ immediately translates to a $\varepsilon$-accurate estimate of the PTM entry $\left( R_\mathcal{N}^\mathcal{P}\right)_{A,B}$.
    
    To learn all these $M=16^{n}$ Pauli expectation values simultaneously, we employ the Pauli shadow tomography procedure of \cite[Supplementary Material V.B]{huang2021information-theoretic}.
    According to \cite[Theorem 2]{huang2021information-theoretic}, this procedure uses at most
    \begin{equation}
        \mathcal{O}\left(\frac{\log\left(\tfrac{16^n}{\delta}\right)}{\varepsilon^4}\right)
        = \mathcal{O}\left(\frac{n + \log\left(\tfrac{1}{\delta}\right)}{\varepsilon^4}\right)
    \end{equation}
    copies of the Choi state $\tfrac{1}{2^n}\Gamma^\mathcal{N}$ to produce, with success probability $\geq 1-\delta$, simultaneously $\varepsilon$-accurate estimates for all Pauli expectation values.
\end{proof}

In \Cref{appendix:query-lower-bound-ptm-learning-with-quantum-memory-choi-access}, we give a simple proof for a corresponding lower bound of $\Omega (\tfrac{n}{\varepsilon^2})$, thus establishing the optimality of the linear-in-$n$ scaling for learning the PTM with quantum memory from Choi access.
In fact, \cite[Theorem 2, (iv)]{chen2022quantum} uses teleportation stretching for Pauli channels to show: Even given general channel access and even when the unknown channel is promised to be a Pauli channel and thus to have a diagonal PTM, the linear-in-$n$ scaling cannot be improved.

When focusing on the task of predicting sparse expectation values, we not only have efficient query complexity, we can also translate guarantees on the efficiency of the Pauli shadow tomography result of \cite{huang2021information-theoretic} in terms of both classical memory and classical post-processing time to our setting.

\begin{corollary}\label{corollary:pauli-sparse-expectations-from-choi-shadow-tomography}
    There is a learning procedure with quantum memory from Choi access that, using $m_1=\mathcal{O}\left(\tfrac{n + \log(1/\delta)}{\varepsilon^4}\cdot B^4 s_\rho^4 s_O^2\right)$ copies of the Choi state $\tfrac{1}{2^n}\Gamma^{\mathcal{N}}$ of an unknown $n$-qubit channel $\mathcal{N}$, produces a classical description $\hat{\mathcal{N}}$ of $\mathcal{N}$ from which any $M$ expectation values of the form $\operatorname{tr}[O_i\mathcal{N}(\rho_i)]$, $1\leq i\leq M$, with $O_i$ some $n$-qubit observables with $s_O$-sparse Pauli basis expansions satisfying $\lVert O_i\rVert_2\leq B\sqrt{2^n}$ and $\rho_i$ some $n$-qubit states with $s_\rho$-sparse Pauli basis expansions, can be predicted to accuracy $\varepsilon$ with $m_2 = \mathcal{O}\left(\tfrac{ \log (\min \{M s_\rho s_O, 16^n\}) + \log (1 /\delta)}{\varepsilon^2}\cdot B^2 s_\rho^2 s_O\right)$ additional copies of $\tfrac{1}{2^n}\Gamma^{\mathcal{N}}$ and classical computation time $\mathcal{O}\left(n m_1 M\right)=\mathcal{O}\left(\tfrac{n^2 + n\log (1/\delta)}{\varepsilon^4 }\cdot B^4 s_\rho^4 s_O^2 M\right)$, with success probability $\geq 1-\delta$.
    The classical representation $\hat{\mathcal{N}}$ of $\mathcal{N}$ consists of $\mathcal{O} (n m_1)=\mathcal{O}\left(\tfrac{n^2 + n \log(1/\delta)}{\varepsilon^4}\cdot B^4 s_\rho^4 s_O^2\right)$ real numbers stored in classical memory.
\end{corollary}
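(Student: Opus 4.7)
The plan is to reduce \Cref{corollary:pauli-sparse-expectations-from-choi-shadow-tomography} to a rescaled application of \Cref{theorem:ptm-entries-from-choi-shadow-tomography} combined with the sparse-expectation conversion of \Cref{corollary:tm-estimates-give-sparse-expectation-estimates}. In the Pauli ONB we have $\max_A \lVert \sigma_A/\sqrt{2^n}\rVert_\infty = 1/\sqrt{2^n}$, so together with the hypothesis $\lVert O_i\rVert_2 \leq B\sqrt{2^n}$, \Cref{corollary:tm-estimates-give-sparse-expectation-estimates} translates a simultaneously $\varepsilon'$-accurate collection of PTM entries into $(\varepsilon' B s_\rho \sqrt{s_O})$-accurate expectation-value predictions. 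Setting $\varepsilon' = c\varepsilon/(B s_\rho \sqrt{s_O})$ for a small absolute constant $c$ at the outset thus reduces everything to simultaneous PTM-entry estimation at accuracy $\varepsilon'$, on the relevant set of indices.

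For the offline stage ($m_1$ copies), I would invoke \Cref{theorem:ptm-entries-from-choi-shadow-tomography} at accuracy $\varepsilon'$ with confidence parameter $\delta/2$, giving $m_1 = \mathcal{O}((n+\log(1/\delta))/\varepsilon'^4)$ copies, which after substituting $\varepsilon'$ reproduces the stated $m_1$. The classical description $\hat{\mathcal{N}}$ is the raw list of $m_1$ Bell-basis measurement outcomes underlying the Pauli shadow tomography subroutine of \cite{huang2021information-theoretic}: each outcome is an $\mathcal{O}(n)$-bit string, giving $\mathcal{O}(nm_1)$ classical memory as stated. To answer each of the $M$ queries, I expand $\tr[O_i \mathcal{N}(\rho_i)]$ in the Pauli basis as in \Cref{eq:expectation-pauli-expansion}, a sum of at most $s_\rho s_O$ PTM entries; each such entry is read off $\hat{\mathcal{N}}$ in time $\mathcal{O}(nm_1)$ (one pass through the stored outcomes, possibly faster with preprocessing), giving the stated $\mathcal{O}(nm_1 M)$ classical time overall.

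For the online stage ($m_2$ additional copies), I restrict to the set $\mathcal{S}$ of PTM-entry indices arising from the Pauli supports of the $M$ given queries, for which $|\mathcal{S}|\le \min\{Ms_\rho s_O, 16^n\}$. Here I would apply a shadow tomography subroutine achieving $1/\varepsilon'^2$ scaling (rather than $1/\varepsilon'^4$) for $|\mathcal{S}|$ bounded Pauli-expectation targets with quantum memory: using Bell-type measurements on pairs of Choi copies one obtains, for each targeted Pauli $\sigma_B\otimes\sigma_A$, a bounded signed estimator of its expectation on the Choi state, then applies Hoeffding together with a union bound over $|\mathcal{S}|$ targets and confidence parameter $\delta/2$. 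This yields $\mathcal{O}((\log|\mathcal{S}|+\log(1/\delta))/\varepsilon'^2)$ additional copies, reproducing the claimed $m_2$ after substituting $\varepsilon'$.

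The main obstacle will be justifying the online $1/\varepsilon'^2$ scaling: one must exhibit an estimator of each targeted PTM entry whose outputs are in $[-1,1]$ and whose expectation equals the entry, together with a quantum-memory measurement scheme that produces such outputs for all of $\mathcal{S}$ simultaneously from each pair of Choi copies (rather than the squared-magnitude estimator that forces the $1/\varepsilon^4$ scaling of the offline protocol, where one must simultaneously control $16^n$ variances). The remaining bookkeeping — propagating $\varepsilon \to \varepsilon'$, splitting the failure probability budget across the two stages via a union bound to obtain overall success $\geq 1-\delta$, and verifying the classical-memory and time bounds — is mechanical given the pieces already established in \Cref{section:tm-estimates-give-expectation-estimates} and \Cref{theorem:ptm-entries-from-choi-shadow-tomography}.
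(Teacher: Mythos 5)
Your overall route is the same as the paper's: rescale the target accuracy to $\tilde{\varepsilon}=\nicefrac{\varepsilon}{B s_\rho\sqrt{s_O}}$ via \Cref{corollary:tm-estimates-give-sparse-expectation-estimates} (using $\max_A\lVert\sigma_A/\sqrt{2^n}\rVert_\infty=2^{-n/2}$ and $\lVert O_i\rVert_2\leq B\sqrt{2^n}$), translate PTM entries into Pauli expectation values of the Choi state via \Cref{eq:ptm-as-pauli-choi-expectation-value}, and run the two-phase Pauli shadow tomography protocol of \cite{huang2021information-theoretic} on $\tfrac{1}{2^n}\Gamma^{\mathcal{N}}$, restricting the second phase to the at most $\min\{Ms_\rho s_O,16^n\}$ relevant entries. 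The offline copy count $m_1$, the $\mathcal{O}(nm_1)$ classical memory, and the accuracy bookkeeping all check out and match the paper.

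The gap is exactly the piece you flag: the online $\nicefrac{1}{\tilde{\varepsilon}^2}$ phase. The paper does not construct this itself; it imports the entire two-phase structure, including $m_2$ and the computation-time bound, from \cite[Theorem 2 and Supplementary Material V.B]{huang2021information-theoretic}. Your proposed substitute --- ``a bounded signed estimator'' of each targeted Pauli expectation obtained from Bell-type measurements on pairs of Choi copies --- does not exist as described: a Bell-basis measurement on $\rho\otimes\rho$ is an unbiased estimator of $\tr[(P\otimes P)\rho^{\otimes 2}]=\tr[P\rho]^2$ (more generally of products $\tr[P\rho]\tr[Q\rho]$), so it carries no sign information about any individual $\tr[P\rho]$, and Hoeffding plus a union bound on such outcomes can only give you magnitudes at $\nicefrac{1}{\tilde{\varepsilon}^4}$ cost. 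The second phase of the protocol in \cite{huang2021information-theoretic} circumvents this by first using the phase-one magnitude estimates to discard the small-magnitude targets and then running a dedicated sign-and-value estimation only for the surviving (few, explicitly identified) Paulis, which is where the $\nicefrac{(\log\lvert\mathcal{S}\rvert+\log(1/\delta))}{\tilde{\varepsilon}^2}$ scaling comes from. So either cite that second phase directly, as the paper does, or you owe a genuinely different construction; the one you sketch would fail.
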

\begin{proof}
    This follows from \cite[Theorem 2]{huang2021information-theoretic} and the bounds on classical memory and classical computation time given in the proof thereof (see \cite[Supplemental material V.B]{huang2021information-theoretic}) when combined with \Cref{corollary:tm-estimates-give-sparse-expectation-estimates}.
    To see this, note that, by \Cref{corollary:tm-estimates-give-sparse-expectation-estimates}, it suffices to use the protocol of \cite[Supplemental material V.B]{huang2021information-theoretic} to predict the at most $\min\{M s_\rho s_O, 16^n\}$ relevant non-zero PTM entries, which we can relate to expectation values of Pauli observables on the Choi state by \Cref{eq:ptm-as-pauli-choi-expectation-value}, each to accuracy $\tilde{\varepsilon}=\tfrac{\varepsilon}{s_\rho \sqrt{s_O} B}$.
\end{proof}

When talking about computational efficiency in \Cref{corollary:pauli-sparse-expectations-from-choi-shadow-tomography}, we implicitly assume that the Pauli-sparse observables $O_i$ and the Pauli-sparse states $\rho_i$ are given to the learner in the natural efficient classical representations, $\hat{O}_i$ and $\hat{\rho}_i$, namely as lists of the respective Pauli coefficients. For example, given the freedom in normalization discussed in \Cref{remark:alternative-normalization}, we can work with the classical representations $\hat{\rho}_i = \{(A_s, \tr[\rho_i \sigma_{A_s}])\}_{s=1}^{s_\rho}$ and $\hat{O}_i = \{(B_s, \tfrac{1}{2^n}\tr[O_i \sigma_{B_s}])\}_{s=1}^{s_O}$.
Given these Pauli representations of $O_i$ and $\rho_i$, our learning procedure can then efficiently predict $\operatorname{tr}[O_i\mathcal{N}(\rho_i)]$.
\Cref{corollary:pauli-sparse-expectations-from-choi-shadow-tomography} thus tells us that we can predict expectation values of the form $\operatorname{tr}[O\mathcal{N}(\rho)]$ to an inverse-polynomial accuracy from an efficient number of copies of $\tfrac{1}{2^n}\Gamma^{\mathcal{N}}$ using efficient classical computation time and efficient classical memory, as long as we assume that both $O$ and $\rho$ have a $\mathcal{O}(\poly (n))$-sparse Pauli basis expansions and that $\norm{O}\leq \mathcal{O}(\poly (n))$.

While we formulate \Cref{corollary:pauli-sparse-expectations-from-choi-shadow-tomography} for sparse observables and states, we note that a similar copy complexity bound holds when replacing sparsity by $1$-norm assumptions. This can be proven in the same way, merely replacing \Cref{corollary:tm-estimates-give-sparse-expectation-estimates} by \Cref{lemma:tm-entry-estimates-give-expectation-estimates-1-norm}. As we saw in \Cref{remark:improved-1-norm-bound-from-huang-et-al}, the latter can lead to tighter bounds than the former.
However, when starting from $1$-norm assumptions, it is not immediately clear how to translate the statements about computational efficiency, for example because there is no longer a natural efficient representation for the observables and states.

\begin{remark}
    We note some additional features of \Cref{corollary:pauli-sparse-expectations-from-choi-shadow-tomography}.
    First, if the classical representations are only approximate, in the sense that we obtain $\hat{\rho}_i = \{(A_s, \hat{\alpha}_s^{(i)})\}_{s=1}^{s_\rho}$ with $\lvert \hat{\alpha}_s^{(i)} - \tr[\rho_i \sigma_{A_s}]\rvert\leq \varepsilon_\rho$ and $\hat{O}_i = \{(B_s, \hat{\beta}_s^{(i)})\}_{s=1}^{s_O}$ with $\lvert \hat{\beta}_s^{(i)} - \tfrac{1}{2^n}\tr[O_i \sigma_{B_s}]\rvert\leq\varepsilon_O$, then our procedure still gives $(\varepsilon + s_\rho s_O \varepsilon_O + s_\rho \varepsilon_\rho (\sqrt{s_O}B + s_O \varepsilon_O))$-accurate estimates for $\operatorname{tr}[O_i\mathcal{N}(\rho_i)]$. (This can be seen by augmenting the proof of \Cref{lemma:tm-entry-estimates-give-sparse-channel-expectation-estimates} with additional triangle inequalities.)
    
    Second, we inherit two more interesting features of \cite[Theorem 2]{huang2021information-theoretic}:
    On the one hand, in \Cref{corollary:pauli-sparse-expectations-from-choi-shadow-tomography}, the $O_i$ and $\rho_i$ need not be known in advance, we can build our classical representation and get told only afterwards which expectation values we should predict.
    On the other hand, the part of the procedure in \Cref{corollary:pauli-sparse-expectations-from-choi-shadow-tomography} that produces the classical representation $\hat{\mathcal{N}}$ only ever performs measurements on at most two copies of the Choi state $\tfrac{1}{2^n}\Gamma^{\mathcal{N}}$, so a quantum memory of $2n$ qubits suffices to build $\hat{\mathcal{N}}$. However, using $\hat{\mathcal{N}}$ to predict $M$ expectation values uses a quantum memory of $2n(m_2-1)$ qubits, with $m_2$ as in \Cref{corollary:pauli-sparse-expectations-from-choi-shadow-tomography}.
\end{remark}

If we are promised in advance that the unknown quantum channel has a sparse PTM but the location of the non-zero entries is unknown, then a combination of \Cref{theorem:ptm-entries-from-choi-shadow-tomography} with a cut-off argument leads to the following variant of our result:

\begin{lemma}\label{lemma:sparse-channel-ptm-entries-from-choi-shadow-tomography}
    There is a learning procedure with quantum memory from Choi access that, using $m=\mathcal{O}\left(\tfrac{n + \log(1/\delta)}{\varepsilon^4}\right)$ copies of the Choi state $\tfrac{1}{2^n}\Gamma^{\mathcal{N}}$ of an unknown $n$-qubit channel $\mathcal{N}$ with $s_\mathcal{N}$-sparse PTM, outputs, with success probability $\geq 1-\delta$, a list $\{((i_{s},j_{s}), \hat{r}_{i_s,j_s}^\mathcal{P})\}_{s=1}^{S}$ of length $S$ such that
    \begin{enumerate}
        \item $(i,j)\not\in \{(i_{s},j_{s})\}_{s=1}^{S}$ $\Rightarrow$ $\lvert (R_\mathcal{N}^\mathcal{P})_{i,j}\rvert\leq\varepsilon$,
        \item $(i,j)\in \{(i_{s},j_{s})\}_{s=1}^{S}$ $\Rightarrow$ $\lvert (R_\mathcal{N}^\mathcal{P})_{i,j}\rvert > \nicefrac{\varepsilon}{6}$,
        \item $S\leq s_\mathcal{N}$, and
        \item $\lvert  \hat{r}_{i_s,j_s}^\mathcal{P} - (R_\mathcal{N}^\mathcal{P})_{i_s,j_s}\rvert\leq\varepsilon$ holds for all $1\leq s \leq S$.
    \end{enumerate}
\end{lemma}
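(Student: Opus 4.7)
The plan is to reduce to \Cref{theorem:ptm-entries-from-choi-shadow-tomography} by an entry-wise thresholding step, since the sparsity promise is not used by the procedure itself but only to bound the size of the output list. First I would invoke \Cref{theorem:ptm-entries-from-choi-shadow-tomography} with accuracy parameter $\varepsilon' = \varepsilon/3$ and the same failure probability $\delta$, using $m = \mathcal{O}((n + \log(1/\delta))/\varepsilon^4)$ copies of $\tfrac{1}{2^n}\Gamma^{\mathcal{N}}$, to obtain with probability $\geq 1-\delta$ simultaneous estimates $\tilde{r}^\mathcal{P}_{A,B}$ for every pair $(A,B)\in\{0,1,2,3\}^n\times\{0,1,2,3\}^n$ satisfying $|\tilde{r}^\mathcal{P}_{A,B} - (R^\mathcal{P}_\mathcal{N})_{A,B}| \leq \varepsilon/3$. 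Note that at this stage the sparsity promise on $\mathcal{N}$ plays no role; the estimates are produced for all $16^n$ entries.

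Conditioning on this success event, I would build the output list $\{((i_s,j_s),\hat{r}^\mathcal{P}_{i_s,j_s})\}_{s=1}^S$ by keeping exactly those pairs $(A,B)$ with $|\tilde{r}^\mathcal{P}_{A,B}| > \varepsilon/2$ and assigning to them the estimate $\hat{r}^\mathcal{P}_{A,B} := \tilde{r}^\mathcal{P}_{A,B}$. Property~4 is immediate since $\varepsilon/3 \leq \varepsilon$. Properties~1 and~2 then follow from two symmetric triangle inequalities: a discarded pair has $|\tilde{r}^\mathcal{P}_{A,B}| \leq \varepsilon/2$, so $|(R^\mathcal{P}_\mathcal{N})_{A,B}| \leq \varepsilon/2 + \varepsilon/3 < \varepsilon$; a kept pair has $|\tilde{r}^\mathcal{P}_{A,B}| > \varepsilon/2$, so $|(R^\mathcal{P}_\mathcal{N})_{A,B}| > \varepsilon/2 - \varepsilon/3 = \varepsilon/6$.

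Finally, Property~3 is where the sparsity promise enters. Since every pair retained in the list satisfies $|(R^\mathcal{P}_\mathcal{N})_{i_s,j_s}| > \varepsilon/6 > 0$, the set of retained index pairs is a subset of the support of $R^\mathcal{P}_\mathcal{N}$, which by assumption has cardinality at most $s_\mathcal{N}$; hence $S \leq s_\mathcal{N}$. There is no real obstacle in the argument, since the only quantitative constraint is that the threshold $\tau = \varepsilon/2$ and the estimation tolerance $\varepsilon/3$ jointly satisfy $\tau + \varepsilon/3 \leq \varepsilon$ and $\tau - \varepsilon/3 \geq \varepsilon/6$; any constants in a suitable range would work equally well, and this choice of constants is made only for concreteness. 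The copy complexity is inherited from \Cref{theorem:ptm-entries-from-choi-shadow-tomography} up to the constant factor absorbed into the $\mathcal{O}(\cdot)$.
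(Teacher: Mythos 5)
Your proposal is correct and follows essentially the same route as the paper: invoke \Cref{theorem:ptm-entries-from-choi-shadow-tomography} at accuracy $\nicefrac{\varepsilon}{3}$, threshold the resulting estimates at $\nicefrac{\varepsilon}{2}$, and verify the four properties via the same triangle inequalities, with the sparsity promise entering only to bound $S$. The constants and the structure of the argument match the paper's proof exactly.
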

\begin{proof}
    First apply the learning procedure of \Cref{theorem:ptm-entries-from-choi-shadow-tomography} with accuracy parameter $\tilde{\varepsilon} = \nicefrac{\varepsilon}{3}$.
    With success probability $\geq 1-\delta$, this produces estimates $\tilde{r}_{i,j}^\mathcal{P}$ satisfying $\lvert  \tilde{r}_{i,j}^\mathcal{P} - (R_\mathcal{N}^\mathcal{P})_{i,j}\rvert\leq\nicefrac{\varepsilon}{3}$ simultaneously for all $1\leq i,j\leq 4^n$.
    Now define 
    \begin{equation}
        \hat{r}_{i,j}^\mathcal{P}
        = \begin{cases} 0 \quad &\textrm{if } \lvert \tilde{r}_{i,j}^\mathcal{P}\rvert\leq\nicefrac{\varepsilon}{2}\\ \tilde{r}_{i,j}^\mathcal{P} &\textrm{else}\end{cases} .
    \end{equation}
    Let $\{((i_{s},j_{s}), \hat{r}_{i_s,j_s}^\mathcal{P})\}_{s=1}^{S}$ be the list of remaining non-zero numbers $\hat{r}_{i,j}^\mathcal{P}$ and the associated locations $(i,j)$ in the transfer matrix.
    It is now easy to see that, conditioned on the success event, this list has the desired properties:
    \begin{enumerate}
        \item Assume $(i,j)\not\in \{(i_{s},j_{s})\}_{s=1}^{S}$. By construction, this is the case iff $\hat{r}_{i,j}^\mathcal{P} =0$, which in turn is equivalent to $\lvert \tilde{r}_{i,j}^\mathcal{P}\rvert\leq\nicefrac{\varepsilon}{2}$. As $\lvert  \tilde{r}_{i,j}^\mathcal{P} - (R_\mathcal{N}^\mathcal{P})_{i,j}\rvert\leq\nicefrac{\varepsilon}{3}$ holds in the case of success, this implies $\lvert (R_\mathcal{N}^\mathcal{P})_{i,j}\rvert\leq \nicefrac{\varepsilon}{2} + \nicefrac{\varepsilon}{3}= \nicefrac{5\varepsilon}{6}\leq\varepsilon$ by the triangle inequality.
        \item Assume $(i,j)\in \{(i_{s},j_{s})\}_{s=1}^{S}$. By construction, this is the case iff $\hat{r}_{i,j}^\mathcal{P} \neq 0$ and $\hat{r}_{i,j}^\mathcal{P} = \tilde{r}_{i,j}^\mathcal{P}$, which in turn is equivalent to $\lvert \tilde{r}_{i,j}^\mathcal{P}\rvert >\nicefrac{\varepsilon}{2}$. As $\lvert  \tilde{r}_{i,j}^\mathcal{P} - (R_\mathcal{N}^\mathcal{P})_{i,j}\rvert\leq\nicefrac{\varepsilon}{3}$ holds in the case of success, this implies $\lvert (R_\mathcal{N}^\mathcal{P})_{i,j}\rvert > \nicefrac{\varepsilon}{2} - \nicefrac{\varepsilon}{3}= \nicefrac{\varepsilon}{6}$.
        \item As a direct consequence of 2., we see that $S\leq \{(i,j)~|~ (R_\mathcal{N}^\mathcal{P})_{i,j}\neq 0\} = s_\mathcal{N}$.
        \item For any $1\leq s\leq S$, we have $\hat{r}_{i_s,j_s}^\mathcal{P} = \tilde{r}_{i_s,j_s}^\mathcal{P}$ by construction. As $\lvert  \tilde{r}_{i_s,j_s}^\mathcal{P} - (R_\mathcal{N}^\mathcal{P})_{i_s,j_s}\rvert\leq\nicefrac{\varepsilon}{3}$ holds in the case of success, we have $\lvert  \hat{r}_{i_s,j_s}^\mathcal{P} - (R_\mathcal{N}^\mathcal{P})_{i_s,j_s}\rvert\leq\nicefrac{\varepsilon}{3}\leq \varepsilon$.
    \end{enumerate}
\end{proof}

Combining \Cref{lemma:sparse-channel-ptm-entries-from-choi-shadow-tomography} with \Cref{lemma:tm-entry-estimates-give-sparse-channel-expectation-estimates} gives a variant of \Cref{corollary:pauli-sparse-expectations-from-choi-shadow-tomography} in which we replace the sparsity assumption on states and observables by a sparsity assumption on the unknown channel:

\begin{corollary}\label{corollary:sample-complexity-predicting-pauli-sparse-channel-expectation-values}
    There is a learning procedure with quantum memory from Choi access that, using
    \begin{equation}
        m
        =\mathcal{O}\left(\frac{n + \log(1/\delta)}{\varepsilon^4}\cdot B^4 s_\mathcal{N}^4 \right)
    \end{equation}
    copies of the Choi state $\tfrac{1}{2^n}\Gamma^{\mathcal{N}}$ of an unknown $n$-qubit channel $\mathcal{N}$ with $s_\mathcal{N}$-sparse PTM, produces, with success probability $\geq 1-\delta$, a classical description $\hat{\mathcal{N}}$ of $\mathcal{N}$ from which all expectation values of the form $\operatorname{tr}[O\mathcal{N}(\rho)]$, where $O$ is an arbitrary $n$-qubit observable with $\lVert O\rVert_2\leq B\sqrt{2^n}$ and $\rho$ is an arbitrary $n$-qubit state, can be predicted to accuracy $\varepsilon$.
\end{corollary}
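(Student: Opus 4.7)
The proof will be a direct composition of \Cref{lemma:sparse-channel-ptm-entries-from-choi-shadow-tomography} with \Cref{lemma:tm-entry-estimates-give-sparse-channel-expectation-estimates}, with an appropriate rescaling of the accuracy parameter to absorb the factors coming from the sparsity assumption and the norm bound on $O$. Concretely, the plan is to fix a target accuracy $\tilde{\varepsilon}$ for the PTM entries that is smaller than $\varepsilon$ by exactly the factor that will appear in the expectation value error bound, so that the final guarantee comes out as $\varepsilon$.

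First, I would invoke \Cref{lemma:sparse-channel-ptm-entries-from-choi-shadow-tomography} with accuracy parameter
\begin{equation}
    \tilde{\varepsilon} \coloneqq \frac{\varepsilon}{B\, s_\mathcal{N}}
\end{equation}
and failure probability $\delta$. With success probability $\geq 1-\delta$, this uses
\begin{equation}
    m = \mathcal{O}\!\left(\frac{n + \log(1/\delta)}{\tilde{\varepsilon}^4}\right) = \mathcal{O}\!\left(\frac{n + \log(1/\delta)}{\varepsilon^4}\cdot B^4 s_\mathcal{N}^4\right)
\end{equation}
copies of $\tfrac{1}{2^n}\Gamma^{\mathcal{N}}$, matching the claimed sample complexity. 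The output is a list $\hat{\mathcal{N}} = \{((i_s,j_s), \hat{r}_{i_s,j_s}^\mathcal{P})\}_{s=1}^S$ with $S\leq s_\mathcal{N}$, which we take as the classical description of $\mathcal{N}$.

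Second, given any $n$-qubit state $\rho$ and observable $O$ with $\lVert O\rVert_2\leq B\sqrt{2^n}$, I would estimate $\tr[O\mathcal{N}(\rho)]$ by plugging the list $\hat{\mathcal{N}}$ into \Cref{eq:expectation-pauli-expansion-estimate}, i.e., by computing $\hat{\mu} = \sum_{s=1}^{S} \alpha_{j_s}\beta_{i_s}\hat{r}_{i_s,j_s}^\mathcal{P}$ where $\alpha_i = \tr[\rho \tfrac{\sigma_{i}}{\sqrt{2^n}}]$ and $\beta_i = \tr[O \tfrac{\sigma_{i}}{\sqrt{2^n}}]$ are the Pauli coefficients of $\rho$ and $O$. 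The key point is that by item (1) of \Cref{lemma:sparse-channel-ptm-entries-from-choi-shadow-tomography}, every PTM entry outside the list is truly $\tilde{\varepsilon}$-small, so discarding those entries is consistent with having $\tilde{\varepsilon}$-accurate estimates of all non-zero PTM entries of $\mathcal{N}$. Applying \Cref{lemma:tm-entry-estimates-give-sparse-channel-expectation-estimates} with $\max_{A}\lVert \tfrac{\sigma_A}{\sqrt{2^n}}\rVert_\infty = \tfrac{1}{\sqrt{2^n}}$ then yields
\begin{equation}
    \lvert \hat{\mu} - \tr[O\mathcal{N}(\rho)]\rvert \leq \tilde{\varepsilon}\, s_\mathcal{N}\, \lVert O\rVert_2 \cdot \tfrac{1}{\sqrt{2^n}} \leq \tilde{\varepsilon}\, s_\mathcal{N}\, B = \varepsilon,
\end{equation}
as desired.

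The only mildly subtle point, and really the main thing to double-check, is that \Cref{lemma:tm-entry-estimates-give-sparse-channel-expectation-estimates} can be applied when we only know that off-list PTM entries are $\tilde{\varepsilon}$-small rather than exactly zero. This is handled by a triangle inequality: the discrepancy between the true sum $\sum_{i,j} \alpha_i\beta_j (R^\mathcal{P}_{\mathcal{N}})_{j,i}$ and its truncation to the located non-zero entries is at most $s_\mathcal{N} \tilde{\varepsilon}$ in $1$-norm of the coefficient products, which is already absorbed into the bound above (since there are at most $s_\mathcal{N}$ truly non-zero entries in total). Other than this minor bookkeeping, the argument is mechanical. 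No new analysis or adaptive protocol is required, and the statement about $\hat{\mathcal{N}}$ being a classical description is immediate from the fact that the list has length at most $s_\mathcal{N}$.
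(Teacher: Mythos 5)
Your proposal is correct and follows essentially the same route as the paper's own proof: invoke \Cref{lemma:sparse-channel-ptm-entries-from-choi-shadow-tomography} at accuracy $\tilde{\varepsilon}=\nicefrac{\varepsilon}{B s_\mathcal{N}}$ (the paper writes this as $\tilde{\varepsilon}=\tfrac{\varepsilon\sqrt{2^n}}{s_\mathcal{N}\norm{O}_2}$ before inserting $\norm{O}_2\leq B\sqrt{2^n}$) and then feed the resulting list into \Cref{lemma:tm-entry-estimates-give-sparse-channel-expectation-estimates}. Your explicit handling of the off-list entries that are non-zero but $\tilde{\varepsilon}$-small — treating the estimate $0$ as a $\tilde{\varepsilon}$-accurate estimate of each such entry so that the Hölder bound over all $s_\mathcal{N}$ non-zero positions still applies — is a point the paper's two-line proof leaves implicit, and you resolve it correctly.
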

\begin{proof}
    By \Cref{lemma:tm-entry-estimates-give-sparse-channel-expectation-estimates}, we can get the desired $\varepsilon$-accurate expectation value estimates from $\tilde{\varepsilon}$-accurate PTM entry estimates if we set $\tilde{\varepsilon}=\tfrac{\varepsilon \sqrt{2^n}}{s_\mathcal{N} \norm{O}_2 }$.
    If we plug this accuracy $\tilde{\varepsilon}$ into the copy complexity bound of \Cref{lemma:sparse-channel-ptm-entries-from-choi-shadow-tomography} and insert the assumed inequality $\lVert O\rVert_2\leq B\sqrt{2^n}$, we obtain the stated copy complexity bound.
\end{proof}

\Cref{corollary:sample-complexity-predicting-pauli-sparse-channel-expectation-values} describes an information-theoretically efficient way of predicting expectation values for arbitrary input states and arbitrary bounded output observables if the unknown quantum channel is promised to have a polynomially-sparse PTM. Importantly, the learning procedure in \Cref{corollary:sample-complexity-predicting-pauli-sparse-channel-expectation-values} does not require knowledge about the sparsity structure of the PTM. It suffices to know only that the TM is sparse, the sparsity structure is then found as part of the learning procedure.
In fact, if additionally the sparsity structure of the (P)TM is known in advance, then there is a straightforward information-theoretically efficient learning procedure without quantum memory, compare \Cref{remark:no-exponential-separation-with-known-sparsity-structure}.

\begin{remark}
    Analogously to \Cref{corollary:pauli-sparse-expectations-from-choi-shadow-tomography}, one can show that the classical representation $\mathcal{N}$ in \Cref{corollary:sample-complexity-predicting-pauli-sparse-channel-expectation-values} can be chosen to consist of $\mathcal{O}(\tfrac{n^2 + n \log(1/\delta)}{\varepsilon^4}\cdot B^4 s_\mathcal{N}^4 )$ real numbers.
    However, it is not clear how to translate the computation time bound from \Cref{corollary:pauli-sparse-expectations-from-choi-shadow-tomography} to the case of sparse channels and arbitrary states and observables.
    In particular, there is no longer a natural efficient representation of the states and observables for which the expectation value is to be predicted.
\end{remark}

While we focus on the PTM in this section, we can extend most of our results to TMs w.r.t.~more general unitary orthonormal bases. We explain this in \Cref{section:tm-learning}.

\section{Exponential Query Complexity Lower Bounds For Pauli Transfer Matrix Learning Without Quantum Memory}\label{section:query-complexity-lower-bounds}

In the language introduced in \Cref{section:different-learning-models}, we can view \Cref{section:ptm-learning} (and \Cref{section:tm-learning}) as giving us algorithms with quantum memory for learning the PTM (and TM) and for predicting sparse expectation values of an unknown quantum channel from Choi access with a polynomial number of copies.
That is, if we have access to a quantum memory, then we can information-theoretically efficiently solve these two learning tasks even in the weakest of the three quantum access models defined in \Cref{section:different-learning-models}.
In this section, we explore how essential the quantum memory is to achieve the efficient information-theoretic complexity scaling.

\subsection{Query Complexity Lower Bounds For Pauli Transfer Matrix Learning Without Quantum Memory From Choi Access}

We first prove exponential query complexity lower bounds for the weakest of our learning models, namely that of learning without quantum memory from Choi access. While the result of this subsection is implied by that of the next subsection on learning without quantum memory from general channel access, we have chosen to present it separately because the proof in the case of Choi access 
is technically less involved and can serve as preparation for the proof in the general channel access case.
More precisely, the proof for Choi access is a variant of the proof strategy used in \cite{chen2022exponential} to establish exponential query complexity lower bounds for learners without quantum memory for the task of Pauli shadow tomography of quantum states.
For learning without quantum memory from Choi access, we have the following query complexity lower bound:

\begin{lemma}\label{lemma:lower-bound-ptm-learning-without-quantum-memory-from-choi-state-copies}
    Any algorithm for learning without quantum memory from Choi access requires $\Omega (\tfrac{4^n}{\varepsilon^2})$ copies of the Choi state of an unknown $n$-qubit channel $\mathcal{N}$ to estimate all entries of $R_\mathcal{N}^\mathcal{P}$ up to accuracy $\varepsilon$ with success probability $\geq \tfrac{2}{3}$.
\end{lemma}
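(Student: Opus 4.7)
The plan is to reduce PTM learning to a two-hypothesis distinguishing problem and then apply the learning tree framework of \cite{chen2022exponential}, with the key technical adaptation being to the restricted Pauli sub-ensemble of $(2n)$-qubit Paulis compatible with the Choi states of trace-preserving channels. Specifically, I compare (H0) the completely depolarizing channel, whose Choi state is $\tfrac{1}{4^n}\mathbbm{1}$, against (H1) a random channel with Choi state $\tfrac{1}{4^n}(\mathbbm{1}+3s\varepsilon\,\sigma_B\otimes\sigma_A)$, where $s\in\{\pm 1\}$, $B\in\{0,1,2,3\}^n$, and $A\in\{0,1,2,3\}^n\setminus\{0^n\}$ are drawn uniformly. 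The constraint $A\neq 0^n$ guarantees $\tr_{\mathrm{out}}[\tfrac{1}{2^n}\Gamma^\mathcal{N}]=\tfrac{1}{2^n}\mathbbm{1}$ so that the corresponding $\mathcal{N}$ is trace-preserving, and $3\varepsilon\leq 1$ is enough for complete positivity. By \Cref{eq:ptm-as-pauli-choi-expectation-value}, all non-identity PTM entries vanish under H0, while under H1 the entry indexed by $(A,B)$ has absolute value $3\varepsilon$. Hence any algorithm producing simultaneously $\varepsilon$-accurate PTM estimates solves the distinguishing task by thresholding at $2\varepsilon$, and it suffices to lower bound the copy complexity of the distinguishing task.

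Next I model an adaptive algorithm without quantum memory from Choi access as a learning tree, whose leaves $\ell$ correspond to sequences of POVM outcomes $(F^{(1)}_{s_1},\ldots,F^{(T)}_{s_T})$ realized along the path; since each oracle call consumes an independent copy, the leaf probability factorizes as $p^\mathcal{N}(\ell)=\prod_{t=1}^T\tr[F^{(t)}_{s_t}\tfrac{1}{2^n}\Gamma^\mathcal{N}]$. Writing $p^0(\ell)=\prod_t\tr[F^{(t)}_{s_t}]/4^n$ for H0 and $\gamma_t(\ell,P)=\tr[F^{(t)}_{s_t}P]/\tr[F^{(t)}_{s_t}]$, the H1 leaf probability is
\begin{equation*}
p^{P,s}(\ell)=p^0(\ell)\prod_{t=1}^T\bigl(1+3s\varepsilon\,\gamma_t(\ell,P)\bigr).
\end{equation*}
By Le Cam's two-point method, successful distinguishing with probability $\geq \tfrac{2}{3}$ forces the total variation distance between $p^0$ and the mixture $\overline{p}=\mathbb{E}_{P,s}[p^{P,s}]$ to be $\Omega(1)$. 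Averaging over the sign $s$ kills odd powers of $\varepsilon$, so $\overline{p}(\ell)-p^0(\ell)$ equals $p^0(\ell)$ times a sum over even-cardinality subsets $S\subseteq[T]$ of $(3\varepsilon)^{|S|}\mathbb{E}_P[\prod_{t\in S}\gamma_t(\ell,P)]$.

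The crucial $|S|=2$ contribution is controlled by a moment calculation over the restricted Pauli ensemble. Pauli orthogonality yields $\sum_{A,B}\tr[(\sigma_B\otimes\sigma_A)F]\tr[(\sigma_B\otimes\sigma_A)G]=4^n\tr[FG]$, and subtracting the $A=0^n$ contribution, which equals $2^n\tr[\tr_{\mathrm{out}}[F]\tr_{\mathrm{out}}[G]]$, gives the restricted sum $4^n\tr[FG]-2^n\tr[\tr_{\mathrm{out}}[F]\tr_{\mathrm{out}}[G]]$. Dividing by the ensemble size $4^n(4^n-1)$ bounds the two-point moment $\mathbb{E}_P[\gamma_{t_1}\gamma_{t_2}]$ by $\tr[FG]/((4^n-1)\tr[F]\tr[G])$, a factor $4^{-n}$ smaller than the naive estimate. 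Combined with the POVM completeness relations $\sum_s F^{(t)}_s=\mathbbm{1}$ that enable the telescoping/martingale summation of \cite{chen2022exponential}, this leads to $d_{\mathrm{TV}}(p^0,\overline{p})\leq\mathcal{O}(T\varepsilon^2/4^n)$. Requiring this to be $\Omega(1)$ forces $T=\Omega(4^n/\varepsilon^2)$.

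The main obstacle is ensuring that the higher-order subsets $|S|\geq 4$ do not dominate, as this requires analogous moment bounds on the restricted Pauli ensemble that retain the $4^{-n}$ suppression per index pair. The correction term $2^n\tr[\tr_{\mathrm{out}}[F]\tr_{\mathrm{out}}[G]]$ arising from the exclusion of $A=0^n$ in the second moment propagates to higher moments, so a careful inductive adaptation of the argument of \cite{chen2022exponential} is needed to verify it stays subleading relative to the $4^n\tr[FG]$ contribution. I expect this to go through with only cosmetic modifications, since the correction is positive and parametrically smaller, so the restricted ensemble satisfies moment bounds with the same scaling as the unrestricted one; the full details would appear in the appendix alongside the review of the learning tree formalism.
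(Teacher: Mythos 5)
Your choice of ensemble is exactly right and matches the paper's: Choi states $\tfrac{1}{4^n}(\mathbbm{1}\pm 3\varepsilon\,\sigma_B\otimes\sigma_A)$ with $A\neq 0^n$ (equivalently, all $(2n)$-qubit Paulis except those acting as identity on the output half), the reduction from simultaneous $\varepsilon$-accurate PTM estimation to the many-versus-one distinguishing task is correct, and your restricted second-moment identity $\sum_{A\neq 0^n,B}\tr[(\sigma_B\otimes\sigma_A)F]^2=4^n\tr[F^2]-2^n\tr[\tr_{\mathrm{out}}[F]^2]$ is precisely the computation the paper performs (\Cref{lemma:delta-choi-pauli-shadow-tomography}), yielding the quantity $\delta=\nicefrac{1}{4^n}$ that drives the bound. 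However, your overall architecture deviates from the paper's in a way that leaves a genuine gap. The paper (following \cite{chen2022exponential}) does \emph{not} expand the total variation distance in powers of $\varepsilon$ and control all even moments. It instead proves a uniform \emph{one-sided} likelihood-ratio lower bound $\mathbb{E}_{P,s}[p^{P,s}(\ell)]/p^0(\ell)\geq \exp(-9T\varepsilon^2\delta)$ by applying Jensen's inequality to $\mathbb{E}_{P,s}[\exp(\sum_t\log(1+3s\varepsilon\gamma_t))]$, using the sign-pairing to reduce $\mathbb{E}_{s}[\log(1+3s\varepsilon\gamma_t)]$ to $\tfrac{1}{2}\log(1-9\varepsilon^2\gamma_t^2)\geq -9\varepsilon^2\gamma_t^2$, and then invoking the one-sided version of Le Cam's method. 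The entire point of this device is that \emph{only} the second moment $\delta$ ever appears; higher moments of the Pauli ensemble never need to be controlled.

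Your proposed route — bounding $d_{\mathrm{TV}}(p^0,\overline{p})$ directly via the sum over even subsets $S$ — is where the argument breaks. The $|S|\geq 4$ terms are not a cosmetic afterthought: there are $\binom{T}{k}$ subsets of size $k$, each contributing up to $(3\varepsilon)^k$ times a $k$-th joint moment of $\gamma_t(\ell,P)$ over the shared Pauli $P$, and these joint moments do not factorize across time steps. At the relevant scale $T\sim 4^n/\varepsilon^2$ the naive bound $\binom{T}{k}(3\varepsilon)^k$ diverges, so you would need genuine cancellation from the ensemble average at every order, which you have not supplied and which is not known to follow "inductively" from the second-moment identity. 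Even your $|S|=2$ term is problematic: summing the per-pair bound over $\binom{T}{2}$ pairs gives a $T^2\varepsilon^2/4^n$-type contribution, not the $T\varepsilon^2/4^n$ you assert, unless you import the telescoping structure that in \cite{chen2022exponential} lives inside the likelihood-ratio argument, not inside a TV expansion. To close the proof you should replace the moment expansion by the Jensen/one-sided Le Cam argument, at which point your (correct) restricted second-moment computation is the only new ingredient needed.
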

\begin{proof}[Proof sketch]
    We give a detailed proof in \Cref{appendix:proofs}. The proof is a modification to that of \cite[Corollary 5.9]{chen2021exponential-arxiv} and proceeds as follows:
    As learning PTM entries is, by \Cref{eq:ptm-as-pauli-choi-expectation-value}, equivalent to predicting Pauli expectation values for the Choi state, PTM learning becomes a Pauli shadow tomography problem on the level of the Choi state. 
    For general states, \cite[Corollary 5.9]{chen2021exponential-arxiv} established a Pauli shadow tomography query complexity lower bound for learners without quantum memory by reducing to a many-versus-one state distinguishing task and constructing an ensemble of states for which the latter task is hard. 
    To adapt their proof to our setting, we effectively restrict their state ensemble to the sub-ensemble of valid Choi states and analyze the hardness of the Pauli shadow tomography task for that sub-ensemble.
\end{proof}

We can recover the same asymptotic copy complexity lower bound even if we restrict the unknown channel to be doubly-stochastic, entanglement-breaking, and to have a sparse PTM with unknown sparsity structure:

\begin{lemma}\label{lemma:lower-bound-ptm-learning-without-quantum-memory-from-doubly-stochastic-choi-state-copies}
    Any algorithm for learning without quantum memory from Choi access requires $\Omega (\tfrac{4^n}{\varepsilon^2})$ copies of the Choi state of an unknown doubly-stochastic and entanglement-breaking $n$-qubit channel $\mathcal{N}$ with $\mathcal{O}(1)$-sparse PTM to estimate all entries of $R_\mathcal{N}^\mathcal{P}$ up to accuracy $\varepsilon$ with success probability $\geq \tfrac{2}{3}$.
\end{lemma}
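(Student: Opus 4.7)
My plan is to reuse the lower-bound argument of \Cref{lemma:lower-bound-ptm-learning-without-quantum-memory-from-choi-state-copies} after restricting its hard ensemble to a sub-ensemble whose associated channels automatically satisfy all three additional structural properties. Recall that in the proof of \Cref{lemma:lower-bound-ptm-learning-without-quantum-memory-from-choi-state-copies} the hard instances have Choi states of the form $\tfrac{1}{4^n}(\mathbbm{1} \pm 3\varepsilon P)$ with $P$ a uniformly random non-identity $(2n)$-qubit Pauli subject to the trace-preservation constraint $\tr_{\mathrm{out}}[P] = 0$. I would further insist that $P = \sigma_A \otimes \sigma_B$ with both $A, B \in \{0,1,2,3\}^n \setminus \{0^n\}$; the $(4^n-1)^2 = \Theta(16^n)$ eligible Paulis form a $\Theta(1)$-fraction of the original ensemble.

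With this restriction, the three desired structural properties are immediate. Since $\sigma_A$ and $\sigma_B$ are both traceless, both marginals of the Choi state equal $\tfrac{1}{2^n}\mathbbm{1}$, so by the properties of the Choi--Jamiolkowski isomorphism (Proposition 2.4) the channel is both trace-preserving and unital, i.e., doubly-stochastic. For entanglement-breaking, writing $P_\pm^A = \tfrac{1}{2}(\mathbbm{1} \pm \sigma_A)$ for the spectral projectors of $\sigma_A$ (and analogously for $B$) gives the manifest product decomposition
\begin{equation*}
\tfrac{1}{4^n}(\mathbbm{1} + \sigma_A \otimes \sigma_B) = \tfrac{1}{2}\left(\tfrac{P_+^A}{2^{n-1}}\otimes\tfrac{P_+^B}{2^{n-1}}\right) + \tfrac{1}{2}\left(\tfrac{P_-^A}{2^{n-1}}\otimes\tfrac{P_-^B}{2^{n-1}}\right),
\end{equation*}
with an analogous decomposition for $\mathbbm{1} - \sigma_A\otimes\sigma_B$. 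Assuming $3\varepsilon \leq 1$, we can write $\tfrac{1}{4^n}(\mathbbm{1} \pm 3\varepsilon P) = (1-3\varepsilon)\cdot\tfrac{1}{4^n}\mathbbm{1} + 3\varepsilon \cdot \tfrac{1}{4^n}(\mathbbm{1} \pm P)$, a convex combination of separable states, hence separable; equivalently, $\mathcal{N}$ is entanglement-breaking. Finally, the Choi state has only two non-zero Pauli coefficients, so by the second bullet of \Cref{example:sparse-channels} the channel has $2$-sparse PTM.

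It then remains to re-run the second-moment argument from \Cref{lemma:lower-bound-ptm-learning-without-quantum-memory-from-choi-state-copies} over the sub-ensemble. Since the restricted and original ensembles differ in cardinality by only a constant factor, and the restricted family retains the same orthogonality and commutation-based statistics under the Hilbert--Schmidt and Pauli pairings, the relevant second-moment bound changes by at most a $\Theta(1)$ factor. Consequently the $\Omega(4^n/\varepsilon^2)$ lower bound extends verbatim to the restricted class. The main step that warrants care is precisely this second-moment bookkeeping, since the argument of \Cref{lemma:lower-bound-ptm-learning-without-quantum-memory-from-choi-state-copies} computes an expectation over \emph{all} admissible $P$; however, because our extra restriction excludes only an exponentially small relative slice of Paulis, the leading-order behavior of the moment is unaffected and the asymptotic bound survives.
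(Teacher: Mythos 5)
Your proposal is correct and follows essentially the same route as the paper: the paper likewise restricts the hard ensemble to signed Paulis $\pm\,\sigma_A\otimes\sigma_B$ with $A,B\neq 0^n$, checks double stochasticity via the vanishing partial traces, entanglement-breaking via the product (eigenprojector) decomposition of the Choi state, and $\mathcal{O}(1)$-sparsity of the PTM, and then feeds the resulting $\delta$-value into the same Le Cam-type distinguishing bound. The only cosmetic difference is that the paper recomputes the second-moment quantity for the sub-ensemble exactly (obtaining $\tfrac{1}{4^n-1}$ via the SWAP identity), whereas you bound it by comparison with the full ensemble; since restricting the sum only drops non-negative terms and the normalization changes by a $1+o(1)$ factor, your shortcut gives a valid upper bound on $\delta$ and hence the same $\Omega(4^n/\varepsilon^2)$ conclusion.
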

\begin{proof}
    We give a detailed proof in \Cref{appendix:proofs}. The proof requires only a small modification compared to that of \Cref{lemma:lower-bound-ptm-learning-without-quantum-memory-from-choi-state-copies}. Namely, we have to further restrict the state ensemble to those that are valid Choi states of doubly stochastic quantum channels. All states in the ensemble are already separable and $\mathcal{O}(1)$-sparse, so we automatically focus on entanglement-breaking channels with $\mathcal{O}(1)$-sparse PTM.
    Again, we have to perform a hardness analysis for the Pauli shadow tomography task of this further restricted ensemble.
\end{proof}

If $\mathcal{N}$ is doubly-stochastic, then we have for $A,B\in\{0,1,2,3\}^n$:
\begin{equation}
    \tr\left[\sigma_A \mathcal{N}\left(\frac{\mathbbm{1}_2^{\otimes n} + \sigma_B}{2^n}\right)\right]
    = \begin{cases} 1 + \left( R_\mathcal{N}^\mathcal{P}\right)_{A,B} \quad &\textrm{if } A = 0^n\\ \left( R_\mathcal{N}^\mathcal{P}\right)_{A,B} &\textrm{else} \end{cases} .
\end{equation}
Thus, the ability to estimate expectation values of the form $\tr\left[\sigma_A \mathcal{N}\left(\tfrac{\mathbbm{1}_2^{\otimes n} + \sigma_B}{2^n}\right)\right]$ up to a desired accuracy $\varepsilon$ immediately allows one to estimate the PTM entries $\left( R_\mathcal{N}^\mathcal{P}\right)_{A,B}$ to the same accuracy, assuming $\mathcal{N}$ is doubly-stochastic.
Trivially, the observables $\sigma_A$ have a $1$-sparse Pauli basis expansion, and the states $\tfrac{\mathbbm{1}_2^{\otimes n} + \sigma_B}{2^n}$ have a $2$-sparse Pauli basis expansion.
Also, $\lVert \sigma_A\rVert_2 = \sqrt{2^n}$.
Therefore, \Cref{lemma:lower-bound-ptm-learning-without-quantum-memory-from-doubly-stochastic-choi-state-copies} has the following consequence:

\begin{corollary}\label{corollary:lower-bound-sparse-expectation-values-without-quantum-memory-from-doubly-stochastic-choi-state-copies}
    Any learning algorithm without quantum memory requires $\Omega (\tfrac{4^n}{\varepsilon^2})$ copies of the Choi state of an unknown doubly-stochastic and entanglement-breaking $n$-qubit channel $\mathcal{N}$ with $\mathcal{O}(1)$-sparse PTM to produce a classical description $\hat{\mathcal{N}}$ of $\mathcal{N}$ from which any expectation value of the form $\operatorname{tr}[O\mathcal{N}(\rho)]$, with $O$ an $n$-qubit observable with $\mathcal{O}(1)$-sparse Pauli basis expansion satisfying $\lVert O\rVert_2\leq \sqrt{2^n}$ and $\rho$ an $n$-qubit state with $\mathcal{O}(1)$-sparse Pauli basis expansion, can be predicted to accuracy $\varepsilon$ with success probability $\geq \tfrac{2}{3}$.
\end{corollary}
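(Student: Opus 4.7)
The plan is to prove this lower bound by a direct reduction from the PTM learning task treated in \Cref{lemma:lower-bound-ptm-learning-without-quantum-memory-from-doubly-stochastic-choi-state-copies} to the sparse expectation value prediction task in the statement. Since that lemma already supplies an $\Omega(\nicefrac{4^n}{\varepsilon^2})$ copy complexity lower bound for learners without quantum memory from Choi access tasked with simultaneously estimating every PTM entry of an unknown doubly-stochastic, entanglement-breaking $n$-qubit channel with $\mathcal{O}(1)$-sparse PTM, it suffices to show that any algorithm matching the premise of the present corollary can be converted, using no additional Choi state copies, into an algorithm satisfying the premise of that lemma.

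The reduction is driven by the identity highlighted immediately before the corollary: for a doubly-stochastic channel $\mathcal{N}$ and any $A,B\in\{0,1,2,3\}^n$,
\[
\tr\!\left[\sigma_A\,\mathcal{N}\!\left(\tfrac{\mathbbm{1}_2^{\otimes n}+\sigma_B}{2^n}\right)\right]
\;-\;\delta_{A,0^n}
\;=\;\left(R_\mathcal{N}^\mathcal{P}\right)_{A,B}.
\]
So after the hypothesized algorithm has produced its classical description $\hat{\mathcal{N}}$ from $m$ copies of $\tfrac{1}{2^n}\Gamma^\mathcal{N}$, I would use $\hat{\mathcal{N}}$ to predict the $16^n$ expectation values on the left-hand side to accuracy $\varepsilon$, subtract the known constant $\delta_{A,0^n}$, and declare the resulting numbers to be the PTM entry estimates. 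By construction these estimates are simultaneously $\varepsilon$-accurate, with the same success probability as the original algorithm.

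To complete the reduction I would then verify that every observable-state pair used lies in the class admitted by the corollary. The observable $\sigma_A$ is a single Pauli tensor, so its Pauli expansion is $1$-sparse and $\lVert\sigma_A\rVert_2=\sqrt{2^n}$; the state $\tfrac{\mathbbm{1}_2^{\otimes n}+\sigma_B}{2^n}$ is a valid density matrix with $2$-sparse Pauli expansion. Both sparsities are $\mathcal{O}(1)$ and $\lVert\sigma_A\rVert_2 \leq \sqrt{2^n}$, so the corollary's prediction guarantee applies. If such an algorithm used only $o(\nicefrac{4^n}{\varepsilon^2})$ Choi state copies with success probability $\geq \tfrac{2}{3}$, the above reduction would produce a learner for the task of \Cref{lemma:lower-bound-ptm-learning-without-quantum-memory-from-doubly-stochastic-choi-state-copies} with the same sample complexity and success probability, a contradiction. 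I do not expect any substantive obstacle: all the hard work is done by \Cref{lemma:lower-bound-ptm-learning-without-quantum-memory-from-doubly-stochastic-choi-state-copies}, and what remains is careful bookkeeping of the normalization, sparsity, and $\lVert\cdot\rVert_2$-norm conditions in the Pauli expansions of the probing observables and states.
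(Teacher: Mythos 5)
Your proposal is correct and matches the paper's own argument: the paper derives this corollary exactly by noting the identity $\tr[\sigma_A \mathcal{N}(\tfrac{\mathbbm{1}_2^{\otimes n}+\sigma_B}{2^n})] = \delta_{A,0^n} + (R_\mathcal{N}^\mathcal{P})_{A,B}$ for doubly-stochastic $\mathcal{N}$, checking that $\sigma_A$ is $1$-sparse with $\lVert\sigma_A\rVert_2=\sqrt{2^n}$ and that $\tfrac{\mathbbm{1}_2^{\otimes n}+\sigma_B}{2^n}$ is a $2$-sparse state, and then inheriting the lower bound from \Cref{lemma:lower-bound-ptm-learning-without-quantum-memory-from-doubly-stochastic-choi-state-copies}. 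Your bookkeeping of the normalization, sparsity, and norm conditions is accurate, so no gap remains.
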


This is to be contrasted with \Cref{corollary:pauli-sparse-expectations-from-choi-shadow-tomography,corollary:sample-complexity-predicting-pauli-sparse-channel-expectation-values} (as well as \Cref{corollary:sample-complexity-predicting-sparse-expectation-values,corollary:sample-complexity-predicting-sparse-channel-expectation-values}), which showed that using a quantum memory allows to solve even more general versions of the same expectation value prediction task using polynomially-in-$n$ or even linearly-in-$n$ many copies of the Choi state of an unknown $n$-qubit quantum channel.
Thus, we have established an exponential separation in query complexity between algorithms with and without quantum memory for the task of predicting Pauli-sparse expectation values of an unknown channel with sparse PTM when given access to copies of its Choi state.

\begin{remark}\label{remark:no-exponential-separation-with-known-sparsity-structure}
    The exponential query complexity lower bound for learners without quantum memory from Choi access may no longer apply if stronger assumptions on the unknown quantum channel are made.
    For example, if $\mathcal{N}$ is promised to be $s_\mathcal{N}$-sparse and the sparsity structure of $\mathcal{N}$ (i.e., the positions of non-zero PTM entries) is known in advance, then, by \Cref{lemma:tm-entry-estimates-give-sparse-channel-expectation-estimates}, it suffices to estimate the corresponding $s_\mathcal{N}$ many Pauli expectation values for the Choi state. 
    Thus, if $s_{\mathcal{N}}\leq\mathcal{O}(\poly (n))$, then this can easily be achieved without quantum memory using $\mathcal{O}(\tfrac{\poly (n) \cdot\log (1/\delta)}{\varepsilon^2})$ Choi state copies.
\end{remark}

\begin{remark}\label{remark:no-exponential-separation-with-local-states-and-observables}
    Complementary to the previous remark, we note that the exponential query complexity lower bound for learners without quantum memory from Choi access may also fail to hold if stronger assumptions on the input states and output observables are made.
    For instance, if we only care about ``low-weight'' observables and states, whose Pauli expansions consist of Paulis acting non-trivially on at most a constant number of qubits, then applying classical shadows to the Choi state suffices, so no quantum memory is needed \cite{levy2021classical-shadows, kunjummen2023shadow-process}.
\end{remark}

\subsection{Query Complexity Lower Bounds For Pauli Transfer Matrix Learning Without Quantum Memory From Sequential Channel Access}

Next, we extend the exponential query complexity lower bounds for the task of PTM learning without quantum memory also to the stronger model of sequential channel access (\Cref{definition:qchannel-learning-without-qmemory}). 

\begin{theorem}\label{theorem:lower-bound-ptm-learning-without-quantum-memory-from-channel-queries}
    Any learning algorithm without quantum memory requires $\Omega (\tfrac{4^n}{\varepsilon^2})$ copies of an unknown $n$-qubit channel $\mathcal{N}$ to estimate all entries of $R_\mathcal{N}^\mathcal{P}$ up to accuracy $\varepsilon$ with success probability $\geq \tfrac{2}{3}$.
\end{theorem}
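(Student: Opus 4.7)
The plan is to extend the Choi-access lower bound (\Cref{lemma:lower-bound-ptm-learning-without-quantum-memory-from-choi-state-copies}) to the adaptive sequential-channel-access setting using the learning-tree framework of \cite{chen2022exponential} combined with Le Cam's two-point method. First, I would reduce PTM learning to a distinguishing task: take the reference channel $\mathcal{N}_0$ to be the completely depolarizing channel, whose Choi state is maximally mixed, and consider the random family $\{\mathcal{N}_{\pm P}\}$ whose Choi states have the form $\tfrac{1}{2^{2n}}(\mathbbm{1}_{2^{2n}}\pm 3\varepsilon P)$, where $\pm P$ is a uniformly random signed $(2n)$-qubit Pauli, restricted so that the output-register factor of $P$ is non-identity (so the second marginal remains maximally mixed and $\mathcal{N}_{\pm P}$ is trace-preserving). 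A learner producing simultaneous $\varepsilon$-accurate PTM estimates must reveal the unique PTM entry of size $\Theta(\varepsilon)$ of $\mathcal{N}_{\pm P}$ and therefore distinguish the two hypotheses with constant probability.

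Second, I would model an arbitrary adaptive learner without quantum memory as a learning tree of depth $T$: at step $t$, the learner picks an input state $\rho_t\in\mathcal{S}((\mathbb{C}^2)^{\otimes n_{\mathrm{aux}}}\otimes(\mathbb{C}^2)^{\otimes n})$ and a POVM $\{F_s^{(t)}\}_s$ on the joint $\mathrm{aux}\otimes\mathrm{out}$ system, and children of the current node are indexed by observed outcomes $s_t$. Under channel $\mathcal{N}$, a leaf $\ell$ is reached with probability $p^{\mathcal{N}}(\ell)=\prod_t \tr[F_{s_t}^{(t)}(\operatorname{id}_{\mathrm{aux}}\otimes\mathcal{N})(\rho_t)]$. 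Le Cam reduces the theorem to showing that $\mathbb{E}_{\pm P}[p^{\mathcal{N}_{\pm P}}(\ell)/p^{\mathcal{N}_0}(\ell)]=1+o(1)$ uniformly in $\ell$ whenever $T=o(4^n/\varepsilon^2)$.

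The main obstacle, and the genuinely new ingredient compared with the Choi-access proof, is controlling this likelihood ratio in the fully adaptive setting, since the input state can now depend on the learner's history. For a single step, the Choi--Jamiolkowski identity gives $(\operatorname{id}_{\mathrm{aux}}\otimes\mathcal{N}_{\pm P})(\rho_t)-(\operatorname{id}_{\mathrm{aux}}\otimes\mathcal{N}_0)(\rho_t)=\pm\tfrac{3\varepsilon}{2^n}\tr_{\mathrm{in}}[(\rho_t^{\top_{\mathrm{in}}}\otimes\mathbbm{1}_{\mathrm{out}})(\mathbbm{1}_{\mathrm{aux}}\otimes P)]$, so the deviation $\tr[F_s^{(t)}(\operatorname{id}\otimes\mathcal{N}_{\pm P})(\rho_t)]-\tr[F_s^{(t)}(\operatorname{id}\otimes\mathcal{N}_0)(\rho_t)]$ is linear in $\pm P$ and its expectation vanishes. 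The analysis thus reduces to a second-moment bound. Following the strategy sketched in the introduction, I would identify a quantity of the form $\sum_s \mathbb{E}_P[\tr[M_{t,s}(\rho_t,F_s^{(t)})\,P]^2]/\tr[F_s^{(t)}(\operatorname{id}\otimes\mathcal{N}_0)(\rho_t)]$ that captures the per-step ``distinguishing power'' of the pair $(\rho_t,\{F_s^{(t)}\})$, and show via (i) a second-moment identity for the restricted Pauli sub-ensemble, which produces the crucial $4^{-n}$ factor, (ii) operator-norm bounds on the induced operators $M_{t,s}$, and (iii) the POVM-completeness $\sum_s F_s^{(t)}=\mathbbm{1}$ combined with the maximally-mixed output of $\mathcal{N}_0$ to absorb $s$- and $\rho_t$-dependence, that this quantity is bounded uniformly by $O(\varepsilon^2/4^n)$. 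This is where the matrix-analytic tools mentioned in the introduction enter, since one must handle arbitrary auxiliary systems and arbitrary adaptive inputs.

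Finally, taking the product of the per-step bounds along any path of the learning tree and averaging over $\pm P$ yields $\mathbb{E}_{\pm P}[p^{\mathcal{N}_{\pm P}}(\ell)/p^{\mathcal{N}_0}(\ell)]\geq 1-O(T\varepsilon^2/4^n)$, so Le Cam's two-point method forces $T=\Omega(4^n/\varepsilon^2)$ for constant-probability distinguishing, and hence for PTM learning to accuracy $\varepsilon$. The main difficulty is concentrated in the adaptive per-step second-moment bound; the reduction to a distinguishing task and the learning-tree setup are essentially standard instantiations of the framework in \cite{chen2022exponential}, while the construction of the channel ensemble and the matrix-analytic control of the per-step likelihood ratio are what lift the argument from state shadow tomography and from Choi access to the general channel-learning setting.
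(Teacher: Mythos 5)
Your overall architecture matches the paper's proof: the same ensemble of channels with Choi states $\tfrac{1}{2^{2n}}(\mathbbm{1}\pm 3\varepsilon P)$ for signed Paulis $P$ whose output-register factor is non-identity, the reduction to a many-versus-one distinguishing task, the learning-tree representation of an adaptive memoryless learner, the one-sided likelihood-ratio bound via Jensen's inequality and the $\pm$ pairing, and Le Cam's two-point method. You have also correctly located the crux: a uniform $\mathcal{O}(4^{-n})$ bound on the per-step second moment
\[
\frac{1}{\lvert\mathcal{E}\rvert}\sum_{P\in\mathcal{E}}\left(\frac{\bra{\psi}\tr_{\mathrm{in}}[(\rho^{\top_{\mathrm{in}}}\otimes\mathbbm{1}_{\mathrm{out}})(\mathbbm{1}_{\mathrm{aux}}\otimes P)]\ket{\psi}}{\bra{\psi}(\rho_{\mathrm{aux}}\otimes\mathbbm{1}_{\mathrm{out}})\ket{\psi}}\right)^2 ,
\]
taken uniformly over arbitrary auxiliary systems, arbitrary adaptively chosen inputs $\rho$, and arbitrary rank-one POVM directions $\ket{\psi}$; this is exactly the quantity $\Delta$ of \Cref{lemma:lower-bound-choi-shadow-tomography-general-channel-access}, evaluated in \Cref{lemma:Delta-choi-pauli-shadow-tomography-channel-access}.

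The gap is that you assert this bound rather than prove it, and the ingredients you name for it would not close the argument. The restricted Pauli second-moment identity (your (i)) does, via $\sum_A\sigma_A\otimes\sigma_A=2^{2n}\operatorname{SWAP}$, turn the numerator into $2^{2n}\tr[X^2]-2^{n}\tr[Y^2]$ with $X=\tr_{\mathrm{aux}}[(\mathbbm{1}_{\mathrm{in}}\otimes\ketbra{\psi}{\psi})(\rho^{\top_{\mathrm{in}}}\otimes\mathbbm{1}_{\mathrm{out}})]$, while the denominator equals $(\tr[X])^2$; what is then needed is the inequality $\tr[X^2]\leq(\tr[X])^2$. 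This does not follow from operator-norm bounds on the induced maps (your (ii)) or from POVM completeness (your (iii)) — the paper's bound is per outcome and never sums over $s$ at this stage — and it is not automatic, because $X$ is a partial trace of a product of two non-commuting positive semidefinite operators and hence not itself positive semidefinite. The paper establishes it by commuting the partial transpose past the partial trace to replace $\rho^{\top_{\mathrm{in}}}$ by $\rho$, Schmidt-decomposing $\ket{\psi}$, expanding $\rho$ in the induced product basis, and using nonnegativity of $2\times 2$ principal minors of the positive semidefinite $\rho$ to compare $\tr[X^2]$ with $(\tr[X])^2$ termwise. Without this (or an equivalent) matrix-analytic step, your outline does not yield the claimed $\Omega(\nicefrac{4^n}{\varepsilon^2})$ bound in the fully adaptive setting.
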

\begin{proof}
    We give a detailed proof in \Cref{appendix:proofs}. The proof is based on the learning tree representation of \cite{chen2022exponential} and an extension of their analysis from state shadow tomography to our channel shadow tomography setting.
    While the underlying framework is the same as in the proof of the Choi state access case, the technical details become more involved due to the possibility of adaptively chosen input states.
\end{proof}

Similarly to the previous subsection, we also have a stronger doubly-stochastic, entanglement-breaking, and sparse version:

\begin{theorem}\label{theorem:lower-bound-ptm-learning-without-quantum-memory-from-doubly-stochastic-channel-queries}
    Any learning algorithm without quantum memory requires $\Omega (\tfrac{4^n}{\varepsilon^2})$ copies of an unknown doubly-stochastic and entanglement-breaking $n$-qubit channel $\mathcal{N}$ with $\mathcal{O}(1)$-sparse PTM to estimate all entries of $R_\mathcal{N}^\mathcal{P}$ up to accuracy $\varepsilon$ with success probability $\geq \tfrac{2}{3}$.
\end{theorem}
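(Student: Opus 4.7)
The plan is to mirror the proof of \Cref{theorem:lower-bound-ptm-learning-without-quantum-memory-from-channel-queries}, but to restrict the hard ensemble so that every unknown channel in the construction is already doubly-stochastic, entanglement-breaking, and has $\mathcal{O}(1)$-sparse PTM. Concretely, I would fix as null hypothesis the completely depolarising channel $\mathcal{N}_0$ whose Choi state is $\tfrac{1}{4^n}\mathbbm{1}_{4^n}$, and as alternative ensemble the channels $\mathcal{N}_{A,B,\pm}$ with Choi states
\[
  \tfrac{1}{4^n}\bigl(\mathbbm{1}_{4^n} \pm 3\varepsilon\, \sigma_A \otimes \sigma_B\bigr),
  \qquad A,B \in \{0,1,2,3\}^n \setminus \{0^n\},
\]
where $(A,B)$ and the sign are drawn uniformly at random. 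Because both partial traces of $\sigma_A\otimes\sigma_B$ vanish whenever $A,B\neq 0^n$, each such Choi state has maximally mixed marginals, so $\mathcal{N}_{A,B,\pm}$ is doubly-stochastic; sparsity of the PTM is immediate from \Cref{eq:ptm-as-pauli-choi-expectation-value} because the Choi state is $2$-sparse in the Pauli ONB; and entanglement-breaking-ness holds for all sufficiently small $\varepsilon$ because the maximally mixed state lies in the interior of the separable set on $2n$ qubits. Via \Cref{eq:ptm-as-pauli-choi-expectation-value}, an $\varepsilon$-accurate estimate of all PTM entries yields a distinguisher between $\mathcal{N}_0$ and $\mathcal{N}_{A,B,\pm}$, so Le Cam's two-point method reduces the problem to proving an indistinguishability bound for this restricted ensemble.

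Second, I would import the learning tree formalism from \cite{chen2022exponential} exactly as in the proof of \Cref{theorem:lower-bound-ptm-learning-without-quantum-memory-from-channel-queries}. For a memoryless adaptive learner with $T$ channel queries, each edge of the tree is labelled by an input state $\rho$ on $\mathbb{C}^{2^{n_{\mathrm{aux}}}}\otimes\mathbb{C}^{2^n}$ together with a rank-one POVM element $\ketbra{\phi}{\phi}$; letting $p_0(\ell)$ and $p_{A,B,\pm}(\ell)$ be the leaf probabilities under $\mathcal{N}_0$ and $\mathcal{N}_{A,B,\pm}$ respectively, the two-point lower bound reduces to showing that the expected leaf likelihood ratio $\mathbb{E}_{A,B,\pm}[p_{A,B,\pm}(\ell)/p_0(\ell)]$ stays within $1+o(1)$ unless $T=\Omega(4^n/\varepsilon^2)$. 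As in \cite{chen2022exponential}, this global bound factorises along root-to-leaf paths into per-edge contributions that I would need to bound uniformly over adaptive choices of $\rho$ and $\ketbra{\phi}{\phi}$.

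Third, and here I expect the main technical obstacle, I would carry out the second-moment computation for the restricted Pauli sub-ensemble $\{\sigma_A\otimes\sigma_B : A,B\neq 0^n\}$. Using the Choi--Jamiolkowski identity $(\operatorname{id}_{\mathrm{aux}}\otimes\mathcal{N})(\rho) = 2^n\tr_{\mathrm{in}}[(\rho^{\top_{\mathrm{in}}}\otimes\mathbbm{1}_{\mathrm{out}})(\mathbbm{1}_{\mathrm{aux}}\otimes\tfrac{1}{2^n}\Gamma^{\mathcal{N}})]$, each edge contribution reduces to an average over $(A,B)$ of squared Hilbert--Schmidt overlaps between $\sigma_A\otimes\sigma_B$ and the effective bipartite operator built from $\rho$ and $\ketbra{\phi}{\phi}$. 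The full sum over all $(2n)$-qubit Paulis is controlled by the Pauli $2$-design/orthonormality identity that already underlies \Cref{theorem:lower-bound-ptm-learning-without-quantum-memory-from-channel-queries}; restricting to $A,B\neq 0^n$ simply subtracts off the three contributions where at least one of the two tensor factors is the identity, and these subtracted terms are easily bounded using the normalisation constraints $\tr[\rho]=1$ and $0\leq \ketbra{\phi}{\phi}\leq \mathbbm{1}$ together with trace-preservation of $\mathcal{N}_0$. Combining this with the matrix-analytic edge bound already developed in the unrestricted proof, the per-edge contribution remains $O(\varepsilon^2/4^n)$, chaining over $T$ edges to yield $\Omega(4^n/\varepsilon^2)$ and completing the proof.
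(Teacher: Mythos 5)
Your proposal follows the paper's own route essentially step for step: the same hard ensemble of channels with Choi states $\tfrac{1}{4^n}\bigl(\mathbbm{1}\pm 3\varepsilon\,\sigma_A\otimes\sigma_B\bigr)$ for $A,B\neq 0^n$ against the maximally mixed Choi state, the same reduction via \Cref{eq:ptm-as-pauli-choi-expectation-value} and Le Cam's one-sided two-point method, the same learning-tree factorisation into per-edge likelihood contributions, and the same second-moment computation over the restricted Pauli sub-ensemble that reuses the matrix-analytic inequality $\tr\bigl[(\tr_{\mathrm{aux}}[(\mathbbm{1}_{\mathrm{in}}\otimes \ketbra{\phi}{\phi})(\rho\otimes\mathbbm{1}_{\mathrm{out}})])^{2}\bigr]\leq \tr\bigl[\tr_{\mathrm{aux}}[(\mathbbm{1}_{\mathrm{in}}\otimes \ketbra{\phi}{\phi})(\rho\otimes\mathbbm{1}_{\mathrm{out}})]\bigr]^{2}$ already established for \Cref{theorem:lower-bound-ptm-learning-without-quantum-memory-from-channel-queries}. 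This is exactly how the paper combines \Cref{lemma:lower-bound-doubly-stochastic-choi-shadow-tomography-general-channel-access} with \Cref{lemma:Delta-doubly-stochastic-choi-pauli-shadow-tomography-channel-access}.

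There is, however, one step that fails as written: your justification that the channels in the ensemble are entanglement-breaking. Appealing to the fact that the maximally mixed state lies in the interior of the separable set only certifies separability for perturbations inside the Gurvits--Barnum ball, whose Frobenius-norm radius on a $4^n$-dimensional bipartite system is of order $4^{-n}$; your perturbation $3\varepsilon\,\sigma_A\otimes\sigma_B/4^n$ has Frobenius norm $3\varepsilon/2^n$, so ``sufficiently small $\varepsilon$'' would force $\varepsilon\leq \mathcal{O}(2^{-n})$. At that accuracy the lower bound $\Omega(4^n/\varepsilon^2)$ no longer separates from the $\mathcal{O}(n/\varepsilon^4)$ upper bound with quantum memory, so the theorem would lose its content. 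The correct argument --- and the one the paper encodes as condition (v) of \Cref{lemma:lower-bound-doubly-stochastic-choi-shadow-tomography-general-channel-access} --- is that $\sigma_A\otimes\sigma_B$ is a tensor product of single-qubit Hermitian unitaries and hence has an eigenbasis of product vectors with eigenvalues $\pm 1$; for every $\varepsilon<\nicefrac{1}{3}$ the Choi state is then a nonnegative combination of product projectors, i.e., separable, so the channel is entanglement-breaking at constant accuracy. A smaller under-statement: the subtracted ``identity on one tensor factor'' terms are not controlled by normalisation alone --- one must first commute away the partial transpose $\rho^{\top_{\mathrm{in}}}$ and then establish positive semidefiniteness of $\tr_{\mathrm{aux},\mathrm{out}}[(\mathbbm{1}_{\mathrm{in}}\otimes\ketbra{\phi}{\phi})(\rho\otimes\mathbbm{1}_{\mathrm{out}})]$ and $\tr_{\mathrm{aux},\mathrm{in}}[(\mathbbm{1}_{\mathrm{in}}\otimes\ketbra{\phi}{\phi})(\rho\otimes\mathbbm{1}_{\mathrm{out}})]$ via cyclicity of the partial trace (using $\ketbra{\phi}{\phi}^2=\ketbra{\phi}{\phi}$ and $\rho=\sqrt{\rho}^{\,2}$) before $\lvert\tr[X^2]\rvert\leq(\tr X)^2$ can be applied; this is where the additional $2\cdot 2^n+1$ in the final bound $\Delta\leq\tfrac{1}{(2^n-1)^2}$ comes from.
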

\begin{proof}
    We give a detailed proof in \Cref{appendix:proofs}. This proof requires only a small modifcation to that of \Cref{theorem:lower-bound-ptm-learning-without-quantum-memory-from-channel-queries}, which again consists in further restricting the ensemble of unknown channels to a sub-ensemble consisting only of doubly-stochastic entanglement-breaking channels with $\mathcal{O}(1)$-sparse PTM. This further restriction then requires a corresponding adaptation to the hardness analysis.
\end{proof}

In exactly the same way that \Cref{lemma:lower-bound-ptm-learning-without-quantum-memory-from-doubly-stochastic-choi-state-copies} gave rise to \Cref{corollary:lower-bound-sparse-expectation-values-without-quantum-memory-from-doubly-stochastic-choi-state-copies}, \Cref{theorem:lower-bound-ptm-learning-without-quantum-memory-from-doubly-stochastic-channel-queries} has the following immediate consequence:

\begin{corollary}\label{corollary:lower-bound-sparse-expectation-values-without-quantum-memory-from-doubly-stochastic-channel-queries}
    Any learning algorithm without quantum memory requires $\Omega (\tfrac{4^n}{\varepsilon^2})$ copies of an unknown doubly-stochastic and entanglement-breaking $n$-qubit channel $\mathcal{N}$ with $\mathcal{O}(1)$-sparse PTM to produce a classical description $\hat{\mathcal{N}}$ of $\mathcal{N}$ from which any expectation value of the form $\operatorname{tr}[O\mathcal{N}(\rho)]$, with $O$ an $n$-qubit observable with $\mathcal{O}(1)$-sparse Pauli basis expansion satisfying $\lVert O\rVert_2\leq \sqrt{2^n}$ and $\rho$ an $n$-qubit state with $\mathcal{O}(1)$-sparse Pauli basis expansion, can be predicted to accuracy $\varepsilon$ with success probability $\geq \tfrac{2}{3}$.
\end{corollary}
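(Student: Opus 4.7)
The plan is to reduce the PTM-entry estimation task of \Cref{theorem:lower-bound-ptm-learning-without-quantum-memory-from-doubly-stochastic-channel-queries} to the sparse expectation-value prediction task stated in the corollary, so that the already-established $\Omega(\nicefrac{4^n}{\varepsilon^2})$ query complexity lower bound transfers directly. The key observation that makes this reduction work is that, for a doubly-stochastic channel $\mathcal{N}$, the linearity of $\mathcal{N}$ together with unitality gives, for all $A,B\in\{0,1,2,3\}^n$,
\begin{equation*}
    \tr\!\left[\sigma_A\,\mathcal{N}\!\left(\tfrac{\mathbbm{1}_{2^n}+\sigma_B}{2^n}\right)\right]
    = \tfrac{1}{2^n}\tr[\sigma_A\,\mathcal{N}(\mathbbm{1}_{2^n})] + \tfrac{1}{2^n}\tr[\sigma_A\,\mathcal{N}(\sigma_B)]
    = \delta_{A,0^n} + (R_\mathcal{N}^\mathcal{P})_{A,B}.
\end{equation*}
This identity is already the main quantitative content of the reduction: any $\varepsilon$-accurate prediction of the left-hand side yields, after subtracting the known constant $\delta_{A,0^n}$, an $\varepsilon$-accurate estimate of the PTM entry $(R_\mathcal{N}^\mathcal{P})_{A,B}$.

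Next I would verify that the ``test'' observables and states used on the right-hand side meet the assumptions of the corollary. The observable $\sigma_A$ is $1$-sparse in the Pauli basis and has Schatten-$2$ norm exactly $\sqrt{2^n}$, satisfying $\norm{\sigma_A}_2\leq\sqrt{2^n}$. The state $\tfrac{1}{2^n}(\mathbbm{1}_{2^n}+\sigma_B)$ is a legitimate density matrix (it is the uniform mixture on the $+1$ eigenspace of $\sigma_B$) and is $2$-sparse in the Pauli basis. Both sparsities are $\mathcal{O}(1)$, independent of $n$, so these instances are admissible inputs to any algorithm satisfying the hypothesis of the corollary.

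The reduction is now immediate by contraposition. Suppose, for contradiction, that there existed a learning algorithm without quantum memory that, with $o(\nicefrac{4^n}{\varepsilon^2})$ channel queries, produced a classical description $\hat{\mathcal{N}}$ from which every such expectation value $\tr[O\mathcal{N}(\rho)]$ could be predicted to accuracy $\varepsilon$ with success probability $\geq \tfrac{2}{3}$, uniformly over doubly-stochastic entanglement-breaking $n$-qubit channels with $\mathcal{O}(1)$-sparse PTM. Instantiating this prediction capability at the $16^n$ pairs $(O,\rho)=(\sigma_A,\tfrac{1}{2^n}(\mathbbm{1}_{2^n}+\sigma_B))$ for $A,B\in\{0,1,2,3\}^n$, and invoking the identity above together with the known constant correction for $A=0^n$, one obtains simultaneously $\varepsilon$-accurate estimates of all entries of $R_\mathcal{N}^\mathcal{P}$ from the same $o(\nicefrac{4^n}{\varepsilon^2})$ queries with success probability $\geq\tfrac{2}{3}$, contradicting \Cref{theorem:lower-bound-ptm-learning-without-quantum-memory-from-doubly-stochastic-channel-queries}.

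There is no real technical obstacle here: the corollary is a bookkeeping consequence of the theorem, exactly paralleling how \Cref{corollary:lower-bound-sparse-expectation-values-without-quantum-memory-from-doubly-stochastic-choi-state-copies} followed from \Cref{lemma:lower-bound-ptm-learning-without-quantum-memory-from-doubly-stochastic-choi-state-copies}. The only points that need a moment of care are (i) verifying that the test observable/state pairs genuinely satisfy the $\mathcal{O}(1)$-Pauli-sparsity and $\norm{\cdot}_2\leq\sqrt{2^n}$ constraints, and (ii) noting that the reduction is fully \emph{non-adaptive} on the classical description side --- the prediction algorithm is only asked to evaluate a fixed, explicit, polynomial-in-$n$-describable list of query pairs --- so it falls within the use-case guaranteed by the hypothesis.
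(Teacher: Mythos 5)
Your proposal is correct and matches the paper's own argument: the paper derives this corollary from \Cref{theorem:lower-bound-ptm-learning-without-quantum-memory-from-doubly-stochastic-channel-queries} via exactly the same identity $\tr[\sigma_A\,\mathcal{N}(\tfrac{\mathbbm{1}_{2^n}+\sigma_B}{2^n})]=\delta_{A,0^n}+(R_\mathcal{N}^\mathcal{P})_{A,B}$ for doubly-stochastic $\mathcal{N}$, together with the same observations that $\sigma_A$ is $1$-sparse with $\norm{\sigma_A}_2=\sqrt{2^n}$ and that $\tfrac{1}{2^n}(\mathbbm{1}_{2^n}+\sigma_B)$ is a $2$-sparse state.
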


This shows that the exponential query complexity separation for the task of predicting Pauli-sparse expectation values between learning with quantum memory from Choi state copies and learning without quantum memory from Choi state copies persists even if we allow learners without quantum memory to access copies of the unknown sparse channel. Again, a learner with quantum memory can even solve more general tasks than that in \Cref{corollary:lower-bound-sparse-expectation-values-without-quantum-memory-from-doubly-stochastic-channel-queries}, see \Cref{corollary:pauli-sparse-expectations-from-choi-shadow-tomography,corollary:sample-complexity-predicting-pauli-sparse-channel-expectation-values}.)
Thereby, we have established an exponential separation between the weakest of our three learning models with quantum memory (\Cref{definition:qchannel-learning-with-qmemory-choi}) and the strongest of our three learning models without quantum memory (\Cref{definition:qchannel-learning-without-qmemory}).

\begin{remark}\label{remark:no-exponential-separation-pauli-channels}
    Also the exponential query complexity lower bounds in this subsection can fail under a priori promises on the unknown channel. 
    Concretely, \cite[Corollary 2]{chen2022quantum} gives a learning procedure without quantum memory that estimates all (non-trivial) PTM entries of an unknown $n$-qubit Pauli channel $\mathcal{N}$ using $\mathcal{O}(\tfrac{n\log(1/\delta)}{\varepsilon^2})$ queries to $\mathcal{N}$, where $n$-dimensional auxiliary quantum systems are used.
    Thus, if we are promised that the unknown channel is a Pauli channel, the exponential query complexity separation between PTM learning with and without quantum memory breaks down. In fact, the protocol of \cite{chen2022quantum} achieves the same $n$-dependence and a better $\varepsilon$-dependence than that of \Cref{theorem:ptm-entries-from-choi-shadow-tomography}, despite the latter using a quantum memory.
\end{remark}

\begin{remark}\label{remark:no-exponential-separation-average-case}
    As a further example of a change in framework can break the exponential query complexity lower bound for learners without quantum memory, we highlight the worst case nature of our task. Namely, we require the learner to make accurate predictions for all Pauli-sparse states and observables. 
    In contrast, \cite{huang2022learning} recently established that arbitrary quantum channels can be learned efficiently without quantum memory from parallel access in a worst-case sense over certain ($k$-local degree-$d$) observables and on-average over certain (locally flat) distributions of quantum states. 
\end{remark}

\begin{remark}\label{remark:lower-bound-no-auxiliary-system}
    To conclude our discussion of query complexity lower bounds, we explain the implications of \cite[Theorem 2, (iii)]{chen2022quantum} for our learning task. 
    Formulated in our language, \cite[Theorem 2, (iii)]{chen2022quantum} says: Without the possibility of querying the unknown channel on a subsystem of an entangled quantum state (as is for example required to prepare Choi state copies), the task of PTM learning requires $\Omega (2^{\nicefrac{n}{3}})$ many queries, even if the unknown channel is promised to be a Pauli channel (or, equivalently, to have a diagonal PTM).
    This teaches us about the importance of allowing for an auxiliary quantum system our definitions of learning with quantum memory. Combined with the query complexity lower bounds for learners without quantum memory established in this section, we see that the efficient query complexity of PTM learning with quantum memory established in \Cref{theorem:ptm-entries-from-choi-shadow-tomography} requires both the ability to query the unknown channel sequentially without intermediate measurements and the ability to let the unknown channel act on a subsystem of a larger composite quantum system.
\end{remark}

\section{Hamiltonian Learning With Quantum Memory}\label{section:hamiltonian-learning}

As an application of our positive results for Pauli transfer matrix learning with quantum memory, we show how these can be combined with polynomial interpolation techniques to give rise to a Hamiltonian learning protocol. The ideas presented in this section follow the approaches based on polynomial interpolation and derivative estimation put forward in \cite{haah2021optimal, stilck-franca2022efficient, gu2022practical}, but our PTM learner allows us to deal with arbitrary Hamiltonians, whereas these prior works require assumptions on the structure of the unknown Hamiltonian.

Our goal is to learn an unknown $n$-qubit Hamiltonian $H=H^\dagger\in\mathcal{B} ((\mathbb{C}^{2})^{\otimes n})$, without making prior assumptions on the structure of $H$.
Namely, if $H$ has the Pauli expansion $H = \sum_{A\in\{0,1,2,3\}^n} \alpha(A) \sigma_A$, with $\alpha(A) = \tfrac{1}{2^n} \tr[H\sigma_A] \in\mathbb{R}$ and w.l.o.g.~$\alpha(0^n)=0$, we aim to approximate the coefficient vector $\vec{\alpha} = (\alpha(A))_{A\in \{0,1,2,3\}^n}$ in $\ell_\infty$-norm over $\mathbb{R}^{4^n}$. 
(Note: Here we consider an unnormalized Pauli expansion.)

We will consider learning an unknown Hamiltonian from access to the associated unitary dynamics.
For any time $t\geq 0$, we will denote the unitary quantum channel describing the (Schr\"odinger picture) evolution under the Hamiltonian $H$ for time $t$ by
\begin{equation}
    \mathcal{U}_t :\mathcal{B} ((\mathbb{C}^{2})^{\otimes n})\to \mathcal{B} ((\mathbb{C}^{2})^{\otimes n}),~
    \mathcal{U}_t (\rho)
    = e^{-itH}\rho e^{itH}\, .
\end{equation}
By expanding the operator exponentials in their power series, we obtain:
\begin{equation}\label{eq:time-evolution-second-order-expansion}
    \mathcal{U}_t (\rho)
    = \rho - it [H, \rho] + \sum_{k=2}^\infty \frac{(-it)^k}{k!}C_k^{H}(\rho)\, ,
\end{equation}
where $C_k^{H}(\rho)$ denotes the $k$-fold iterated commutator defined via
\begin{equation}
    C_k^X (Y)\coloneqq \underbrace{[X,[X,[\ldots, [X}_{k\textrm{ times}},Y]\ldots]]]\, .
\end{equation}
Repeating an observation previously made in \cite{haah2021optimal, yu2023robust, stilck-franca2022efficient, gu2022practical}, we can isolate any single coefficient $\alpha (A)$ in the first-order term when acting on a suitable Pauli-sparse input state and measuring a suitable Pauli observable on the obtained output state:

\begin{lemma}\label{lemma:isolating-single-pauli-coefficient}
    Let $H = \sum_{A\in\{0,1,2,3\}^n} \alpha(A) \sigma_A$, with $\alpha(A) = \tfrac{1}{2^n} \tr[H\sigma_A] \in\mathbb{R}$.
    Let $A\in \{0,1,2,3\}^n$, $A\neq 0^n$.
    Let $1\leq j\leq n$ be such that $A_j\neq 0$. Let $B_j \in\{1,2,3\}\setminus\{A_i\}$ and set $B_k=0$ for all $k\neq i$.
    Define the quantum state $\rho = \tfrac{\mathbbm{1}_{2^n} + \nicefrac{i [\sigma_A, \sigma_B]}{2}}{2^n}$.
    Then
    \begin{equation}
        \tr[\sigma_B \cdot (-i [H, \rho])]
        = 2 \alpha (A) .
    \end{equation}
\end{lemma}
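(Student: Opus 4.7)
My plan is to expand $H = \sum_C \alpha(C)\sigma_C$, substitute the definition of $\rho$ into $-i[H,\rho]$, and then exploit the commutation/anticommutation relations between Pauli strings and the orthogonality of Paulis under the trace to show that only the $C=A$ term survives.

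First I would simplify $\rho$. Because $A_j$ and $B_j$ are distinct non-identity Paulis at position $j$ and $B_k = 0$ at all other positions (where $\sigma_{B_k} = \mathbbm{1}$ trivially commutes with $\sigma_{A_k}$), $\sigma_A$ and $\sigma_B$ anticommute on the full $n$-qubit system. Hence $[\sigma_A,\sigma_B] = 2\sigma_A\sigma_B$, so
\begin{equation}
\rho = \frac{1}{2^n}\bigl(\mathbbm{1} + i\sigma_A\sigma_B\bigr).
\end{equation}
(As a sanity check, one may note that $(i\sigma_A\sigma_B)^2 = \mathbbm{1}$ and $(i\sigma_A\sigma_B)^\dagger = i\sigma_A\sigma_B$ under anticommutation, so $\rho \geq 0$ with unit trace.) Since commutators annihilate $\mathbbm{1}$, this gives
\begin{equation}
-i[H,\rho] = \frac{1}{2^n}\sum_{C}\alpha(C)\,[\sigma_C,\sigma_A\sigma_B].
\end{equation}

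Next I would compute $\tr[\sigma_B\,[\sigma_C,\sigma_A\sigma_B]]$ term by term. Using cyclicity of the trace and $\sigma_B^2 = \mathbbm{1}$,
\begin{equation}
\tr[\sigma_B\sigma_C\sigma_A\sigma_B] = \tr[\sigma_C\sigma_A], \qquad
\tr[\sigma_B\sigma_A\sigma_B\sigma_C] = \tr[\sigma_A(\sigma_B\sigma_C\sigma_B)] = \epsilon_{BC}\tr[\sigma_A\sigma_C],
\end{equation}
where $\epsilon_{BC} = \pm 1$ records whether $\sigma_B$ and $\sigma_C$ commute ($+1$) or anticommute ($-1$). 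Therefore
\begin{equation}
\tr[\sigma_B\,[\sigma_C,\sigma_A\sigma_B]] = (1-\epsilon_{BC})\,\tr[\sigma_C\sigma_A],
\end{equation}
which vanishes unless $\sigma_B$ and $\sigma_C$ anticommute \emph{and} $C = A$ (by orthogonality of distinct Pauli strings under the trace inner product). Since $\sigma_A$ and $\sigma_B$ were shown to anticommute, the single surviving summand is $C = A$, contributing $2\tr[\mathbbm{1}] = 2\cdot 2^n$. Putting it all together yields
\begin{equation}
\tr[\sigma_B\cdot(-i[H,\rho])] = \frac{1}{2^n}\cdot \alpha(A)\cdot 2\cdot 2^n = 2\alpha(A),
\end{equation}
as claimed.

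The computation is essentially a bookkeeping exercise on Pauli commutation relations, so the only real obstacle is tracking signs carefully in the two trace terms. The conceptual point worth highlighting is that $B$ is engineered to be the minimal-weight single-qubit Pauli that anticommutes with $\sigma_A$; the density matrix $\rho$ is then tuned so that its ``coherent'' part $i\sigma_A\sigma_B$ produces $\sigma_A$ precisely when one commutes with $\sigma_B$, isolating $\alpha(A)$ from the full Pauli expansion of $H$.
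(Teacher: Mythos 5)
Your proof is correct and follows essentially the same route as the paper's: expand $H$ in the Pauli basis, use that $\sigma_A$ and $\sigma_B$ anticommute (so $[\sigma_A,\sigma_B]=2\sigma_A\sigma_B$), and reduce everything to trace orthogonality of Pauli strings. The only cosmetic difference is that the paper first moves the commutator onto $\sigma_B$ via cyclicity (computing $\tr[[H,\sigma_B]\,[\sigma_A,\sigma_B]]$) whereas you simplify $\rho$ first and evaluate $\tr[\sigma_B[\sigma_C,\sigma_A\sigma_B]]$ directly; the sign bookkeeping and the conclusion are the same.
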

\begin{proof}
    This can be seen by a direct computation:
    \begin{align}
        \tr[\sigma_B \cdot (-i [H, \rho])]
        &= \tr[(i [H, \sigma_B] )\cdot \rho]\\
        &= -\frac{1}{2^{n+1}} \tr[[H, \sigma_B] \cdot [\sigma_A, \sigma_B]]\\
        &= -\frac{1}{2^{n+1}} \sum_{C\in\{0,1,2,3\}^n} \alpha(C) \tr[[\sigma_C, \sigma_B] \cdot [\sigma_A, \sigma_B]]\\
        &= \frac{1}{2^{n+1}} \sum_{\substack{C\in\{0,1,2,3\}^n\\ [\sigma_B, \sigma_C]\neq 0}} \alpha(C) \tr[[\sigma_C, \sigma_B] \cdot [\sigma_A, \sigma_B]]\\
        &= \frac{1}{2^{n+1}} \sum_{\substack{C\in\{0,1,2,3\}^n\\ [\sigma_B, \sigma_C]\neq 0}} \alpha(C) \tr[2 \sigma_B \sigma_C \cdot 2 \sigma_A \sigma_B]\\
        &= \frac{2}{2^{n}} \sum_{\substack{C\in\{0,1,2,3\}^n\\ [\sigma_B, \sigma_C]\neq 0}} \alpha(C) \tr[\sigma_C \cdot \sigma_A]\\
        &= 2\alpha (A) ,
    \end{align}
    as claimed.
\end{proof}

Recalling \Cref{eq:time-evolution-second-order-expansion}, we conclude: For any $A\neq 0^n$, if we manage to extract the first-order time derivative at time zero of $\tr[\sigma_B \mathcal{U}_t (\rho)]$ with $B$ and $\rho$ as in \Cref{lemma:isolating-single-pauli-coefficient}, then that tells us the coefficient $\alpha (A)$. 
We can achieve this by performing polynomial interpolation for the function $t\mapsto \tr[\sigma_B \mathcal{U}_t (\rho)]$. 
To ensure that polynomial interpolation gives an accurate estimate for the first derivative, we need bounds on higher order derivatives. The next lemma establishes such bounds for a general Hamiltonian:

\begin{lemma}\label{lemma:derivative-bounds}
    Let $B\in\{0,1,2,3\}^n$ be an arbitrary string and let $\rho\in\mathcal{S}((\mathbb{C}^2)^{\otimes n})$ be an arbitrary $n$-qubit state. 
    Then, for any $k\in\mathbb{N}$ and for any $\tau \geq0$,
    \begin{equation}
        \left\lvert \dv[k]{t}\eval{\Tr [\sigma_B\mathcal{U}_{t}(\rho)]}_{t=\tau}\right\rvert
        \leq (2\norm{H})^k\, .
    \end{equation}
\end{lemma}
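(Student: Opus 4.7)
The plan is to relate the time derivatives of $\Tr[\sigma_B \mathcal{U}_t(\rho)]$ to iterated commutators with $H$ and then bound the latter using submultiplicativity of Schatten norms. I expect the argument to be short once the commutator identity is in place; the main care needed is in tracking the factor of $2$ and making sure the inductive bound handles mixed Schatten norms correctly.

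First, I would establish the commutator formula for arbitrary $\tau\geq 0$: since $\frac{d}{dt}\mathcal{U}_t(\rho) = -i[H,\mathcal{U}_t(\rho)]$ (a direct calculation from $\mathcal{U}_t(\rho)=e^{-itH}\rho e^{itH}$), a simple induction on $k$ yields
\begin{equation*}
    \dv[k]{t}\mathcal{U}_t(\rho)\Big|_{t=\tau} = (-i)^k\, C_k^{H}\bigl(\mathcal{U}_\tau(\rho)\bigr).
\end{equation*}
Plugging this into the trace and using linearity gives
\begin{equation*}
    \dv[k]{t}\eval{\Tr[\sigma_B \mathcal{U}_t(\rho)]}_{t=\tau} = (-i)^k\, \Tr\!\bigl[\sigma_B\, C_k^{H}(\mathcal{U}_\tau(\rho))\bigr].
\end{equation*}

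Second, I would reduce the quantity of interest to the trace norm of an iterated commutator via matrix Hölder. Since $\norm{\sigma_B} = 1$,
\begin{equation*}
    \Bigl\lvert\Tr\!\bigl[\sigma_B\, C_k^{H}(\mathcal{U}_\tau(\rho))\bigr]\Bigr\rvert \leq \norm{\sigma_B}\cdot \bigl\lVert C_k^{H}(\mathcal{U}_\tau(\rho))\bigr\rVert_1 = \bigl\lVert C_k^{H}(\mathcal{U}_\tau(\rho))\bigr\rVert_1.
\end{equation*}

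Third, I would prove by induction on $k$ that $\lVert C_k^{H}(X)\rVert_1 \leq (2\norm{H})^k \lVert X\rVert_1$ for every $X$. The base $k=0$ is trivial, and the inductive step uses the triangle inequality together with the Hölder-type bound $\lVert H Y\rVert_1, \lVert Y H\rVert_1 \leq \norm{H}\cdot \lVert Y\rVert_1$ to conclude
\begin{equation*}
    \lVert C_{k+1}^{H}(X)\rVert_1 = \lVert[H, C_k^{H}(X)]\rVert_1 \leq 2\norm{H}\cdot \lVert C_k^{H}(X)\rVert_1.
\end{equation*}

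Finally, applying this bound with $X = \mathcal{U}_\tau(\rho)$, and noting that $\mathcal{U}_\tau$ is a quantum channel so that $\lVert \mathcal{U}_\tau(\rho)\rVert_1 = \lVert \rho\rVert_1 = 1$, yields $\lVert C_k^{H}(\mathcal{U}_\tau(\rho))\rVert_1 \leq (2\norm{H})^k$, which combined with the previous steps gives the claimed $(2\norm{H})^k$ bound. The only subtlety is the factor of $2$, which comes from the commutator structure; if one instead worked with a non-commutator derivative one would get $\norm{H}^k$, but the $\tau$-independence of the bound is automatic once the derivative is expressed as an iterated commutator applied to the (trace-norm preserving) evolved state.
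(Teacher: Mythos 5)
Your proof is correct and follows essentially the same route as the paper's: the $k$-th derivative is expressed as an iterated commutator, H\"older's inequality pairs an operator norm with a trace norm, and each commutator contributes a factor $2\norm{H}$, with trace-preservation of $\mathcal{U}_\tau$ supplying the $\tau$-independence. The only cosmetic difference is that you keep the iterated commutator on the evolved state and bound it in trace norm, whereas the paper moves it onto $\sigma_B$ (writing $(i)^k\Tr[C_k^H(\sigma_B)\,\mathcal{U}_\tau(\rho)]$) and bounds $\norm{C_k^H(\sigma_B)}$ in operator norm; by cyclicity of the trace these are the same estimate.
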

\begin{proof}
    Using the power series expansion of $\mathcal{U}_t (\rho)$, one can evaluate the time derivative of a time-dependent expectation value as $\dv[k]{t}\eval{\Tr [\sigma_B\mathcal{U}_{t}(\rho)]}_{t=\tau} = (i)^k\Tr [C_k^H(\sigma_B) \mathcal{U}_{\tau}(\rho)]$. 
    As the operator norm of an iterated commutator satisfies $\norm{C_k^X (Y)}\leq (2 \norm{X})^k \norm{Y}$, an application of H\"older's inequality gives
    \begin{equation}
        \left\lvert \dv[k]{t}\eval{\Tr [\sigma_B\mathcal{U}_{t}(\rho)]}_{t=\tau}\right\rvert
        \leq \norm{C_k^H(\sigma_B)}\cdot \norm{\mathcal{U}_{\tau}(\rho)}_1
        \leq (2\norm{H})^k \norm{\sigma_B}
        = (2\norm{H})^k,
    \end{equation}
    since $\mathcal{U}_{\tau}$ is completely positive and trace-preserving for every $\tau\geq 0$ and $\rho$ is a quantum state.
\end{proof}

Together with polynomial interpolation guarantees for derivative estimation, our PTM learning protocol from \Cref{corollary:pauli-sparse-expectations-from-choi-shadow-tomography} now has the following consequence: 

\begin{theorem}\label{theorem:hamiltonian-learning}
    There is a learning procedure with quantum memory from parallel access that, using $m_1 = \tilde{\mathcal{O}}\left(\tfrac{n + \log(1/\delta)}{\varepsilon^4}\cdot\norm{H}^4\right)$ parallel queries to $\mathcal{O}(\tfrac{1}{\norm{H}})$-time evolutions along $H$ with a total evolution time of $T_1 = \tilde{\mathcal{O}}\left(\frac{n + \log(1/\delta)}{\varepsilon^4}\cdot \norm{H}^3\right)$, produces a classical description $\hat{H}$ of $H$ from which any $M$ Pauli coefficients $\alpha (A^{(i)})$, $A^{(i)}\in\{0,1,2,3\}^n$, $1\leq i\leq M$, can be estimated to accuracy $\varepsilon$ with $m_2 = \tilde{\mathcal{O}}\left(\tfrac{\log(M) + \log(1/\delta)}{\varepsilon^2}\cdot \norm{H}^2\right)$ additional parallel queries to $\mathcal{O}(\tfrac{1}{\norm{H}})$-time evolutions along $H$ with a total evolution time of $T_2 = \tilde{\mathcal{O}}\left(\tfrac{n + \log(1/\delta)}{\varepsilon^2}\cdot \norm{H}\right)$ and classical computation time $\tilde{\mathcal{O}}\left(\tfrac{n^2 + n \log(1/\delta)}{\varepsilon^4}\cdot \norm{H}^4 M\right)$, with success probability $\geq 1-\delta$.
    The classical representation $\hat{H}$ of $H$ consists of $\tilde{\mathcal{O}}\left(\tfrac{n^2 + n \log(1/\delta)}{\varepsilon^4}\cdot\norm{H}^4\right)$ real numbers stored in classical memory.
\end{theorem}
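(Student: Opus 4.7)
The plan is to reconstruct the Pauli coefficients $\alpha(A)$ of $H$ by combining the Pauli-sparse expectation value protocol of \Cref{corollary:pauli-sparse-expectations-from-choi-shadow-tomography} (applied to the time-evolution channels $\mathcal{U}_{t_\ell}$ at a carefully chosen set of times) with a polynomial interpolation scheme that extracts the first-order time derivatives at $t=0$. By \Cref{lemma:isolating-single-pauli-coefficient}, each such derivative, evaluated on an appropriate $2$-Pauli-sparse state and a single Pauli observable, equals $2\alpha(A)$, and by \Cref{lemma:derivative-bounds} the higher-order derivatives of the relevant functions $t \mapsto \tr[\sigma_B \mathcal{U}_{t}(\rho)]$ are bounded by $(2\norm{H})^k$, which is what will control the truncation error of the interpolation.

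In the learning phase, I would fix a time horizon $\tau = c/\norm{H}$ for a sufficiently small constant $c<1$, pick $L = \tilde{\mathcal{O}}(\log(\norm{H}/\varepsilon))$ Chebyshev nodes $t_1,\ldots,t_L \in [0,\tau]$, and for each $\ell$ invoke \Cref{corollary:pauli-sparse-expectations-from-choi-shadow-tomography} on $\mathcal{U}_{t_\ell}$ to build a classical representation $\hat{\mathcal{U}}_{t_\ell}$ of the channel. Each invocation uses $\tilde{\mathcal{O}}((n+\log(1/\delta))/\tilde{\varepsilon}^4)$ Choi state copies of $\mathcal{U}_{t_\ell}$ (the values $s_\rho \le 2$, $s_O = 1$, $B = \mathcal{O}(1)$ inherited from the ensemble in \Cref{lemma:isolating-single-pauli-coefficient} contribute only constants), and each Choi state is obtained from one query to $\mathcal{U}_{t_\ell}$ of duration $t_\ell \le \tau = \mathcal{O}(1/\norm{H})$. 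Summing over the $L$ nodes and taking a union bound over the $L$ success events yields the claimed parallel-query count $m_1$, total evolution time $T_1 = \mathcal{O}(m_1/\norm{H})$, and classical memory footprint. The classical representation $\hat{H}$ is simply the tuple $(t_\ell, \hat{\mathcal{U}}_{t_\ell})_{\ell=1}^L$.

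In the prediction phase, given target strings $A^{(1)},\ldots,A^{(M)}$, I would for each $A^{(i)}$ extract via \Cref{lemma:isolating-single-pauli-coefficient} a weight-one Pauli $\sigma_{B^{(i)}}$ and a $2$-Pauli-sparse state $\rho^{(i)}$, then use the second part of \Cref{corollary:pauli-sparse-expectations-from-choi-shadow-tomography} with $\hat{\mathcal{U}}_{t_\ell}$ to obtain estimates $\hat{f}_i(t_\ell)$ of $f_i(t_\ell) \coloneqq \tr[\sigma_{B^{(i)}} \mathcal{U}_{t_\ell}(\rho^{(i)})]$ to accuracy $\tilde\varepsilon$. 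This amounts to predicting $ML$ expectation values overall and uses $\tilde{\mathcal{O}}((\log(ML)+\log(1/\delta))/\tilde\varepsilon^2)$ additional Choi copies per node, for $m_2$ additional queries in total. For each $i$, I would then form the degree-$(L-1)$ Chebyshev interpolating polynomial $p_i$ of the data $\{(t_\ell, \hat{f}_i(t_\ell))\}_{\ell}$ and output $\hat\alpha(A^{(i)}) = \tfrac{1}{2} p_i'(0)$. Combining the Chebyshev derivative-estimation bounds of \cite{howell1991derivative} with \Cref{lemma:derivative-bounds} gives a truncation error $\lvert f_i'(0) - p_i'(0)\rvert$ that decays like $(2\norm{H}\tau)^L / ((L-1)!\,\tau)$, which is $\le \varepsilon/2$ for our choice of $\tau$ and $L$, while the amplification of the per-node noise $\tilde\varepsilon$ into the derivative estimate is $\tilde{\mathcal{O}}(L/\tau)$.

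The main technical obstacle is the simultaneous balancing of the three parameters $\tau$, $L$, and $\tilde\varepsilon$: the horizon $\tau$ must be chosen strictly below $1/(2\norm{H})$ so that the derivative bound $(2\norm{H}\tau)^k$ decays geometrically, yet large enough that the Chebyshev nodes are sufficiently separated to avoid a blow-up in the interpolation amplification factor; the degree $L$ must be only polylogarithmic in $1/\varepsilon$ and $\norm{H}$ so that running the subroutine of \Cref{corollary:pauli-sparse-expectations-from-choi-shadow-tomography} $L$ times adds only $\tilde{\mathcal{O}}(1)$ overhead in the final complexities; and the per-node accuracy must satisfy $\tilde\varepsilon = \tilde\Theta(\varepsilon/\norm{H})$ so that after the $\tilde{\mathcal{O}}(L/\tau) = \tilde{\mathcal{O}}(\norm{H})$ amplification the overall error is at most $\varepsilon$. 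Feeding this choice of $\tilde\varepsilon$ into the $\tilde\varepsilon^{-4}$ and $\tilde\varepsilon^{-2}$ dependencies of the two phases of \Cref{corollary:pauli-sparse-expectations-from-choi-shadow-tomography} produces exactly the $\norm{H}^4/\varepsilon^4$ and $\norm{H}^2/\varepsilon^2$ factors in $m_1$ and $m_2$, while multiplying by the per-query duration $\tau = \mathcal{O}(1/\norm{H})$ yields the stated total evolution times $T_1$ and $T_2$.
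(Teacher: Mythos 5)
Your proposal follows essentially the same route as the paper's proof: build classical representations $\hat{\mathcal{U}}_{t_\ell}$ at Chebyshev nodes via \Cref{corollary:pauli-sparse-expectations-from-choi-shadow-tomography}, estimate the expectation values from \Cref{lemma:isolating-single-pauli-coefficient}, and extract the first derivative at $t=0$ by Chebyshev interpolation, controlling the truncation error with \Cref{lemma:derivative-bounds} and \cite{howell1991derivative} and the noise propagation by the choice of $\tilde\varepsilon$. The only quantitative quibble is that the error amplification from node values to the derivative estimate is $\Theta(L^3/\tau)$ rather than $\mathcal{O}(L/\tau)$ (the paper's explicit formula gives $\tfrac{2}{T}\sum_{m=0}^{L-1}m^2$ times the per-node error), but since $L$ is polylogarithmic this is absorbed into the $\tilde{\mathcal{O}}$ and does not affect the stated complexities.
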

\begin{proof}[Proof Sketch]
    We give a detailed proof in \Cref{appendix:proofs}.
    The learning procedure works as follows:
    First, we use \Cref{corollary:pauli-sparse-expectations-from-choi-shadow-tomography} to build classical representations $\hat{\mathcal{U}}_t$ of $\mathcal{U}_t$ for different times $t$. The totality of those classical representations makes up our classical representation $\hat{H}$ of $H$.
    When asked to estimate a Pauli coefficients $\alpha (A^{(i)})$, $1\leq i\leq M$, from that classical representation, we use each $\hat{\mathcal{U}}_t$ to obtain an estimate of $\tfrac{1}{2}\tr[\sigma_{B^{(i)}} \mathcal{U}_t (\rho^{(i)})]$ with $B^{(i)}$ and $\rho^{(i)}$ as in \Cref{lemma:isolating-single-pauli-coefficient} for different times, and then use those estimates to approximately evaluate the first-order time derivative at time $0$ via Chebyshev polynomial interpolation for each $i$. 
    Motivated by \Cref{eq:time-evolution-second-order-expansion} and \Cref{lemma:isolating-single-pauli-coefficient}, these approximate derivatives are our estimates for the Pauli coefficients $\alpha (A^{(i)})$.
\end{proof}

Assuming an upper bound of $\norm{H}\leq\mathcal{O}(\poly (n))$, \Cref{theorem:hamiltonian-learning} gives an efficient procedure for learning the Pauli coefficients of $H$ in $\ell_\infty$-norm. Crucially, and in contrast to results from prior work, we do not need any structural assumptions about $H$, it can be an arbitrarily non-local Hamiltonian.
However, \Cref{theorem:hamiltonian-learning} is not directly comparable to prior works on learning Hamiltonians from real time dynamics. Methods making use of known structure in the Hamiltonian achieve exponentially better $n$ dependence and polynomially better $\varepsilon$ dependence than \Cref{theorem:hamiltonian-learning}, 
and they do not scale directly with $\norm{H}$ but rather with the the $\ell_\infty$-norm of the vector of Pauli coefficients of $H$ and with, for example, the assumed locality parameter \cite{haah2021optimal, stilck-franca2022efficient, gu2022practical, huang2022heisenberg-scaling}.


\section*{Acknowledgments}

MCC thanks Hsin-Yuan Huang, Sumeet Khatri, John Preskill, and Asad Raza for insightful discussions, as well as Antonio Anna Mele, Andreas Bluhm, Alexander Nietner, Daniel Stilck França, Srinivasan Arunachalam, and the anonymous reviewers at FOCS 2023 for helpful feedback on an earlier draft of this paper.
MCC was supported by a DAAD PRIME fellowship. 
The Institute for Quantum Information and Matter is an NSF Physics Frontiers Center.

\newpage
\setcounter{secnumdepth}{0}
\defbibheading{head}{\section{References}}
\sloppy
\printbibliography[heading=head]

\newpage
\appendix
\setcounter{secnumdepth}{2}

\section{Learning quantum channels with parallel access}\label{appendix:parallel-access}

We present one more access model for quantum channel learning, this one lying in between the general access model and the Choi access model.
Here, we suppose that the learner has to perform all oracle accesses to the unknown quantum channel in parallel. We formalize this as follows:

\begin{definition}[Learning quantum channels without quantum memory from parallel access]\label{definition:qchannel-learning-without-qmemory-parallel}
    A $T$-query algorithm for learning an unknown $n$-qubit quantum channel $\mathcal{N}$ without quantum memory from parallel access is a $T$-query algorithm for learning an unknown $n$-qubit quantum channel $\mathcal{N}$ without quantum memory under the additional restriction that the input states must not be chosen adaptively.
\end{definition}

\begin{definition}[Learning quantum channels with quantum memory from parallel access]\label{definition:qchannel-learning-with-qmemory-parallel}
    A $T$-query algorithm for learning an unknown $n$-qubit quantum channel $\mathcal{N}$ with quantum memory from parallel access is a $1$-query algorithm for learning the unknown $(n T)$-qubit quantum channel $\mathcal{N}^{\otimes T}$ with quantum memory.
    That is, such an algorithm can prepare an input state $\rho_0\in\mathcal{S}((\mathbb{C}^2)^{\otimes n_{\mathrm{aux}}}((\mathbb{C}^2)^{\otimes n})^{\otimes T})$, access the oracle for $\mathcal{N}$ in parallel $T$ times to have the state $\rho_T^{\mathcal{N}} = (\operatorname{id}_{\mathrm{aux}}\otimes \mathcal{N}^{\otimes T})(\rho_0)$ in the quantum memory, and then perform a joint POVM $\{F_s\}_s\subseteq \mathcal{B}((\mathbb{C}^2)^{\otimes n_{\mathrm{aux}}}\otimes ((\mathbb{C}^2)^{\otimes n})^{\otimes T})$ on $\rho_T^{\mathcal{N}}$ to predict properties of $\mathcal{N}$.
\end{definition}

\begin{figure}
    \centering
    \includegraphics[width = 0.6\textwidth]{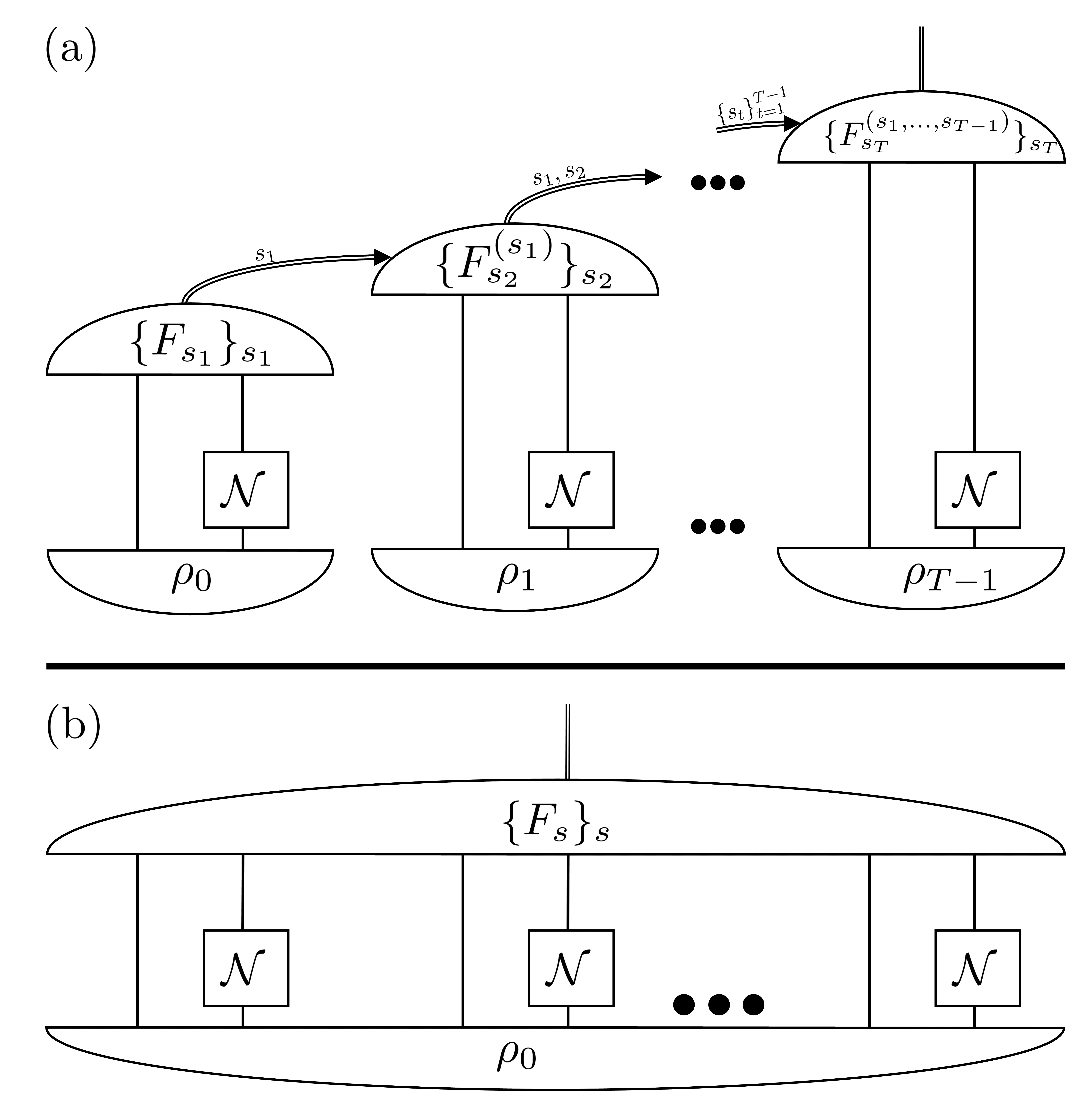}
    \caption{Illustration of learning quantum channels from parallel access, each panel to be read from bottom to top. Panel (a) depicts a learner without quantum memory, panel (b) depicts a learner with quantum memory.}
    \label{fig:parallel-access}
\end{figure}

These two definitions are illustrated in \Cref{fig:parallel-access}. Any procedure for learning an unknown quantum channel from Choi access can also be realized with parallel access to the unknown channel. And any procedure with parallel access is clearly an instance of a general channel learning procedure. Thus, the parallel access model sits in between Choi access and general channel access.

\section{Transfer Matrix Learning With Quantum Memory From Polynomially Many Queries}\label{section:tm-learning}

While the focus in \Cref{section:ptm-learning} was on the PTM, we can extend our reasoning to general TMs when replacing the Pauli shadow tomography procedure of \cite{huang2021information-theoretic} by the more general shadow tomography procedure of \cite{aaronson2018shadow, badescu2021improved}. In this appendix, we describe how that leads to a query-efficient procedure for learning general TMs, assuming a quantum memory and Choi state access. 
According to the results of \Cref{section:tm-estimates-give-expectation-estimates}, this allows us to estimate sparse expectation values for an unknown quantum channel. 
Moreover, as discussed in \Cref{section:different-learning-models} and \Cref{appendix:parallel-access}, this then in particular implies information-theoretically efficient learners with quantum memory from parallel access and from more general sequential access in terms of its power.

First, we demonstrate that applying existing state shadow tomography methods to the Choi state of the unknown quantum channel is sufficient to obtain accurate estimates for all transfer matrix entries simultaneously, while using only polynomially many Choi state copies.

\begin{theorem}\label{theorem:tm-entries-from-choi-shadow-tomography}
    There is a learning procedure with quantum memory from Choi access that, using $m=\tilde{\mathcal{O}}\left(\tfrac{n^3 + n \log(1/\delta)}{\varepsilon^4}\cdot \left( 2^{n} \cdot \max_{1\leq i\leq 4^n}\lVert Q_i \rVert_\infty^2\right)^4\right)$ copies of the Choi state $\tfrac{1}{2^n}\Gamma^{\mathcal{N}}$ of an unknown $n$-qubit channel $\mathcal{N}$, outputs, with success probability $\geq 1-\delta$, numbers $\hat{r}_{i,j}^\mathcal{Q}$ for $1\leq i,j\leq 4^n$ such that $\lvert  \hat{r}_{i,j}^\mathcal{Q} - (R_\mathcal{N}^\mathcal{Q})_{i,j}\rvert\leq\varepsilon$ holds simultaneously for all $1\leq i,j\leq 4^n$.
\end{theorem}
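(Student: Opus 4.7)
The plan is to mirror the proof of \Cref{theorem:ptm-entries-from-choi-shadow-tomography}, replacing the Pauli-specific shadow tomography of \cite{huang2021information-theoretic} by the general state shadow tomography procedure (Aaronson, Badescu--O'Donnell). First, I would rewrite the transfer matrix entries as expectation values on the Choi state. Using the channel-to-Choi correspondence, for any $1\leq i,j \leq 4^n$,
\begin{equation}
    \left(R_\mathcal{N}^\mathcal{Q}\right)_{i,j}
    = \tr[Q_i\mathcal{N}(Q_j)]
    = 2^n\,\tr\!\left[(Q_j^\top\otimes Q_i)\,\tfrac{1}{2^n}\Gamma^\mathcal{N}\right],
\end{equation}
so estimating all TM entries reduces to estimating the $M=16^n$ expectation values $\tr[O_{i,j}\,\tfrac{1}{2^n}\Gamma^\mathcal{N}]$ of the Hermitian observables $O_{i,j}\coloneqq 2^n(Q_j^\top\otimes Q_i)$ on the $(2n)$-qubit Choi state, with success probability $\geq 1-\delta$ and accuracy $\varepsilon$.

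Second, I would control the operator norms of these observables so that the general shadow tomography bound can be applied. Since $\mathcal{Q}$ is Hermitian, $\norm{Q_j^\top}_\infty=\norm{Q_j}_\infty$, hence $\norm{O_{i,j}}_\infty \leq 2^n \max_{1\leq k\leq 4^n}\norm{Q_k}_\infty^2 \eqqcolon B$. Rescaling to $\tilde{O}_{i,j}\coloneqq O_{i,j}/B$ yields a family of $M=16^n$ observables of operator norm at most $1$, and an $\tilde{\varepsilon}$-accurate estimate of $\tr[\tilde{O}_{i,j}\,\tfrac{1}{2^n}\Gamma^\mathcal{N}]$ yields a $(\tilde{\varepsilon}B)$-accurate estimate of the corresponding TM entry. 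Thus it suffices to solve shadow tomography on the $(2n)$-qubit Choi state with these $M$ rescaled observables at accuracy $\tilde{\varepsilon}=\varepsilon/B$.

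Third, I would invoke a general shadow tomography guarantee with quantum memory for states of dimension $d$: for $M$ observables of norm at most $1$, accuracy $\tilde{\varepsilon}$, and confidence $1-\delta$, the known bounds (see \cite{aaronson2018shadow, badescu2021improved}) give a copy complexity of the form
\begin{equation}
    m_{\mathrm{ST}}
    = \tilde{\mathcal{O}}\!\left(\frac{\log^{2}(M)\log(d)+\log(d)\log(1/\delta)}{\tilde{\varepsilon}^{4}}\right).
\end{equation}
Substituting $d=4^n$, $M=16^n$, and $\tilde{\varepsilon}=\varepsilon/B$ gives $\log^2 M\cdot\log d=\Theta(n^3)$ and $\log d\cdot\log(1/\delta)=\Theta(n\log(1/\delta))$, so the total Choi-copy cost becomes $\tilde{\mathcal{O}}\!\left(\tfrac{n^3+n\log(1/\delta)}{\varepsilon^4}\cdot B^4\right)$, matching the claim once we expand $B^4=(2^n\max_k\norm{Q_k}_\infty^2)^4$. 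A union bound over the $16^n$ rescalings, absorbed into the $\log(1/\delta)$ term, ensures that with probability $\geq 1-\delta$ all $\hat{r}_{i,j}^\mathcal{Q}\coloneqq B\cdot\widehat{\tr[\tilde{O}_{i,j}(\cdot)]}$ satisfy the uniform error bound.

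The only real subtlety is the $2^n$ prefactor in the Choi reformulation: unlike the PTM case (where the $\tfrac{1}{\sqrt{2^n}}$ normalization of each single-qubit Pauli exactly cancels the $2^n$ dimension factor, giving $B=1$), generic orthonormal bases need not have this cancellation, which is precisely why the bound degrades by $B^4$ and why an unstructured $\mathcal{Q}$ can yield an exponential scaling unless $\max_k\norm{Q_k}_\infty^2=\mathcal{O}(2^{-n})$. Any future improvement of state shadow tomography will directly propagate to an improved version of this theorem through the above reduction.
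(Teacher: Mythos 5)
Your proposal is correct and follows essentially the same route as the paper's proof: rewrite the TM entries as Choi-state expectation values via $\tr[Q_i\mathcal{N}(Q_j)] = 2^n\tr[(Q_j^{\top}\otimes Q_i)\tfrac{1}{2^n}\Gamma^{\mathcal{N}}]$, normalize by $B = 2^n\max_{k}\lVert Q_k\rVert_\infty^2$, and invoke general state shadow tomography on the $(2n)$-qubit Choi state at accuracy $\varepsilon/B$ for the $16^n$ observables. The only cosmetic difference is that the paper packages the normalized observables as effect operators $E_{i,j}=\tfrac{1}{2}\bigl(\mathbbm{1}\otimes\mathbbm{1} + \tfrac{1}{\lVert Q_j\rVert_\infty\lVert Q_i\rVert_\infty}Q_j^{\top}\otimes Q_i\bigr)$ so that the shadow tomography guarantee of \cite{badescu2021improved}, stated for two-outcome measurements with $0\leq E\leq\mathbbm{1}$, applies verbatim---a shift you would also need, though it only affects constants.
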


Notice that the copy complexity upper bound in \Cref{theorem:tm-entries-from-choi-shadow-tomography} becomes polynomial in $n$ if we assume $\lVert Q_i \rVert_\infty = \tfrac{1}{\sqrt{2^n}}$ for all $1\leq i\leq 4^n$. This is, for example, satisfied in the case of the Pauli ONB.
In fact, it is easy to see from $\norm{Q_i}_2\leq\sqrt{2^n}\norm{Q_i}$ that $\max_{1\leq i\leq 4^n}\lVert Q_i \rVert = \tfrac{1}{\sqrt{2^n}}$ holds if and only if $\mathcal{Q}$ is an ONB of normalized unitaries.

\begin{proof}
    We first define a set of effect operators, to which we then later apply \cite{badescu2021improved}'s improvement of \cite{aaronson2018shadow}'s shadow tomography procedure.
    Namely, for each $1\leq i,j\leq 4^n$, we define 
    \begin{equation}
        E_{i,j}
        =\frac{1}{2}\left(\mathds{1}\otimes \mathds{1} + \frac{1}{\lVert Q_j \rVert_\infty} Q_j^{\top}\otimes \frac{1}{\lVert Q_i \rVert_\infty} Q_i\right)\, .
    \end{equation}
    Notice that, since every $Q_i$ is Hermitian, so is every $E_{i,j}$.
    Moreover, since each $\frac{1}{\lVert Q_j \rVert_\infty} Q_j^{\top}\otimes \frac{1}{\lVert Q_i \rVert_\infty} Q_i$ is normalized in operator norm, an application of the triangle inequality shows that $0\leq E_{i,j}\leq\mathbbm{1}\otimes\mathbbm{1}$ holds for all $1\leq i,j\leq 4^n$. Thus, every $E_{i,j}$ is a valid effect operator.
    Also, note that by definition, we have:
    \begin{align}
        \Tr [Q_i \mathcal{N}(Q_j)]
        &= 2^n \Tr \left[(Q_j^{\top} \otimes Q_i) \frac{1}{2^n}\Gamma^\mathcal{N}\right] \label{eq:general-tm-via-choi-expectation}\\
        &= 2^n \cdot \lVert Q_i \rVert_\infty\cdot \lVert Q_j \rVert_\infty\cdot \Tr \left[\left(\frac{1}{\lVert Q_i \rVert_\infty} Q_i^{\top} \otimes \frac{1}{\lVert Q_j \rVert_\infty} Q_j \right) \frac{1}{2^n}\Gamma^\mathcal{N}\right]\\
        &= 2^n \cdot \lVert Q_i \rVert_\infty \cdot \lVert Q_j \rVert_\infty\cdot 2\cdot\left( \Tr \left[E_{i,j} \frac{1}{2^n}\Gamma^\mathcal{N}\right] - \frac{1}{2} \right) \, .
    \end{align}
    Hence, an $\tilde{\varepsilon}$-accurate estimate for the expectation value $\Tr \left[E_{i,j} \tfrac{1}{2^n}\Gamma^\mathcal{N}\right]$ gives rise to a $\left(\tilde{\varepsilon}\cdot 2^{n+1} \cdot \max_{1\leq i\leq 4^n}\lVert Q_i \rVert_\infty^2\right)$-accurate estimate of the transfer matrix entry $\left( R_\mathcal{N}^\mathcal{Q}\right)_{i,j}=\Tr [Q_i \mathcal{N}(Q_j)]$.
    
    Let $\varepsilon >0$ and $\delta\in (0,1)$ be arbitrary, define $\tilde{\varepsilon}=\frac{\varepsilon}{2^{n+1} \cdot \max_{1\leq i\leq 4^n}\lVert Q_i \rVert_\infty^2}>0$.
    According to \cite[Theorem 1.4]{badescu2021improved}, if we apply the shadow tomography procedure of \cite{badescu2021improved} for the $M=16^n$ effect operators $E_{i,j}$, $1\leq i,j\leq 4^n$, and the $(2n)$-qubit quantum state $\tfrac{1}{2^n}\Gamma^\mathcal{N}$, we see that
    \small
    \begin{equation}
        \frac{\left(\log^2 (M) + \log\left(\tfrac{2n}{\delta \varepsilon}\right)\right) \cdot 2n}{\tilde{\varepsilon}^4}\cdot \mathcal{O}\left(\log\left(\tfrac{2n}{\delta \varepsilon}\right)\right)
        = \tilde{\mathcal{O}}\left(\frac{n^3 + n\log(1/\delta)}{\varepsilon^4}\cdot \left( 2^{n} \cdot \max_{1\leq i\leq 4^n}\lVert Q_i \rVert_\infty^2\right)^4\right)
    \end{equation}
    \normalsize
    copies of $\tfrac{1}{2^n}\Gamma^\mathcal{N}$ suffice to produce estimates $\tilde{r}_{i,j}$ satisfying $\left\lvert \tilde{r}_{i,j} - \Tr \left[E_{i,j} \tfrac{1}{2^n}\Gamma^\mathcal{N}\right]\right\rvert\leq\tilde{\varepsilon}$ for all $1\leq i,j\leq 4^n$, with success probability $\geq 1-\delta$.
    By our above reasoning, these simultaneously $\tilde{\varepsilon}$-accurate expectation value estimates now give us  simultaneously $\varepsilon$-accurate estimates 
    \begin{equation}
        \hat{r}_{i,j}^\mathcal{Q}
        = 2^n \cdot \lVert Q_i \rVert_\infty \cdot \lVert Q_j \rVert_\infty\cdot 2\cdot\left( \tilde{r}_{i,j} - \frac{1}{2} \right)
    \end{equation}
    of the TM entries, with success probability $\geq 1-\delta$.
\end{proof}

\begin{remark}
    There are two immediate extensions of \Cref{theorem:tm-entries-from-choi-shadow-tomography} that we want to highlight.
    First, we have phrased \Cref{theorem:tm-entries-from-choi-shadow-tomography} in a way suggesting that the learner knows the ONB $\mathcal{Q}$ in advance. This is not needed. As the shadow tomography procedure in \cite{badescu2021improved} allows for adaptively chosen measurements, also in our case the same copy complexity bound holds if the learner does not know $\mathcal{Q}$ in advance and the ONB elements are only revealed adaptively. Note, however, that the learner must know an upper bound on $\max_{1\leq i\leq 4^n}\lVert Q_i \rVert_\infty^2$ in advance to determine the suitable copy complexity.
    
    Second, while there is only a single ONB $\mathcal{Q}$ in \Cref{theorem:tm-entries-from-choi-shadow-tomography}, we can also allow for $N$ different ONBs $\mathcal{Q}^{(1)},\ldots,\mathcal{Q}^{(N)}$. This will increase the number of effect operators in the proof from $16^n$ to $N\cdot 16^n$, but since that number only enters the sample complexity bound for shadow tomography logarithmically, the final copy complexity will still be bounded as $\tilde{\mathcal{O}}\left(\tfrac{n^3 + n\log (N) + n \log(1/\delta)}{\varepsilon^4}\cdot \left( 2^{n} \cdot \max_{1\leq i\leq 4^n, 1\leq L\leq N}\lVert Q_i^{(L)} \rVert_\infty^2\right)^4\right)$. 
    
    In summary, we can sample-efficiently simultaneously learn TMs w.r.t.~exponentially many different adaptively chosen ONBs, if we know in advance how many different ONBs there are and that the maximum operator norm of any ONB element is $\tfrac{1}{\sqrt{2^n}}$.
    These two extensions also carry over to the next result, but we do not state the corresponding variants explicitly.
\end{remark}

\begin{remark}
    Comparing \Cref{theorem:tm-entries-from-choi-shadow-tomography} with \Cref{theorem:ptm-entries-from-choi-shadow-tomography}, we see that our bound in the Pauli-specific case provides an asymptotic copy complexity improvement by a factor of $n^2$ compared to simply applying the bound obtained when applying \Cref{theorem:tm-entries-from-choi-shadow-tomography} to the Pauli ONB (up to logarithmic factors). Thus, both \Cref{theorem:tm-entries-from-choi-shadow-tomography,theorem:ptm-entries-from-choi-shadow-tomography} give polynomial-in-$n$ copy complexity bounds for learning the Pauli transfer matrix, but the former gives an effectively cubic $n$-dependence, which the latter improves to linear.
    Moreover, as the general state shadow tomography results of \cite{aaronson2018shadow, badescu2021improved} do not come with bounds on classical memory or computational complexity, we also do not have such bounds in the case of a general TM, in contrast to the PTM case (\Cref{corollary:pauli-sparse-expectations-from-choi-shadow-tomography}).
\end{remark}

Combining \Cref{theorem:tm-entries-from-choi-shadow-tomography} with \Cref{corollary:tm-estimates-give-sparse-expectation-estimates}, we obtain:

\begin{corollary}\label{corollary:sample-complexity-predicting-sparse-expectation-values}
    There is a learning procedure with quantum memory from Choi access that, using
    \begin{equation}\label{eq:sample-complexity-predicting-sparse-expectation-values}
        m
        =\tilde{\mathcal{O}}\left(\frac{n^3 + n\log(1/\delta)}{\varepsilon^4}\cdot B^4 s_\rho^4 s_O^2 \cdot \left(\sqrt{2^{n}} \cdot \max_{1\leq i\leq 4^n}\lVert Q_i \rVert_\infty\right)^{12} \right)
    \end{equation}
    copies of the Choi state $\tfrac{1}{2^n}\Gamma^{\mathcal{N}}$ of an $n$-qubit channel $\mathcal{N}$, produces, with success probability $\geq 1-\delta$, a classical description $\hat{\mathcal{N}}$ of $\mathcal{N}$ from which all expectation values of the form $\operatorname{tr}[O\mathcal{N}(\rho)]$, where $O$ is an $n$-qubit observable with $s_O$-sparse $\mathcal{Q}$-basis expansion such that $\lVert O\rVert_2\leq B\sqrt{2^n}$ and $\rho$ is an $n$-qubit state with $s_\rho$-sparse $\mathcal{Q}$-basis expansion, can be predicted to accuracy $\varepsilon$.
\end{corollary}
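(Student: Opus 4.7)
The proof plan is a direct combination of the two building blocks that have already been developed in the excerpt: the Choi-state shadow tomography bound for transfer matrix entries (\Cref{theorem:tm-entries-from-choi-shadow-tomography}) and the translation from entrywise transfer matrix accuracy to sparse-expectation accuracy (\Cref{corollary:tm-estimates-give-sparse-expectation-estimates}). The classical description $\hat{\mathcal{N}}$ will simply be the list of estimates $\{\hat{r}_{i,j}^{\mathcal{Q}}\}_{1\leq i,j\leq 4^n}$ for all entries of the $\mathcal{Q}$-transfer matrix of $\mathcal{N}$, and the prediction rule for a query $(O,\rho)$ will be \Cref{eq:expectation-pauli-expansion-estimate} from the proof of \Cref{lemma:tm-entry-estimates-give-expectation-estimates-1-norm}.

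First, I would determine the transfer-matrix accuracy $\tilde{\varepsilon}$ that is sufficient to certify the desired expectation-value accuracy. By \Cref{corollary:tm-estimates-give-sparse-expectation-estimates}, simultaneously $\tilde{\varepsilon}$-accurate estimates of all $\mathcal{Q}$-TM entries yield an estimate of $\tr[O\mathcal{N}(\rho)]$ with error at most $\tilde{\varepsilon}\, s_\rho \sqrt{s_O}\,\lVert O\rVert_2\,\max_{i}\lVert Q_i\rVert_\infty$, whenever $\rho$ is $s_\rho$-sparse and $O$ is $s_O$-sparse w.r.t.~$\mathcal{Q}$. Using the assumed operator bound $\lVert O\rVert_2\leq B\sqrt{2^n}$, it suffices to pick
\begin{equation*}
\tilde{\varepsilon} \;=\; \frac{\varepsilon}{s_\rho \sqrt{s_O}\, B\, \sqrt{2^n}\, \max_{i}\lVert Q_i\rVert_\infty}\, .
\end{equation*}

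Second, I would invoke \Cref{theorem:tm-entries-from-choi-shadow-tomography} at this accuracy. That theorem produces, with success probability at least $1-\delta$, simultaneously $\tilde{\varepsilon}$-accurate estimates of all $16^n$ entries $(R_{\mathcal{N}}^{\mathcal{Q}})_{i,j}$ using
\begin{equation*}
m \;=\; \tilde{\mathcal{O}}\!\left(\frac{n^3 + n\log(1/\delta)}{\tilde{\varepsilon}^4}\cdot \bigl(2^n\,\max_{i}\lVert Q_i\rVert_\infty^2\bigr)^4\right)
\end{equation*}
copies of the Choi state $\tfrac{1}{2^n}\Gamma^{\mathcal{N}}$. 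Substituting the value of $\tilde{\varepsilon}$ chosen in the previous step gives a factor $s_\rho^4 s_O^2 B^4 (2^n)^2 \max_{i}\lVert Q_i\rVert_\infty^4$ in the numerator from $1/\tilde{\varepsilon}^4$, which multiplies the factor $(2^n)^4\max_{i}\lVert Q_i\rVert_\infty^8$ coming from the shadow-tomography overhead; the combined $(2^n)^6\max_{i}\lVert Q_i\rVert_\infty^{12}$ reassembles into $\bigl(\sqrt{2^n}\max_{i}\lVert Q_i\rVert_\infty\bigr)^{12}$, yielding exactly the copy complexity claimed in \Cref{eq:sample-complexity-predicting-sparse-expectation-values}.

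There is no real obstacle here beyond being careful about the bookkeeping of the three sources of dimension dependence (the $2^n$ from $\lVert O\rVert_2\leq B\sqrt{2^n}$, the $2^n$ built into the effect operators used in \Cref{theorem:tm-entries-from-choi-shadow-tomography}, and the $\max_{i}\lVert Q_i\rVert_\infty$ that appears both when passing from Choi expectation values to TM entries and when bounding $\lVert\vec{\alpha}\rVert_\infty$). Since neither $O$, $\rho$, nor the pair $(s_\rho, s_O, B)$ need be known at the time of building $\hat{\mathcal{N}}$ --- they may be disclosed only at prediction time, provided an a~priori upper bound on $s_\rho\sqrt{s_O}\,B$ is used to fix $\tilde{\varepsilon}$ --- the output description truly is a single classical object from which every sparse expectation value of interest can be computed using \Cref{eq:expectation-pauli-expansion-estimate}. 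Failure occurs only on the event that the shadow tomography step fails, which has probability at most $\delta$, and the stated accuracy guarantee then follows deterministically from \Cref{corollary:tm-estimates-give-sparse-expectation-estimates}.
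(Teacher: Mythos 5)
Your proposal is correct and is essentially identical to the paper's own proof: both choose $\tilde{\varepsilon}$ via \Cref{corollary:tm-estimates-give-sparse-expectation-estimates} (you substitute $\lVert O\rVert_2\leq B\sqrt{2^n}$ at the outset rather than at the end, which is immaterial) and then plug that accuracy into the copy complexity bound of \Cref{theorem:tm-entries-from-choi-shadow-tomography}. Your exponent bookkeeping reproducing the $\bigl(\sqrt{2^n}\max_{i}\lVert Q_i\rVert_\infty\bigr)^{12}$ factor is accurate.
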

\begin{proof}
    By \Cref{corollary:tm-estimates-give-sparse-expectation-estimates}, we can get the desired $\varepsilon$-accurate expectation value estimates from $\tilde{\varepsilon}$-accurate transfer matrix entry estimates if we set $\tilde{\varepsilon}=\tfrac{\varepsilon}{s_\rho\sqrt{s_O} \norm{O}_2 \max_{1\leq i\leq 4^n} \lVert Q_i\rVert_\infty}$.
    Plugging this accuracy $\tilde{\varepsilon}$ into the copy complexity bound of \Cref{theorem:tm-entries-from-choi-shadow-tomography} and inserting the assumed inequality $\lVert O\rVert_2\leq B\sqrt{2^n}$ now gives the stated copy complexity bound.
\end{proof}

Note that in particular any $n$-qubit observable $O$ with $\lVert O\rVert_\infty\leq B$ satisfies the Hilbert-Schmidt norm bound $\lVert O\rVert_2\leq B\sqrt{2^n}$. Thus, the corollary can be applied to bounded $\mathcal{Q}$-sparse observables.

Exactly analogous to the Pauli case, if we are promised in advance that the unknown quantum channel has a sparse transfer matrix, then we can combine \Cref{theorem:tm-entries-from-choi-shadow-tomography} with a cut-off argument to obtain the following variant of our result:

\begin{lemma}\label{lemma:sparse-channel-tm-entries-from-choi-shadow-tomography}
    There is a learning procedure with quantum memory from Choi access that, using $m=\tilde{\mathcal{O}}\left(\tfrac{n^3 + n \log(1/\delta)}{\varepsilon^4}\cdot \left( 2^{n} \cdot \max_{1\leq i\leq 4^n}\lVert Q_i \rVert_\infty^2\right)^4\right)$ copies of the Choi state $\tfrac{1}{2^n}\Gamma^{\mathcal{N}}$ of an unknown $n$-qubit channel $\mathcal{N}$ with $s_\mathcal{N}$-sparse TM w.r.t.~$\mathcal{Q}$, outputs, with success probability $\geq 1-\delta$, a list $\{((i_{s},j_{s}), \hat{r}_{i_s,j_s}^\mathcal{Q})\}_{s=1}^{S}$ of length $S$ such that
    \begin{enumerate}
        \item $(i,j)\not\in \{(i_{s},j_{s})\}_{s=1}^{S}$ $\Rightarrow$ $\lvert (R_\mathcal{N}^\mathcal{Q})_{i,j}\rvert\leq\varepsilon$,
        \item $(i,j)\in \{(i_{s},j_{s})\}_{s=1}^{S}$ $\Rightarrow$ $\lvert (R_\mathcal{N}^\mathcal{Q})_{i,j}\rvert > \nicefrac{\varepsilon}{6}$,
        \item $S\leq s_\mathcal{N}$, and
        \item $\lvert  \hat{r}_{i_s,j_s}^\mathcal{Q} - (R_\mathcal{N}^\mathcal{Q})_{i_s,j_s}\rvert\leq\varepsilon$ holds for all $1\leq s \leq S$.
    \end{enumerate}
\end{lemma}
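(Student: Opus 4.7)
The plan is to mirror the proof strategy used for the Pauli-specific analogue (\Cref{lemma:sparse-channel-ptm-entries-from-choi-shadow-tomography}), replacing the input \Cref{theorem:ptm-entries-from-choi-shadow-tomography} by the more general \Cref{theorem:tm-entries-from-choi-shadow-tomography}. In brief, I would first run the general TM-learning subroutine to obtain noisy but uniformly accurate estimates of all $\mathcal{Q}$-TM entries, then apply a hard threshold to kill the small entries, and finally argue that the thresholded list satisfies the four claimed properties.

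Concretely, I would first invoke \Cref{theorem:tm-entries-from-choi-shadow-tomography} with accuracy parameter $\tilde{\varepsilon} = \nicefrac{\varepsilon}{3}$ and failure probability $\delta$. Since the copy complexity of \Cref{theorem:tm-entries-from-choi-shadow-tomography} scales as $\tilde{\varepsilon}^{-4}$ and we are only rescaling $\varepsilon$ by a constant factor, the copy complexity remains $\tilde{\mathcal{O}}\bigl(\tfrac{n^3 + n \log(1/\delta)}{\varepsilon^4}\cdot (2^{n}\cdot \max_{i}\lVert Q_i\rVert_\infty^2)^4\bigr)$. Conditioning on the success event of that procedure, we obtain numbers $\tilde{r}_{i,j}^{\mathcal{Q}}$ with $\lvert \tilde{r}_{i,j}^{\mathcal{Q}} - (R_\mathcal{N}^\mathcal{Q})_{i,j}\rvert\leq \nicefrac{\varepsilon}{3}$ for all $1\leq i,j\leq 4^n$ simultaneously. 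Next, I define the thresholded estimates
\begin{equation}
    \hat{r}_{i,j}^\mathcal{Q}
    = \begin{cases} 0 \quad &\textrm{if } \lvert \tilde{r}_{i,j}^\mathcal{Q}\rvert\leq\nicefrac{\varepsilon}{2}, \\ \tilde{r}_{i,j}^\mathcal{Q} &\textrm{else,}\end{cases}
\end{equation}
and take $\{((i_{s},j_{s}), \hat{r}_{i_s,j_s}^\mathcal{Q})\}_{s=1}^{S}$ to be the list of non-zero entries of this thresholded matrix.

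Verifying the four properties is then a short triangle-inequality calculation, carried out exactly as in the proof of \Cref{lemma:sparse-channel-ptm-entries-from-choi-shadow-tomography}. Property 1 follows because $(i,j)$ being absent from the list means $\lvert\tilde{r}_{i,j}^\mathcal{Q}\rvert\leq\nicefrac{\varepsilon}{2}$, hence $\lvert (R_\mathcal{N}^\mathcal{Q})_{i,j}\rvert\leq \nicefrac{\varepsilon}{2}+\nicefrac{\varepsilon}{3}=\nicefrac{5\varepsilon}{6}\leq\varepsilon$. Property 2 is symmetric: if $(i,j)$ is present, then $\lvert\tilde{r}_{i,j}^\mathcal{Q}\rvert>\nicefrac{\varepsilon}{2}$, hence $\lvert (R_\mathcal{N}^\mathcal{Q})_{i,j}\rvert > \nicefrac{\varepsilon}{2}-\nicefrac{\varepsilon}{3}=\nicefrac{\varepsilon}{6}$. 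Property 3 is then an immediate consequence of Property 2 together with the assumption that the TM has at most $s_\mathcal{N}$ non-zero entries: each listed location has a strictly non-zero true TM entry, so $S\leq s_\mathcal{N}$. Property 4 holds because for listed indices $\hat{r}_{i_s,j_s}^\mathcal{Q}=\tilde{r}_{i_s,j_s}^\mathcal{Q}$, and the latter already satisfies the $\nicefrac{\varepsilon}{3}\leq\varepsilon$ accuracy bound.

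There is no real obstacle here, because the hard work has already been done in \Cref{theorem:tm-entries-from-choi-shadow-tomography}; this lemma is essentially a post-processing statement. The only mild subtlety worth flagging is that the procedure does \emph{not} require the learner to know the locations of the non-zero TM entries in advance — those locations are recovered as a by-product of the thresholding step, and the sparsity promise $s_\mathcal{N}$ only enters the analysis, not the algorithm (nor does $s_\mathcal{N}$ appear in the copy complexity). One should also note that, as was the case for \Cref{theorem:tm-entries-from-choi-shadow-tomography}, there is no accompanying bound on classical memory or post-processing time, since we are invoking the general shadow tomography results of \cite{aaronson2018shadow, badescu2021improved} as a black box.
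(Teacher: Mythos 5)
Your proof is correct and follows essentially the same route as the paper: invoke \Cref{theorem:tm-entries-from-choi-shadow-tomography} at accuracy $\nicefrac{\varepsilon}{3}$, threshold at $\nicefrac{\varepsilon}{2}$, and verify the four properties via the same triangle-inequality argument as in \Cref{lemma:sparse-channel-ptm-entries-from-choi-shadow-tomography}. The paper's own proof is exactly this one-line substitution, so nothing is missing.
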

\begin{proof}
    The proof is essentially identical to that of \Cref{corollary:sample-complexity-predicting-pauli-sparse-channel-expectation-values}, we only have to replace \Cref{theorem:ptm-entries-from-choi-shadow-tomography} by \Cref{theorem:tm-entries-from-choi-shadow-tomography}.
\end{proof}

Combining \Cref{lemma:sparse-channel-tm-entries-from-choi-shadow-tomography} with \Cref{lemma:tm-entry-estimates-give-sparse-channel-expectation-estimates}, we get a variant of \Cref{corollary:sample-complexity-predicting-sparse-expectation-values} in which we replace the sparsity assumption on states and observables by a sparsity assumption on the unknown channel:

\begin{corollary}\label{corollary:sample-complexity-predicting-sparse-channel-expectation-values}
    There is a learning procedure with quantum memory from Choi access that, using
    \begin{equation}
        m
        =\tilde{\mathcal{O}}\left(\frac{n^3 + n\log(1/\delta)}{\varepsilon^4}\cdot B^4 s_\mathcal{N}^4 \cdot \left(\sqrt{2^{n}} \cdot \max_{1\leq i\leq 4^n}\lVert Q_i \rVert_\infty\right)^{12} \right)
    \end{equation}
    copies of the Choi state $\tfrac{1}{2^n}\Gamma^{\mathcal{N}}$ of an $n$-qubit channel $\mathcal{N}$ with $s_\mathcal{N}$-sparse TM w.r.t.~$\mathcal{Q}$, produces, with success probability $\geq 1-\delta$, a classical description $\hat{\mathcal{N}}$ of $\mathcal{N}$ from which all expectation values of the form $\operatorname{tr}[O\mathcal{N}(\rho)]$, where $O$ is an $n$-qubit observable with $\lVert O\rVert_2\leq B\sqrt{2^n}$ and $\rho$ is an $n$-qubit state, can be predicted to accuracy $\varepsilon$.
\end{corollary}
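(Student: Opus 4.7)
The plan is to combine the sparsity-aware transfer matrix learning result, Lemma A.4, with the error-propagation bound from Lemma 3.3 that converts entry-wise TM accuracy into expectation value accuracy for channels with a sparse TM. Since the target statement allows arbitrary (bounded) observables $O$ and arbitrary states $\rho$, the sparsity must be exploited on the channel side, which is precisely the setup handled by Lemma 3.3, whose bound scales with $s_\mathcal{N}$, $\max_i \lVert Q_i\rVert_\infty$, and $\lVert O\rVert_2$.

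First, I would set the intermediate TM-accuracy target $\tilde{\varepsilon}$ so that after the error-propagation step we land on $\varepsilon$. Plugging into the bound from Lemma 3.3 and using the hypothesis $\lVert O\rVert_2 \leq B\sqrt{2^n}$ suggests the choice
\begin{equation}
\tilde{\varepsilon} \;=\; \frac{\varepsilon}{s_\mathcal{N}\,B\,\sqrt{2^n}\,\max_{1\leq i\leq 4^n}\lVert Q_i\rVert_\infty} .
\end{equation}
Second, I would invoke Lemma A.4 with accuracy $\tilde{\varepsilon}$ and failure probability $\delta$ to obtain the list $\{((i_s,j_s),\hat{r}^{\mathcal{Q}}_{i_s,j_s})\}_{s=1}^{S}$ with $S \leq s_\mathcal{N}$ of candidate non-zero TM entries, each estimated to accuracy $\tilde{\varepsilon}$; entries not in the list are certified to satisfy $\lvert (R_\mathcal{N}^\mathcal{Q})_{i,j}\rvert \leq \tilde{\varepsilon}$, so we may treat them as zero at the cost of only an additional $\tilde{\varepsilon}$-error per entry. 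This list is exactly the classical description $\hat{\mathcal{N}}$. Third, given any $O$ and $\rho$ of the promised form, I would apply Lemma 3.3 to $\hat{\mathcal{N}}$: expanding $\Tr[O\mathcal{N}(\rho)]$ in the $\mathcal{Q}$-basis and truncating to the at most $s_\mathcal{N}$ supported pairs gives an estimate with error at most $\tilde{\varepsilon}\cdot s_\mathcal{N}\cdot \max_i\lVert Q_i\rVert_\infty\cdot \lVert O\rVert_2 \leq \varepsilon$ by the choice of $\tilde{\varepsilon}$.

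Finally, I would substitute $\tilde{\varepsilon}$ into the copy complexity bound of Lemma A.4, which is $\tilde{\mathcal{O}}\bigl((n^3 + n\log(1/\delta))/\tilde{\varepsilon}^4 \cdot (2^n\max_i\lVert Q_i\rVert_\infty^2)^4\bigr)$. A direct calculation, collecting the $2^n$-factors from $\tilde{\varepsilon}^{-4}$ and from $(2^n\max\lVert Q_i\rVert_\infty^2)^4$, yields the stated complexity with the combined factor $\bigl(\sqrt{2^n}\cdot\max_i\lVert Q_i\rVert_\infty\bigr)^{12}$. There is no real technical obstacle here: the result is essentially a bookkeeping corollary, and all heavy lifting (the shadow-tomography step and the cut-off argument to certify the sparsity pattern) is already encapsulated in Lemma A.4, while Lemma 3.3 supplies the deterministic error-propagation argument. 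The only care needed is to apply the cut-off accuracy $\tilde{\varepsilon}$ uniformly so that the entries declared zero contribute at most $\tilde{\varepsilon}$ of error each, keeping the overall expectation-value error within the target $\varepsilon$.
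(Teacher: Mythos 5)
Your proposal is correct and follows essentially the same route as the paper: set $\tilde{\varepsilon}=\nicefrac{\varepsilon}{(s_\mathcal{N}\,\lVert O\rVert_2\,\max_i\lVert Q_i\rVert_\infty)}$ via the sparse-channel error-propagation lemma (\Cref{lemma:tm-entry-estimates-give-sparse-channel-expectation-estimates}), feed it into the sparse-TM cut-off learning lemma (\Cref{lemma:sparse-channel-tm-entries-from-choi-shadow-tomography}), and use $\lVert O\rVert_2\leq B\sqrt{2^n}$ to collect the $(\sqrt{2^n}\max_i\lVert Q_i\rVert_\infty)^{12}$ factor. Your extra remark that entries declared zero are only guaranteed to be $\tilde{\varepsilon}$-small (contributing at most an additional $\mathcal{O}(s_\mathcal{N}\tilde{\varepsilon})$ of absorbed error, rather than being exactly zero as the error-propagation lemma nominally assumes) is a point the paper's proof leaves implicit, and it is handled correctly.
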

\begin{proof}
    Recalling \Cref{lemma:tm-entry-estimates-give-sparse-channel-expectation-estimates}, we see that $\tilde{\varepsilon}$-accurate transfer matrix entry estimates are sufficient to get the desired $\varepsilon$-accurate expectation value estimates if we take $\tilde{\varepsilon}=\tfrac{\varepsilon}{s_\mathcal{N} \norm{O}_2 \max_{1\leq i\leq 4^n} \lVert Q_i\rVert_\infty}$.
    Plugging this accuracy $\tilde{\varepsilon}$ into the copy complexity bound of \Cref{lemma:sparse-channel-tm-entries-from-choi-shadow-tomography} and inserting the assumed inequality $\lVert O\rVert_2\leq B\sqrt{2^n}$ now gives the stated copy complexity bound.
\end{proof}

If $\mathcal{Q}$ is an ONB of normalized unitaries, then \Cref{corollary:sample-complexity-predicting-sparse-channel-expectation-values} describes a procedure for sample-efficiently predicting expectation values for arbitrary input states and arbitrary bounded output observables if the unknown quantum channel is promised to have a polynomially-sparse TM. 
Here, no a priori knowledge about the structure of the (sparse) TM is needed.

\section{Learning Tree Formalism for Learning Without Quantum Memory}\label{appendix:learning-tree-formalism}

Here, we recall (part of) the learning tree formalism introduced in \cite{chen2022exponential} to establish query complexity lower bounds for learners without quantum memory. 
More precisely, we need the following definition of the tree representation for learning a quantum channel without quantum memory:

\begin{definition}[Tree representation for learning channels {\cite[Definition 7.1]{chen2021exponential-arxiv}}]\label{definition:learning-tree-representation-channel}
Let $\mathcal{N}$ be an $n$-qubit quantum channel.
An algorithm for learning from $T$ queries to $\mathcal{N}$ without quantum memory can be represented as a rooted tree $\mathcal{T}$ of depth $T$ in which each node encodes all measurement outcomes the algorithm has received thus far.  
The tree has the following properties:
\begin{itemize}
    \item Each node $u$ has an associated probability $p^{\mathcal{N}}(u)$.
    \item The probability associated to the root $r$ of the tree equals $1$, $p^{\mathcal{N}}(r) = 1$.
    \item At each non-leaf node $u$, we prepare a state $\ket{\phi_u}\in (\mathbb{C}^2)^{\otimes n_{\mathrm{aux}}}\otimes (\mathbb{C}^2)^{\otimes n}$, apply the channel $\operatorname{id}_{\mathrm{aux}}\otimes \mathcal{N}$ to $\ketbra{\phi_u}{\phi_u}$, and measure a ($u$-dependent) rank-1 POVM $\{\sqrt{w_s^u 2^n 2^{n_{\mathrm{aux}}}} \ketbra{\psi_s^u}{\psi_s^u} \}_s\subseteq \mathcal{B}( (\mathbb{C}^2)^{\otimes n_{\mathrm{aux}}}\otimes (\mathbb{C}^2)^{\otimes n})$ to obtain a classical outcome $s$. 
    \item If $v$ is a child node of $u$, then the probability of the child node is related to the probability of its parent via
    \begin{equation}
        p^{\mathcal{N}}(v) 
        = p^{\mathcal{N}}(u) \cdot w_s^u 2^n 2^{n_{\mathrm{aux}}} \cdot \bra{\psi_v} (\operatorname{id}_{\mathrm{aux}}\otimes \mathcal{N})(\ketbra{\phi_u}{\phi_u}) \ket{\psi_v}.
    \end{equation}
    \item Each root-to-leaf path in $\mathcal{T}$ is of length $T$.
    If $\ell$ is a leaf node, then $p^{\mathcal{C}}(\ell)$ is the probability of the classical memory being in state $\ell$ after the learning procedure.
\end{itemize}
\end{definition}

While \Cref{definition:learning-tree-representation-channel} implicitly assumes all input states to be pure and all measurements to be rank-$1$ POVMs, these assumptions can be made without loss of generality via standard purification arguments and via simulating general POVMs with rank-$1$ POVMs. (Compare \cite[Lemma 4.8 and Remark 4.19]{chen2021exponential-arxiv} for the latter.)
Also, we can w.l.o.g.~assume the auxiliary system size $n_{\mathrm{aux}}$ to be the same for each query to the channel oracle.

\section{Additional Proofs}\label{appendix:proofs}

\subsection{Proofs of \Cref{lemma:lower-bound-ptm-learning-without-quantum-memory-from-choi-state-copies,lemma:lower-bound-ptm-learning-without-quantum-memory-from-doubly-stochastic-choi-state-copies}}

To prove \Cref{lemma:lower-bound-ptm-learning-without-quantum-memory-from-choi-state-copies,lemma:lower-bound-ptm-learning-without-quantum-memory-from-doubly-stochastic-choi-state-copies}, we make use of the following two variations of \cite[Theorem 5.5]{chen2021exponential-arxiv}:

\begin{lemma}[Shadow tomography lower bound for Choi states]\label{lemma:lower-bound-choi-shadow-tomography-general}
    Let $O_1,\ldots, O_M\in\mathcal{B}((\mathbb{C}^2)^{\otimes 2n})$ be $M\in 2\mathbb{N}$ traceless and self-adjoint $(2n)$-qubit observables satisfying the following properties:
    \begin{enumerate}[(i)]
        \item For every $1\leq i\leq M$, $\sigma (O_i)\subseteq\{-1,1\}$. In particular, $\norm{O_i} = 1$ holds for all $1\leq i\leq M$.
        \item For every $1\leq i\leq M$, $\tr_{n+1,\ldots, 2n}[O_i]=0$.
        \item For every $1\leq i\leq \nicefrac{M}{2}$, $O_i = - O_{i + \nicefrac{M}{2}}$.
    \end{enumerate}
    Any algorithm without quantum memory requires
    \begin{equation}
        \Omega \left( \frac{1}{\varepsilon^2 \delta (O_1,\ldots,O_M)} \right)
    \end{equation}
    copies of the Choi state $\tfrac{1}{2^n}\Gamma^\mathcal{N}$ of an unknown $n$-qubit channel $\mathcal{N}$ to simultaneously predict all expectation values $\tr[O_i \tfrac{1}{2^n}\Gamma]$, $1\leq i\leq M$, up to accuracy $\varepsilon$ with a success probability $\geq\tfrac{2}{3}$.
    Here, we defined
    \begin{equation}
        \delta (O_1,\ldots,O_M)
        \coloneqq \sup_{\ket{\phi}\in (\mathbb{C}^2)^{\otimes 2n}: \norm{\phi}=1} \frac{2}{M}\sum_{i=1}^{\nicefrac{M}{2}} \bra{\phi}O_i\ket{\phi}^2 .
    \end{equation}
\end{lemma}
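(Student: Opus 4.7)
My approach will be to adapt the proof of \cite[Theorem 5.5]{chen2021exponential-arxiv} from the state shadow tomography setting to the Choi state setting. The key observation is that conditions (i)--(iii) are precisely what is needed so that the standard Chen--Cotler--Huang--Li construction produces states inside the submanifold of valid channel Choi states, not just arbitrary density matrices.

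First, I will set up a two-point distinguishing task. For each $(i, \xi) \in \{1, \ldots, M/2\} \times \{\pm 1\}$, define $\rho_{i,\xi} := \tfrac{1}{2^{2n}}(\mathbbm{1} + 3\varepsilon \xi O_i)$. Condition (i) gives $\norm{O_i} = 1$, so for $\varepsilon \leq \tfrac{1}{3}$ each $\rho_{i,\xi}$ is positive semidefinite with unit trace. Condition (ii) gives $\tr_{n+1,\ldots,2n}[\rho_{i,\xi}] = \tfrac{1}{2^n}\mathbbm{1}$, so each $\rho_{i,\xi}$ is the Choi state $\tfrac{1}{2^n}\Gamma^{\mathcal{N}_{i,\xi}}$ of a trace-preserving channel $\mathcal{N}_{i,\xi}$. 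The distinguishing task is to decide, given $T$ copies, whether the state is $\rho_{\mathrm{mix}} := \tfrac{1}{2^{2n}}\mathbbm{1}$ or $\rho_{i,\xi}$ for uniformly random $(i,\xi)$. A solver for the shadow tomography problem reduces to this task: under $\rho_{\mathrm{mix}}$ every target $\tr[O_j\rho]$ vanishes, whereas under $\rho_{i,\xi}$ condition (i) yields $O_i^2 = \mathbbm{1}$, hence $\tr[O_i \rho_{i,\xi}] = 3\varepsilon \xi$, so an $\varepsilon$-accurate estimator must satisfy $|\hat{o}_i| \geq 2\varepsilon$. Thresholding $\max_j |\hat{o}_j|$ distinguishes the two cases with probability $\geq \tfrac{2}{3}$.

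Second, I will invoke the learning-tree / Le Cam machinery of \cite{chen2022exponential} (see \Cref{definition:learning-tree-representation-channel}, specialized to Choi access) to lower bound the distinguishing complexity. For a rank-$1$ POVM element $w_s \cdot 2^{2n}\ketbra{\psi_s}{\psi_s}$ at a node of the tree, the outcome probability is $p_s^{\mathrm{mix}} = w_s$ under $\rho_{\mathrm{mix}}$ and $p_s^{i,\xi} = w_s\bigl(1 + 3\varepsilon\xi \bra{\psi_s}O_i\ket{\psi_s}\bigr)$ under $\rho_{i,\xi}$. Averaging over $\xi$ kills the first-order deviation ($\mathbb{E}[\xi]=0$), while $\mathbb{E}[\xi^2]=1$ together with condition (iii) gives the single-step second-moment bound
\[
\mathbb{E}_{(i,\xi)}\!\left[\left(\frac{p_s^{i,\xi} - p_s^{\mathrm{mix}}}{p_s^{\mathrm{mix}}}\right)^{\!2}\right]
= 9\varepsilon^2 \cdot \frac{2}{M}\sum_{i=1}^{M/2} \bra{\psi_s}O_i\ket{\psi_s}^2
\leq 9\varepsilon^2 \,\delta(O_1, \ldots, O_M).
\]
A standard two-point Le Cam argument on the learning tree (in the style of \cite[Lemma 5.8]{chen2021exponential-arxiv}) then propagates this single-step bound into a total-variation bound between the leaf distributions under $\rho_{\mathrm{mix}}$ and the mixture $\mathbb{E}_{(i,\xi)}[\rho_{i,\xi}^{\otimes T}]$, forcing $T \cdot 9\varepsilon^2\,\delta(O_1,\ldots,O_M) \geq \Omega(1)$ for success probability $\geq \tfrac{2}{3}$, which yields the claimed lower bound.

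The main obstacle will be the likelihood-ratio bookkeeping along the tree, i.e.\ lifting the single-copy second-moment inequality into a global total-variation bound in the adaptive setting. The content genuinely beyond \cite{chen2021exponential-arxiv} is restricted to checking that conditions (i)--(iii) keep the hard ensemble inside the Choi-state manifold while leaving the flatness quantity $\delta$ untouched; the core Le Cam / learning-tree argument itself is inherited. A useful sanity check along the way will be that the uniform mixture $\mathbb{E}_{(i,\xi)}[\rho_{i,\xi}]$ equals $\rho_{\mathrm{mix}}$, which is immediate from $\mathbb{E}_\xi[\xi]=0$ and makes Le Cam directly applicable.
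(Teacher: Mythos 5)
Your proposal is correct and follows essentially the same route as the paper: both verify that conditions (i) and (ii) place the perturbed states $\tfrac{1}{2^{2n}}(\mathbbm{1}+3\varepsilon O_i)$ and the maximally mixed state inside the manifold of valid Choi states, reduce $\varepsilon$-accurate estimation to the many-versus-one distinguishing task, and then inherit the learning-tree/one-sided Le Cam analysis of \cite[Theorem 5.5]{chen2021exponential-arxiv} wholesale, with condition (iii) ensuring the first-order term vanishes and the second moment reproduces $\delta(O_1,\ldots,O_M)$. The paper's own proof is in fact terser than yours, consisting only of the Choi-state validity check plus a citation to that theorem.
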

\begin{proof}
    We only have to notice that the maximally mixed state on $2n$ qubits is a valid Choi state, and that the assumptions on the $O_i$ ensure that
    \begin{equation}
        \frac{1}{2^n}\Gamma_i
        \coloneqq \frac{\mathbbm{1}_{2}^{\otimes 2n} + 3\varepsilon O_i}{2^{2n}}
    \end{equation}
    is a valid Choi state for every $1\leq i\leq M$, as long as $\varepsilon < \nicefrac{1}{3}$.
    Now, the remainder of the proof is exactly the same as the proof of \cite[Theorem 5.5]{chen2021exponential-arxiv}.
\end{proof}

\begin{lemma}[Shadow tomography lower bound for Choi states of doubly-stochastic sparse channels]\label{lemma:lower-bound-doubly-stochastic-choi-shadow-tomography-general}
    Let $O_1,\ldots, O_M\in\mathcal{B}((\mathbb{C}^2)^{\otimes 2n})$ be $M\in 2\mathbb{N}$ traceless and self-adjoint $(2n)$-qubit observables satisfying the following properties:
    \begin{enumerate}[(i)]
        \item For every $1\leq i\leq M$, $\sigma (O_i)\subseteq\{-1,1\}$. In particular, $\norm{O_i} = 1$ holds for all $1\leq i\leq M$.
        \item For every $1\leq i\leq M$, $\tr_{1,\ldots, n}[O_i]=0$ and $\tr_{n+1,\ldots, 2n}[O_i]=0$.
        \item For every $1\leq i\leq \nicefrac{M}{2}$, $O_i = - O_{i + \nicefrac{M}{2}}$.
        \item For every $1\leq i\leq M$, $O_i$ has a $\mathcal{O}(1)$-sparse Pauli ONB expansion.
        \item For every $1\leq i\leq M$, $O_i$ is a tensor product of $2n$ single-qubit observables.
    \end{enumerate}
    Any algorithm without quantum memory requires
    \begin{equation}
        \Omega \left( \frac{1}{\varepsilon^2 \delta (O_1,\ldots,O_M)} \right)
    \end{equation}
    copies of the Choi state $\tfrac{1}{2^n}\Gamma^\mathcal{N}$ of an unknown doubly-stochastic and entanglement-breaking $n$-qubit channel $\mathcal{N}$ with $\mathcal{O}(1)$-sparse PTM to simultaneously predict all expectation values $\tr[O_i \tfrac{1}{2^n}\Gamma]$, $1\leq i\leq M$, up to accuracy $\varepsilon$ with a success probability $\geq\tfrac{2}{3}$.
    Here, we defined
    \begin{equation}
        \delta (O_1,\ldots,O_M)
        \coloneqq \sup_{\ket{\phi}\in (\mathbb{C}^2)^{\otimes 2n}: \norm{\phi}=1} \frac{2}{M}\sum_{i=1}^{\nicefrac{M}{2}} \bra{\phi}O_i\ket{\phi}^2 .
    \end{equation}
\end{lemma}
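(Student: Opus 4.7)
The plan is to follow the strategy of the preceding \Cref{lemma:lower-bound-choi-shadow-tomography-general} essentially verbatim, but verify that the ensemble of candidate Choi states we use is contained in the restricted class corresponding to doubly-stochastic, entanglement-breaking, $\mathcal{O}(1)$-PTM-sparse channels. Concretely, for $\varepsilon < \tfrac{1}{3}$ I would again consider the candidate Choi states
\begin{equation}
    \frac{1}{2^n}\Gamma_i \coloneqq \frac{\mathbbm{1}_2^{\otimes 2n} + 3\varepsilon O_i}{2^{2n}}, \qquad 1 \le i \le M,
\end{equation}
together with the maximally mixed reference state $\tfrac{1}{2^{2n}}\mathbbm{1}_2^{\otimes 2n}$, and reduce simultaneous expectation-value estimation to the many-vs-one distinguishing problem exactly as in the unrestricted case. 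The only new content is to check that every $\tfrac{1}{2^n}\Gamma_i$ is the Choi state of a channel in our restricted class, because the subsequent likelihood-ratio analysis from \cite{chen2021exponential-arxiv} only uses the structure of the states, not any property of the associated channels.

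To verify membership in the restricted class, I would go through the Choi–Jamiolkowski properties one by one. Positivity and the bound $\norm{3\varepsilon O_i} < 1$ (from (i)) give that $\tfrac{1}{2^n}\Gamma_i$ is a valid quantum state. Assumption (ii), i.e.\ that both partial traces of $O_i$ vanish, ensures that both marginals of $\tfrac{1}{2^n}\Gamma_i$ are maximally mixed, which by the Choi–Jamiolkowski properties recalled in the preliminaries is equivalent to the associated channel being trace-preserving \emph{and} unital, hence doubly stochastic. Assumption (iv), together with the fact that $\mathbbm{1}_2^{\otimes 2n}$ itself is a single Pauli string, implies that $\tfrac{1}{2^n}\Gamma_i$ has an $\mathcal{O}(1)$-sparse Pauli expansion, and by \Cref{example:sparse-channels} this is equivalent to the associated channel having an $\mathcal{O}(1)$-sparse PTM. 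The maximally mixed Choi state corresponds to the completely depolarizing channel, which trivially has all these properties.

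The one step that is not immediate is entanglement-breakingness, which I expect to be the main (though still easy) obstacle and where assumption (v) enters. Using (v) together with (i), I would write $O_i = P_i^{\mathrm{in}} \otimes P_i^{\mathrm{out}}$ as a tensor product across the input/output cut of two Hermitian operators each with spectrum in $\{-1,+1\}$, and expand them through their $\pm 1$ spectral projectors $\Pi_\pm^{\mathrm{in}}$ and $\Pi_\pm^{\mathrm{out}}$. A short calculation then yields
\begin{equation}
    \mathbbm{1}_2^{\otimes 2n} + 3\varepsilon O_i
    = (1+3\varepsilon)\bigl(\Pi_+^{\mathrm{in}} \otimes \Pi_+^{\mathrm{out}} + \Pi_-^{\mathrm{in}} \otimes \Pi_-^{\mathrm{out}}\bigr)
    + (1-3\varepsilon)\bigl(\Pi_+^{\mathrm{in}} \otimes \Pi_-^{\mathrm{out}} + \Pi_-^{\mathrm{in}} \otimes \Pi_+^{\mathrm{out}}\bigr),
\end{equation}
which is a conic combination of product operators for $\varepsilon < \tfrac{1}{3}$, hence $\tfrac{1}{2^n}\Gamma_i$ is separable across the input/output cut and the associated channel is entanglement-breaking by \cite{horodecki2003entanglement}.

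With the ensemble fully inside the restricted class, the remainder of the proof is identical to that of \Cref{lemma:lower-bound-choi-shadow-tomography-general}: assumption (iii) provides the $\pm$-pairing needed to turn simultaneous $\varepsilon$-estimation into a two-point distinguishing task between a uniformly random $\tfrac{1}{2^n}\Gamma_i$ and the maximally mixed Choi state, and the quantity $\delta(O_1,\ldots,O_M)$ governs the resulting Le Cam-type bound, yielding the claimed $\Omega(1/(\varepsilon^2 \delta(O_1,\ldots,O_M)))$ copy complexity lower bound without quantum memory.
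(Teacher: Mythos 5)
Your proposal is correct and takes essentially the same route as the paper: the paper likewise instantiates the ensemble $\tfrac{1}{2^n}\Gamma_i = (\mathbbm{1}_2^{\otimes 2n}+3\varepsilon O_i)/2^{2n}$ together with the maximally mixed reference state, uses (i)--(ii) for validity and double stochasticity, (iv) for the $\mathcal{O}(1)$-sparse PTM, (i) and (v) for separability (hence entanglement-breaking), and then defers the remaining distinguishing/likelihood-ratio argument to \cite[Theorem 5.5]{chen2021exponential-arxiv}. Your explicit $\pm1$ spectral-projector decomposition across the input/output cut is just a slightly more detailed rendering of the paper's observation that $O_i$ admits a product eigenbasis.
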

\begin{proof}
    We only have to notice that the maximally mixed state on $2n$ qubits is a valid Choi state of a doubly-stochastic and entanglement-breaking quantum channel with $\mathcal{O}(1)$-sparse PTM, and that the assumptions (i) and (ii) on the $O_i$ ensure that
    \begin{equation}
        \frac{1}{2^n}\Gamma_i
        \coloneqq \frac{\mathbbm{1}_{2}^{\otimes 2n} + 3\varepsilon O_i}{2^{2n}}
    \end{equation}
    is a valid Choi state of a doubly-stochastic quantum channel $\mathcal{N}_i$ for every $1\leq i\leq M$, as long as $\varepsilon < \nicefrac{1}{3}$. 
    Moreover, the channel $\mathcal{N}_i$ has a $\mathcal{O}(1)$-sparse PTM because of \Cref{eq:ptm-as-pauli-choi-expectation-value} and assumption (iv), and $\mathcal{N}_i$ is entanglement-breaking because its Choi state $\tfrac{1}{2^n}\Gamma_i$ is separable by assumptions (i) and (v), which can be seen by expanding $O_i$ in its eigenbasis consisting of tensor products of eigenvectors.
    Now, the remainder of the proof is exactly the same as the proof of \cite[Theorem 5.5]{chen2021exponential-arxiv}.
\end{proof}

To apply these two lemmata, we need to establish upper bounds on $\delta (O_1,\ldots,O_M)$ for sets of Pauli observables satisfying the respective requirements. We do so in the following two lemmata, with a reasoning quite similar to that of \cite[Lemma 5.8]{chen2021exponential-arxiv}:

\begin{lemma}\label{lemma:delta-choi-pauli-shadow-tomography}
    Consider the following $2\cdot 4^n (4^{n} -1)$ many $(2n)$-qubit Pauli observables
    \begin{align}
        P_1,\ldots,P_{4^n (4^{n} -1)} &\in \{+ \sigma_A\}_{A\in\{0,1,2,3\}^{2n}}\setminus\{+ \sigma_B\otimes \mathbbm{1}_2^{\otimes n}\}_{B\in\{0,1,2,3\}^n}\\
        P_{4^n (4^{n} -1) + 1},\ldots,P_{2\cdot 4^n (4^{n} -1)} &\in \{- \sigma_A\}_{A\in\{0,1,2,3\}^{2n}}\setminus\{- \sigma_B\otimes \mathbbm{1}_2^{\otimes n}\}_{B\in\{0,1,2,3\}^n}.
    \end{align}
    The set of observables $\{P_i\}_{i=1}^{2\cdot 4^n (4^{n} -1)}$, when suitably ordered, satisfies the conditions (i)-(iii) from \cref{lemma:lower-bound-choi-shadow-tomography-general}.
    Moreover, 
    \begin{equation}
        \delta (P_1,\ldots, P_{2\cdot 4^n (4^{n} -1)})
        = \frac{1}{4^n}.
    \end{equation}
\end{lemma}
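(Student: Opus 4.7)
My plan is to handle the two claims separately: first quickly verify the structural conditions (i)--(iii), then reduce the computation of $\delta$ to a Parseval-type identity for the Pauli basis and an elementary purity bound on a reduced state.

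For conditions (i)--(iii), I would order the list so that $P_i$ (for $1 \le i \le 4^n(4^n-1)$) is a positive signed Pauli $+\sigma_A$ with $A \notin \{B0^n : B \in \{0,1,2,3\}^n\}$, and $P_{i + 4^n(4^n-1)} = -P_i$. Condition (iii) is then immediate. Condition (i) follows because every non-identity tensor product of single-qubit Paulis has spectrum $\{-1,+1\}$. For (ii), note that $\tr_{n+1,\ldots,2n}[\sigma_A] = \tr_{n+1,\ldots,2n}[\sigma_{A_1\ldots A_n} \otimes \sigma_{A_{n+1}\ldots A_{2n}}] = 2^n \sigma_{A_1\ldots A_n}\, \delta_{A_{n+1}\ldots A_{2n},\, 0^n}$, which vanishes precisely because we excluded the Paulis of the form $\sigma_B \otimes \mathbbm{1}_2^{\otimes n}$.

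For the $\delta$ bound, fix a unit vector $\ket{\phi} \in (\mathbb{C}^2)^{\otimes 2n}$ and write $\rho_\phi^{(1)} = \tr_{n+1,\ldots,2n}[\ketbra{\phi}{\phi}]$ for the reduced state on the first $n$ qubits. Since $P_i$ and $P_{i+M/2}$ contribute the same square, I have
\begin{equation*}
    \frac{2}{M}\sum_{i=1}^{M/2}\bra{\phi}P_i\ket{\phi}^2
    = \frac{1}{4^n(4^n-1)}\!\!\sum_{A \in \{0,1,2,3\}^{2n}\setminus\{B0^n\}} \!\!\bra{\phi}\sigma_A\ket{\phi}^2 .
\end{equation*}
By Parseval's identity w.r.t.\ the normalized Pauli ONB applied to the pure state $\ketbra{\phi}{\phi}$, the unrestricted sum over all $A \in \{0,1,2,3\}^{2n}$ equals $2^{2n}\tr[\ketbra{\phi}{\phi}^2] = 4^n$. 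The contribution from the excluded indices is $\sum_{B \in \{0,1,2,3\}^n}\bra{\phi}(\sigma_B \otimes \mathbbm{1}_{2^n})\ket{\phi}^2 = \sum_B \tr[\sigma_B \rho_\phi^{(1)}]^2 = 2^n\tr[(\rho_\phi^{(1)})^2]$, again by Parseval applied to $\rho_\phi^{(1)}$.

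Combining these gives
\begin{equation*}
    \frac{2}{M}\sum_{i=1}^{M/2}\bra{\phi}P_i\ket{\phi}^2
    = \frac{4^n - 2^n\,\tr[(\rho_\phi^{(1)})^2]}{4^n(4^n-1)} .
\end{equation*}
Since $\tr[(\rho_\phi^{(1)})^2] \ge 1/2^n$, with equality iff $\rho_\phi^{(1)}$ is maximally mixed (i.e.\ $\ket{\phi}$ is maximally entangled across the bipartition), the supremum over $\ket{\phi}$ equals $(4^n - 1)/(4^n(4^n-1)) = 1/4^n$, which is attained e.g.\ by the canonical maximally entangled state. The only potential subtlety is correctly identifying the two Parseval identities and the purity lower bound; once these are in place, the claim follows algebraically, so I do not anticipate a serious obstacle here.
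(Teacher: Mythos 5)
Your proposal is correct and follows essentially the same route as the paper: both reduce the sum over the admissible Paulis to the full Pauli sum minus the contribution of the $\sigma_B\otimes\mathbbm{1}_2^{\otimes n}$ terms, arriving at $\bigl(4^n - 2^n\tr[(\rho_\phi^{(1)})^2]\bigr)/\bigl(4^n(4^n-1)\bigr)$ and optimizing the purity of the reduced state. The only cosmetic difference is that you invoke Parseval for the normalized Pauli ONB directly, whereas the paper uses the equivalent operator identity $\sum_B\sigma_B\otimes\sigma_B = 2^n\operatorname{SWAP}$ together with $\tr[\operatorname{SWAP}(X\otimes X)]=\tr[X^2]$.
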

\begin{proof}
    Each $P_i$ is a Pauli observable and thus clearly satisfies condition (i).
    Also, each $P_i$ satisfies condition (ii) because, for every $A\in\{0,1,2,3\}^{2n}$, $\tr_{n+1,\ldots,2n}[\sigma_A]\neq 0$ holds if and only if $A_{j}=0$ for all $n+1\leq j\leq 2n$.
    Condition (iii) is satisfied by construction (when ordering the observables suitably).
    It remains to compute $\delta (P_1,\ldots, P_{2\cdot 4^n (4^{n} -1)})$. To this end, let $\ket{\phi}\in (\mathbb{C}^2)^{\otimes 2n}$ be an arbitrary normalized pure state, $\norm{\phi}=1$. Then
    \small
    \begin{align}
        &\frac{1}{4^n (4^{n} -1)} \sum_{i=1}^{4^n (4^{n} -1)} \bra{\phi} P_i\ket{\phi}^2\\
        &= \frac{1}{4^n (4^{n} -1)} \tr\left[ \left(\sum_{i=1}^{4^n (4^{n} -1)} P_i\otimes P_i\right)(\ketbra{\phi}{\phi})^{\otimes 2} \right]\\
        &= \frac{1}{4^n (4^{n} -1)} \tr\Bigg[ 
        \Bigg(\Bigg(\sum_{A\in\{0,1,2,3\}^{2n}} \sigma_A\otimes \sigma_A\Bigg)\\
        &\hphantom{= \frac{1}{4^n (4^{n} -1)} \tr\Bigg[ 
        \Bigg(}-\Bigg(\sum_{B\in\{0,1,2,3\}^{n}} (\sigma_B\otimes \mathbbm{1}_2^{\otimes n})\otimes (\sigma_B\otimes \mathbbm{1}_2^{\otimes n})\Bigg) \Bigg)(\ketbra{\phi}{\phi})^{\otimes 2} \Bigg]\\
        &= \frac{1}{4^n (4^{n} -1)} \left(\tr\left[ 2^{2n} \operatorname{SWAP}_{1,\ldots,2n}(\ketbra{\phi}{\phi})^{\otimes 2}\right] - \tr\left[2^n \operatorname{SWAP}_{1,\ldots,n} (\tr_{n+1,\ldots,2n}[\ketbra{\phi}{\phi}])^{\otimes 2} \right]\right)\\
        &= \frac{1}{4^n (4^{n} -1)} \left(2^{2n} - 2^{n} \tr[\tr_{n+1,\ldots,2n}[\ketbra{\phi}{\phi}]^2 ]\right) .
    \end{align}
    \normalsize
    Here, we used the following well known identity:
    \begin{equation}\label{eq:swap-via-sum-of-paulis}
        \sum_{B\in\{0,1,2,3\}^{n}} \sigma_B\otimes\sigma_B
        = 2^n \operatorname{SWAP}_{1,\ldots,n} .
    \end{equation}
    This is a special case of the general representation of the swap operator in any orthonormal basis of operators, see for example \cite[Example 1.2]{wolf2012quantumchannels}.
    
    Accordingly, we get 
    \begin{align}
        \delta (P_1,\ldots, P_{2\cdot 4^n (4^{n} -1)})
        &= \sup_{\ket{\phi}\in (\mathbb{C}^2)^{\otimes 2 n}: \norm{\phi}=1} \frac{1}{4^n (4^{n} -1)} \sum_{i=1}^{4^n (4^{n} -1)} \bra{\phi} P_i\ket{\phi}^2\\
        &= \frac{2^{2n} - 2^{n}\cdot \tfrac{1}{2^n}}{4^n (4^{n} -1)}\\
        &= \frac{1}{4^n} ,
    \end{align}
    and the supremum is attained at $\ket{\phi}$ if and only if $\tr_{n+1,\ldots,2n}[\ketbra{\phi}{\phi}]$ is maximally mixed, which is equivalent to $\ket{\phi}$ being maximally entangled across the cut $\{1,\ldots,2n\}= \{1,\ldots,n\}\cup \{n+1,\ldots,2n\} $.
\end{proof}

\begin{lemma}\label{lemma:delta-doubly-stochastic-choi-pauli-shadow-tomography}
    Consider the following $2(4^{2n} - 2\cdot 4^n +1) = 2(4^n - 1)^2 $ many $(2n)$-qubit Pauli observables
    \begin{align}
        P_1,\ldots,P_{(4^n - 1)^2} &\in \{+ \sigma_A\}_{A\in\{0,1,2,3\}^{2n}:\exists 1\leq i\leq n, n+1\leq j\leq 2n: A_i\neq 0 \neq A_j}\\
        P_{4^{2n} - 2\cdot 4^n +2},\ldots,P_{2(4^n - 1)^2} &\in \{- \sigma_A\}_{A\in\{0,1,2,3\}^{2n}:\exists 1\leq i\leq n, n+1\leq j\leq 2n: A_i\neq 0 \neq A_j}.
    \end{align}
    The set of observables $\{P_i\}_{i=1}^{2 (4^n - 1)^2}$, when suitably ordered, satisfies the conditions (i)-(v) from \cref{lemma:lower-bound-doubly-stochastic-choi-shadow-tomography-general}.
    Moreover, 
    \begin{equation}
        \delta (P_1,\ldots, P_{2((4^n - 1)^2)})
        = \frac{1}{4^n-1}.
    \end{equation}
\end{lemma}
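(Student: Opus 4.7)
The plan is to first verify that the listed $\pm$-Paulis satisfy the five conditions of \Cref{lemma:lower-bound-doubly-stochastic-choi-shadow-tomography-general}, and then compute $\delta (P_1,\ldots,P_{2(4^n-1)^2})$ by an inclusion--exclusion refinement of the argument used in \Cref{lemma:delta-choi-pauli-shadow-tomography}. Conditions (i), (iv) and (v) are immediate, since each $P_i$ is $\pm$ a single Pauli string, hence a tensor product of single-qubit observables with spectrum $\{-1,+1\}$ and with $1$-sparse Pauli expansion. Condition (iii) holds by construction if we order the $P_i$ so that index $i$ and $i + (4^n-1)^2$ carry opposite signs of the same Pauli. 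For condition (ii) I use the standard fact that $\tr_{1,\ldots,n}[\sigma_A] = 0$ unless $A_1 = \cdots = A_n = 0$ (and symmetrically for the other half); our Paulis are excluded from both of those cases, so both partial traces vanish.

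For $\delta$, fix a normalized $\ket{\phi} \in (\mathbb{C}^2)^{\otimes 2n}$ and define $S \subseteq \{0,1,2,3\}^{2n}$ to be the index set of Paulis non-trivial on both halves, $T_1$ the ones trivial on qubits $1,\ldots,n$, and $T_2$ the ones trivial on qubits $n+1,\ldots,2n$. Then $\{0,1,2,3\}^{2n} = S \sqcup T_1 \cup T_2$ with $T_1 \cap T_2 = \{0^{2n}\}$, so by inclusion--exclusion
\begin{equation*}
    \sum_{A \in S} \sigma_A \otimes \sigma_A
    = \sum_{A\in\{0,1,2,3\}^{2n}} \sigma_A \otimes \sigma_A - \sum_{A \in T_1} \sigma_A \otimes \sigma_A - \sum_{A \in T_2} \sigma_A \otimes \sigma_A + \mathbbm{1}.
\end{equation*}
Applying the swap identity \Cref{eq:swap-via-sum-of-paulis} to each of the three sums (once on all $2n$ qubits, and once on each half of $n$ qubits) and then tracing against $(\ketbra{\phi}{\phi})^{\otimes 2}$, I obtain, with $\rho_1 = \tr_{n+1,\ldots,2n}[\ketbra{\phi}{\phi}]$ and $\rho_2 = \tr_{1,\ldots,n}[\ketbra{\phi}{\phi}]$,
\begin{equation*}
    \sum_{A \in S} \bra{\phi}\sigma_A\ket{\phi}^2
    = 2^{2n} - 2^n \tr[\rho_1^2] - 2^n \tr[\rho_2^2] + 1.
\end{equation*}

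Using the Schmidt-decomposition equality $\tr[\rho_1^2] = \tr[\rho_2^2]$ for a pure bipartite state, this simplifies to $2^{2n} - 2\cdot 2^n \tr[\rho_1^2] + 1$. Maximizing over $\ket{\phi}$ reduces to minimizing the purity $\tr[\rho_1^2]$, whose minimum over pure $\ket\phi$ equals $2^{-n}$ and is attained exactly at maximally entangled states across the bipartition $\{1,\ldots,n\} \mid \{n+1,\ldots,2n\}$. Substituting yields a maximum value of $2^{2n} - 2 + 1 = 4^n - 1$, so
\begin{equation*}
    \delta(P_1,\ldots,P_{2(4^n-1)^2})
    = \frac{1}{(4^n-1)^2} \cdot (4^n - 1)
    = \frac{1}{4^n - 1},
\end{equation*}
as claimed. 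I do not anticipate a real obstacle here: the only delicate step beyond \Cref{lemma:delta-choi-pauli-shadow-tomography} is correctly tracking the extra $T_1$-term in the inclusion--exclusion and noting that the two purities coincide for a pure bipartite $\ket\phi$, which is what ultimately makes the numerator collapse to $4^n - 1$ rather than $4^n$.
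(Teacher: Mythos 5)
Your proposal is correct and follows essentially the same route as the paper: the same inclusion--exclusion decomposition of $\sum_{A\in S}\sigma_A\otimes\sigma_A$ into the full sum minus the two half-trivial sums plus the identity, the same application of the swap identity, and the same conclusion that the supremum $2^{2n}-2^n(\tr[\rho_1^2]+\tr[\rho_2^2])+1 = 4^n-1$ is attained at maximally entangled $\ket{\phi}$. The only cosmetic difference is that you invoke $\tr[\rho_1^2]=\tr[\rho_2^2]$ via the Schmidt decomposition before minimizing, whereas the paper bounds the two purities separately; both give the identical result.
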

\begin{proof}
    Each $P_i$ is a Pauli tensor product observable and thus clearly satisfies conditions (i), (iv), and (v).
    Also, each $P_i$ satisfies condition (ii) because, for every $A\in\{0,1,2,3\}^{2n}$, $\tr_{n+1,\ldots,2n}[\sigma_A]\neq 0$ holds if and only if $A_{j}=0$ for all $n+1\leq j\leq 2n$, and $\tr_{1,\ldots, n}[\sigma_A]\neq 0$ holds if and only if $A_{i}=0$ for all $1\leq i\leq n$.
    Condition (iii) is satisfied by construction (when ordering the observables suitably).
    It remains to compute $\delta (P_1,\ldots, P_{2(4^n - 1)^2})$. To this end, let $\ket{\phi}\in (\mathbb{C}^2)^{\otimes 2n}$ be an arbitrary normalized pure state, $\norm{\phi}=1$. Then
    \small
    \begin{align}
        &\frac{1}{(4^n - 1)^2} \sum_{i=1}^{(4^n - 1)^2} \bra{\phi} P_i\ket{\phi}^2\\
        &= \frac{1}{(4^n - 1)^2} \tr\left[ \left(\sum_{i=1}^{(4^n - 1)^2} P_i\otimes P_i\right)(\ketbra{\phi}{\phi})^{\otimes 2} \right]\\
        &= \frac{1}{(4^n - 1)^2} \tr\Bigg[ 
        \Bigg(\Bigg(\sum_{A\in\{0,1,2,3\}^{2n}} \sigma_A\otimes \sigma_A\Bigg) -\Bigg(\sum_{B\in\{0,1,2,3\}^{n}} (\sigma_B\otimes \mathbbm{1}_2^{\otimes n})\otimes (\sigma_B\otimes \mathbbm{1}_2^{\otimes n})\Bigg) \\
        &\hphantom{= \frac{1}{(4^n - 1)^2} \tr\Bigg[} - \Bigg(\sum_{B\in\{0,1,2,3\}^{n}} (\mathbbm{1}_2^{\otimes n}\otimes \sigma_B)\otimes (\mathbbm{1}_2^{\otimes n}\otimes \sigma_B)\Bigg) + \mathbbm{1}_2^{\otimes 2n} \Bigg)(\ketbra{\phi}{\phi})^{\otimes 2} \Bigg]\\
        &= \frac{1}{(4^n - 1)^2} \big(\tr\left[ 2^{2n} \operatorname{SWAP}_{1,\ldots,2n}(\ketbra{\phi}{\phi})^{\otimes 2}\right] - \tr\left[2^n \operatorname{SWAP}_{1,\ldots,n} (\tr_{n+1,\ldots,2n}[\ketbra{\phi}{\phi}])^{\otimes 2} \right]\\
        &\hphantom{=\frac{1}{(4^n - 1)^2} \big(} - \tr\left[2^n \operatorname{SWAP}_{n+1,\ldots,2n} (\tr_{1,\ldots,n}[\ketbra{\phi}{\phi}])^{\otimes 2} \right] + 1\big)\\
        &= \frac{1}{(4^n - 1)^2} \left(2^{2n} - 2^{n} (\tr[\tr_{n+1,\ldots,2n}[\ketbra{\phi}{\phi}]^2 ] + \tr[\tr_{1,\ldots,n}[\ketbra{\phi}{\phi}]^2 ]) + 1 \right) ,
    \end{align}
    \normalsize
    where we again used \Cref{eq:swap-via-sum-of-paulis} multiple times in the computation.
    
    Accordingly, we get 
    \begin{align}
        \delta (P_1,\ldots, P_{2(4^{n} -1)^2})
        &= \sup_{\ket{\phi}\in (\mathbb{C}^2)^{\otimes 2 n}: \norm{\phi}=1} \frac{1}{(4^n - 1)^2} \sum_{i=1}^{(4^n - 1)^2} \bra{\phi} P_i\ket{\phi}^2\\
        &= \frac{2^{2n} - 2^{n}\cdot \tfrac{2}{2^n} + 1}{(4^{n} -1)^2}\\
        &= \frac{1}{4^n-1} ,
    \end{align}
    and the supremum is attained at $\ket{\phi}$ if and only if both $\tr_{1,\ldots,n}[\ketbra{\phi}{\phi}]$ and $\tr_{n+1,\ldots,2n}[\ketbra{\phi}{\phi}]$ are maximally mixed, which is equivalent to $\ket{\phi}$ being maximally entangled across the cut $\{1,\ldots,2n\}= \{1,\ldots,n\}\cup \{n+1,\ldots,2n\} $.
\end{proof}

We are now in a position to complete the proofs of \Cref{lemma:lower-bound-ptm-learning-without-quantum-memory-from-choi-state-copies,lemma:lower-bound-ptm-learning-without-quantum-memory-from-doubly-stochastic-choi-state-copies}:

\begin{proof}[Proof of \Cref{lemma:lower-bound-ptm-learning-without-quantum-memory-from-choi-state-copies,lemma:lower-bound-ptm-learning-without-quantum-memory-from-doubly-stochastic-choi-state-copies}]
    Recalling from the proof of \Cref{theorem:ptm-entries-from-choi-shadow-tomography} that estimating all entries of $R_\mathcal{N}^\mathcal{P}$ up to accuracy $\varepsilon$ is equivalent to estimating all expectation values $\tr[\sigma_A \tfrac{1}{2^n}\Gamma^{\mathcal{N}}]$, $A\in\{0,1,2,3\}^{2n}$, up to accuracy $\varepsilon$, we can now combine \Cref{lemma:lower-bound-choi-shadow-tomography-general,lemma:delta-choi-pauli-shadow-tomography} to obtain  \Cref{lemma:lower-bound-ptm-learning-without-quantum-memory-from-choi-state-copies}, and combine \Cref{lemma:lower-bound-doubly-stochastic-choi-shadow-tomography-general,lemma:delta-doubly-stochastic-choi-pauli-shadow-tomography} to obtain \Cref{lemma:lower-bound-ptm-learning-without-quantum-memory-from-doubly-stochastic-choi-state-copies}.
\end{proof}

\subsection{Proof of \Cref{theorem:lower-bound-ptm-learning-without-quantum-memory-from-channel-queries,theorem:lower-bound-ptm-learning-without-quantum-memory-from-doubly-stochastic-channel-queries}}

Next, we prove \Cref{theorem:lower-bound-ptm-learning-without-quantum-memory-from-channel-queries,theorem:lower-bound-ptm-learning-without-quantum-memory-from-doubly-stochastic-channel-queries}. To this end, we first establish analogues of \Cref{lemma:lower-bound-choi-shadow-tomography-general,lemma:lower-bound-doubly-stochastic-choi-shadow-tomography-general} for the case of sequential channel access. 
The proof uses the notation introduced in \Cref{definition:learning-tree-representation-channel}.

\begin{lemma}\label{lemma:lower-bound-choi-shadow-tomography-general-channel-access}
    Let $O_1,\ldots, O_M\in\mathcal{B}((\mathbb{C}^2)^{\otimes 2n})$ be $M\in 2\mathbb{N}$ traceless and self-adjoint $(2n)$-qubit observables satisfying the following properties:
    \begin{enumerate}[(i)]
        \item For every $1\leq i\leq M$, $\sigma (O_i)\subseteq\{-1,1\}$. In particular, $\norm{O_i} = 1$ holds for all $1\leq i\leq M$.
        \item For every $1\leq i\leq M$, $\tr_{n+1,\ldots, 2n}[O_i]=0$.
        \item For every $1\leq i\leq \nicefrac{M}{2}$, $O_i = - O_{i + \nicefrac{M}{2}}$.
    \end{enumerate}
    Any algorithm for learning without quantum memory requires
    \begin{equation}
        \Omega \left( \frac{1}{\varepsilon^2 \Delta (O_1,\ldots,O_M)} \right)
    \end{equation}
    queries to an $n$-qubit channel $\mathcal{N}$ to simultaneously predict all expectation values $\tr[O_i \tfrac{1}{2^n}\Gamma]$, $1\leq i\leq M$, up to accuracy $\varepsilon$ with a success probability $\geq\tfrac{2}{3}$.
    Here, we defined
    \small
    \begin{equation}
        \Delta (O_1,\ldots,O_M)
        \coloneqq  \sup_{\substack{n_{\mathrm{aux}}\in\mathbb{N}\\ \ket{\phi}\in (\mathbb{C}^2)^{\otimes n_{\mathrm{aux}}}\otimes (\mathbb{C}^2)^{\otimes n}: \norm{\phi}=1\\ \rho\in \mathcal{S}((\mathbb{C}^2)^{\otimes n_{\mathrm{aux}}}\otimes (\mathbb{C}^2)^{\otimes n})}}\frac{2}{M}\sum_{i-1}^{\nicefrac{M}{2}} \left(\frac{\bra{\phi} \tr_{\mathrm{in}}[(\rho^{\top_{\mathrm{in}}}\otimes\mathbbm{1}_{\mathrm{out}})(\mathbbm{1}_{\mathrm{aux}}\otimes O_i)] \ket{\phi}}{\bra{\phi} (\rho_{\mathrm{aux}} \otimes \mathbbm{1}_{\mathrm{out}}) \ket{\phi}}\right)^2.
    \end{equation}
    \normalsize
\end{lemma}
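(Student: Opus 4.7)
The plan is to extend the Choi-access argument of \Cref{lemma:lower-bound-choi-shadow-tomography-general} to sequential channel access by working within the learning-tree framework of \Cref{definition:learning-tree-representation-channel}.

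First, I will set up a many-versus-one hypothesis testing problem. Let $\mathcal{N}_0$ denote the fully depolarizing $n$-qubit channel, whose Choi state is $\mathbbm{1}/2^{2n}$, and for each $1\leq i\leq M$ let $\mathcal{N}_i$ be the channel with Choi state $(\mathbbm{1} + 3\varepsilon O_i)/2^{2n}$. Condition (i) guarantees positivity of each Choi state for $\varepsilon < \nicefrac{1}{3}$, while condition (ii) enforces $\tr_{\mathrm{out}}[\tfrac{1}{2^n}\Gamma^{\mathcal{N}_i}] = \mathbbm{1}_{\mathrm{in}}/2^n$, so every $\mathcal{N}_i$ is a valid quantum channel. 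Since $O_i$ is traceless with $O_i^2 = \mathbbm{1}$, we have $\tr[O_i \tfrac{1}{2^n}\Gamma^{\mathcal{N}_i}] = 3\varepsilon$ while $\tr[O_i \tfrac{1}{2^n}\Gamma^{\mathcal{N}_0}] = 0$, so any learner providing simultaneously $\varepsilon$-accurate shadow tomography estimates with success probability $\geq 2/3$ would distinguish $\mathcal{N}_0$ from the uniform mixture $\mathcal{N}_{\mathrm{mix}} = \tfrac{1}{M}\sum_{i=1}^M \mathcal{N}_i$. By Le Cam's two-point inequality, this forces
\begin{equation*}
    d_{\mathrm{TV}}(P_T^{\mathcal{N}_0}, P_T^{\mathrm{mix}}) \geq 1/3
\end{equation*}
for the corresponding leaf distributions of any $T$-query tree learner.

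Next, I will compute the per-edge likelihood ratio along the learning tree. At a non-leaf node $u$ the learner prepares a pure state $\ket{\phi_u}$ on aux+in and performs a rank-$1$ POVM with outcomes indexed by children $v$ and post-measurement vectors $\ket{\psi_v}$. Using the Choi representation $(\operatorname{id}_{\mathrm{aux}}\otimes\mathcal{N})(X) = 2^n \tr_{\mathrm{in}}[(X^{\top_{\mathrm{in}}}\otimes\mathbbm{1}_{\mathrm{out}})(\mathbbm{1}_{\mathrm{aux}}\otimes \tfrac{1}{2^n}\Gamma^{\mathcal{N}})]$, substituting in our Choi ansatz, and cancelling the shared factor $\tfrac{1}{2^n}\bra{\psi_v}(\tr_{\mathrm{in}}[\ketbra{\phi_u}{\phi_u}]\otimes\mathbbm{1}_{\mathrm{out}})\ket{\psi_v}$, a short calculation yields
\begin{equation*}
    \frac{\bra{\psi_v}(\operatorname{id}_{\mathrm{aux}}\otimes\mathcal{N}_i)(\ketbra{\phi_u}{\phi_u})\ket{\psi_v}}{\bra{\psi_v}(\operatorname{id}_{\mathrm{aux}}\otimes\mathcal{N}_0)(\ketbra{\phi_u}{\phi_u})\ket{\psi_v}} = 1 + 3\varepsilon\, R_i(u,v),
\end{equation*}
where $R_i(u,v)$ is exactly the ratio appearing in the definition of $\Delta(O_1,\ldots,O_M)$ with the substitutions $\rho \leftarrow \ketbra{\phi_u}{\phi_u}$ and $\ket{\phi} \leftarrow \ket{\psi_v}$. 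Hence the leaf likelihood ratio factorizes as $Z_i(\ell) \coloneqq p^{\mathcal{N}_i}(\ell)/p^{\mathcal{N}_0}(\ell) = \prod_{e \in \mathrm{path}(\ell)} (1 + 3\varepsilon R_i^e)$, and the edge-wise second-moment bound $\tfrac{1}{M}\sum_{i=1}^M R_i(u,v)^2 \leq \Delta$ holds uniformly over all adaptive choices of $\ket{\phi_u}$ and $\ket{\psi_v}$.

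Finally, I will bound the $\chi^2$ divergence $\chi^2(P_T^{\mathrm{mix}}\|P_T^{\mathcal{N}_0}) = \mathbb{E}_{\ell\sim\mathcal{N}_0}[(\bar{Z}(\ell)-1)^2]$, where $\bar{Z}(\ell) \coloneqq \tfrac{1}{M}\sum_i Z_i(\ell)$. Condition (iii) supplies the pairing $R_{i+M/2}^e = -R_i^e$, which kills all odd-degree contributions in the product expansion, yielding $\bar{Z}(\ell) = \tfrac{2}{M}\sum_{i=1}^{M/2}\sum_{\text{even } S \subseteq \mathrm{path}(\ell)}(3\varepsilon)^{|S|}\prod_{e\in S}R_i^e$. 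Rewriting $\mathbb{E}_\ell[\bar{Z}^2] = \mathbb{E}_{i,j}\sum_\ell \tfrac{p^{\mathcal{N}_i}(\ell)p^{\mathcal{N}_j}(\ell)}{p^{\mathcal{N}_0}(\ell)}$ and using the edge identity $\tfrac{p^{\mathcal{N}_i}(v|u)p^{\mathcal{N}_j}(v|u)}{p^{\mathcal{N}_0}(v|u)} = p^{\mathcal{N}_0}(v|u)(1+3\varepsilon R_i^e)(1+3\varepsilon R_j^e)$, I will set up a top-down tree recursion that propagates the second-moment bound $\tfrac{1}{M^2}\sum_{i,j}R_i^e R_j^e \leq \Delta$ through each layer, eventually yielding $\chi^2 \leq \exp(c\varepsilon^2 T \Delta) - 1$ for an absolute constant $c$. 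Combining with $d_{\mathrm{TV}}\leq \tfrac{1}{2}\sqrt{\chi^2}$ and the lower bound $\geq 1/3$ then forces $T = \Omega(1/(\varepsilon^2\Delta))$.

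The main obstacle is this final recursion: because the $\ket{\phi_u}$ at deeper nodes are chosen adaptively based on observed history, the edge quantities $R_i^e$ across different depths are correlated through the tree, so the naive independence-style arguments available for the non-adaptive Choi-access case do not apply. The two essential ingredients for overcoming this are condition (iii), which via pairing leaves only even-degree combinations of edge perturbations to control, and the fact that $\Delta$ is defined as a supremum over all $n_{\mathrm{aux}}$ and all admissible $(\ket{\phi},\rho)$, which ensures the edge-wise bound $\tfrac{1}{M}\sum_i (R_i^e)^2\leq \Delta$ remains valid no matter how cleverly the learner exploits the history to select the next input and measurement.
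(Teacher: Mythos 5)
Your setup is exactly the paper's: the same many-versus-one reduction (maximally mixed Choi state versus the ensemble $\tfrac{1}{2^{2n}}(\mathbbm{1}+3\varepsilon O_i)$, with conditions (i) and (ii) guaranteeing valid Choi states), and the same per-edge likelihood ratio $1+3\varepsilon R_i(u,v)$ with $R_i$ identified as the ratio appearing in the definition of $\Delta$. The divergence, and the gap, is in the last step.

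The $\chi^2$ route you sketch does not go through as stated. The quantity $\mathbb{E}_\ell[\bar{Z}^2]=\mathbb{E}_{i,j}\bigl[\sum_\ell p^{\mathcal{N}_0}(\ell)\prod_e(1+3\varepsilon R_i^e)(1+3\varepsilon R_j^e)\bigr]$ has the expectation over the hypothesis pair $(i,j)$ \emph{outside} the product over the path, so the per-edge averaged bound cannot be ``propagated through each layer'': the top-down recursion must be run for a \emph{fixed} $(i,j)$, for which the edge factor is controlled only by $\sup_e\lvert R_i^e R_j^e\rvert$, which can be as large as $O(M\Delta)$ rather than $O(\Delta)$ (the supremum in $\Delta$ bounds only the average over $i$, not individual terms). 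Moreover, the bound you propose to propagate, $\tfrac{1}{M^2}\sum_{i,j}R_i^e R_j^e\leq\Delta$, is vacuous: by the pairing (iii) this quantity equals $\bigl(\tfrac{1}{M}\sum_i R_i^e\bigr)^2=0$. Making a second-moment argument work for adaptive learners requires substantially more than what is written, and you correctly flag this as ``the main obstacle'' without resolving it.

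The paper avoids the issue entirely by proving a \emph{one-sided, pointwise} likelihood-ratio lower bound rather than a two-sided $\chi^2$ bound. For every leaf $\ell$ one writes
\begin{equation*}
    \frac{\mathbb{E}_i[p^{\mathcal{N}_i}(\ell)]}{p^{\mathcal{N}_0}(\ell)}
    = \mathbb{E}_i\left[\exp\left(\sum_{t=1}^T\log\bigl(1+3\varepsilon R_i^{e_t}\bigr)\right)\right]
    \geq \exp\left(\sum_{t=1}^T\mathbb{E}_i\left[\log\bigl(1+3\varepsilon R_i^{e_t}\bigr)\right]\right)
\end{equation*}
by Jensen's inequality; the pairing (iii) combines the $i$ and $i+\nicefrac{M}{2}$ terms into $\tfrac{1}{M}\sum_{i=1}^{M/2}\log\bigl(1-9\varepsilon^2(R_i^{e_t})^2\bigr)$, which is bounded below via $\log(1-x)\geq -2x$ and the supremum defining $\Delta$, giving $\geq\exp(-9T\varepsilon^2\Delta)\geq 1-9T\varepsilon^2\Delta$. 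Jensen is what decouples the edges: the expectation over $i$ moves inside the sum over time steps, so each edge is bounded separately by $\Delta$ no matter how the learner adapts. The one-sided version of Le Cam's two-point method (\cite[Lemma 5.4]{chen2021exponential-arxiv}) then converts this single lower bound directly into the query complexity bound, with no need for any total-variation or $\chi^2$ control. You should replace your final step with this argument.
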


Analogously to how $\delta (O_1,\ldots,O_M)$ was interpreted in \cite{chen2022exponential} as characterizing the hardness of state shadow tomography without quantum memory for the observables $O_1,\ldots,O_M$, we can view $\Delta (O_1,\ldots,O_M)$ as a quantity characterizing how challenging it is to perform Choi state shadow tomography without quantum memory for these observables, even given query access to the corresponding channel. In particular, if $O_1,\ldots,O_M$ are chosen to be Pauli observables $\sigma_{B_1}\otimes\sigma_{A_1},\ldots,\sigma_{B_M}\otimes\sigma_{A_M}$, then, via \Cref{eq:ptm-as-pauli-choi-expectation-value}, the quantity $\Delta (O_1,\ldots,O_M)$ expresses the hardness of simultaneously predicting the PTM entries $(\mathcal{R}^\mathcal{P}_\mathcal{N})_{A_i, B_i}$ when given general access to $\mathcal{N}$, but without a quantum memory being available.

\begin{proof}
    We first show that any algorithm that can solve the learning task can also solve a certain many-versus-one distinguishing task. Then we show the query complexity lower bound for the latter distinghuishing task.
    As in the proof of \Cref{lemma:lower-bound-choi-shadow-tomography-general}, consider the states 
    \begin{equation}
        \frac{1}{2^n}\Gamma_i
        \coloneqq \frac{\mathbbm{1}_{2}^{\otimes 2n} + 3\varepsilon O_i}{2^{2n}},\quad 1\leq i\leq M,
    \end{equation}
    which are valid Choi states by conditions (i) and (ii). Notice that also the maximally mixed state $\tfrac{1}{2^{2n}}\mathbbm{1}_2^{\otimes 2n}$ is a valid Choi state.
    Moreover, our assumptions on the observables $O_i$ imply both that $\tr[O_i \frac{1}{2^n}\Gamma_i] = 3\varepsilon$ and that $\tr[O_i \tfrac{1}{2^{2n}}\mathbbm{1}_2^{\otimes 2n}]=0$ for all $1\leq i\leq M$.
    Therefore, any algorithm that can predict all the expectation values $\tr[O_i \tfrac{1}{2^n}\Gamma]$, $1\leq i\leq M$, up to accuracy $\varepsilon$ with success probability $\geq\tfrac{2}{3}$ immediately gives rise to an algorithm that, with success probability $\geq\tfrac{2}{3}$, solves the many-versus-one distinguishing task between the unknown channel having maximally mixed Choi state $\tfrac{1}{2^{2n}}\mathbbm{1}_2^{\otimes 2n}$ and the unknown channel being drawn uniformly at random from the $M$ channels $\mathcal{N}_i$ with Choi states $\frac{1}{2^n}\Gamma_i$.
    Hence, any query complexity lower bound for achieving the latter task without a quantum memory directly implies the same query complexity lower bound for the former task under the same restrictions on the algorithm. Thus, we now establish a query complexity lower bound for performing the many-versus-one distinguishing task without quantum memory.
    
    Consider the tree representation $\mathcal{T}$ of the learning algorithm. The probability of arriving at a leaf $\ell$ when having access to $\mathcal{N}$ is
    \begin{equation}
        p^{\mathcal{N}}(\ell)
        = \prod_{t=1}^T w_{v_t} 2^{n} 2^{n_{\mathrm{aux}}} \bra{\psi_{v_t}}(\operatorname{id}_{\mathrm{aux}}\otimes \mathcal{N})(\ketbra{\phi_{v_{t-1}}}{\phi_{v_{t-1}}})\ket{\psi_{v_t}} .
    \end{equation}
    Using that, for every bipartite state $\rho\in\mathcal{S}( (\mathbb{C}^2)^{n_{\mathrm{aux}}} \otimes (\mathbb{C}^2)^{n_{\mathrm{in}}})$, we have
    \begin{equation}
        (\operatorname{id}_{\mathrm{aux}}\otimes \mathcal{N})(\rho)
        = 2^n \tr_{\mathrm{in}}\left[ (\rho^{\top_{\mathrm{in}}}\otimes\mathbbm{1}_{\mathrm{out}})(\mathbbm{1}_{\mathrm{aux}}\otimes \frac{1}{2^n}\Gamma_{\mathrm{in},\mathrm{out}}^{\mathcal{N}})\right],
    \end{equation}
    we can compute
    \begin{equation}
        (\operatorname{id}_{\mathrm{aux}}\otimes \mathcal{N}_i)(\rho)
        = \frac{\rho_{\mathrm{aux}}\otimes \mathbbm{1}_{\mathrm{out}} + 3\varepsilon \tr_{\mathrm{in}}[(\rho^{\top_{\mathrm{in}}}\otimes\mathbbm{1}_{\mathrm{out}})(\mathbbm{1}_{\mathrm{aux}}\otimes O_i)]}{2^n}
    \end{equation}
    as well as
    \begin{equation}
        (\operatorname{id}_{\mathrm{aux}}\otimes \mathcal{N}_{\mathrm{max-mixed}})(\rho)
        = \frac{\rho_{\mathrm{aux}}\otimes \mathbbm{1}_{\mathrm{out}}}{2^n} .
    \end{equation}
    Using these expressions as well as the notational shorthands $\rho_{v_{t-1}} = \ketbra{\phi_{v_{t-1}}}{\phi_{v_{t-1}}}$ and $p_t = \bra{\psi_{v_t}} \rho_{v_{t-1},\mathrm{aux}}\otimes \mathbbm{1}_{\mathrm{out}} \ket{\psi_{v_t}}$, we can now perform the following calculation:
    \small
    \begin{align}
        &\frac{\mathbb{E}_{i\sim\mathrm{Uniform}(\{1,\ldots,M\})} [p^{\mathcal{N}_i}(\ell)]}{p^{\mathcal{N}_{\mathrm{max-mixed}}}(\ell)}\\
        &= \mathbb{E}_{i\sim\mathrm{Uniform}(\{1,\ldots,M\})} \left[ \prod_{t=1}^T \frac{w_{v_t} 2^{n} 2^{n_{\mathrm{aux}}} \left( p_t + 3\varepsilon \bra{\psi_{v_t}} \tr_{\mathrm{in}}[(\rho_{v_{t-1}}^{\top_{\mathrm{in}}}\otimes\mathbbm{1}_{\mathrm{out}})(\mathbbm{1}_{\mathrm{aux}}\otimes O_i)] \ket{\psi_{v_t}} \right)}{w_{v_t} 2^{n} 2^{n_{\mathrm{aux}}} p_t}  \right]\\
        &= \mathbb{E}_{i\sim\mathrm{Uniform}(\{1,\ldots,M\})} \left[ \exp\left(\sum_{t=1}^T \log\left(1 + 3\varepsilon \frac{\bra{\psi_{v_t}} \tr_{\mathrm{in}}[(\rho_{v_{t-1}}^{\top_{\mathrm{in}}}\otimes\mathbbm{1}_{\mathrm{out}})(\mathbbm{1}_{\mathrm{aux}}\otimes O_i)] \ket{\psi_{v_t}}}{p_t}\right) \right)\right]\\
        &\geq \exp\left(\sum_{t=1}^T \mathbb{E}_{i\sim\mathrm{Uniform}(\{1,\ldots,M\})} \left[ \log\left(1 + 3\varepsilon \frac{\bra{\psi_{v_t}} \tr_{\mathrm{in}}[(\rho_{v_{t-1}}^{\top_{\mathrm{in}}}\otimes\mathbbm{1}_{\mathrm{out}})(\mathbbm{1}_{\mathrm{aux}}\otimes O_i)] \ket{\psi_{v_t}}}{p_t}\right)\right] \right)\\
        &= \exp\left(\sum_{t=1}^T \frac{1}{M}\sum_{i-1}^{\nicefrac{M}{2}}\log\left(1 - 9\varepsilon^2 \left(\frac{\bra{\psi_{v_t}} \tr_{\mathrm{in}}[(\rho_{v_{t-1}}^{\top_{\mathrm{in}}}\otimes\mathbbm{1}_{\mathrm{out}})(\mathbbm{1}_{\mathrm{aux}}\otimes O_i)] \ket{\psi_{v_t}}}{p_t}\right)^2\right) \right)\\
        &\geq \exp\left( - \sum_{t=1}^T \frac{18}{M}\sum_{i-1}^{\nicefrac{M}{2}} \varepsilon^2 \left(\frac{\bra{\psi_{v_t}} \tr_{\mathrm{in}}[(\rho_{v_{t-1}}^{\top_{\mathrm{in}}}\otimes\mathbbm{1}_{\mathrm{out}})(\mathbbm{1}_{\mathrm{aux}}\otimes O_i)] \ket{\psi_{v_t}}}{p_t}\right)^2\right)\\
        &\geq \exp\left( - 9 T \varepsilon^2 \Delta (O_1,\ldots,O_M)\right) .
    \end{align}
    \normalsize
    Here, the first two steps are simple rewritings. 
    The third step is by Jensen's inequality.
    The fourth step uses condition (iii).
    The fifth step uses the numerical inequality $\log (1-x)\geq -2x$, which is valid for $x\in [0, 0.79]$, and which we can apply for $\varepsilon >0 $ small enough.
    The final step holds by definition of $\Delta (O_1,\ldots,O_M)$.
    With this, we have shown that
    \begin{equation}
        \frac{\mathbb{E}_{i\sim\mathrm{Uniform}(\{1,\ldots,M\})} [p^{\mathcal{N}_i}(\ell)]}{p^{\mathcal{N}_{\mathrm{max-mixed}}}(\ell)}
        \geq \exp\left( - 9 T \varepsilon^2 \Delta (O_1,\ldots,O_M)\right)
        \geq 1 - 9 T \varepsilon^2 \Delta (O_1,\ldots,O_M) .
    \end{equation}
    Using the one-sided version of Le Cam's two-point method \cite[Lemma 5.4]{chen2021exponential-arxiv}, this tells us that the learning algorithm without quantum memory solves the many-versus-one distinguishing task correctly with probability at most $9 T \varepsilon^2 \Delta (O_1,\ldots,O_M)$.
    Thus, to achieve a success probability $\geq \nicefrac{2}{3}$, the number of queries has to satisfy 
    \begin{equation}
        T\geq\Omega\left( \frac{1}{\varepsilon^2 \Delta (O_1,\ldots,O_M)} \right) 
    \end{equation}
    By the reduction between learning and distinguishing discussed at the beginning of the proof, this establishes the claimed query complexity lower bound.
\end{proof}

By slightly varying the assumptions on the observables, we get the following version for doubly-stochastic and entanglement-breaking channels with sparse PTM:

\begin{lemma}\label{lemma:lower-bound-doubly-stochastic-choi-shadow-tomography-general-channel-access}
    Let $O_1,\ldots, O_M\in\mathcal{B}((\mathbb{C}^2)^{\otimes 2n})$ be $M\in 2\mathbb{N}$ traceless and self-adjoint $(2n)$-qubit observables satisfying the following properties:
    \begin{enumerate}[(i)]
        \item For every $1\leq i\leq M$, $\sigma (O_i)\subseteq\{-1,1\}$. In particular, $\norm{O_i} = 1$ holds for all $1\leq i\leq M$.
        \item For every $1\leq i\leq M$, $\tr_{1,\ldots, n}[O_i]=0$ and $\tr_{n+1,\ldots, 2n}[O_i]=0$.
        \item For every $1\leq i\leq \nicefrac{M}{2}$, $O_i = - O_{i + \nicefrac{M}{2}}$.
        \item For every $1\leq i\leq M$, $O_i$ has a $\mathcal{O}(1)$-sparse Pauli ONB expansion.
        \item For every $1\leq i\leq M$, $O_i$ is a tensor product of $2n$ single-qubit observables.
    \end{enumerate}
    Any algorithm for learning without quantum memory requires
    \begin{equation}
        \Omega \left( \frac{1}{\varepsilon^2 \Delta (O_1,\ldots,O_M)} \right)
    \end{equation}
    queries to a doubly-stochastic and entanglement-breaking $n$-qubit channel $\mathcal{N}$ with $\mathcal{O}(1)$-sparse PTM to simultaneously predict all expectation values $\tr[O_i \tfrac{1}{2^n}\Gamma]$, $1\leq i\leq M$, up to accuracy $\varepsilon$ with a success probability $\geq\tfrac{2}{3}$.
    Here, we again defined
    \small
    \begin{equation}
        \Delta (O_1,\ldots,O_M)
        \coloneqq  \sup_{\substack{n_{\mathrm{aux}}\in\mathbb{N}\\ \ket{\phi}\in (\mathbb{C}^2)^{\otimes n_{\mathrm{aux}}}\otimes (\mathbb{C}^2)^{\otimes n}: \norm{\phi}=1\\ \rho\in \mathcal{S}((\mathbb{C}^2)^{\otimes n_{\mathrm{aux}}}\otimes (\mathbb{C}^2)^{\otimes n})}}\frac{2}{M}\sum_{i-1}^{\nicefrac{M}{2}} \left(\frac{\bra{\phi} \tr_{\mathrm{in}}[(\rho^{\top_{\mathrm{in}}}\otimes\mathbbm{1}_{\mathrm{out}})(\mathbbm{1}_{\mathrm{aux}}\otimes O_i)] \ket{\phi}}{\bra{\phi} (\rho_{\mathrm{aux}}\otimes\mathbbm{1}_{\mathrm{out}})\ket{\phi}}\right)^2.
    \end{equation}
    \normalsize
\end{lemma}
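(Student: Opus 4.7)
The plan is to follow the proof of \Cref{lemma:lower-bound-choi-shadow-tomography-general-channel-access} essentially verbatim, with the only modification being a careful verification that, under the strengthened assumptions (i)--(v), the candidate Choi states used in the reduction all correspond to doubly-stochastic, entanglement-breaking channels with $\mathcal{O}(1)$-sparse PTM. Concretely, I would again define
\begin{equation}
    \frac{1}{2^n}\Gamma_i
    \coloneqq \frac{\mathbbm{1}_2^{\otimes 2n} + 3\varepsilon O_i}{2^{2n}},\quad 1\leq i\leq M,
\end{equation}
and compare the unknown channel being the completely depolarizing channel (with maximally mixed Choi state $\tfrac{1}{2^{2n}}\mathbbm{1}_2^{\otimes 2n}$) against the unknown channel being drawn uniformly at random from the $M$ channels $\mathcal{N}_i$ corresponding to these Choi states.

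Before invoking the tree/Le Cam machinery, I would verify that each $\tfrac{1}{2^n}\Gamma_i$ is indeed a valid Choi state of a doubly-stochastic, entanglement-breaking channel with $\mathcal{O}(1)$-sparse PTM, and likewise for the depolarizing channel. Positivity of $\tfrac{1}{2^n}\Gamma_i$ for $\varepsilon<\tfrac{1}{3}$ follows from (i) exactly as before. The trace-preserving and unital conditions (i.e., both partial traces being maximally mixed) follow from the two vanishing partial-trace conditions in (ii). Sparsity of the PTM is equivalent to sparsity of the Pauli expansion of $\tfrac{1}{2^n}\Gamma_i$ by \Cref{eq:ptm-as-pauli-choi-expectation-value}, and this is handed to us by (iv). Entanglement-breaking-ness of $\mathcal{N}_i$ is equivalent to separability of $\tfrac{1}{2^n}\Gamma_i$ by \cite{horodecki2003entanglement}; this is where (v) enters: writing $O_i=\bigotimes_{k=1}^{2n} O_i^{(k)}$ with each $O_i^{(k)}$ a single-qubit Hermitian operator of spectrum $\subseteq\{-1,1\}$, each $O_i^{(k)}$ has a spectral decomposition into rank-$1$ projectors, and expanding the tensor product exhibits $\mathbbm{1}_2^{\otimes 2n}\pm O_i$ as a sum of rank-$1$ product projectors with non-negative coefficients, which gives a separable decomposition of $\tfrac{1}{2^n}\Gamma_i$ across the $\{1,\ldots,n\}$ vs.\ $\{n+1,\ldots,2n\}$ cut. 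The depolarizing channel trivially satisfies all required properties.

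With the ensemble legitimized, the remaining argument is unchanged. I would again observe that $\tr[O_i\tfrac{1}{2^n}\Gamma_i]=3\varepsilon$ while $\tr[O_i\tfrac{1}{2^{2n}}\mathbbm{1}_2^{\otimes 2n}]=0$, so that an $\varepsilon$-accurate predictor of all these Choi-state expectation values solves the corresponding many-versus-one distinguishing problem. Then I would replay, verbatim, the leaf-probability ratio calculation from the proof of \Cref{lemma:lower-bound-choi-shadow-tomography-general-channel-access}: using $(\operatorname{id}_{\mathrm{aux}}\otimes\mathcal{N}_i)(\rho)= \tfrac{1}{2^n}\bigl(\rho_{\mathrm{aux}}\otimes\mathbbm{1}_{\mathrm{out}}+3\varepsilon\tr_{\mathrm{in}}[(\rho^{\top_{\mathrm{in}}}\otimes\mathbbm{1}_{\mathrm{out}})(\mathbbm{1}_{\mathrm{aux}}\otimes O_i)]\bigr)$, applying Jensen and condition (iii) to symmetrize into a squared term, using $\log(1-x)\geq -2x$ for small $x$, and bounding the resulting sum by the definition of $\Delta(O_1,\ldots,O_M)$. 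This yields
\begin{equation}
    \frac{\mathbb{E}_{i}[p^{\mathcal{N}_i}(\ell)]}{p^{\mathcal{N}_{\mathrm{max\text{-}mixed}}}(\ell)}
    \geq 1-9T\varepsilon^2\Delta(O_1,\ldots,O_M),
\end{equation}
and Le Cam's one-sided two-point method \cite[Lemma~5.4]{chen2021exponential-arxiv} closes the argument.

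The only genuinely new content compared to \Cref{lemma:lower-bound-choi-shadow-tomography-general-channel-access} is the verification of the structural properties of the ensemble, so I do not anticipate a substantive obstacle; the mildly delicate point is simply checking separability via the product eigenbasis afforded by (i) together with (v). Everything else is a direct transcription of the previous proof.
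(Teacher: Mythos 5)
Your proposal is correct and matches the paper's proof, which likewise reduces to the argument of \Cref{lemma:lower-bound-choi-shadow-tomography-general-channel-access} and only adds the verification that conditions (i), (ii), (iv), (v) make each $\tfrac{1}{2^n}\Gamma_i$ (and the maximally mixed state) a valid Choi state of a doubly-stochastic, entanglement-breaking channel with $\mathcal{O}(1)$-sparse PTM, with separability obtained exactly as you describe by expanding $O_i$ in its product eigenbasis. No substantive differences.
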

\begin{proof}
    The proof follows the same steps as the proof of \Cref{lemma:lower-bound-choi-shadow-tomography-general-channel-access}, we only have to notice that, with the changed condition (ii) and the added conditions (iv) and (v), all the $\tfrac{1}{2^n}\Gamma_i$ as well as the maximally mixed state are valid Choi states of doubly-stochastic and entanglement-breaking quantum channels with $\mathcal{O}(1)$-sparse PTM.
\end{proof}

As before, to make use of these lower bounds we need to evaluate (or at least bound) the quantity $\Delta (O_1,\ldots,O_M)$ for sets of observables of interest. We will do so for the same sets of observables that we have already considered in \Cref{lemma:delta-choi-pauli-shadow-tomography,lemma:delta-doubly-stochastic-choi-pauli-shadow-tomography}.

\begin{lemma}\label{lemma:Delta-choi-pauli-shadow-tomography-channel-access}
    Consider the following $2\cdot 4^n (4^{n} -1)$ many $(2n)$-qubit Pauli observables
    \begin{align}
        P_1,\ldots,P_{4^n (4^{n} -1)} &\in \{+ \sigma_A\}_{A\in\{0,1,2,3\}^{2n}}\setminus\{+ \sigma_B\otimes \mathbbm{1}_2^{\otimes n}\}_{B\in\{0,1,2,3\}^n}\\
        P_{4^n (4^{n} -1) + 1},\ldots,P_{2\cdot 4^n (4^{n} -1)} &\in \{- \sigma_A\}_{A\in\{0,1,2,3\}^{2n}}\setminus\{- \sigma_B\otimes \mathbbm{1}_2^{\otimes n}\}_{B\in\{0,1,2,3\}^n}.
    \end{align}
    The set of observables $\{P_i\}_{i=1}^{2\cdot 4^n (4^{n} -1)}$, when suitably ordered, satisfies the conditions (i)-(iii) from \cref{lemma:lower-bound-choi-shadow-tomography-general-channel-access}.
    Moreover, 
    \begin{equation}
        \Delta (P_1,\ldots, P_{2\cdot 4^n (4^{n} -1)})
        = \frac{1}{4^n -1} .
    \end{equation}
\end{lemma}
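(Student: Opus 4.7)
I would mirror the strategy of \Cref{lemma:delta-choi-pauli-shadow-tomography} but handle the ratio in the definition of $\Delta$ by expanding both $\rho$ and $\ketbra{\phi}{\phi}$ in the local Pauli basis. First I would check conditions (i)–(iii) of \Cref{lemma:lower-bound-choi-shadow-tomography-general-channel-access}: each $P_i$ is a signed tensor-product Pauli with spectrum $\{\pm 1\}$ (so (i) and (iii) are immediate), and (ii) holds because for $P_i=\sigma_A\otimes\sigma_B$ with $B\neq 0^n$ one has $\tr_{n+1,\ldots,2n}[P_i]=\sigma_A\tr[\sigma_B]=0$.

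For the main computation, I would write $\rho=\sum_A T_A\otimes\sigma_A$ with $T_A\in\mathcal{B}(\mathrm{aux})$ and $\ketbra{\phi}{\phi}=\tfrac{1}{2^n}\sum_B\Phi_B\otimes\sigma_B$ with $\Phi_B=\tr_{\mathrm{out}}[\ketbra{\phi}{\phi}(\mathbbm{1}\otimes\sigma_B)]\in\mathcal{B}(\mathrm{aux})$, so that $\sigma_\phi=\Phi_0$ and $\rho_{\mathrm{aux}}=2^n T_0$. A direct calculation using $\sigma_A^{\top}=s_A\sigma_A$ (with $s_A\in\{\pm 1\}$) and Pauli orthogonality gives numerator $2^n s_A\tr[T_A\Phi_B]$ and denominator $\tr[\rho_{\mathrm{aux}}\sigma_\phi]$. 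I would then apply the completeness identity $\sum_B\sigma_B\otimes\sigma_B=2^n\operatorname{SWAP}$ once on the output-register copies to collapse
\[
\sum_{A}\sum_{B\neq 0^n}\tr[T_A\Phi_B]^2 \;=\; \tr[G(H-\Phi_0\otimes\Phi_0)],\quad G:=\sum_A T_A^{\otimes 2},\ H:=\sum_B\Phi_B^{\otimes 2},
\]
and a second time to see that, for pure $\ket{\phi}$, $H=2^n(\sigma_\phi\otimes\sigma_\phi)\operatorname{SWAP}_{\mathrm{aux}}$, hence $H-\Phi_0^{\otimes 2}=(\sigma_\phi\otimes\sigma_\phi)(2^n\operatorname{SWAP}-\mathbbm{1})$.

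Next, restricting to pure $\rho=\ketbra{\psi}{\psi}$ (WLOG by purification on a larger auxiliary register), the same swap trick yields $G=\tfrac{1}{2^n}(\rho_{\mathrm{aux}}\otimes\rho_{\mathrm{aux}})\operatorname{SWAP}_{\mathrm{aux}}$. Combining, using that $\operatorname{SWAP}$ commutes with symmetric tensors and that $\operatorname{SWAP}\cdot(2^n\operatorname{SWAP}-\mathbbm{1})=2^n\mathbbm{1}-\operatorname{SWAP}$, the bracketed expression collapses to the scalar form $\tr[\rho_{\mathrm{aux}}\sigma_\phi]^2-\tr[(\rho_{\mathrm{aux}}\sigma_\phi)^2]/2^n$. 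Normalizing by the squared denominator $\tr[\rho_{\mathrm{aux}}\sigma_\phi]^2/4^n$ and by the prefactor $\tfrac{1}{4^n(4^n-1)}$, the supremum reduces to the one-parameter optimization
\[
\Delta \;=\; \sup_{\rho,\,\ket{\phi}}\;\frac{1}{4^n-1}\!\left(1-\frac{\tr[(\rho_{\mathrm{aux}}\sigma_\phi)^2]}{2^n\,\tr[\rho_{\mathrm{aux}}\sigma_\phi]^2}\right).
\]

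The upper bound $\Delta\leq \tfrac{1}{4^n-1}$ is then immediate from non-negativity of the subtracted term, and tightness I would verify by taking $\rho$ and $\ket{\phi}$ both to be canonical maximally entangled on $n_{\mathrm{aux}}=n$ qubits, which makes $\rho_{\mathrm{aux}}\sigma_\phi=\mathbbm{1}/4^n$ and drives the subtracted quantity toward its minimum (as achievable given the rank constraints $\operatorname{rank}(\rho_{\mathrm{aux}}),\operatorname{rank}(\sigma_\phi)\leq 2^n$). The step I anticipate as the main obstacle is the reduction to pure $\rho$: for mixed $\rho$ the clean identity $G=\tfrac{1}{2^n}(\rho_{\mathrm{aux}}\otimes\rho_{\mathrm{aux}})\operatorname{SWAP}$ fails, so one must either purify on an enlarged auxiliary register (and verify that no optimality is lost when re-choosing $\ket{\phi}$ on the enlarged space) or work directly with the PSD operator $\tilde\rho=(\sqrt{\sigma_\phi}\otimes\mathbbm{1})\rho(\sqrt{\sigma_\phi}\otimes\mathbbm{1})$ and check that the scalar reduction above persists with $\rho_{\mathrm{aux}}\sigma_\phi$ replaced by $\tilde\rho$.
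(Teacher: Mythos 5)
Your overall strategy---expanding in the Pauli basis and collapsing the sum of squares with $\sum_B\sigma_B\otimes\sigma_B = 2^n\operatorname{SWAP}$---is the same device the paper uses, and your verification of conditions (i)--(iii) is fine. For pure $\rho$ your scalar reduction also checks out: one finds $\sum_A\sum_{B\neq 0^n}\tr[T_A\Phi_B]^2 = \tr[\rho_{\mathrm{aux}}\sigma_\phi]^2 - 2^{-n}\tr[(\rho_{\mathrm{aux}}\sigma_\phi)^2]$, which gives the upper bound $\tfrac{1}{4^n-1}$ after normalization. The step you yourself flag, however---the reduction to pure $\rho$---is a genuine gap, and neither of your proposed fixes is carried out. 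Purification is not ``WLOG'' here: enlarging the auxiliary register to purify $\rho$ also enlarges the register on which $\ket{\phi}$ lives, and evaluating the purified expression on $\ket{\phi}\otimes\ket{0}$ inserts the unnormalized post-selected operator in place of $\rho$, so it does not reproduce the original numerator and denominator; and since $\Delta$ is a supremum of a ratio of quadratics in $\rho$, no convexity argument lets you pass to extreme points. The paper instead handles mixed $\rho$ head-on: it expands $\rho$ in a product basis adapted to the Schmidt basis of $\ket{\phi}$ and proves $\tr\bigl[(\tr_{\mathrm{aux}}[(\mathbbm{1}_{\mathrm{in}}\otimes \ketbra{\phi}{\phi})(\rho^{\top_{\mathrm{in}}}\otimes\mathbbm{1}_{\mathrm{out}})])^{2}\bigr]\leq\tr\bigl[\tr_{\mathrm{aux}}[(\mathbbm{1}_{\mathrm{in}}\otimes \ketbra{\phi}{\phi})(\rho^{\top_{\mathrm{in}}}\otimes\mathbbm{1}_{\mathrm{out}})]\bigr]^{2}$ by bounding the cross terms $\rho_{\ell k i j}\rho_{k\ell j i}\leq\rho_{\ell\ell ii}\rho_{kkjj}$ via positive semidefiniteness of the $2\times 2$ principal submatrices of $\rho$. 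Your alternative of conjugating by $\sqrt{\sigma_\phi}$ could plausibly be made to work, but as written it is a plan rather than a proof.

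Separately, your tightness argument does not deliver the stated equality. With $\rho$ and $\ket{\phi}$ both maximally entangled you get $\rho_{\mathrm{aux}}\sigma_\phi=\mathbbm{1}/4^n$, so the subtracted term equals $\tfrac{1}{4^n}$ and your formula evaluates to $\tfrac{1}{4^n-1}\bigl(1-\tfrac{1}{4^n}\bigr)=\tfrac{1}{4^n}$, not $\tfrac{1}{4^n-1}$; indeed the rank constraint you invoke ($\operatorname{rank}(\sigma_\phi)\leq 2^n$) forces $\tr[(\rho_{\mathrm{aux}}\sigma_\phi)^2]\geq 2^{-n}\tr[\rho_{\mathrm{aux}}\sigma_\phi]^2$, so the subtracted term can never be driven to zero and the value $\tfrac{1}{4^n-1}$ is not attainable by any admissible $(\rho,\ket{\phi})$. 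What your argument actually establishes (modulo the pure-$\rho$ issue above) is $\tfrac{1}{4^n}\leq\Delta\leq\tfrac{1}{4^n-1}$. This is harmless for the downstream application in \Cref{theorem:lower-bound-ptm-learning-without-quantum-memory-from-channel-queries}, which only needs $\Delta=\mathcal{O}(4^{-n})$, but you should state your conclusion as an upper bound or a two-sided estimate (as the paper does in the analogous \Cref{lemma:Delta-doubly-stochastic-choi-pauli-shadow-tomography-channel-access}) rather than as the equality claimed in the lemma.
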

\begin{proof}
    We have already established conditions (i)-(iii) when proving \Cref{lemma:delta-choi-pauli-shadow-tomography}.
    We upper bound $\Delta (P_1,\ldots, P_{2\cdot 4^n (4^{n} -1)})$ as follows: Let $n_{\mathrm{aux}}\in\mathbb{N}$, $\ket{\phi}\in (\mathbb{C}^2)^{\otimes n_{\mathrm{aux}}}\otimes (\mathbb{C}^2)^{\otimes n}$ with $\norm{\phi}=1$, and $\rho\in \mathcal{S}((\mathbb{C}^2)^{\otimes n_{\mathrm{aux}}}\otimes (\mathbb{C}^2)^{\otimes n}))$ be arbitrary. Then, using \Cref{eq:swap-via-sum-of-paulis} similarly to before, we obtain:
    \small
    \begin{align}
        &\frac{1}{4^n (4^{n} -1)} \sum_{i=1}^{4^n (4^{n} -1)} \bra{\phi} \tr_{\mathrm{in}}[(\rho^{\top_{\mathrm{in}}}\otimes\mathbbm{1}_{\mathrm{out}})(\mathbbm{1}_{\mathrm{aux}}\otimes P_i)] \ket{\phi}^2\\
        &= \frac{1}{4^n (4^{n} -1)} \sum_{i=1}^{4^n (4^{n} -1)}  \tr\left[ \tr_{\mathrm{in}}[(\rho^{\top_{\mathrm{in}}}\otimes\mathbbm{1}_{\mathrm{out}})(\mathbbm{1}_{\mathrm{aux}}\otimes P_i)] \ketbra{\phi}{\phi}\right]^2\\
        &= \frac{1}{4^n (4^{n} -1)} \sum_{i=1}^{4^n (4^{n} -1)}  \tr\left[ \left((\rho^{\top_{\mathrm{in}}}\otimes\mathbbm{1}_{\mathrm{out}})(\mathbbm{1}_{\mathrm{aux}}\otimes P_i)\right) \left(\mathbbm{1}_{\mathrm{in}}\otimes \ketbra{\phi}{\phi}\right)\right]^2\\
        &= \frac{1}{4^n (4^{n} -1)} \tr\left[\left(\sum_{i=1}^{4^n (4^{n} -1)} (\mathbbm{1}_{\mathrm{aux}}\otimes P_i)\otimes (\mathbbm{1}_{\mathrm{aux}}\otimes P_i) \right) \left(\left(\mathbbm{1}_{\mathrm{in}}\otimes \ketbra{\phi}{\phi}\right)(\rho^{\top_{\mathrm{in}}}\otimes\mathbbm{1}_{\mathrm{out}})\right)^{\otimes 2}\right]\\
        &= \frac{1}{4^n (4^{n} -1)} \Bigg( \tr\Bigg[ 2^{2n} \operatorname{SWAP_{\mathrm{in},\mathrm{out}}} (\tr_{\mathrm{aux}}[\left(\mathbbm{1}_{\mathrm{in}}\otimes \ketbra{\phi}{\phi}\right)(\rho^{\top_{\mathrm{in}}}\otimes\mathbbm{1}_{\mathrm{out}})])^{\otimes 2}\Bigg]\\
        &\hphantom{= \frac{1}{4^n (4^{n} -1)} \Bigg(} - \tr\Bigg[2^n \operatorname{SWAP}_{\mathrm{in}} (\tr_{\mathrm{aux}, \mathrm{out}}[\left(\mathbbm{1}_{\mathrm{in}}\otimes \ketbra{\phi}{\phi}\right)(\rho^{\top_{\mathrm{in}}}\otimes\mathbbm{1}_{\mathrm{out}})])^{\otimes 2} \Bigg] \Bigg)\\
        &= \frac{1}{4^n (4^{n} -1)} \Bigg( 2^{2n} \tr\Bigg[(\tr_{\mathrm{aux}}[\left(\mathbbm{1}_{\mathrm{in}}\otimes \ketbra{\phi}{\phi}\right)(\rho^{\top_{\mathrm{in}}}\otimes\mathbbm{1}_{\mathrm{out}})])^{2}\Bigg]\\
        &\hphantom{= \frac{1}{4^n (4^{n} -1)} \Bigg(} - 2^n \tr\Bigg[(\tr_{\mathrm{aux}, \mathrm{out}}[\left(\mathbbm{1}_{\mathrm{in}}\otimes \ketbra{\phi}{\phi}\right)(\rho^{\top_{\mathrm{in}}}\otimes\mathbbm{1}_{\mathrm{out}})])^{2} \Bigg] \Bigg) .
    \end{align}
    \normalsize
    Therefore, we have
    \begin{align}
        &\frac{1}{4^n (4^{n} -1)} \sum_{i=1}^{4^n (4^{n} -1)} \left(\frac{\bra{\phi} \tr_{\mathrm{in}}[(\rho^{\top_{\mathrm{in}}}\otimes\mathbbm{1}_{\mathrm{out}})(\mathbbm{1}_{\mathrm{aux}}\otimes P_i)] \ket{\phi}}{\bra{\phi} (\rho_{\mathrm{aux}}\otimes\mathbbm{1}_{\mathrm{out}})\ket{\phi}}\right)^2\\
        &= \frac{1}{4^n (4^{n} -1)} \Bigg( 2^{2n} \frac{\tr\Bigg[(\tr_{\mathrm{aux}}[\left(\mathbbm{1}_{\mathrm{in}}\otimes \ketbra{\phi}{\phi}\right)(\rho^{\top_{\mathrm{in}}}\otimes\mathbbm{1}_{\mathrm{out}})])^{2}\Bigg]}{\bra{\phi} (\rho_{\mathrm{aux}}\otimes\mathbbm{1}_{\mathrm{out}})\ket{\phi}^2}\\
        &\hphantom{= \frac{1}{4^n (4^{n} -1)} \Bigg(} - 2^n \frac{\tr\Bigg[(\tr_{\mathrm{aux}, \mathrm{out}}[\left(\mathbbm{1}_{\mathrm{in}}\otimes \ketbra{\phi}{\phi}\right)(\rho^{\top_{\mathrm{in}}}\otimes\mathbbm{1}_{\mathrm{out}})])^{2} \Bigg]}{\bra{\phi} (\rho_{\mathrm{aux}}\otimes\mathbbm{1}_{\mathrm{out}})\ket{\phi}^2} \Bigg) .
    \end{align}
    Notice that 
    \begin{equation}
        \tr [\left(\mathbbm{1}_{\mathrm{in}}\otimes \ketbra{\phi}{\phi}\right)(\rho^{\top_{\mathrm{in}}}\otimes\mathbbm{1}_{\mathrm{out}}) ] 
        = \bra{\phi} (\rho_{\mathrm{aux}}\otimes \mathbbm{1}_{\mathrm{out}})\ket{\phi}
        = \tr [\left(\mathbbm{1}_{\mathrm{in}}\otimes \ketbra{\phi}{\phi}\right)(\rho\otimes\mathbbm{1}_{\mathrm{out}}) ],
    \end{equation}
    since $\tr_{\mathrm{in}}[\rho^{\top_{\mathrm{in}}}\otimes\mathbbm{1}_{\mathrm{out}}] = \rho_{\mathrm{aux}}\otimes \mathbbm{1}_{\mathrm{out}}$.
    As the partial trace of an operator has the same trace as the original operator, we can thus rewrite the above as follows:
    \begin{align}
        &\frac{1}{4^n (4^{n} -1)} \sum_{i=1}^{4^n (4^{n} -1)} \left(\frac{\bra{\phi} \tr_{\mathrm{in}}[(\rho^{\top_{\mathrm{in}}}\otimes\mathbbm{1}_{\mathrm{out}})(\mathbbm{1}_{\mathrm{aux}}\otimes P_i)] \ket{\phi}}{\bra{\phi} (\rho_{\mathrm{aux}}\otimes\mathbbm{1}_{\mathrm{out}})\ket{\phi}}\right)^2\\
        &= \frac{1}{4^n (4^{n} -1)} \left( 2^{2n} \frac{\tr\Bigg[(\tr_{\mathrm{aux}}[\left(\mathbbm{1}_{\mathrm{in}}\otimes \ketbra{\phi}{\phi}\right)(\rho^{\top_{\mathrm{in}}}\otimes\mathbbm{1}_{\mathrm{out}})])^{2}\Bigg]}{\tr [\left(\mathbbm{1}_{\mathrm{in}}\otimes \ketbra{\phi}{\phi}\right)(\rho\otimes\mathbbm{1}_{\mathrm{out}}) ]^2} - 2^n \right) .
    \end{align}
    It remains to bound the first summand. We first rewrite it using the identity $\tr_A[X Y ]= \tr_A [X^{\top_A} Y^{\top_A}]$ $\forall X,Y\in\mathcal{B}(\mathbb{C}^{d_A}\otimes\mathbb{C}^{d_B})$ together with the identity $( \theta_{\mathrm{in}} \otimes \operatorname{id}_{\mathrm{out}} ) \circ \tr_{\mathrm{aux}} = \tr_{\mathrm{aux}} \circ ( \theta_{\mathrm{in}} \otimes \operatorname{id}_{\mathrm{aux},\mathrm{out}} )$ for the transpose map $\theta_\mathrm{in}$ viewed as a linear superoperator on $\mathcal{B}((\mathbb{C}^2)^{\otimes n})$, which allows us to replace $\rho^{\top_{\mathrm{in}}}$ by $\rho$ in the enumerator. (We explain this in full detail in the proof of \Cref{lemma:Delta-doubly-stochastic-choi-pauli-shadow-tomography-channel-access}.)
    Then, we do a direct computation. Suppose $\ket{\phi}_{\mathrm{aux},\mathrm{out}}$ has Schmidt decomposition $\ket{\phi}_{\mathrm{aux},\mathrm{out}} = \sum_{i=1}^{\min\{2^{n_{\mathrm{aux}}}, 2^{n}\}} \sqrt{\lambda_i} \ket{e_i}_{\mathrm{aux}}\otimes\ket{f_i}_{\mathrm{out}}$, where the $\lambda_i\geq 0$ satisfy $\sum_{i=1}^{\min\{2^{n_{\mathrm{aux}}}, 2^{n}\}}\lambda_i =1$, and where $\{\ket{e_i}\}_{i=1}^{2^{n_{\mathrm{aux}}}}$ and $\{\ket{f_i}\}_{i=1}^{2^{n}}$ are ONBs of $(\mathbb{C}^2)^{\otimes n_{\mathrm{aux}}}$ and $(\mathbb{C}^2)^{\otimes n}$, respectively. 
    As $n_{\mathrm{in}}=n=n_{\mathrm{out}}$, we can expand $\rho_{\mathrm{in},\mathrm{aux}}$ w.r.t.~the tensor product basis $\{\ketbra{f_i}{f_j}\otimes\ketbra{e_k}{e_\ell}\}_{i,j,k,\ell}$ of $\mathcal{B}((\mathbb{C}^2)^{\otimes n}\otimes (\mathbb{C}^2)^{\otimes n_{\mathrm{aux}}})$ as
    \begin{equation}
        \rho_{\mathrm{in},\mathrm{aux}}
        = \sum_{i,j,k,\ell} \rho_{i j k \ell} \ketbra{f_i}{f_j}_{\mathrm{in}}\otimes\ketbra{e_k}{e_\ell}_\mathrm{aux} .
    \end{equation}
    Using these explicit expansions for $\ket{\phi}$ and $\rho$, we can compute
    \begin{equation}
        \left(\mathbbm{1}_{\mathrm{in}}\otimes \ketbra{\phi}{\phi}\right)(\rho\otimes\mathbbm{1}_{\mathrm{out}})])
        = \sum_{i,j,k,\ell,n} \sqrt{\lambda_i \lambda_j} \rho_{k \ell j n} \ketbra{f_k}{f_\ell}_{\mathrm{in}}\otimes\ketbra{e_i}{e_n}_\mathrm{aux}\otimes \ketbra{f_i}{f_j}_{\mathrm{out}} ,
    \end{equation}
    leading to
    \begin{equation}
        \tr_{\mathrm{aux}}[\left(\mathbbm{1}_{\mathrm{in}}\otimes \ketbra{\phi}{\phi}\right)(\rho\otimes\mathbbm{1}_{\mathrm{out}})]
        = \sum_{i,j,k,\ell} \sqrt{\lambda_i \lambda_j} \rho_{k \ell j i} \ketbra{f_k}{f_\ell}_{\mathrm{in}}\otimes \ketbra{f_i}{f_j}_{\mathrm{out}} .
    \end{equation}
    From this, we obtain
    \begin{align}
        \tr\left[ \tr_{\mathrm{aux}}[\left(\mathbbm{1}_{\mathrm{in}}\otimes \ketbra{\phi}{\phi}\right)(\rho^{\top_{\mathrm{in}}}\otimes\mathbbm{1}_{\mathrm{out}})]\right]^2
        &= \tr\left[ \sum_{i,j,k,\ell} \sqrt{\lambda_i \lambda_j} \rho_{k \ell i j} \ketbra{f_k}{f_\ell}_{\mathrm{in}}\otimes \ketbra{f_i}{f_j}_{\mathrm{out}}\right]^2\\
        &= \left(\sum_{i,\ell} \lambda_i \rho_{\ell \ell i i}\right)^2\\
        &= \sum_{i,j,k,\ell} \lambda_i \lambda_j \rho_{\ell \ell i i} \rho_{k k j j} 
    \end{align}
    as well as
    \begin{align}
        &\tr\left[ \left(\tr_{\mathrm{aux}}[\left(\mathbbm{1}_{\mathrm{in}}\otimes \ketbra{\phi}{\phi}\right)(\rho^{\top_{\mathrm{in}}}\otimes\mathbbm{1}_{\mathrm{out}})]\right)^2\right]\\
        &= \tr\left[ \sum_{i,j,k,\ell,m,n,s,t} \sqrt{\lambda_i \lambda_j \lambda_s \lambda_t} \rho_{k \ell j i} \rho_{m n t s} \ketbra{k}{\ell}_{\mathrm{in}}\cdot\ketbra{m}{n}_{\mathrm{in}}\otimes \ketbra{i}{j}_{\mathrm{out}}\cdot\ketbra{s}{t}_{\mathrm{out}}\right] \\
        &= \sum_{i,j,k,\ell} \lambda_i \lambda_j \rho_{\ell k i j} \rho_{k \ell j i} . 
    \end{align}
    As $\rho$ is Hermitian and positive semidefinite, we have, for any $i,j,k,\ell$, $\rho_{\ell k i j} = \overline{\rho_{k \ell j i}}$ and $0\leq \det\begin{pmatrix} \rho_{k k i i} & \rho_{k \ell i j}\\ \overline{\rho_{k \ell j i}} & \rho_{\ell \ell i i} \end{pmatrix} = \rho_{\ell \ell i i} \rho_{k k j j} - \lvert \rho_{\ell k i j}\rvert^2 = \rho_{\ell \ell i i} \rho_{k k j j} - \rho_{\ell k i j} \rho_{k \ell j i} $. The latter holds because the matrix $\begin{pmatrix} \rho_{k k i i} & \rho_{k \ell i j}\\ \overline{\rho_{k \ell j i}} & \rho_{\ell \ell i i} \end{pmatrix}$ is a principal submatrix of the positive semidefinite matrix $\rho$ and thus itself positive semidefinite. 
    Altogether, we now have:
    \begin{align}
        \tr\left[ \left(\tr_{\mathrm{aux}}[\left(\mathbbm{1}_{\mathrm{in}}\otimes \ketbra{\phi}{\phi}\right)(\rho^{\top_{\mathrm{in}}}\otimes\mathbbm{1}_{\mathrm{out}})]\right)^2\right]
        &= \sum_{i,j,k,\ell} \underbrace{\lambda_i \lambda_j}_{\geq 0} \underbrace{\rho_{\ell k i j} \rho_{k \ell j i}}_{\leq \rho_{\ell \ell i i} \rho_{k k j j}}\\
        &\leq \sum_{i,j,k,\ell} \lambda_i \lambda_j \rho_{\ell \ell i i} \rho_{k k j j} \\
        &= \tr\left[ \tr_{\mathrm{aux}}[\left(\mathbbm{1}_{\mathrm{in}}\otimes \ketbra{\phi}{\phi}\right)(\rho^{\top_{\mathrm{in}}}\otimes\mathbbm{1}_{\mathrm{out}})]\right]^2 .
    \end{align}
    Plugging this inequality back into our expression of interest, we have shown
    \begin{align}
        \frac{1}{4^n (4^{n} -1)} \sum_{i=1}^{4^n (4^{n} -1)} \left(\frac{\bra{\phi} \tr_{\mathrm{in}}[(\rho^{\top_{\mathrm{in}}}\otimes\mathbbm{1}_{\mathrm{out}})(\mathbbm{1}_{\mathrm{aux}}\otimes P_i)] \ket{\phi}}{\bra{\phi} (\rho_{\mathrm{aux}}\otimes\mathbbm{1}_{\mathrm{out}})\ket{\phi}}\right)^2
        &\leq  \frac{2^{2n} - 2^n}{4^n (4^{n} -1)} \\
        &= \frac{2^n -1}{2^n (4^n -1)}\\
        &\leq \frac{1}{4^n -1} .
    \end{align}
    To finish the proof, it only remains to notice that equality can be attained, for example, if $\ket{\phi}_{\mathrm{aux},\mathrm{out}}$ and $\rho_{\mathrm{in},\mathrm{aux}}$ factorize as $\ket{\phi}_{\mathrm{aux},\mathrm{out}} = \ket{\psi}_{\mathrm{aux}}\otimes\ket{\varphi}_{\mathrm{out}}$ and $\rho_{\mathrm{in},\mathrm{aux}} = \ketbra{\varphi}{\varphi}_{\mathrm{in}}\otimes \ketbra{\psi}{\psi}_{\mathrm{aux}}$.
\end{proof}

\begin{lemma}\label{lemma:Delta-doubly-stochastic-choi-pauli-shadow-tomography-channel-access}
    Consider the following $2(4^{2n} - 2\cdot 4^n +1) = 2(4^n - 1)^2 $ many $(2n)$-qubit Pauli observables
    \begin{align}
        P_1,\ldots,P_{(4^n - 1)^2} &\in \{+ \sigma_A\}_{A\in\{0,1,2,3\}^{2n}:\exists 1\leq i\leq n, n+1\leq j\leq 2n: A_i\neq 0 \neq A_j}\\
        P_{4^{2n} - 2\cdot 4^n +2},\ldots,P_{2(4^n - 1)^2} &\in \{- \sigma_A\}_{A\in\{0,1,2,3\}^{2n}:\exists 1\leq i\leq n, n+1\leq j\leq 2n: A_i\neq 0 \neq A_j}.
    \end{align}
    The set of observables $\{P_i\}_{i=1}^{2 (4^n - 1)^2}$, when suitably ordered, satisfies the conditions (i)-(v) from \cref{lemma:lower-bound-doubly-stochastic-choi-shadow-tomography-general-channel-access}.
    Moreover, 
    \begin{equation}
        \frac{1}{(2^n + 1)^2}
        \leq \Delta (P_1,\ldots, P_{2((4^n - 1)^2)})
        \leq \frac{1}{(2^n - 1)^2}.
    \end{equation}
\end{lemma}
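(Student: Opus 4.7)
}

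The verification of conditions (i)--(v) is immediate: each $P_i$ is a signed Pauli, hence has spectrum $\{\pm 1\}$, is a tensor product of single-qubit observables, and is $1$-sparse in the Pauli basis; traceless-ness on each half follows because both $A^{(\mathrm{in})}$ and $A^{(\mathrm{out})}$ are non-identity; and the signed pairing in the ordering is arranged by construction. Thus the work is concentrated in evaluating the bounds on $\Delta$.

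The plan is to reduce the supremum to a closed-form expression in terms of $\Xi := \tr_{\mathrm{aux}}[(\mathbbm{1}_{\mathrm{in}}\otimes\ketbra{\phi}{\phi})(\rho^{\top_{\mathrm{in}}}\otimes\mathbbm{1}_{\mathrm{out}})]$, exactly the auxiliary operator used in the proof of \Cref{lemma:Delta-choi-pauli-shadow-tomography-channel-access}. Exploiting the product structure $P_i = \sigma_{A^{(\mathrm{in})}} \otimes \sigma_{A^{(\mathrm{out})}}$ with $A^{(\mathrm{in})}, A^{(\mathrm{out})} \neq 0^n$, we get
\begin{equation}
    \sum_{i=1}^{(4^n-1)^2} P_i \otimes P_i
    = \bigl(2^n \mathrm{SWAP}_{\mathrm{in}} - \mathbbm{1}^{(12)}_{\mathrm{in}}\bigr)\otimes\bigl(2^n \mathrm{SWAP}_{\mathrm{out}} - \mathbbm{1}^{(12)}_{\mathrm{out}}\bigr),
\end{equation}
via the SWAP identity from \Cref{eq:swap-via-sum-of-paulis}. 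Pairing the four terms with $\Xi\otimes\Xi$ and using the elementary trace identities $\tr[(\Xi\otimes\Xi)(\mathrm{SWAP}_{\mathrm{in}}\otimes \mathrm{SWAP}_{\mathrm{out}})] = \tr[\Xi^2]$, $\tr[(\Xi\otimes\Xi)(\mathbbm{1}^{(12)}_{\mathrm{in}}\otimes \mathrm{SWAP}_{\mathrm{out}})] = \tr[(\tr_{\mathrm{in}}\Xi)^2]$ (and its $\mathrm{in}\leftrightarrow\mathrm{out}$ counterpart), and $\tr[(\Xi\otimes\Xi)\mathbbm{1}^{(12)}] = (\tr \Xi)^2$, together with $\sum_i \tr[\Xi P_i]^2 = \tr[(\Xi\otimes\Xi)\sum_i P_i\otimes P_i]$ and the denominator identity $\tr\Xi = \bra{\phi}(\rho_{\mathrm{aux}}\otimes\mathbbm{1}_{\mathrm{out}})\ket{\phi} =: p$ established in the proof of \Cref{lemma:Delta-choi-pauli-shadow-tomography-channel-access}, we arrive at the master formula
\begin{equation}\label{eq:proposal-master-formula}
    \frac{1}{(4^n-1)^2 p^2}\sum_i \tr[\Xi P_i]^2
    = \frac{2^{2n}\tr[\Xi^2] - 2^n \tr[(\tr_{\mathrm{in}}\Xi)^2] - 2^n \tr[(\tr_{\mathrm{out}}\Xi)^2] + p^2}{(4^n-1)^2 p^2}.
\end{equation}

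For the upper bound I would drop the two non-negative subtracted terms and invoke the inequality $\tr[\Xi^2]\leq p^2$, which is precisely the positive-semidefiniteness-based bound proved in \Cref{lemma:Delta-choi-pauli-shadow-tomography-channel-access} (using the Schmidt decomposition of $\ket{\phi}$ and a $2\times 2$ principal submatrix argument on $\rho$). This yields $\Delta \leq (4^n+1)/(4^n-1)^2$, and the elementary numerical comparison $(4^n+1)(2^n-1)^2 \leq (4^n-1)^2$, equivalent to $4^n+1\leq (2^n+1)^2$, upgrades this to the claimed $\Delta \leq 1/(2^n-1)^2$.

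For the lower bound I would simply exhibit a concrete saturating family: pick $n_{\mathrm{aux}}=n$ and take $\rho = \ketbra{\psi}{\psi}_{\mathrm{aux}}\otimes\ketbra{\tau}{\tau}_{\mathrm{in}}$ together with $\ket{\phi} = \ket{\psi}_{\mathrm{aux}}\otimes\ket{\varphi}_{\mathrm{out}}$, all pure and otherwise arbitrary. A direct computation shows that then $\Xi = \ketbra{\tau^{\ast}}{\tau^{\ast}}_{\mathrm{in}}\otimes\ketbra{\varphi}{\varphi}_{\mathrm{out}}$ is a pure product state, so $p = \tr\Xi = 1$ and $\tr[\Xi^2] = \tr[(\tr_{\mathrm{in}}\Xi)^2] = \tr[(\tr_{\mathrm{out}}\Xi)^2] = 1$. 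Plugging into \eqref{eq:proposal-master-formula} gives numerator $4^n - 2\cdot 2^n + 1 = (2^n-1)^2$ and hence value $(2^n-1)^2/(4^n-1)^2 = 1/(2^n+1)^2$, matching the claimed lower bound. The main technical obstacle is the derivation of the master formula \eqref{eq:proposal-master-formula}, specifically the careful bookkeeping of the four tensor-factor SWAP/identity combinations arising from the inclusion-exclusion; by contrast, both the lower and upper bound follow cleanly once this formula and the reusable inequality $\tr[\Xi^2]\leq p^2$ from \Cref{lemma:Delta-choi-pauli-shadow-tomography-channel-access} are in hand.
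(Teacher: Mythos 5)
Your proposal follows essentially the same route as the paper's proof: the same inclusion--exclusion over Paulis that are non-identity on each half, the same SWAP-identity reduction to the four-term master formula, the same reuse of the inequality $\tr[\Xi^2]\le p^2$ for the upper bound, and the same factorized $\ket{\phi}$, $\rho$ example for the lower bound. The one step you gloss over is the non-negativity of the two subtracted terms $\tr[(\tr_{\mathrm{in}}\Xi)^2]$ and $\tr[(\tr_{\mathrm{out}}\Xi)^2]$: since $\Xi$ is a product of two positive operators partially traced and hence not Hermitian in general, the paper spends real effort (transpose identities to replace $\rho^{\top_{\mathrm{in}}}$ by $\rho$, then cyclicity of the partial trace together with $\ketbra{\phi}{\phi}=(\ketbra{\phi}{\phi})^2$ and $\rho=(\sqrt{\rho})^2$) to exhibit these as traces of squares of positive semidefinite operators before they can be dropped or bounded.
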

\begin{proof}
    We have already established conditions (i)-(v) when proving \Cref{lemma:delta-doubly-stochastic-choi-pauli-shadow-tomography}.
    We bound $\Delta (P_1,\ldots, P_{2 (4^n - 1)^2})$ as follows: Let $n_{\mathrm{aux}}\in\mathbb{N}$, $\ket{\phi}\in (\mathbb{C}^2)^{\otimes n_{\mathrm{aux}}}\otimes (\mathbb{C}^2)^{\otimes n}$ with $\norm{\phi}=1$, and $\rho\in \mathcal{S}((\mathbb{C}^2)^{\otimes n_{\mathrm{aux}}}\otimes (\mathbb{C}^2)^{\otimes n}))$ be arbitrary. Then, using \Cref{eq:swap-via-sum-of-paulis} similarly to before, we obtain:
    \begingroup
    \allowdisplaybreaks
    \begin{align}
        &\frac{1}{(4^n - 1)^2} \sum_{i=1}^{(4^n - 1)^2} \bra{\phi} \tr_{\mathrm{in}}[(\rho^{\top_{\mathrm{in}}}\otimes\mathbbm{1}_{\mathrm{out}})(\mathbbm{1}_{\mathrm{aux}}\otimes P_i)] \ket{\phi}^2\\
        &= \frac{1}{(4^n - 1)^2} \sum_{i=1}^{(4^n - 1)^2}  \tr\left[ \tr_{\mathrm{in}}[(\rho^{\top_{\mathrm{in}}}\otimes\mathbbm{1}_{\mathrm{out}})(\mathbbm{1}_{\mathrm{aux}}\otimes P_i)] \ketbra{\phi}{\phi}\right]^2\\
        &= \frac{1}{(4^n - 1)^2} \sum_{i=1}^{(4^n - 1)^2}  \tr\left[ \left((\rho^{\top_{\mathrm{in}}}\otimes\mathbbm{1}_{\mathrm{out}})(\mathbbm{1}_{\mathrm{aux}}\otimes P_i)\right) \left(\mathbbm{1}_{\mathrm{in}}\otimes \ketbra{\phi}{\phi}\right)\right]^2\\
        &= \frac{1}{(4^n - 1)^2} \tr\left[\left(\sum_{i=1}^{(4^n - 1)^2} (\mathbbm{1}_{\mathrm{aux}}\otimes P_i)\otimes (\mathbbm{1}_{\mathrm{aux}}\otimes P_i) \right) \left(\left(\mathbbm{1}_{\mathrm{in}}\otimes \ketbra{\phi}{\phi}\right)(\rho^{\top_{\mathrm{in}}}\otimes\mathbbm{1}_{\mathrm{out}})\right)^{\otimes 2}\right]\\
        &= \frac{1}{(4^n - 1)^2} \Bigg( \tr\Bigg[ 2^{2n} \operatorname{SWAP_{\mathrm{in},\mathrm{out}}} (\tr_{\mathrm{aux}}[\left(\mathbbm{1}_{\mathrm{in}}\otimes \ketbra{\phi}{\phi}\right)(\rho^{\top_{\mathrm{in}}}\otimes\mathbbm{1}_{\mathrm{out}})])^{\otimes 2}\Bigg]\\
        &\hphantom{= \frac{1}{(4^n - 1)^2} \Bigg(} - \tr\Bigg[2^n \operatorname{SWAP}_{\mathrm{in}} (\tr_{\mathrm{aux}, \mathrm{out}}[\left(\mathbbm{1}_{\mathrm{in}}\otimes \ketbra{\phi}{\phi}\right)(\rho^{\top_{\mathrm{in}}}\otimes\mathbbm{1}_{\mathrm{out}})])^{\otimes 2} \Bigg]\\
        &\hphantom{= \frac{1}{(4^n - 1)^2} \Bigg(} - \tr\Bigg[2^n \operatorname{SWAP}_{\mathrm{out}} (\tr_{\mathrm{aux}, \mathrm{in}}[\left(\mathbbm{1}_{\mathrm{in}}\otimes \ketbra{\phi}{\phi}\right)(\rho^{\top_{\mathrm{in}}}\otimes\mathbbm{1}_{\mathrm{out}})])^{\otimes 2} \Bigg]\\
        &\hphantom{= \frac{1}{(4^n - 1)^2} \Bigg(} + \tr\Bigg[(\mathbbm{1}_{\mathrm{aux}}\otimes \mathbbm{1}_{\mathrm{in},\mathrm{out}}) (\tr_{\mathrm{aux}, \mathrm{in}}[\left(\mathbbm{1}_{\mathrm{in}}\otimes \ketbra{\phi}{\phi}\right)(\rho^{\top_{\mathrm{in}}}\otimes\mathbbm{1}_{\mathrm{out}})])^{\otimes 2} \Bigg]\Bigg)\\
        &= \frac{1}{(4^n - 1)^2} \Bigg( 2^{2n} \tr\Bigg[(\tr_{\mathrm{aux}}[\left(\mathbbm{1}_{\mathrm{in}}\otimes \ketbra{\phi}{\phi}\right)(\rho^{\top_{\mathrm{in}}}\otimes\mathbbm{1}_{\mathrm{out}})])^{2}\Bigg]\\
        &\hphantom{= \frac{1}{(4^n - 1)^2} \Bigg(} - 2^n \tr\Bigg[(\tr_{\mathrm{aux}, \mathrm{out}}[\left(\mathbbm{1}_{\mathrm{in}}\otimes \ketbra{\phi}{\phi}\right)(\rho^{\top_{\mathrm{in}}}\otimes\mathbbm{1}_{\mathrm{out}})])^{2} \Bigg]\\
        &\hphantom{= \frac{1}{(4^n - 1)^2} \Bigg(} - 2^n \tr\Bigg[(\tr_{\mathrm{aux}, \mathrm{in}}[\left(\mathbbm{1}_{\mathrm{in}}\otimes \ketbra{\phi}{\phi}\right)(\rho^{\top_{\mathrm{in}}}\otimes\mathbbm{1}_{\mathrm{out}})])^{2} \Bigg]\\
        &\hphantom{= \frac{1}{(4^n - 1)^2} \Bigg(} + \tr [\left(\mathbbm{1}_{\mathrm{in}}\otimes \ketbra{\phi}{\phi}\right)(\rho^{\top_{\mathrm{in}}}\otimes\mathbbm{1}_{\mathrm{out}}) ]^2\Bigg) .
    \end{align}
    \endgroup
    Therefore, we have
    \begin{align}
        &\frac{1}{(4^n - 1)^2} \sum_{i=1}^{(4^n - 1)^2} \left(\frac{\bra{\phi} \tr_{\mathrm{in}}[(\rho^{\top_{\mathrm{in}}}\otimes\mathbbm{1}_{\mathrm{out}})(\mathbbm{1}_{\mathrm{aux}}\otimes P_i)] \ket{\phi}}{\bra{\phi} (\rho_{\mathrm{aux}}\otimes \mathbbm{1}_{\mathrm{out}})\ket{\phi}}\right)^2\\
        &= \frac{1}{(4^n - 1)^2} \Bigg( 2^{2n} \frac{\tr\Bigg[(\tr_{\mathrm{aux}}[\left(\mathbbm{1}_{\mathrm{in}}\otimes \ketbra{\phi}{\phi}\right)(\rho^{\top_{\mathrm{in}}}\otimes\mathbbm{1}_{\mathrm{out}})])^{2}\Bigg]}{\bra{\phi} (\rho_{\mathrm{aux}}\otimes \mathbbm{1}_{\mathrm{out}})\ket{\phi}^2}\\
        &\hphantom{= \frac{1}{(4^n - 1)^2} \Bigg(} - 2^n \frac{\tr\Bigg[(\tr_{\mathrm{aux}, \mathrm{out}}[\left(\mathbbm{1}_{\mathrm{in}}\otimes \ketbra{\phi}{\phi}\right)(\rho^{\top_{\mathrm{in}}}\otimes\mathbbm{1}_{\mathrm{out}})])^{2} \Bigg]}{\bra{\phi} (\rho_{\mathrm{aux}}\otimes \mathbbm{1}_{\mathrm{out}})\ket{\phi}^2}\\
        &\hphantom{= \frac{1}{(4^n - 1)^2} \Bigg(} - 2^n \frac{\tr\Bigg[(\tr_{\mathrm{aux}, \mathrm{in}}[\left(\mathbbm{1}_{\mathrm{in}}\otimes \ketbra{\phi}{\phi}\right)(\rho^{\top_{\mathrm{in}}}\otimes\mathbbm{1}_{\mathrm{out}})])^{2} \Bigg]}{\bra{\phi} (\rho_{\mathrm{aux}}\otimes \mathbbm{1}_{\mathrm{out}})\ket{\phi}^2}\\
        &\hphantom{= \frac{1}{(4^n - 1)^2} \Bigg(} + \frac{\tr [\left(\mathbbm{1}_{\mathrm{in}}\otimes \ketbra{\phi}{\phi}\right)(\rho^{\top_{\mathrm{in}}}\otimes\mathbbm{1}_{\mathrm{out}}) ]^2}{\bra{\phi} (\rho_{\mathrm{aux}}\otimes \mathbbm{1}_{\mathrm{out}})\ket{\phi}^2}\Bigg) 
    \end{align}
    Notice that 
    \begin{equation}
        \tr [\left(\mathbbm{1}_{\mathrm{in}}\otimes \ketbra{\phi}{\phi}\right)(\rho^{\top_{\mathrm{in}}}\otimes\mathbbm{1}_{\mathrm{out}}) ] 
        = \bra{\phi} (\rho_{\mathrm{aux}}\otimes \mathbbm{1}_{\mathrm{out}})\ket{\phi} 
        = \tr [\left(\mathbbm{1}_{\mathrm{in}}\otimes \ketbra{\phi}{\phi}\right)(\rho\otimes\mathbbm{1}_{\mathrm{out}}) ],
    \end{equation}
    since $\tr_{\mathrm{in}}[\rho^{\top_{\mathrm{in}}}\otimes\mathbbm{1}_{\mathrm{out}}] = \rho_{\mathrm{aux}}\otimes \mathbbm{1}_{\mathrm{out}} = \tr_{\mathrm{in}}[\rho\otimes\mathbbm{1}_{\mathrm{out}}]$.
    As the partial trace of an operator has the same trace as the original operator, we can thus rewrite the above as follows:
    \begin{align}
        &\frac{1}{(4^n - 1)^2} \sum_{i=1}^{(4^n - 1)^2} \left(\frac{\bra{\phi} \tr_{\mathrm{in}}[(\rho^{\top_{\mathrm{in}}}\otimes\mathbbm{1}_{\mathrm{out}})(\mathbbm{1}_{\mathrm{aux}}\otimes P_i)] \ket{\phi}}{\bra{\phi} (\rho_{\mathrm{aux}}\otimes \mathbbm{1}_{\mathrm{out}})\ket{\phi}}\right)^2\\
        &= \frac{1}{(4^n - 1)^2} \Bigg( 2^{2n} \frac{\tr\Bigg[(\tr_{\mathrm{aux}}[\left(\mathbbm{1}_{\mathrm{in}}\otimes \ketbra{\phi}{\phi}\right)(\rho^{\top_{\mathrm{in}}}\otimes\mathbbm{1}_{\mathrm{out}})])^{2}\Bigg]}{ \tr\Bigg[\tr_{\mathrm{aux}}[\left(\mathbbm{1}_{\mathrm{in}}\otimes \ketbra{\phi}{\phi}\right)(\rho\otimes\mathbbm{1}_{\mathrm{out}})]\Bigg]^{2}}\\
        &\hphantom{= \frac{1}{(4^n - 1)^2} \Bigg(} - 2^n \frac{\tr\Bigg[(\tr_{\mathrm{aux}, \mathrm{out}}[\left(\mathbbm{1}_{\mathrm{in}}\otimes \ketbra{\phi}{\phi}\right)(\rho^{\top_{\mathrm{in}}}\otimes\mathbbm{1}_{\mathrm{out}})])^{2} \Bigg]}{\tr\Bigg[\tr_{\mathrm{aux}, \mathrm{out}}[\left(\mathbbm{1}_{\mathrm{in}}\otimes \ketbra{\phi}{\phi}\right)(\rho\otimes\mathbbm{1}_{\mathrm{out}})] \Bigg]^{2}}\\
        &\hphantom{= \frac{1}{(4^n - 1)^2} \Bigg(} - 2^n \frac{\tr\Bigg[(\tr_{\mathrm{aux}, \mathrm{in}}[\left(\mathbbm{1}_{\mathrm{in}}\otimes \ketbra{\phi}{\phi}\right)(\rho^{\top_{\mathrm{in}}}\otimes\mathbbm{1}_{\mathrm{out}})])^{2} \Bigg]}{\tr\Bigg[\tr_{\mathrm{aux}, \mathrm{in}}[\left(\mathbbm{1}_{\mathrm{in}}\otimes \ketbra{\phi}{\phi}\right)(\rho\otimes\mathbbm{1}_{\mathrm{out}})] \Bigg]^{2}}
        + 1 \Bigg) .
    \end{align}
    Next we use the identity
    \begin{equation}\label{eq:trace-of-product-equals-trace-of-product-of-partial-traces}
        \tr[X Y]
        = \tr[ X^{\top_A} Y^{\top_A}],\quad \forall X,Y\in\mathcal{B}(\mathbb{C}^{d_A}\otimes\mathbb{C}^{d_B}) ,
    \end{equation}
    which can, e.g., be checked by expanding in a tensor product basis.
    Let $\theta$ denote the transpose viewed as linear superoperator. We clearly have $( \theta_{\mathrm{in}} \otimes \operatorname{id}_{\mathrm{out}} ) \circ \tr_{\mathrm{aux}} = \tr_{\mathrm{aux}} \circ ( \theta_{\mathrm{in}} \otimes \operatorname{id}_{\mathrm{aux},\mathrm{out}} )$.
    Combining this with \Cref{eq:trace-of-product-equals-trace-of-product-of-partial-traces}, we can rewrite the enumerator in the first summand above as
    \begin{align}
        &\tr\Bigg[(\tr_{\mathrm{aux}}[\left(\mathbbm{1}_{\mathrm{in}}\otimes \ketbra{\phi}{\phi}\right)(\rho^{\top_{\mathrm{in}}}\otimes\mathbbm{1}_{\mathrm{out}})])^{2}\Bigg]\\
        &= \tr\Bigg[ \left( (\theta_{\mathrm{in}} \otimes \operatorname{id}_{\mathrm{out}})\left( \tr_{\mathrm{aux}}[\left(\mathbbm{1}_{\mathrm{in}}\otimes \ketbra{\phi}{\phi}\right)(\rho^{\top_{\mathrm{in}}}\otimes\mathbbm{1}_{\mathrm{out}})]\right)\right)^{2}\Bigg]\\
        &= \tr\Bigg[ \left(  \tr_{\mathrm{aux}}\left[(\theta_{\mathrm{in}} \otimes \operatorname{id}_{\mathrm{aux},\mathrm{out}})\left(\left(\mathbbm{1}_{\mathrm{in}}\otimes \ketbra{\phi}{\phi}\right)(\rho^{\top_{\mathrm{in}}}\otimes\mathbbm{1}_{\mathrm{out}})\right)\right]\right)^{2}\Bigg]\\
        &= \tr\Bigg[ \left(  \tr_{\mathrm{aux}}\left[\left(\mathbbm{1}_{\mathrm{in}}\otimes \ketbra{\phi}{\phi}\right)(\rho\otimes\mathbbm{1}_{\mathrm{out}})\right]\right)^{2}\Bigg] .
    \end{align}
    Similarly, we can combine the identity $\tr[XY]=\tr[X^\top Y^\top]$ $\forall X,Y\in\mathcal{B}(\mathcal{H})$ with the commutation  $\theta_{\mathrm{in}} \circ \tr_{\mathrm{aux},\mathrm{out}} = \tr_{\mathrm{aux},\mathrm{out}} \circ ( \theta_{\mathrm{in}} \otimes \operatorname{id}_{\mathrm{aux},\mathrm{out}} )$ to rewrite the enumerator in the second summand above as
    \begin{align}
        &\tr\Bigg[(\tr_{\mathrm{aux}, \mathrm{out}}[\left(\mathbbm{1}_{\mathrm{in}}\otimes \ketbra{\phi}{\phi}\right)(\rho^{\top_{\mathrm{in}}}\otimes\mathbbm{1}_{\mathrm{out}})])^{2} \Bigg]\\
        &= \tr\Bigg[\left(\theta_{\mathrm{in}}(\tr_{\mathrm{aux}, \mathrm{out}}[\left(\mathbbm{1}_{\mathrm{in}}\otimes \ketbra{\phi}{\phi}\right)(\rho^{\top_{\mathrm{in}}}\otimes\mathbbm{1}_{\mathrm{out}})])\right)^{2} \Bigg]\\
        &= \tr\Bigg[\left(\tr_{\mathrm{aux}, \mathrm{out}}\left[( \theta_{\mathrm{in}} \otimes \operatorname{id}_{\mathrm{aux},\mathrm{out}} )\left(\left(\mathbbm{1}_{\mathrm{in}}\otimes \ketbra{\phi}{\phi}\right)(\rho^{\top_{\mathrm{in}}}\otimes\mathbbm{1}_{\mathrm{out}})\right)\right]\right)^{2} \Bigg]\\
        &= \tr\Bigg[(\tr_{\mathrm{aux}, \mathrm{out}}[\left(\mathbbm{1}_{\mathrm{in}}\otimes \ketbra{\phi}{\phi}\right)(\rho\otimes\mathbbm{1}_{\mathrm{out}})])^{2} \Bigg] .
    \end{align}
    Finally, using the identity $\tr_A[X Y ]= \tr_A [X^{\top_A} Y^{\top_A}]$ $\forall X,Y\in\mathcal{B}(\mathbb{C}^{d_A}\otimes\mathbb{C}^{d_B})$, we can rewrite the enumerator in the third summand above as
    \begin{align}
        &\tr\Bigg[(\tr_{\mathrm{aux}, \mathrm{in}}[\left(\mathbbm{1}_{\mathrm{in}}\otimes \ketbra{\phi}{\phi}\right)(\rho^{\top_{\mathrm{in}}}\otimes\mathbbm{1}_{\mathrm{out}})])^{2} \Bigg]\\
        &= \tr\Bigg[\left(\tr_{\mathrm{aux}, \mathrm{in}}\left[( \theta_{\mathrm{in}} \otimes \operatorname{id}_{\mathrm{aux},\mathrm{out}} )\left(\left(\mathbbm{1}_{\mathrm{in}}\otimes \ketbra{\phi}{\phi}\right)(\rho^{\top_{\mathrm{in}}}\otimes\mathbbm{1}_{\mathrm{out}})\right)\right]\right)^{2} \Bigg]\\
        &= \tr\Bigg[(\tr_{\mathrm{aux}, \mathrm{in}}[\left(\mathbbm{1}_{\mathrm{in}}\otimes \ketbra{\phi}{\phi}\right)(\rho\otimes\mathbbm{1}_{\mathrm{out}})])^{2} \Bigg] .
    \end{align}
    In summary, we now have the following expression:
    \begin{align}
        &\frac{1}{(4^n - 1)^2} \sum_{i=1}^{(4^n - 1)^2} \left(\frac{\bra{\phi} \tr_{\mathrm{in}}[(\rho^{\top_{\mathrm{in}}}\otimes\mathbbm{1}_{\mathrm{out}})(\mathbbm{1}_{\mathrm{aux}}\otimes P_i)] \ket{\phi}}{\bra{\phi} (\rho_{\mathrm{aux}}\otimes \mathbbm{1}_{\mathrm{out}})\ket{\phi}}\right)^2\\
        &= \frac{1}{(4^n - 1)^2} \Bigg( 2^{2n} \frac{\tr\Bigg[(\tr_{\mathrm{aux}}[\left(\mathbbm{1}_{\mathrm{in}}\otimes \ketbra{\phi}{\phi}\right)(\rho\otimes\mathbbm{1}_{\mathrm{out}})])^{2}\Bigg]}{ \tr\Bigg[\tr_{\mathrm{aux}}[\left(\mathbbm{1}_{\mathrm{in}}\otimes \ketbra{\phi}{\phi}\right)(\rho\otimes\mathbbm{1}_{\mathrm{out}})]\Bigg]^{2}}\\
        &\hphantom{= \frac{1}{(4^n - 1)^2} \Bigg(} - 2^n \frac{\tr\Bigg[(\tr_{\mathrm{aux}, \mathrm{out}}[\left(\mathbbm{1}_{\mathrm{in}}\otimes \ketbra{\phi}{\phi}\right)(\rho\otimes\mathbbm{1}_{\mathrm{out}})])^{2} \Bigg]}{\tr\Bigg[\tr_{\mathrm{aux}, \mathrm{out}}[\left(\mathbbm{1}_{\mathrm{in}}\otimes \ketbra{\phi}{\phi}\right)(\rho\otimes\mathbbm{1}_{\mathrm{out}})] \Bigg]^{2}}\\
        &\hphantom{= \frac{1}{(4^n - 1)^2} \Bigg(} - 2^n \frac{\tr\Bigg[(\tr_{\mathrm{aux}, \mathrm{in}}[\left(\mathbbm{1}_{\mathrm{in}}\otimes \ketbra{\phi}{\phi}\right)(\rho\otimes\mathbbm{1}_{\mathrm{out}})])^{2} \Bigg]}{\tr\Bigg[\tr_{\mathrm{aux}, \mathrm{in}}[\left(\mathbbm{1}_{\mathrm{in}}\otimes \ketbra{\phi}{\phi}\right)(\rho\otimes\mathbbm{1}_{\mathrm{out}})] \Bigg]^{2}}
        + 1 \Bigg) .
    \end{align}
    To simplify the second and third summand further, we use that the partial trace is cyclic w.r.t.~operators that act non-trivially only on the system(s) being traced over. This allows us to do the following rewritings: As $\ketbra{\phi}{\phi} = (\ketbra{\phi}{\phi})^2$, we get
    \begin{align}
        \tr_{\mathrm{aux}, \mathrm{out}}[\left(\mathbbm{1}_{\mathrm{in}}\otimes \ketbra{\phi}{\phi}\right)(\rho\otimes\mathbbm{1}_{\mathrm{out}})]
        &= \tr_{\mathrm{aux}, \mathrm{out}}[\left(\mathbbm{1}_{\mathrm{in}}\otimes \ketbra{\phi}{\phi}\right)^2(\rho\otimes\mathbbm{1}_{\mathrm{out}})]\\
        &= \tr_{\mathrm{aux}, \mathrm{out}}[\left(\mathbbm{1}_{\mathrm{in}}\otimes \ketbra{\phi}{\phi}\right)(\rho\otimes\mathbbm{1}_{\mathrm{out}})\left(\mathbbm{1}_{\mathrm{in}}\otimes \ketbra{\phi}{\phi}\right)] .
    \end{align}
    \normalsize
    And as $\rho = \sqrt{\rho}^2$, we get
    \small
    \begin{align}
        \tr_{\mathrm{aux}, \mathrm{in}}[\left(\mathbbm{1}_{\mathrm{in}}\otimes \ketbra{\phi}{\phi}\right)(\rho\otimes\mathbbm{1}_{\mathrm{out}})]
        &= \tr_{\mathrm{aux}, \mathrm{in}}[\left(\mathbbm{1}_{\mathrm{in}}\otimes \ketbra{\phi}{\phi}\right)(\sqrt{\rho}\otimes\mathbbm{1}_{\mathrm{out}})^2]\\
        &= \tr_{\mathrm{aux}, \mathrm{in}}[(\sqrt{\rho}\otimes\mathbbm{1}_{\mathrm{out}})\left(\mathbbm{1}_{\mathrm{in}}\otimes \ketbra{\phi}{\phi}\right)(\sqrt{\rho}\otimes\mathbbm{1}_{\mathrm{out}})] .
    \end{align}
    \normalsize
    Therefore, both $\tr_{\mathrm{aux}, \mathrm{out}}[\left(\mathbbm{1}_{\mathrm{in}}\otimes \ketbra{\phi}{\phi}\right)(\rho\otimes\mathbbm{1}_{\mathrm{out}})]$ and $\tr_{\mathrm{aux}, \mathrm{in}}[\left(\mathbbm{1}_{\mathrm{in}}\otimes \ketbra{\phi}{\phi}\right)(\rho\otimes\mathbbm{1}_{\mathrm{out}})]$ are partial traces of positive semidefinite operators and thus themselves positive semidefinite.
    As the inequality $\lvert \tr[X^2]\rvert = \tr[X^2]\leq (\tr[X])^2$ holds for any positive semidefinite $X$, we now arrive at the following upper bound:
    \begin{align}
        &\frac{1}{(4^n - 1)^2} \sum_{i=1}^{(4^n - 1)^2} \left(\frac{\bra{\phi} \tr_{\mathrm{in}}[(\rho^{\top_{\mathrm{in}}}\otimes\mathbbm{1}_{\mathrm{out}})(\mathbbm{1}_{\mathrm{aux}}\otimes P_i)] \ket{\phi}}{\bra{\phi} (\rho_{\mathrm{aux}}\otimes \mathbbm{1}_{\mathrm{out}})\ket{\phi}}\right)^2\\
        &\leq \frac{1}{(4^n - 1)^2} \left( 2^{2n} \frac{\tr\Bigg[(\tr_{\mathrm{aux}}[\left(\mathbbm{1}_{\mathrm{in}}\otimes \ketbra{\phi}{\phi}\right)(\rho\otimes\mathbbm{1}_{\mathrm{out}})])^{2}\Bigg]}{ \tr\Bigg[\tr_{\mathrm{aux}}[\left(\mathbbm{1}_{\mathrm{in}}\otimes \ketbra{\phi}{\phi}\right)(\rho\otimes\mathbbm{1}_{\mathrm{out}})]\Bigg]^{2}} + 2\cdot 2^n + 1 \right) .
    \end{align}
    It remains to recall from the proof of \Cref{lemma:Delta-choi-pauli-shadow-tomography-channel-access} that 
    \begin{equation}
        \tr\Bigg[(\tr_{\mathrm{aux}}[\left(\mathbbm{1}_{\mathrm{in}}\otimes \ketbra{\phi}{\phi}\right)(\rho\otimes\mathbbm{1}_{\mathrm{out}})])^{2}\Bigg]
        \leq \tr\Bigg[\tr_{\mathrm{aux}}[\left(\mathbbm{1}_{\mathrm{in}}\otimes \ketbra{\phi}{\phi}\right)(\rho\otimes\mathbbm{1}_{\mathrm{out}})]\Bigg]^{2}.
    \end{equation}
    Therefore, we have shown the upper bound:
    \begin{align}
        \frac{1}{(4^n - 1)^2} \sum_{i=1}^{(4^n - 1)^2} \left(\frac{\bra{\phi} \tr_{\mathrm{in}}[(\rho^{\top_{\mathrm{in}}}\otimes\mathbbm{1}_{\mathrm{out}})(\mathbbm{1}_{\mathrm{aux}}\otimes P_i)] \ket{\phi}}{\bra{\phi} (\rho_{\mathrm{aux}}\otimes \mathbbm{1}_{\mathrm{out}})\ket{\phi}}\right)^2
        &\leq \frac{2^{2n} + 2\cdot 2^n + 1}{(4^n - 1)^2}\\
        &= \frac{(2^n +1)^2}{(4^n - 1)^2}\\
        &= \frac{1}{(2^n - 1)^2} .
    \end{align}
    This finishes the proof for the upper bound. To establish the lower bound we can for example consider the case in which $\ket{\phi}_{\mathrm{aux},\mathrm{out}}$ and $\rho_{\mathrm{in},\mathrm{aux}}$ factorize as $\ket{\phi}_{\mathrm{aux},\mathrm{out}} = \ket{\psi}_{\mathrm{aux}}\otimes\ket{\varphi}_{\mathrm{out}}$ and $\rho_{\mathrm{in},\mathrm{aux}} = \ketbra{\varphi}{\varphi}_{\mathrm{in}}\otimes \ketbra{\psi}{\psi}_{\mathrm{aux}}$. Plugging these choices into our exact expression above, we get 
    \begin{align}
        \frac{1}{(4^n - 1)^2} \sum_{i=1}^{(4^n - 1)^2} \left(\frac{\bra{\phi} \tr_{\mathrm{in}}[(\rho^{\top_{\mathrm{in}}}\otimes\mathbbm{1}_{\mathrm{out}})(\mathbbm{1}_{\mathrm{aux}}\otimes P_i)] \ket{\phi}}{\bra{\phi} (\rho_{\mathrm{aux}}\otimes \mathbbm{1}_{\mathrm{out}})\ket{\phi}}\right)^2
        &= \frac{2^{2n} - 2\cdot 2^n + 1}{(4^n -1)^2}\\
        &= \frac{(2^n - 1)^2}{(4^n - 1)^2}\\
        &= \frac{1}{(2^n + 1)^2} .
    \end{align}
    This concludes the proof.
\end{proof}

Having established these lemmata, we can complete the proofs of \Cref{theorem:lower-bound-ptm-learning-without-quantum-memory-from-channel-queries,theorem:lower-bound-ptm-learning-without-quantum-memory-from-doubly-stochastic-channel-queries}:

\begin{proof}[Proof of \Cref{theorem:lower-bound-ptm-learning-without-quantum-memory-from-channel-queries,theorem:lower-bound-ptm-learning-without-quantum-memory-from-doubly-stochastic-channel-queries}]
    Recalling from the proof of \Cref{theorem:ptm-entries-from-choi-shadow-tomography} that estimating all entries of $R_\mathcal{N}^\mathcal{P}$ up to accuracy $\varepsilon$ is equivalent to estimating all expectation values $\tr[\sigma_A \tfrac{1}{2^n}\Gamma^{\mathcal{N}}]$, $A\in\{0,1,2,3\}^{2n}$, up to accuracy $\varepsilon$, we can now combine \Cref{lemma:lower-bound-choi-shadow-tomography-general-channel-access,lemma:Delta-choi-pauli-shadow-tomography-channel-access} to obtain  \Cref{theorem:lower-bound-ptm-learning-without-quantum-memory-from-channel-queries}, and combine \Cref{lemma:lower-bound-doubly-stochastic-choi-shadow-tomography-general,lemma:Delta-doubly-stochastic-choi-pauli-shadow-tomography-channel-access} to obtain \Cref{theorem:lower-bound-ptm-learning-without-quantum-memory-from-doubly-stochastic-channel-queries}.
\end{proof}

\begin{remark}
    A natural alternative proof strategy for establishing \Cref{theorem:lower-bound-ptm-learning-without-quantum-memory-from-channel-queries,theorem:lower-bound-ptm-learning-without-quantum-memory-from-doubly-stochastic-channel-queries} would be to reduce them to their respective Choi access analogues. 
    It is, however, not immediately obvious how to obtain such a reduction information-theoretically efficiently because protocols like implementation of a channel via Choi state teleportation have a success probability that is exponentially small in the number of qubits. While this success probability can be boosted if local unitary invariances of the Choi state are known in advance (compare, e.g., \cite[Section 2.1]{wolf2012quantumchannels}), we would need the unknown quantum channels to share the same invariances for this insight to be useful in a reduction for our lower bounds. We were unable to identify a suitable shared invariance for the Choi states used in our proofs and therefore resorted to a proof strategy other than a reduction to learning from Choi access. 
    It would be interesting to see whether the proof of \cite[Theorem 2, (iv)]{chen2022quantum}, which is based on teleportation stretching for Pauli channels, can be modified to obtain a similar reduction from general channel access to Choi access in our setting, as this may simplify our proofs.
\end{remark}

\subsection{Proof of \Cref{theorem:hamiltonian-learning}}

Here, we give the complete proof for our Hamiltonian learning guarantee in \Cref{theorem:hamiltonian-learning}:

\begin{proof}[Proof of \Cref{theorem:hamiltonian-learning}]
    We follow the strategy sketched in the main text.
    We choose the parameters for the polynomial interpolation as follows.
    We set the maximal evolution time to be $T = \nicefrac{1}{\norm{H}}$ and the Chebyshev degree to be $L = \left\lceil 2\log\left(\tfrac{8\norm{H}}{\sqrt{2\pi}\ln(2)\varepsilon}\right)\right\rceil$.
    Also, we set our interpolation times to be $t_\ell = \tfrac{T}{2}(1+z_\ell)$, where $z_\ell = -\cos(\frac{2\ell-1}{2L} \pi)$, for $1\leq \ell\leq L$.
    Moreover, we set $\tilde{\varepsilon} = \tfrac{3 T \varepsilon}{4 (L-1)L (2L-1)}$.
    
    As first part of the procedure, for each $1\leq \ell\leq L$, we apply \Cref{corollary:pauli-sparse-expectations-from-choi-shadow-tomography} to obtain a classical description $\hat{\mathcal{U}}_{t_\ell}$ of $\mathcal{U}_{t_\ell}$ that allows to predict Pauli-sparse expectation values, with success probability $\geq 1-\delta$. 
    This uses $\mathcal{O}\left(\tfrac{n + \log(1/\delta)}{\tilde{\varepsilon}^4}\right)$ parallel queries to $\mathcal{U}_{t_\ell}$ for each $1\leq \ell\leq L$.
    In the second part of the procedure, to estimate $M$ Pauli coefficients $\alpha (A^{(i)})$, $A^{(i)}\in\{0,1,2,3\}^n\setminus\{0^n\}$, $1\leq i\leq M$, we first use the classical representations $\hat{\mathcal{U}}_{\ell}$ to obtain $\tilde{\varepsilon}$-accurate estimates $2\hat{\alpha}_{t_\ell}^{(i)}$ of the expectation values $\tr[\sigma_{B^{(i)}} \mathcal{U}_{t_\ell} (\rho^{(i)})]$, with $B^{(i)}$ and $\rho^{(i)}$ as in \Cref{lemma:isolating-single-pauli-coefficient}. 
    This uses $m_2 = \mathcal{O}\left(\tfrac{ \log (M /\delta)}{\tilde{\varepsilon}^2}\right)$ additional parallel queries to $\mathcal{U}_{t_\ell}$ as well as classical computation time $\mathcal{O}\left(\tfrac{n + \log (1/\delta)}{\tilde{\varepsilon}^4 }\right)$, for each $1\leq \ell\leq L$.
    Next, we perform Chebyshev interpolation with these estimated values, which gives Chebyshev coefficients $\hat{b}_{0}^{(i)} = \tfrac{1}{L}\sum_{\ell=1}^L \hat{\alpha}_{t_\ell}^{(i)}$ and $\hat{b}_{m}^{(i)} = \tfrac{2}{L}\sum_{\ell=1}^L \hat{\alpha}_{t_\ell}^{(i)} T_m (z_\ell)$, $1\leq m\leq L-1$, where $T_m$ is the $m$th Chebyshev polynomial (of the first kind).
    These Chebyshev coefficients in turn give rise to the first-order Taylor coefficients $\hat{c}_1^{(i)} = -\tfrac{2}{T} \sum_{m=0}^{L-1} (-1)^m \hat{b}_{m}^{(i)} m^2$, $1\leq i\leq M$, as shown for example in \cite[Lemma 6]{gu2022practical}.
    
    It remains to show that indeed $\lvert \hat{c}_1^{(i)} - \alpha (A^{(i)})\rvert\leq \varepsilon$ holds simultaneously for all $1\leq i\leq M$, with success probability $\geq 1-\delta$.
    To see this, first recall that by \Cref{corollary:pauli-sparse-expectations-from-choi-shadow-tomography}, we know that, with success probability $\geq 1-\delta$,
    \begin{equation}
        \left\lvert 2\hat{\alpha}_{t_\ell}^{(i)} - \tr[\sigma_{B^{(i)}} \mathcal{U}_{t_\ell} (\rho^{(i)})]\right\rvert\leq\tilde{\varepsilon}\quad \forall 1\leq i\leq M.
    \end{equation}
    For the remainder of the proof, we condition on this success event.
    Consider an ``ideal'' datasets $(t_1, \tfrac{1}{2}\tr[\sigma_{B^{(i)}} \mathcal{U}_{t_1} (\rho^{(i)})]),\ldots,(t_L, \tfrac{1}{2}\tr[\sigma_{B^{(i)}} \mathcal{U}_{t_L} (\rho^{(i)})])$ -- to which the algorithm does not have access, but which is a useful tool for our analysis. Let $b_{m}^{(i)}$ be the coefficients of the Chebyshev interpolation polynomial for that ideal dataset.
    Let $c_1^{(i)} = -\tfrac{2}{T} \sum_{m=0}^{L-1} (-1)^m b_{m}^{(i)} m^2$ be the corresponding estimates for the first order Taylor coefficients. 
    By the triangle inequality, $\lvert \hat{c}_1^{(i)} - \alpha (A^{(i)})\rvert\leq \lvert \hat{c}_1^{(i)} - c_1^{(i)} \rvert + \lvert c_1^{(i)} - \alpha (A^{(i)}) \rvert$.
    The first of these two summands is easy to bound: As we are in the success event, we have $\lvert \hat{b}_{m}^{(i)} - b_{m}^{(i)}\rvert \leq 2\tilde{\varepsilon}$ for every $0\leq m\leq L-1$ and for every $1\leq i\leq M$.
    Therefore,
    \begin{equation}
        \lvert \hat{c}_1^{(i)} - c_1^{(i)} \rvert
        \leq \frac{2}{T} \sum_{m=0}^{L-1} \lvert \hat{b}_m - b_m\rvert m^2
        \leq \frac{4\tilde{\varepsilon}}{T} \sum_{m=0}^{L-1} m^2
        = \frac{2\tilde{\varepsilon} (L-1)L(2L-1)}{3T}
        = \frac{\varepsilon}{2} ,
    \end{equation}
    where the last inequality is by our choice of $\tilde{\varepsilon}$.
    For the second summand, we know -- from \Cref{lemma:isolating-single-pauli-coefficient} and from the definition of $c_1^{(i)}$ -- that $\alpha (A^{(i)})$ and $c_1^{(i)}$ are exactly the first order Taylor coefficients of the smooth $\mathbb{R}$-valued function $t\mapsto \tfrac{1}{2}\tr[\sigma_{B^{(i)}} \mathcal{U}_{t} (\rho^{(i)})]\equiv f^{(i)}(t)$ and of the corresponding degree-$(L-1)$ Chebyshev interpolation, respectively.
    We can therefore use \cite[Theorem 3]{howell1991derivative}, which tells us
    \begin{equation}
        \lvert c_1^{(i)} - \alpha (A^{(i)}) \rvert
        \leq \norm{\omega^{(1)}}_\infty\cdot \frac{\norm{\dv[L]{t}f^{(i)}}_\infty}{1!\cdot (L-1)!}\, .
    \end{equation}
    where $\omega(t)\coloneqq (t-t_0)\cdot (t-t_1)\cdot\ldots\cdot (t-t_N)$.
    To bound the right-hand side, we combine \Cref{lemma:derivative-bounds} with the observation that
    \begin{align}
        \norm{\omega^{(1)}}_\infty
        &= \max_{\tau\in [0,T]} \eval{\dv{t} \left((t-t_1)\cdot (t-t_2)\cdot\ldots\cdot (t-t_L)\right)}_{t=\tau}\\
        &= \max_{\tau\in [0,T]} \eval{\sum_{i=1}^L \prod_{k\neq i} (t - t_k)}_{t=\tau}\\
        &\leq \max_{\zeta\in [-1,1]} \sum_{i=1}^L \eval{\prod_{k\neq i} \frac{T}{2}(z - z_k)}_{z=\zeta}\\
        &\leq \sum_{i=1}^L T^{L-1}\\
        &\leq L T^{L-1}\, .
    \end{align}
    Combining these bounds we obtain
    \begin{equation}
        \lvert c_1^{(i)} - \alpha (A^{(i)}) \rvert
        \leq \frac{LT^{L-1}}{(L-1)!}\cdot (2\norm{H})^L
        = 2L \cdot \norm{H}\cdot\frac{(2T\norm{H})^{L-1}}{(L-1)!}\, .
    \end{equation}
    We can further upper bound this using the Stirling approximation bounds $\sqrt{2\pi(L-1)}\left(\tfrac{L-1}{e}\right)^{L-1}e^{\tfrac{1}{12(L-1)+1}}\leq (L-1)!\leq \sqrt{2\pi(L-1)}\left(\tfrac{L-1}{e}\right)^{L-1}e^{\tfrac{1}{12(L-1)}}$ from \cite{robbins1955remark-stirling}, which give us
    \begin{align}
        2L \cdot \norm{H}\cdot\frac{(2T\norm{H})^{L-1}}{(L-1)!}
        &\leq 2L \cdot \norm{H}\cdot\frac{(2eT\norm{H})^{L-1}}{\sqrt{2\pi(L-1)} (L-1)^{L-1}} e^{-\tfrac{1}{12(L-1)+1}}\\
        &\leq \frac{2L}{\sqrt{2\pi(L-1)}}\cdot \norm{H}\cdot\left(\frac{1}{2}\right)^{L-1}\\
        &\leq \frac{4\norm{H}}{\sqrt{2\pi}\ln(2)} \cdot \ln(2)L e^{-\ln(2) L}\\
        &\leq \frac{4\norm{H}}{\sqrt{2\pi}\ln(2)} \cdot \left(\frac{1}{2}\right)^{\tfrac{L}{2}}\\
        &\leq \frac{\varepsilon}{2}\, ,
    \end{align}
    where the last inequality is by our choice of $L$. In that computation, we also used the fact that $x e^{-x} \leq e^{-\tfrac{x}{2}}$ holds for any $x\geq 0$.
    Altogether, we have shown that, conditioned on our success event, $\lvert \hat{c}_1^{(i)} - \alpha (A^{(i)})\rvert \leq \varepsilon$ holds for all $1\leq i\leq M$.
    
    To finish the proof, let us summarize the different complexity bounds.
    To build the classical representations in the first phase, we use $\mathcal{O}\left(\tfrac{n + \log(1/\delta)}{\tilde{\varepsilon}^4}\right)$ queries to $\mathcal{U}_{t_\ell}$ for each $1\leq \ell\leq L$. 
    Overall, the first phase thus uses $\mathcal{O}\left(L\cdot \tfrac{n + \log(1/\delta)}{\tilde{\varepsilon}^4}\right) = \tilde{\mathcal{O}}\left(\tfrac{n + \log(1/\delta)}{\varepsilon^4}\cdot \norm{H}^4\right)$ 
    parallel queries to time evolutions along $H$, each with a time of order $\mathcal{O}(T)=\mathcal{O}(\nicefrac{1}{\norm{H}})$, leading to a total evolution time for the first phase of
    \begin{align}
        T_{1,\mathrm{total}}
        &= \mathcal{O}\left(\frac{n + \log(1/\delta)}{\tilde{\varepsilon}^4}\right)\cdot \sum_{\ell=1}^L t_\ell \\
        &= \tilde{\mathcal{O}}\left(\frac{n + \log(1/\delta)}{\varepsilon^4}\cdot \norm{H}^4 
        \right)\cdot \sum_{\ell=1}^L \frac{T}{2}(1+z_\ell)\\
        &= \tilde{\mathcal{O}}\left(\frac{n + \log(1/\delta)}{\varepsilon^4}\cdot \norm{H}^4 
        \right)\cdot  \frac{1}{2 \norm{H}}\sum_{\ell=1}^L \left(1-\cos(\frac{2\ell-1}{L} \pi)\right)\\
        &\leq \tilde{\mathcal{O}}\left(\frac{n + \log(1/\delta)}{\varepsilon^4}\cdot \norm{H}^4 
        \right)\cdot  \frac{L}{\norm{H}}\\
        &\leq \tilde{\mathcal{O}}\left(\frac{n + \log(1/\delta)}{\varepsilon^4}\cdot \norm{H}^3 
        \right) .
    \end{align}
    
    The second phase produces the $\tilde{\varepsilon}$-accurate estimates $\hat{\alpha}_{t_\ell}^{(i)}$, for $1\leq \ell\leq L$, $1\leq i \leq M$.
    This uses $\mathcal{O}\left(\tfrac{ \log (M /\delta)}{\tilde{\varepsilon}^2}\right)$ additional parallel queries to $\mathcal{U}_{t_\ell}$ for each $1\leq \ell\leq L$. Accordingly, this uses $\mathcal{O}\left(\tfrac{ L\cdot \log (M /\delta)}{\tilde{\varepsilon}^2}\right) = \tilde{\mathcal{O}}\left(\tfrac{\log (M) + \log(1/\delta)}{\varepsilon^2}\cdot  \norm{H}^2
    \right)$ additional parallel queries to $\mathcal{O}(\nicefrac{1}{\norm{H}})$-time evolutions along $H$, and the total evolution time for the second phase is $T_{2,\mathrm{total}} \leq \tilde{\mathcal{O}}\left(\tfrac{n + \log(1/\delta)}{\varepsilon^2}\cdot \norm{H} 
    \right)$.
    Producing the estimates $\hat{\alpha}_{t_\ell}^{(i)}$ uses classical computation time $\mathcal{O}\left(LM\cdot \tfrac{n^2 + n\log (1/\delta)}{\tilde{\varepsilon}^4 }\right)= \tilde{\mathcal{O}}\left(\tfrac{n^2 + n\log(1/\delta)}{\varepsilon^4}\cdot\norm{H}^4 
    M\right)$.
    This dominates the classical computation time since the Chebyshev interpolation and consequent derivative estimation takes time $\mathcal{O}(L)$.
    
    Finally, the classical representation $\hat{H}$ of $H$ consists of the classical representations $\hat{\mathcal{U}}_{t_\ell}$ of $\mathcal{U}_{t_\ell}$  for $1\leq \ell\leq L$. As each $\hat{\mathcal{U}}_{t_\ell}$ consists of $\mathcal{O}\left(\tfrac{n^2 + n \log(1/\delta)}{\tilde{\varepsilon}^4}\right)$ real numbers, we see that $\hat{H}$ consists of $\mathcal{O}\left(L\cdot \tfrac{n^2 + n \log(1/\delta)}{\tilde{\varepsilon}^4}\right) = \tilde{\mathcal{O}}\left(\tfrac{n^2 + n \log(1/\delta)}{\varepsilon^4}\cdot\norm{H}^4
    \right)$ real numbers stored in classical memory.
\end{proof}

\section{Query Complexity Lower Bound for Pauli Transfer Matrix Learning With Quantum Memory from Choi access}\label{appendix:query-lower-bound-ptm-learning-with-quantum-memory-choi-access}

Complementing the discussion of learning (Pauli) transfer matrices and predicting sparse expectation values from copies of the Choi state from \Cref{section:ptm-learning,section:tm-learning}, we give an information-theoretic argument to show that the linear-in-$n$ sample complexity scaling of \Cref{theorem:ptm-entries-from-choi-shadow-tomography} and \Cref{corollary:pauli-sparse-expectations-from-choi-shadow-tomography} is optimal for learning with quantum memory from Choi access. 

\begin{lemma}\label{lemma:ptm-lower-bound-with-quantum-memory}
    Any algorithm for learning with quantum memory from Choi access requires $\Omega (\tfrac{n}{\varepsilon^2})$ copies of the Choi state of an unknown $n$-qubit channel $\mathcal{N}$ to estimate all entries of $R_\mathcal{N}^\mathcal{P}$ up to accuracy $\varepsilon$ with success probability $\geq \tfrac{2}{3}$.
\end{lemma}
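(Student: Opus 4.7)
The plan is to reduce PTM learning to a large-alphabet hypothesis identification problem on Choi states and then invoke Holevo's bound together with Fano's inequality. The argument goes through for learners with quantum memory, since Holevo's bound caps the classical information extractable by any measurement strategy on the $m$ Choi state copies, including joint entangled measurements.

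First, I would construct a hard family of $4^n-1$ channels. For each non-identity $n$-qubit Pauli index $\alpha \in \{0,1,2,3\}^n \setminus \{0^n\}$, set
\begin{equation}
    \rho_\alpha \;\coloneqq\; \frac{1}{4^n}\bigl(\mathbb{1}_{2^{2n}} \;+\; 3\varepsilon\, \mathbb{1}_{\mathrm{in}} \otimes \sigma_\alpha\bigr).
\end{equation}
Because $\tr_{\mathrm{out}}[\mathbb{1}_{\mathrm{in}}\otimes \sigma_\alpha] = 0$ and $\lVert \mathbb{1}_{\mathrm{in}}\otimes\sigma_\alpha\rVert = 1$, the operator $\rho_\alpha$ is a valid Choi state of some quantum channel $\mathcal{N}_\alpha$ as soon as $3\varepsilon \le 1$. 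Using the identity $(R_\mathcal{N}^\mathcal{P})_{A,B} = \pm\tr[(\sigma_B\otimes\sigma_A)\rho]$ established in the proof of \Cref{theorem:ptm-entries-from-choi-shadow-tomography}, one finds $(R_{\mathcal{N}_\alpha}^\mathcal{P})_{\beta,0^n} = 3\varepsilon\,\delta_{\alpha,\beta}$. Hence any learner producing simultaneously $\varepsilon$-accurate estimates of all PTM entries with success probability at least $2/3$ can be converted, by reporting the index of the largest magnitude among the estimates of $(R^\mathcal{P})_{\beta,0^n}$, into a procedure that identifies $\alpha$ from $m$ copies of $\rho_\alpha$ with success probability at least $2/3$.

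Second, I would combine Fano with Holevo. Drawing $\alpha$ uniformly from $\{0,1,2,3\}^n\setminus\{0^n\}$, Fano's inequality gives
\begin{equation}
    I(\alpha;\mathrm{outcomes}) \;\ge\; \tfrac{2}{3}\log(4^n-1) \;-\; O(1) \;=\; \Omega(n).
\end{equation}
Holevo's bound upper bounds the same quantity by $\chi$ of $\{1/(4^n-1), \rho_\alpha^{\otimes m}\}$. A short eigenvalue calculation, using that each $\rho_\alpha$ has eigenvalues $(1\pm 3\varepsilon)/4^n$ with equal multiplicities $2^{2n-1}$ independent of $\alpha$, yields $S(\rho_\alpha) = 2n - \tfrac{9\varepsilon^2}{2\ln 2} + O(\varepsilon^4)$. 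Combining with the trivial bound $S(\mathbb{E}_\alpha \rho_\alpha^{\otimes m}) \le 2nm$ gives
\begin{equation}
    \chi \;\le\; S\bigl(\mathbb{E}_\alpha \rho_\alpha^{\otimes m}\bigr) - m\,S(\rho_\alpha) \;\le\; \tfrac{9m\varepsilon^2}{2\ln 2} + O(m\varepsilon^4).
\end{equation}
Juxtaposing the two bounds delivers $m = \Omega(n/\varepsilon^2)$, as required.

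The conceptual crux is to encode $\Omega(n)$ bits of information into the hard family while keeping the per-copy Holevo information only $O(\varepsilon^2)$. A naive Le Cam two-point argument, comparing a single perturbed Choi state to the maximally mixed one, would only produce an $\Omega(1/\varepsilon^2)$ bound; the extra factor of $n$ comes precisely from using an ensemble with $\exp(\Omega(n))$ hypotheses whose spectra all coincide, so that the Holevo quantity per copy is independent of the alphabet size. The only non-template calculations are verifying that the proposed $\rho_\alpha$ are bona fide Choi states and working out the second-order entropy expansion around the maximally mixed state.
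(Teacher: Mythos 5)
Your proof is correct and takes essentially the same route as the paper's: encode $\Omega(n)$ classical bits into an ensemble of Pauli-perturbed Choi states with identical spectra, lower-bound the extractable information via the identification task (Fano), and upper-bound it via the Holevo quantity using the second-order entropy expansion $S = 2n - \Theta(\varepsilon^2)$. The only difference is cosmetic: you use the sub-ensemble of $4^n-1$ Choi states $\tfrac{1}{4^n}\bigl(\mathbbm{1}_{2^{2n}}+3\varepsilon\,\mathbbm{1}_{\mathrm{in}}\otimes\sigma_\alpha\bigr)$ and a single PTM column to identify the hypothesis, whereas the paper uses all $2\cdot 4^n(4^n-1)$ signed $2n$-qubit Paulis acting non-trivially on the output register; both yield the same $\Omega(\tfrac{n}{\varepsilon^2})$ bound.
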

\begin{proof}
    Consider the set of observables $\{P_i\}_{i=1}^{2\cdot 4^n (4^{n} -1)}$ defined in \Cref{lemma:delta-choi-pauli-shadow-tomography} and the corresponding Choi states $\frac{1}{2^n}\Gamma_i = \frac{\mathbbm{1}_{2}^{\otimes 2n} + 3\varepsilon P_i}{2^{2n}}$.
    By construction, we have
    \begin{equation}
        \tr[P_i \frac{1}{2^n}\Gamma_j] = 3\varepsilon (\delta_{i, j} - \delta_{i+4^n (4^{n} -1), j}) \quad\forall 1\leq i,j\leq 2\cdot 4^n (4^{n} -1),
    \end{equation}
    where the addition in the index is to be understood modulo $2\cdot 4^n (4^{n} -1)$.
    Therefore, the ability to predict $\tr[P_i \frac{1}{2^n}\Gamma_j]$ up to accuracy $\varepsilon$ for every $j$ -- which is no harder than estimating all PTM entries up to accuracy $\varepsilon$ by \Cref{eq:ptm-as-pauli-choi-expectation-value} -- from $T$ copies of the Choi state implies the ability to identify $i$ from $T$ copies of the Choi state.
    If we consider $i$ to be chosen uniformly at random, then this implies that the mutual information between the classical and the quantum subststem in the classical-quantum state
    \begin{equation}
        \rho_{\mathrm{c}\mathrm{q}}
        = \mathbb{E}_{I\sim\mathcal{U}(\{1,\ldots, 2\cdot 4^n (4^{n} -1))\}}\left[\ketbra{I}{I}\otimes \left(\frac{1}{2^n}\Gamma_I\right)^{\otimes T} \right] 
    \end{equation}
    is at least $\log_2 \left(2\cdot 4^n (4^{n} -1)\right) = \Omega (n)$. That is, we have
    \small
    \begin{align}
        \Omega (n)
        &\leq I(\mathrm{c};\mathrm{q})_{\rho} \\
        &= S \left( \mathbb{E}_{I\sim\mathcal{U}(\{1,\ldots, 2\cdot 4^n (4^{n} -1))\}}\left[\left(\frac{1}{2^n}\Gamma_I\right)^{\otimes T} \right] \right) - \mathbb{E}_{I\sim\mathcal{U}(\{1,\ldots, 2\cdot 4^n (4^{n} -1))\}}\left[S\left(\left(\frac{1}{2^n}\Gamma_I\right)^{\otimes T}\right) \right]  \\
        &\leq T\left( 2 n - \mathbb{E}_{I\sim\mathcal{U}(\{1,\ldots, 2\cdot 4^n (4^{n} -1))\}}\left[S\left(\frac{1}{2^n}\Gamma_I\right) \right]\right).
    \end{align}
    \normalsize
    Here, the second step used that the quantum mutual information of a classical-quantum state equals its Holevo information (compare \cite[Exercise 11.6.9]{wilde2011classical}) and the third step uses that any $(2nT)$-qubit state has a von Neumann entropy of at most $2nT$, and that the von Neumann entropy of a $T$-fold tensor power equals $T$ times the von Neumann entropy of a single tensor factor.
    
    Next, we compute $S\left(\frac{1}{2^n}\Gamma_i\right)$ for arbitrary $i$. To this end, simply observe that half of the eigenvalues of $P_i$ are $\pm 1$, so half of the eigenvalues of $\frac{1}{2^n}\Gamma_i$ are $\tfrac{1 \pm 3\varepsilon}{2^{2n}}$. Therefore, we get
    \begin{align}
        S\left(\frac{1}{2^n}\Gamma_i\right)
        &= \frac{2^{2n}}{2}\cdot \frac{1 + 3\varepsilon}{2^{2n}}\log\left(\frac{2^{2n}}{1 + 3\varepsilon}\right) + \frac{2^{2n}}{2}\cdot \frac{1 - 3\varepsilon}{2^{2n}}\log\left(\frac{2^{2n}}{1 - 3\varepsilon}\right)\\
        &= 2n -  \left( 1 - \frac{1 + 3\varepsilon}{2}\log\left(\frac{2}{1 + 3\varepsilon}\right) - \frac{1 - 3\varepsilon}{2}\log\left(\frac{2}{1 - 3\varepsilon}\right)\right)\\
        &= 2n - \mathcal{O}(\varepsilon^2),
    \end{align}
    where the last step is via Taylor expansion. Therefore, we have shown
    \begin{equation}
        \Omega (n)
        \leq I(\mathrm{c};\mathrm{q})_{\rho} 
        \leq  \mathcal{O}(T \varepsilon^2),
    \end{equation}
    which we can rearrange to the claimed $T=\Omega (\tfrac{n}{\varepsilon^2})$.
\end{proof}

The proof strategy used here is not new, see for example \cite[Section 6]{aaronson2018shadow} for a similar reasoning.
In fact, we can easily apply the same reasoning with the set of observables $\{P_i\}_{i=1}^{2\cdot 4^n (4^{n} -1)}$ from \Cref{lemma:delta-doubly-stochastic-choi-pauli-shadow-tomography} to show that the $\Omega (\tfrac{n}{\varepsilon^2})$ lower bound of \Cref{lemma:ptm-lower-bound-with-quantum-memory} even holds if the unkown channel is promised to be doubly-stochastic and to have a $\mathcal{O}(1)$-sparse PTM. 
Following the same reasoning that we have already used to go from \Cref{theorem:tm-entries-from-choi-shadow-tomography} to \Cref{corollary:sample-complexity-predicting-sparse-expectation-values}, this lower bound immediately carries over to the task of predicting expectation values for states and observables with $\mathcal{O}(1)$-sparse Pauli ONB expansion for an unknown doubly-stochastic quantum channel with $\mathcal{O}(1)$-sparse PTM.

\end{document}